\def\arXiv{}
\DeclareMathOperator{\pos}{pos} %
\DeclareMathOperator{\twin}{twin} %
\DeclareMathOperator{\pot}{pot} %
\DeclareMathOperator{\bridges}{bridges} %
\DeclareMathOperator{\expan}{exp} %
\newcommand{\ord}{\mathrm{ord}}
\newcommand{\eps}{\varepsilon}
\newcommand{\rst}[2]{\left.#1\right|_{#2}}
\newcommand*\circled[1]{ %
  \protect\tikz[baseline=(char.base)]{ %
    \protect\node[shape=circle,draw,inner sep=0.7pt,scale=0.85] (char) {\normalfont#1};}} %
\newcommand{\1}[1]{{\normalfont \ensuremath{#1^{\tiny\circled{1}}}}} %
\newcommand{\2}[1]{{\normalfont \ensuremath{#1^{\tiny\circled{2}}}}} %
\newcommand{\ii}[1]{{\normalfont \ensuremath{#1^{\tiny\circled{i}}}}} %
\newcommand{\E}{\mathcal E}
\newcommand{\T}{\mathcal T}
\DeclareMathOperator{\skel}{skel}
\DeclareMathOperator{\LCA}{LCA}
\newlength{\tmp}
\declaretheoremstyle[
shaded={rulecolor=black,bgcolor=white,rulewidth=0.5pt,
  margin=6pt, textwidth=\tmp},
notefont=\bfseries, notebraces={}{},
bodyfont=\normalfont,
postheadspace=0.5em,
headformat=\NAME~\NUMBER:\NOTE]{step}
\declaretheorem[style=step,name=Step]{sefe-bico-step}
\title{\Large Simultaneous Embedding:\\ Edge Orderings, Relative
  Positions, Cutvertices\thanks{Partly done within GRADR -- EUROGIGA
    project no. 10-EuroGIGA-OP-003.}}
\date{}
\author{Thomas Bläsius\thanks{Department of Theoretical Informatics,
    Karlsruhe Institute of Technology (KIT) \texttt{blaesius@kit.edu,
      annette.karrer@student.kit.edu}} \myand Annette Karrer$^\dagger$
  \myand Ignaz Rutter\thanks{Department of Theoretical Informatics,
    Karlsruhe Institute of Technology (KIT), Karlsruhe and Department
    of Applied Mathematics, Charles University, Prague.
    \texttt{rutter@kit.edu}.  Work was supported by a fellowship
    within the Postdoc-Program of the German Academic Exchange Service
    (DAAD).}}
\begin{document}

\maketitle

\begin{abstract}
  A simultaneous embedding (with fixed edges) of two graphs~$\1G$
  and~$\2G$ with common graph~$G=\1G \cap \2G$ is a pair of planar
  drawings of~$\1G$ and~$\2G$ that coincide on~$G$.  It is an open
  question whether there is a polynomial-time algorithm that decides
  whether two graphs admit a simultaneous embedding (problem
  \textsc{Sefe}).

  In this paper, we present two results.  First, a set of three
  linear-time preprocessing algorithms that remove certain
  substructures from a given \textsc{Sefe} instance, producing a set of
  equivalent \textsc{Sefe} instances without such substructures.  The
  structures we can remove are (1) cutvertices of the \emph{union
    graph}~$G^\cup = \1G \cup \2G$, (2) most separating pairs of
  $G^\cup$, and (3) connected components of~$G$ that are biconnected
  but not a cycle.

  Second, we give an~$O(n^3)$-time algorithm solving \textsc{Sefe} for
  instances with the following restriction.  Let $u$ be a pole of a
  P-node $\mu$ in the SPQR-tree of a block of $\1G$ or $\2G$.  Then at
  most three virtual edges of $\mu$ may contain common edges incident
  to $u$.  All algorithms extend to the sunflower case, i.e., to the
  case of more than three graphs pairwise intersecting in the same
  common graph.
\end{abstract}

\section{Introduction}
\label{sec:introduction}

A simultaneous embedding of two graphs~$\1G$ and~$\2G$ with common
graph~$G=\1G \cap \2G$ is a pair of planar drawings of~$\1G$
and~$\2G$, that coincide on~$G$.  The problem to decide whether a
simultaneous embedding exists is called \textsc{Sefe} (Simultaneous
Embedding with Fixed Edges).  This definition extends to more than two
graphs.  For three graphs \textsc{Sefe} is
NP-complete~\cite{gjp-sgefe-06}.  In the \emph{sunflower case} it is
required that every pair of input graphs has the same intersection.
See~\cite{bkr-sepg-13} for a survey on \textsc{Sefe} and related
problems.

There are two fundamental approaches to solving \textsc{Sefe} in the
literature.  The first approach is based on the characterization of
Jünger and Schulz~\cite{js-igsefe-09} stating that finding a
simultaneous embedding of two graphs $\1G$ and $\2G$ with common graph
$G$ is equivalent to finding planar embeddings of $\1G$ and $\2G$ that
induce the same embedding on $G$.  The second very recent approach by
Schaefer~\cite{s-ttp-13} is based on Hanani-Tutte-style redrawing
results.  One tries to characterize the existence of a \textsc{Sefe} via
the existence of drawings of the union graph $G^\cup$ where no two
independent edges of the same graph cross an odd number of times.  The
existence of such drawings can be expressed using a linear system of
boolean equations.

When following the first approach, we need two things to describe the
planar embedding of the common graph $G$.  First, for each vertex $v$,
a cyclic order of incident edges around $v$.  Second, for every pair
of connected components $H$ and $H'$ of $G$, the face $f$ of $H$
containing $H'$.  We call this relationship the \emph{relative
  position} of $H'$ with respect to $H$.  To find a simultaneous
embedding, one needs to find a pair of planar embeddings that induce
the same cyclic edge orderings (\emph{consistent edge orderings}) and
the same relative positions (\emph{consistent relative positions}) on
the common graph $G$.

Most previous results use the first approach but none of them
considers both consistent edge orderings and relative positions.  Most
of them assume the common graph to be connected or to contain no
cycles.  The strongest results of this type are the two linear-time
algorithms for the case that $G$ is biconnected by Haeupler et
al.~\cite{hjl-tspwc-13} and by Angelini et al.~\cite{adfpr-tsegi-12}
and a quadratic-time algorithm for the case where $\1G$ and $\2G$ are
biconnected and $G$ is connected~\cite{br-spoacep-13}.  In the latter
result, \textsc{Sefe} is modeled as an instance of the problem
\textsc{Simultaneous PQ-Ordering}.  On the other hand, there is a
linear-time algorithm for \textsc{Sefe} if the common graph consists
of disjoint cycles~\cite{br-drpse-15}, which requires to ensure
consistent relative positions but makes edge orderings trivially
consistent.

The advantage of the second approach (Hanani-Tutte) is that it
implicitly handles both, consistent edge orderings and consistent
relative positions, at the same time.  Thus, the results by
Schaefer~\cite{s-ttp-13} are the first that handle \textsc{Sefe}
instances where the common graph consists of several, non-trivial
connected components.  He gives a polynomial-time algorithm for the
cases where each connected component of the common graph is
biconnected or has maximum degree~3.  Although this approach is
conceptionally simple, very elegant, and combines several notions of
planarity within a common framework, it has two disadvantages.  The
running time of the algorithms are quite high and the high level of
abstraction makes it difficult to generalize the results.


\paragraph{Contribution \& Outline.}

In this paper, we follow the first approach and show how to enforce
consistent edge orderings and consistent relative positions at the
same time, by combining different recent approaches, namely the
algorithm by Angelini et al.~\cite{adfpr-tsegi-12}, the result on
\textsc{Simultaneous PQ-Ordering}~\cite{br-spoacep-13} for consistent
edge orderings, and the result on disjoint cycles~\cite{br-drpse-15}
for consistent relative positions.  Note that the relative positions
of connected components to each other are usually expressed in terms
of faces (containing the respective component).  This is no longer
possible if the embeddings, and thus the set of faces, of connected
components are not fixed.  To overcome this issue, we show that these
relative positions can be expressed in terms of relative positions
with respect to a cycle basis.  In addition to that, we are able to
handle certain cutvertices of $\1G$ and $\2G$.

More precisely, we classify a vertex $v$ to be a \emph{union
  cutvertex}, a \emph{simultaneous cutvertex}, and an \emph{exclusive
  cutvertex} if $v$ is a cutvertex of $G^\cup$, of $\1G$ and $\2G$ but
not of $G^\cup$, and of $\1G$ but not $\2G$ or the other way around,
respectively.  Similarly, we can define \textsc{union separating
  pairs} to be separating pairs in $G^\cup$.  We present several
preprocessing algorithms that simplify given instances of \textsc{Sefe};
see Section~\ref{sec:prepr-algor}.  Besides a very technical
preprocessing step (Section~\ref{sec:spec-bridg-comm-face-constr}),
they remove union cutvertices and most (but not all) union separating
pairs; see Theorem~\ref{thm:only-special-union-separating-pairs}, and
replace connected components of $G$ that are biconnected with cycles.
They run in linear time and can be applied independently.  The latter
algorithm together with the linear-time algorithm for disjoint
cycles~\cite{br-drpse-15} improves the result by
Schaefer~\cite{s-ttp-13} for instances where every connected component
of $G$ is biconnected to linear time.

In Section~\ref{sec:graph-with-common} we show how to solve instances
that have \emph{common P-node degree~3} and simultaneous cutvertices
of \emph{common degree} at most~3 in cubic time.  A vertex has common
degree~$k$ if it is a common vertex with degree~$k$ in $G$.  An
instance has common P-node degree~$k$ if, for each pole $v$ of a
P-node $\mu$ (of a block) of the input graphs, at most~$k$ virtual
edges of $\mu$ contain common edges incident to $v$.  This result
relies heavily on the preprocessing algorithms excluding certain
structures.  Together with the preprocessing steps, this includes the
case where every connected component of $G$ is biconnected, has
maximum degree~3, or is outerplanar with maximum degree~3 cutvertices.
As before, this approach also applies to the sunflower case.

We would like to point out that the conference version of this article
contained a flaw in the handling of relative positions that are
determined by P-nodes.  It was erroneously claimed that certain
constraints could be expressed in terms of linear Equations over
$\mathbb{F}_2$.  The issue has been fixed by an additional
preprocessing step (see
Section~\ref{sec:spec-bridg-comm-face-constr}), which allows to
exclude the problematic case when treating relative positions decided
by P-nodes.  See Section~\ref{sec:cons-relat-posit} for additional
details.

\section{Preliminaries}
\label{sec:preliminaries}

\paragraph{Connectivity \& SPQR-trees.}

A graph is \emph{connected} if there exists a path between any pair of
vertices.  A \emph{separating $k$-set} is a set of $k$ vertices whose
removal disconnects the graph.  Separating 1-sets and 2-sets are
\emph{cutvertices} and \emph{separating pairs}, respectively.  A
connected graph is \emph{biconnected} if it has no cut vertex and
\emph{triconnected} if it has no separating pair.  The maximal
biconnected components of a graph are called \emph{blocks}.  The
\emph{split components} with respect to a separating $k$-set are the
maximal subgraphs that are not disconnected by removing the separating
$k$-set.

The SPQR-tree $\T$ of a biconnected graph $G$ represents the
decomposition of $G$ along its \emph{split pairs}, where a split pair
is either a separating pair or a pair of adjacent
vertices~\cite{dt-omtc-96}.  It can be computed in linear
time~\cite{gm-lti-00}.

Let $\{s, t\}$ be a split pair and let~$H_1$ and~$H_2$ be two
subgraphs of $G$ such that $H_1 \cup H_2 = G$ and $H_1 \cap H_2 = \{s,
t\}$.  Consider the tree consisting of two nodes $\mu_1$ and $\mu_2$
associated with the graphs $H_1 + st$ and $H_2 + st$, respectively.
These graphs are called \emph{skeletons} of the nodes $\mu_i$, denoted
by $\skel(\mu_i)$, and the special edge $st$ is a \emph{virtual edge}.
Let $\eps_1 = st$ and $\eps_2 = st$ be the virtual edges connecting
$s$ and $t$ in $\skel(\mu_1)$ and $\skel(\mu_2)$, respectively.  The
two edges $\eps_1$ and $\eps_2$ are \emph{twins} and denote this
relationship by $\twin(\eps_1) = \eps_2$ (and vice versa).  We say
that the neighbor of $\mu_1$ \emph{corresponding} to the virtual edge
$\eps_1$ is $\mu_2$.  Conversely, $\mu_1$ corresponds to the virtual
edge in $\skel(\mu_2)$.  In this way, the edge between $\mu_1$ and
$\mu_2$ represents the twin-relationship between $\eps_1$ and
$\eps_2$; see Figure~\ref{fig:spqr-tree}a for an example.  We can
iterate this decomposition process on the graphs $\skel(\mu_1)$ and
$\skel(\mu_2)$; see Figure~\ref{fig:spqr-tree}b.

\begin{figure}
  \centering
  \includegraphics[page=1]{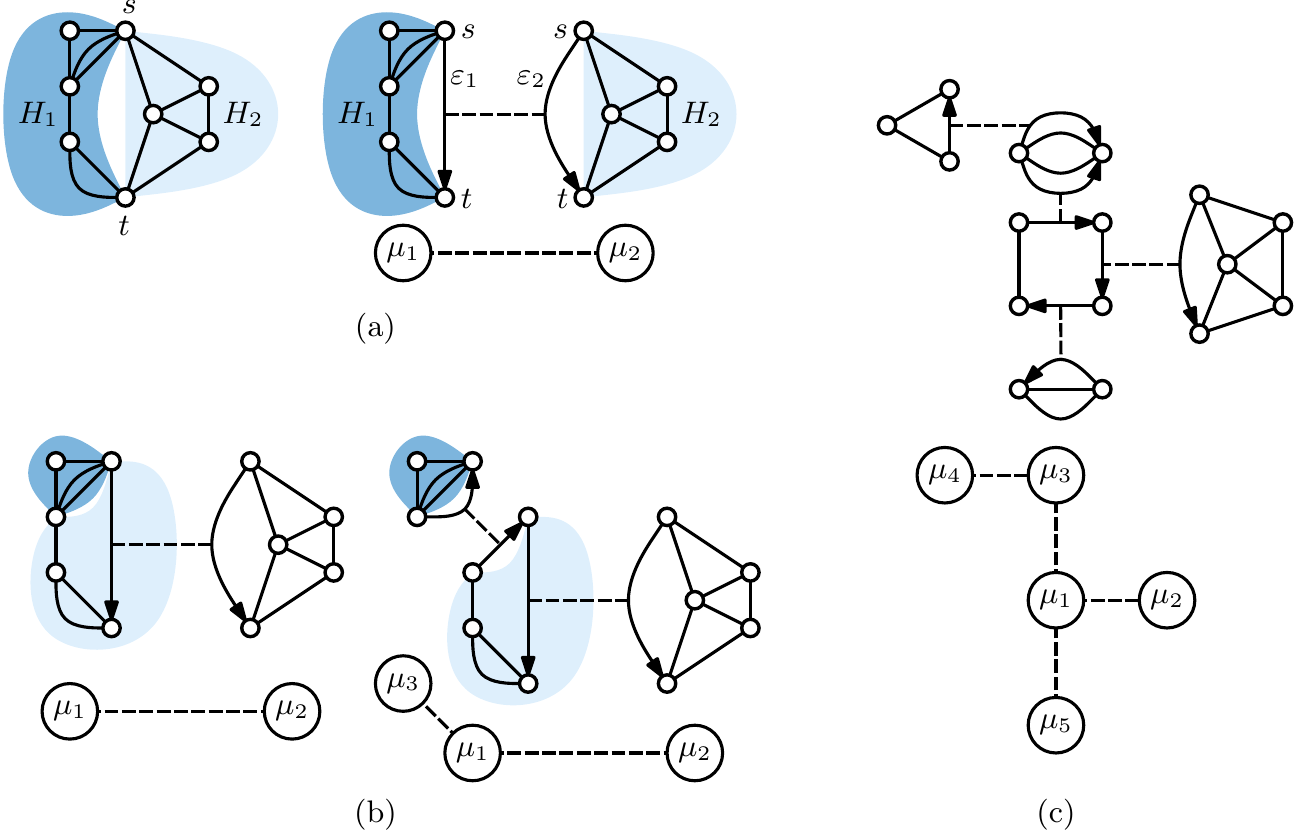}
  \caption{(a)~A single decomposition step with respect to the split
    pair $\{s, t\}$.  (b)~Continued decomposition.  (c)~The final
    SPQR-tree.}
  \label{fig:spqr-tree}
\end{figure}

Applying this kind of decomposition systematically yields the
SPQR-tree $\T$; see Figure~\ref{fig:spqr-tree}c.  The skeletons of the
internal nodes of $\T$ are either a cycle (S-node), a bunch of
parallel edges (P-node) or a triconnected planar graph (R-node).  All
edges in these skeletons are virtual edges.  The leaves are Q-nodes
and their skeleton consists of two vertices connected by a virtual and
a normal edge.  Sometimes it is more convenient to consider SPQR-trees
without their Q-nodes.  In this case, the P-, S-, and R-nodes can also
contain non-virtual edges (as in Figure~\ref{fig:spqr-tree}c).

When we choose a planar embedding for the skeleton of each node of the
SPQR-tree $\mathcal T$, this induces a planar embedding for $G$.
Conversely, fixing the planar embedding of $G$ determines the
embeddings of all skeletons.  Thus, the combination of all planar
embeddings of all skeletons is in one-to-one correspondence with the
planar embeddings of $G$.  Hence, the SPQR-tree breaks the complicated
embedding choices for $G$ (on the sphere, i.e., up to the choice of
the outer face) down to embedding choices of the skeletons.  These
remaining choices are very simple.  Skeletons of S-nodes are cycles
and thus have a unique planar embedding.  For P-nodes we can reorder
the parallel edges arbitrarily and the embedding of R-node skeletons
is fixed up to a flip (i.e., up to mirroring the embedding).

Assume $\mathcal T$ is rooted at an arbitrary node.  In this case, the
skeleton of every node (except for the root) has a unique virtual edge
corresponding to its parent in $\mathcal T$.  We call this virtual
edge the \emph{parent edge} and its endpoints the \emph{poles}.  We
recursively define the \emph{pertinent graph} of a node $\mu$ of
$\mathcal T$.  If $\mu$ is a Q-node, its pertinent graph is the
non-virtual edge in $\skel(\mu)$.  If $\mu$ is an inner node, the
pertinent graph of $\mu$ is obtained by deleting the parent edge in
$\skel(\mu)$ and replacing each remaining virtual edge with the
pertinent graph of the corresponding child.

Let $\eps$ be a virtual edge in $\skel(\mu)$ and let $\mu'$ be the
corresponding neighbor of $\mu$.  The \emph{expansion graph}
$\expan(\eps)$ of $\eps$ is the pertinent graph of $\mu'$ when
choosing $\mu$ as root.  Note that the expansion and pertinent graphs
are very similar concepts.  However, in most cases we use the
expansion graph as it is independent of the root of the SPQR-tree (and
is still defined if $\mathcal T$ is unrooted).  Intuitively, the
expansion graph of a virtual edge is the graph that is represented by
that virtual edge.  Note that replacing every virtual edge in
$\skel(\mu)$ (for any node $\mu$ of $\mathcal T$) with its expansion
graph yields the graph $G$.  A vertex in $\expan(\eps)$ is an
\emph{inner vertex} if it is not an endvertex of $\eps$.

\begin{figure}
  \centering
  \includegraphics[page=1]{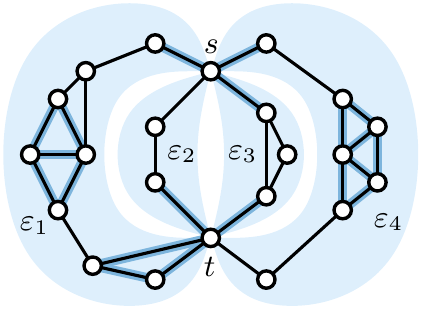}
  \caption{A P-node of $\1G$ with virtual edges $\eps_1, \dots,
    \eps_4$.  The node has common P-node degree $3$; for $s$ the
    virtual edges $\eps_1$, $\eps_3$, and $\eps_4$ count; for $t$ the
    virtual edges $\eps_1$, $\eps_2$, and $\eps_3$ count.}
  \label{fig:common-p-node-degree}
\end{figure}

Let $\1\T$ be the SPQR-tree of a block of $\1G$ in an instance of
\textsc{Sefe} and let $G$ be the common graph.  Let further $\mu$ be a
P-node of $\1\T$.  We say that $\mu$ has \emph{common P-node degree}
$k$ if both vertices in $\skel(\mu)$ are incident to common edges in
the expansion graphs of at most $k$ virtual edges (note that these can
be different edges for the two vertices); see
Figure~\ref{fig:common-p-node-degree} for an example.  We say that $\1G$
has common P-node degree $k$ if each P-node in the SPQR-tree of each
block of $\1G$ has common P-node degree $k$.  If this is the case for
$\1G$ and $\2G$, we say that the instance of \textsc{Sefe} has common
P-node degree~$k$.

We use the following conventions to make handling SPQR-trees more
convenient.  In most cases (as above) we explicitly name the
SPQR-trees we consider (e.g., $\mathcal T$, or $\1\T$).  However,
sometimes it is more convenient to write $\mathcal T(G)$ to denote the
SPQR-tree of a given graph $G$.  The SPQR-tree is only defined for
biconnected graphs.  With the SPQR-tree of a non-biconnected graph, we
implicitly mean a collection of SPQR-trees, one for each block.  For
an S-, P-, Q-, or R-node $\mu$ of the SPQR-tree of a graph $G$, we
also say that $\mu$ is an S-, P-, Q-, or R-node of $G$, respectively.
These conventions for example simplify the statement ``let $\mu$ be a
P-node of the SPQR-tree of a block of $G$'' to ``let $\mu$ be a P-node
of $G$''.

\paragraph{PQ-trees.}

A PQ-tree, originally introduced by Booth and
Lueker~\cite{bl-tcopi-76}, is a tree, whose inner nodes are either
P-nodes or Q-nodes (note that these P-nodes have nothing to do with
the P-nodes of the SPQR-tree).  The order of edges around a P-node can
be chosen arbitrarily, the edges around a Q-node are fixed up to a
flip.  In this way, a PQ-tree represents a set of orders on its
leaves.  A rooted PQ-tree represents linear orders, an unrooted
PQ-tree represents cyclic orders (in most cases we consider unrooted
PQ-trees).  Given a PQ-tree $T$ and a subset $S$ of its leaves, there
exists another PQ-tree $T'$ representing exactly the orders
represented by $T$ where the elements in $S$ are consecutive.  The
tree $T'$ is the \emph{reduction} of $T$ with respect to $S$.  The
\emph{projection} of $T$ to $S$ is a PQ-tree with leaves $S$
representing exactly the orders on $S$ that are represented by $T$.

The problem \textsc{Simultaneous PQ-Ordering} has several PQ-trees as
input that are related by identifying some of their
leaves~\cite{br-spoacep-13}.  More precisely, every instance is a
directed acyclic graph, where each node is a PQ-tree, and each arc
$(T, T')$ has the property that there is an injective map from the
leaves of the \emph{child} $T'$ to the leaves of the \emph{parent}
$T$.  For each PQ-tree in such an instance, one wants to find an order
of its leaves such that for every arc $(T, T')$ the order chosen for
the parent $T$ is an extension of the order chosen for the child $T'$
(with respect to the injective map).  We will later use instances of
\textsc{Simultaneous PQ-Ordering} to express relations between orderings
of edges around vertices.

\section{Preprocessing Algorithms}
\label{sec:prepr-algor}

In this section, we present several algorithms that can be used as a
preprocessing of a given \textsc{Sefe} instance.  The result is usually a
set of \textsc{Sefe} instances that admit a solution if and only if the
original instance admits one.  The running time of the preprocessing
algorithms is linear, and so is the total size of the equivalent set
of \textsc{Sefe} instances.  Each of the preprocessing algorithms removes
certain types of structures form the instance, in particular from the
common graph.  Namely, we show that we can eliminate union
cutvertices, simultaneous cutvertices with common-degree~3, and
connected components of $G$ that are biconnected but not a cycle.
None of these algorithms introduces new cutvertices in $G$ or
increases the degree of a vertex.  Thus, the preprocessing
algo\-rithms can be successively applied to a given instance, removing
all the claimed structures.

Let~$(\1G,\2G)$ be a \textsc{Sefe} instance with common graph~$G=\1G \cap
\2G$.  We can equivalently encode such an instance in terms of its
\emph{union graph}~$G^\cup = \1G \cup \2G$, whose edges are
labeled~$\{1\}$,~$\{2\}$, or~$\{1,2\}$, depending on whether they are
contained exclusively in~$\1G$, exclusively in~$\2G$, or in~$G$,
respectively.  Any graph with such an edge coloring can be considered
as a \textsc{Sefe} instance.  Since sometimes the coloring version is
more convenient, we use these notions interchangeably throughout this
section.

\subsection{Union Cutvertices}
\label{sec:union-cutv}

Recall that a union cutvertex of a \textsc{Sefe} instance~$(\1G,\2G)$ is
a cutvertex of the union graph~$G^\cup$.  The following theorem states
that the \textsc{Sefe} instances corresponding to the split components of
a cutvertex of~$G^\cup$ can be solved independently; see
Figure~\ref{fig:union-cutvertices}.

\begin{figure}
  \centering
  \includegraphics[page=1]{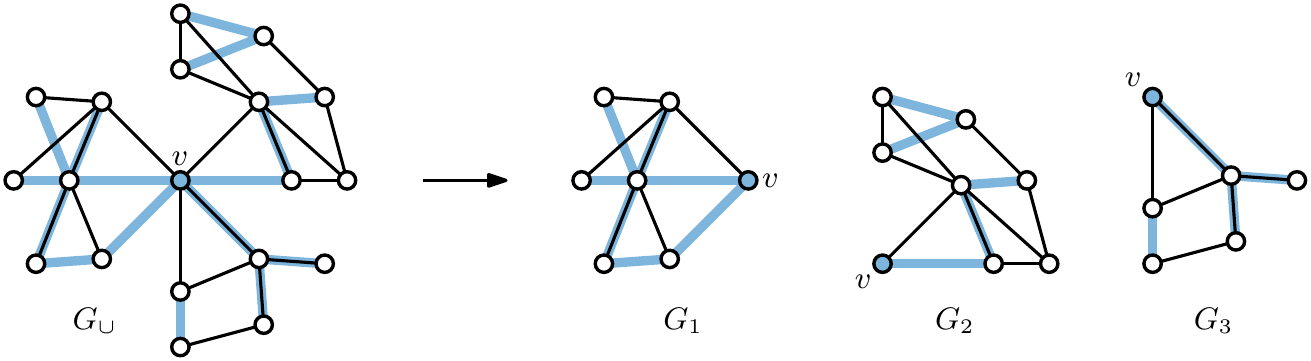}
  \caption{A union cutvertex separates a \textsc{Sefe} instance into
    independent subinstances.}
  \label{fig:union-cutvertices}
\end{figure}

\begin{lemma}
  \label{lem:union-cut}
  Let~$G^\cup$ be a \textsc{Sefe} instance and let~$v$ be a cutvertex
  of~$G^\cup$ with split components~$G_1, \dots, G_k$.  Then~$G^\cup$
  admits a \textsc{Sefe} if and only if~$G_i$ admits a \textsc{Sefe}
  for~$i=1,\dots,k$.
\end{lemma}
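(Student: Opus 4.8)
The plan is to prove both directions of the equivalence by relating embeddings of $G^\cup$ to embeddings of its split components $G_1,\dots,G_k$ at the cutvertex $v$. Throughout, I use the characterization underlying the first approach: a \textsc{Sefe} of an instance $H^\cup = \1H \cup \2H$ exists iff there are planar embeddings of $\1H$ and $\2H$ inducing the same cyclic edge orderings and the same relative positions on the common graph $\1H \cap \2H$.

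\textbf{Necessity.} Suppose $G^\cup$ admits a \textsc{Sefe}, witnessed by planar drawings of $\1G$ and $\2G$ that agree on $G$. For a fixed $i$, the graph $G_i^\cup$ is a subgraph of $G^\cup$, so restricting the drawings of $\1G$ and $\2G$ to $\1{G_i} := \1G \cap G_i$ and $\2{G_i} := \2G \cap G_i$ yields planar drawings that still coincide on the common graph $G \cap G_i$. Hence each $G_i$ admits a \textsc{Sefe}. This direction is immediate and needs no real work.

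\textbf{Sufficiency.} This is the direction requiring care. Assume each $G_i$ admits a \textsc{Sefe}, giving planar embeddings $\varphi_i^{(1)}$ of $\1{G_i}$ and $\varphi_i^{(2)}$ of $\2{G_i}$ that agree on $G_i \cap G$. I want to glue these into embeddings of $\1G$ and $\2G$ agreeing on $G$. The standard fact is that planar embeddings of a graph with a cutvertex $v$ are obtained by choosing, independently, an embedding of each split component and then a cyclic merge order of the split components around $v$ (together with, for each split component placed in a face of another, a choice of which face). First I would argue that the cutvertex $v$ of $G^\cup$ induces, for each color $j \in \{1,2\}$, either a cutvertex of $\j G$ or a separation into components of $\j G$ along $v$ — more precisely, the split components of $v$ in $\j G$ are exactly the graphs $\j{G_i}$ that contain an edge of $\j G$ incident to $v$ (and isolated pieces not touching $v$ become separate components). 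Then I would merge: pick any cyclic order of $G_1,\dots,G_k$ around $v$ and use it \emph{simultaneously} for both colors; since at $v$ the common edges of $G_i$ all lie within the ``slot'' allocated to $G_i$, the induced cyclic order of common edges around $v$ is consistent between $\1G$ and $\2G$ as long as within each slot the orderings agree — which holds by hypothesis $\varphi_i^{(1)} = \varphi_i^{(2)}$ on $G_i \cap G$. Away from $v$, edge orderings around any other common vertex live entirely within one $G_i$ (as $v$ is a cutvertex), so consistency there is inherited directly.

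\textbf{The main obstacle: relative positions.} The genuinely delicate part is ensuring consistent relative positions of the connected components of $G$. Two components $H, H'$ of $G$ may lie in different split components $G_i, G_{i'}$ of $G^\cup$, or in the same one. If they lie in the same $G_i$, consistency of their relative position is guaranteed by the \textsc{Sefe} of $G_i$. If they lie in different split components, I would argue that — because $v$ is a cutvertex of the \emph{union} $G^\cup$ — in \emph{any} planar embedding of $\j G$, the component $H'$ lies in the outer face of $H$ (the face of $H$ whose boundary walk sees $v$ on the $G_{i'}$-side), or more carefully: the relative position of $H'$ with respect to $H$ is forced to be the same in $\1G$ and $\2G$ because the only way to ``reach'' from one split component to another passes through $v$, and $v$'s location relative to $H$ is determined identically in both colors by the (common) merge order. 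Here I must also handle the freedom in the merge operation: when a split component $G_{i'}$ is placed inside a bounded face $f$ of another split component, $f$ must be chosen consistently for both colors — I would do this by first fixing, for each $i$, a reference face of $\j{G_i}$ at $v$ using the \textsc{Sefe} embedding, observing these reference faces are ``the same'' face on both sides since $\varphi_i^{(1)}$ and $\varphi_i^{(2)}$ agree on $G_i \cap G$ and in particular on the boundary structure near $v$ restricted to $G$; then merging all components into this commonly-chosen face. I expect the write-up to spend most of its length making precise this gluing of embeddings and verifying that every common vertex and every pair of common components gets a consistent treatment, while the necessity direction and the combinatorial bookkeeping of which split components touch $v$ in which color are routine. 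The sunflower generalization is then immediate since the argument treats each color index $j$ symmetrically and uses only pairwise agreement on the (single) common graph.
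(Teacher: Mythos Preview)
Your proposal is correct in outline but substantially overcomplicates the sufficiency direction. The paper's proof is two sentences: choose each \textsc{Sefe} $\E_i$ so that $v$ is incident to the outer face (always possible, since one may take any face incident to $v$ as the outer face), and then merge the embeddings at $v$ side by side in the plane. With this choice every $G_j$ lies entirely in the outer face of every other $G_i$, so all cross-component relative positions are automatically consistent, and the cyclic order at $v$ is any order keeping each $G_i$'s edges consecutive.

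You instead allow placing a split component $G_{i'}$ into a \emph{bounded} face of another $G_i$, which forces you to argue that this face can be chosen ``the same'' for both colors. That step is shakier than you indicate: faces of $\1{G_i}$ and $\2{G_i}$ at $v$ are not in direct bijection (their boundaries involve exclusive edges), so you would have to pass through a face of the common graph $G \cap G_i$ and argue both exclusive embeddings refine it compatibly. This can be made to work, but it is exactly the complication the outer-face trick exists to avoid. Your extended discussion of relative positions, reference faces, and nesting is thus unnecessary machinery for a lemma whose intended proof fits in three lines.
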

\begin{proof}
  Clearly, a \textsc{Sefe} of~$G^\cup$ contains a \textsc{Sefe}
  of~$G_1,\dots,G_k$.  Conversely, given a \textsc{Sefe}~$\E_i$ of~$G_i$
  for~$i=1,\dots,k$, we can assume without loss of generality that~$v$
  is incident to the outer face in each of the~$\E_i$.  Then these
  embeddings can be merged to a \textsc{Sefe}~$\E$ of~$G^\cup$.
\end{proof}

Due to Lemma~\ref{lem:union-cut}, it suffices to consider the blocks
of~$G^\cup$ of a \textsc{Sefe} instance independently.  Clearly, the
blocks can be computed in~$O(n)$ time, and, given a \textsc{Sefe} for
each block, a \textsc{Sefe} of the original instance can be computed
in~$O(n)$ time.

\begin{theorem}
  \label{thm:no-union-cutvertices}
  There is a linear-time algorithm that decomposes a \textsc{Sefe}
  instance into an equivalent set of \textsc{Sefe} instances that do not
  contain union cutvertices.
\end{theorem}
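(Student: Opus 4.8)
The plan is to bootstrap Theorem~\ref{thm:no-union-cutvertices} directly from Lemma~\ref{lem:union-cut} together with the standard block-cutvertex structure of $G^\cup$. First I would compute the block-cutvertex tree of $G^\cup$ in linear time; its leaves are blocks, and each cutvertex of $G^\cup$ is shared among several blocks. The decomposition algorithm repeatedly picks a cutvertex $v$ of $G^\cup$ and, by Lemma~\ref{lem:union-cut}, replaces the current instance by the instances corresponding to the split components $G_1,\dots,G_k$ of $v$. Since $v$ is the only vertex shared by distinct $G_i$, and each $G_i$ is the union of one subtree of the block-cutvertex tree (plus $v$), iterating this until no cutvertex remains yields exactly the collection of blocks of $G^\cup$, each inheriting the edge coloring (equivalently, the pair $(\1G_i,\2G_i)$ obtained by restricting $\1G$ and $\2G$ to the block). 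Each resulting instance is biconnected, hence has no union cutvertex.

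Next I would argue equivalence and running time. Equivalence is an immediate induction on the number of cutvertices: a single application of Lemma~\ref{lem:union-cut} preserves solvability, so the whole decomposition does, and conversely a \textsc{Sefe} of the original instance restricts to a \textsc{Sefe} of each block. For the running time, the block-cutvertex tree and the blocks themselves are computed in $O(n)$ time by the classical Hopcroft--Tarjan algorithm; splitting the edge coloring of $G^\cup$ among the blocks and distributing the two graphs $\1G$, $\2G$ accordingly is also linear, and the total size of all blocks is $O(n)$ since every edge lies in exactly one block and every vertex lies in at most $\deg$-many blocks, summing to $O(n)$ over the tree. As noted after Lemma~\ref{lem:union-cut}, reassembling a \textsc{Sefe} of the original instance from \textsc{Sefe}s of the blocks also takes $O(n)$ time by merging at cutvertices with $v$ on the outer face.

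I do not expect a genuine obstacle here: the theorem is essentially a packaging of Lemma~\ref{lem:union-cut} with standard biconnectivity machinery. The only point needing a little care is making precise that ``the \textsc{Sefe} instance corresponding to a block $B$ of $G^\cup$'' means the pair $(\1G\cap B,\, \2G\cap B)$ with common graph $G\cap B$, and checking that this is consistent with the edge-coloring viewpoint of $G^\cup$ introduced at the start of Section~\ref{sec:prepr-algor}; once that is spelled out, the result follows. I would also remark, as the surrounding text does, that this preprocessing neither creates new cutvertices in the common graph $G$ nor raises any vertex degree, so it composes cleanly with the subsequent preprocessing steps.

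\begin{proof}
  Compute the blocks of $G^\cup$ and its block-cutvertex tree in
  $O(n)$ time~\cite{gm-lti-00}.  Starting from the given instance,
  repeatedly apply Lemma~\ref{lem:union-cut} to a cutvertex of the
  current union graph, replacing the instance by those corresponding
  to its split components; by induction on the number of cutvertices
  this yields an equivalent set of instances, and it terminates with
  one instance per block of $G^\cup$.  Each block is biconnected and
  therefore contains no union cutvertex.  Distributing the edge
  coloring of $G^\cup$ (equivalently, the pair $(\1G,\2G)$ restricted
  to each block) takes $O(n)$ time, and since every edge of $G^\cup$
  lies in exactly one block and the number of (block, cutvertex)
  incidences is $O(n)$, the total size of the resulting instances is
  $O(n)$.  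Finally, by the construction in the proof of
  Lemma~\ref{lem:union-cut}, a \textsc{Sefe} of the original instance
  can be assembled from \textsc{Sefe}s of the blocks in $O(n)$ time.
\end{proof}
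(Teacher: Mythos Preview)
Your proposal is correct and follows essentially the same approach as the paper: the paper's argument is just the two sentences preceding the theorem, which say that by Lemma~\ref{lem:union-cut} one may treat the blocks of $G^\cup$ independently, that the blocks can be computed in $O(n)$ time, and that solutions can be merged in $O(n)$ time. Your write-up simply spells this out in more detail (the only quibble is that \cite{gm-lti-00} is the SPQR-tree reference; block decomposition is the earlier Hopcroft--Tarjan result, but this does not affect correctness).
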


\subsection{Union Separating Pairs}
\label{sec:union-separ-pairs}

In analogy to a union cutvertex, we can define a \emph{union
  separating pair} to be a separating pair of the union graph
$G^\cup$.  It is tempting to proceed as for the union cutvertices:
separate $G^\cup$ according to a union separating pair, solve the
subinstances corresponding to the resulting subgraphs, and merge the
partial solutions.

However, this approach fails as merging the partial solutions may be
impossible; see Figure~\ref{fig:union-sep-pair}a.  Note that it is
easy to merge the partial solutions if all of them have $u$ and $v$ on
the outer face of their union graph.  One can enforce this kind of
behaviour by connecting $u$ and $v$ with a common edge in each
subinstance.  Unfortunately, this is too restrictive as the
subinstances may fail to have a \textsc{Sefe} with this additional
edge whereas the original instance has a solution; see
Figure~\ref{fig:union-sep-pair}b.

\begin{figure}
  \centering
  \includegraphics[page=1]{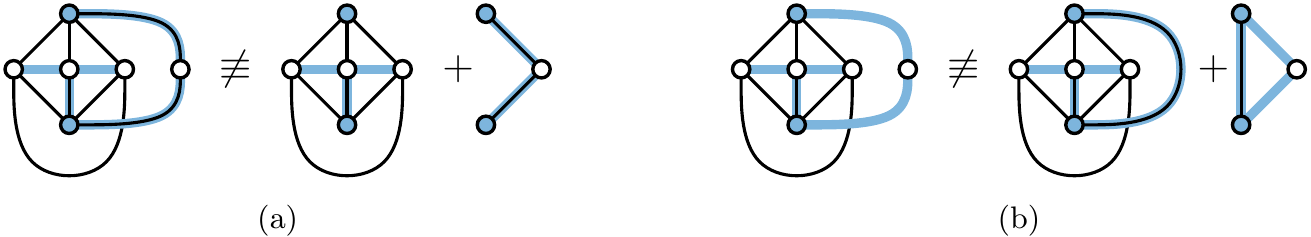}
  \caption{(a)~The original instance does not admit a \textsc{Sefe}
    but the split components with respect to the separating pair $\{u,
    v\}$ (marked vertices) do.  (b)~The original instance admits a
    \textsc{Sefe} but one of the split components does not when adding
    the common edge $uv$.}
  \label{fig:union-sep-pair}
\end{figure}

We can, however, use the idea of adding the common edge $uv$ in every
subinstance to get rid of most union separating pairs.  Throughout the
whole section, we restrict our considerations to the case that $u$ and
$v$ are vertices of the same block $B$ of the common graph and that
$\{u, v\}$ is a separating pair in $B$.  If $\{u, v\}$ separates $B$
into three or more split components, then $u$ and $v$ are poles of a
P-node of $\mathcal T(B)$.  The case when there are only two split
components is a somewhat special (less interesting) case.  To achieve
a more concise notation, we thus assume in the following that $u$ and
$v$ are the poles of a P-node.  However, all arguments extend to the
special case with two split components.

Let $\mu$ be the P-node of $\mathcal T(B)$ with poles $u$ and $v$.
Two virtual edges $\eps_1$ and $\eps_2$ of $\skel(\mu)$ are
\emph{linked in $\1G$} if $\1G$ contains a path from an inner vertex
in $\expan(\eps_1)$ to an inner vertex in $\expan(\eps_2)$ that is
disjoint from $B$ (except for the end vertices of the path).  The
\emph{\circled{1}-link graph} $\1{L_\mu}$ of $\mu$ has the virtual
edges of $\mu$ as nodes, with an edge between two nodes if and only if
the corresponding virtual edges are linked in $\1G$.  Analogously, we
can define the \emph{\circled{2}-link graph} $\2{L_\mu}$ and the
\emph{union-link graph} $L_\mu^\cup$.

\begin{figure}
  \centering
  \includegraphics[page=1]{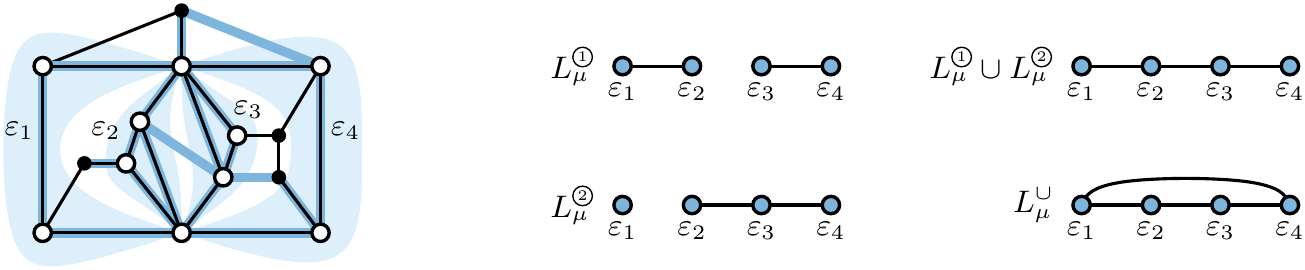}
  \caption{A P-node $\mu$ of the union graph with four virtual edges
    $\eps_1, \dots, \eps_4$ together with the link graphs $\1{L_\mu}$,
    $\2{L_\mu}$, $\1{L_\mu} \cup \2{L_\mu}$, and $L_\mu^\cup$.}
  \label{fig:link-graph}
\end{figure}

Note that the $\1{L_\mu}$ and $\2{L_\mu}$ are subgraphs of
$L_\mu^\cup$.  But $L_\mu^\cup$ is no the union of $\1{L_\mu}$ and
$\2{L_\mu}$, as two virtual edges may be linked in the union graph but
in none of the two exclusive graphs; see Figure~\ref{fig:link-graph}.
However, the union of $\1{L_\mu}$ and $\2{L_\mu}$ will also be of
interest later.  We call it the \emph{exclusive-link graph} and denote
it by $\1{L_\mu} \cup \2{L_\mu}$.  An edge in the exclusive-link graph
indicates that the two corresponding virtual edges are either linked
in $\1G$ or in $\2G$.

We note that the following two lemmas are neither entirely new (e.g.,
Angelini et al.~\cite{adfpr-tsegi-12} use a slightly weaker statement)
nor very surprising.

\begin{lemma}
  \label{lem:link-graph-adjacent}
  Let $L_\mu^\cup$ be a union-link graph of a given \textsc{Sefe}
  instance and let $\eps_1$ and $\eps_2$ be adjacent in $L_\mu^\cup$.
  In every simultaneous embedding, $\eps_1$ and $\eps_2$ are adjacent
  in the embedding of $\skel(\mu)$.
\end{lemma}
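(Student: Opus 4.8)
\emph{Proof sketch.}
The plan is to argue by contradiction: I assume that some simultaneous embedding $\E=(\E_1,\E_2)$ induces an embedding of $\skel(\mu)$ in which $\eps_1$ and $\eps_2$ are \emph{not} consecutive in the cyclic order of virtual edges around the poles $u$ and $v$, and then exhibit a crossing in $\E_1$ or $\E_2$.

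First I would extract a common separating cycle. Since $\eps_1$ and $\eps_2$ are non-consecutive, the cyclic order of $\skel(\mu)$ contains virtual edges $\eps_a$ and $\eps_b$, both distinct from $\eps_1$ and from $\eps_2$, lying on the two different arcs between $\eps_1$ and $\eps_2$. Choosing a $u$--$v$ path $P_a$ in $\expan(\eps_a)$ and a $u$--$v$ path $P_b$ in $\expan(\eps_b)$ (these exist since expansion graphs of a P-node are connected and contain the poles, and meet only in $\{u,v\}$) yields a cycle $C:=P_a\cup P_b$ of $B$, hence of the common graph $G$. In the planar embedding of $G$ induced by $\E$ — which is the same whether read off from $\E_1$ or from $\E_2$ — the cycle $C$ bounds two disks $D$ and $D'$; using the one-to-one correspondence between embeddings of $B$ and embeddings of the skeletons of $\mathcal T(B)$ applied to $\mu$, the expansion graph $\expan(\eps_1)$ minus its poles lies entirely in one of them, say $D$, and $\expan(\eps_2)$ minus its poles entirely in the other, $D'$.

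Next I would follow the link path. By definition of the union-link graph there is a path $Q=(a=w_0,w_1,\dots,w_m=b)$ in $G^\cup$ with $a$ an inner vertex of $\expan(\eps_1)$, $b$ an inner vertex of $\expan(\eps_2)$, and $V(Q)\cap V(B)=\{a,b\}$; in particular no $w_i$ lies on $C$ and $Q$ shares no edge with $C$. Each edge $w_iw_{i+1}$ lies in $\1G$ or in $\2G$, and in the corresponding drawing ($\E_1$ or $\E_2$) it is drawn without crossing any edge of $C$ because that drawing is planar; hence $w_i$ and $w_{i+1}$ lie in the same one of $D$, $D'$ in that drawing. For a vertex of $G$ the containing disk is the same in $\E_1$ and in $\E_2$ since the two drawings agree on $G\supseteq C$; and whenever two consecutive edges of $Q$ belong to different exclusive graphs, their common endpoint $w_i$ is incident both to an edge of $\1G$ and to an edge of $\2G$, hence $w_i\in\1G\cap\2G=G$. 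Combining these observations, the disk containing $w_i$ is independent of $i$, so $a$ and $b$ lie in the same disk — contradicting $a\in D$ and $b\in D'$. Therefore $\eps_1$ and $\eps_2$ must be consecutive in $\skel(\mu)$.

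I expect the one genuinely delicate point to be that a link path in $L_\mu^\cup$ may alternate between $\1G$-edges and $\2G$-edges, so it is not drawn inside a single one of the two planar drawings. This is exactly why the argument is routed through a \emph{common} cycle $C\subseteq G$ and tracks, edge by edge, on which side of $C$ the current path vertex lies, exploiting that $\E_1$ and $\E_2$ coincide on $G$. The remaining ingredients — that $C$ is a cycle, that $Q$ is disjoint from $C$, and that $\expan(\eps_1)$ and $\expan(\eps_2)$ end up on different sides of $C$ — are straightforward consequences of the definitions of expansion graph and of the SPQR-tree embedding correspondence.
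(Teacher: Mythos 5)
Your proof is correct, and it reaches the conclusion by a somewhat different route than the paper. The paper splits into two cases: if $\eps_1,\eps_2$ are linked already in $\1{L_\mu}$ (or $\2{L_\mu}$), it notes that their expansion graphs must bound a common face; otherwise it takes the union path $\pi$ and argues, rather tersely, that the common vertices of $\pi$ must lie in a face of the block $B$ incident to both inner endpoints, which exists only if the virtual edges are adjacent in $\skel(\mu)$. You instead argue uniformly by contradiction: you build an explicit common separating cycle $C$ from $u$--$v$ paths in two skeleton edges that interleave $\eps_1$ and $\eps_2$ in the cyclic order, and then track, edge by edge along the link path, on which side of $C$ the current vertex lies, using that every switch between a $\1G$-edge and a $\2G$-edge happens at a common vertex, whose position agrees in both drawings. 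This buys a single argument with no case distinction and makes fully explicit the delicate point the paper glosses over, namely why an alternating union path, which is drawn in neither $\E_1$ nor $\E_2$ alone, cannot pass from one side of a common cycle to the other; the paper's face-based phrasing is shorter and matches how the lemma is invoked later (e.g.\ in Lemma~\ref{lem:no-impossible-P-nodes}). One cosmetic caveat: with the convention that skeletons may contain the non-virtual edge $uv$, one of your two separating skeleton edges could be that real edge; then the corresponding $u$--$v$ path of $C$ is simply the common edge $uv$, and your argument goes through verbatim.
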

\begin{proof}
  First assume that $\eps_1$ and $\eps_2$ are already adjacent in
  $\1{L_\mu}$.  Then the expansion graphs of $\eps_1$ and $\eps_2$
  bound a face in every embedding of $G$ that extends to an embedding
  of $\1G$.  Thus, $\eps_1$ and $\eps_2$ must be adjacent in the
  embedding of $\skel(\mu)$.  The same holds if $\eps_1$ and $\eps_2$
  are adjacent in $\2G$.

  Otherwise, let $B$ be the block whose SPQR-tree contains $\mu$.  Let
  $\pi$ be a path in the union graph connecting inner vertices $v_1$
  and $v_2$ in the expansion graphs of $\eps_1$ and $\eps_2$,
  respectively, that is disjoint from $B$.  Clearly, then common
  vertices of $\pi$ must be embedded into a face $B$ that is incident
  to $v_1$ and to $v_2$.  Such a face only exists if $\eps_1$ and
  $\eps_2$ are adjacent in the embedding of $\skel(\mu)$.
\end{proof}

\begin{lemma}
  \label{lem:link-graph-cycle-or-paths}
  If $(\1G, \2G)$ admits a \textsc{Sefe}, then each union-link graph
  is either a cycle or a collection of paths.
\end{lemma}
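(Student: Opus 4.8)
The plan is to derive the statement directly from Lemma~\ref{lem:link-graph-adjacent} by a degree argument. Suppose $(\1G,\2G)$ admits a \textsc{Sefe}, and fix one simultaneous embedding; this induces a particular cyclic order of the virtual edges of $\skel(\mu)$ around the P-node $\mu$. By Lemma~\ref{lem:link-graph-adjacent}, whenever $\eps_1$ and $\eps_2$ are adjacent in $L_\mu^\cup$, they must be consecutive in this cyclic order. So the edge set of $L_\mu^\cup$ is a subset of the ``consecutivity graph'' of a fixed cyclic arrangement of the virtual edges, i.e.\ of a cycle $C$ on the same node set (if $\mu$ has $d\ge 3$ virtual edges, this is a $d$-cycle; the cases $d\le 2$ are trivial). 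Hence $L_\mu^\cup$ is a subgraph of a cycle, and every subgraph of a cycle is itself either the whole cycle or a disjoint union of paths (some possibly single vertices). This is exactly the claimed dichotomy.

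The one point that needs care is making Lemma~\ref{lem:link-graph-adjacent} yield a single \emph{global} cyclic order that simultaneously constrains all edges of $L_\mu^\cup$. The lemma as stated says that \emph{each} edge $\{\eps_1,\eps_2\}$ of $L_\mu^\cup$ forces $\eps_1,\eps_2$ to be adjacent in the embedding of $\skel(\mu)$; since in a \textsc{Sefe} there is one fixed embedding of $G$ (hence one fixed embedding of $\skel(\mu)$, as the skeleton embedding is determined by the embedding of $B$), all these adjacency requirements refer to the \emph{same} cyclic order. So I would phrase the argument as: let $\sigma$ be the cyclic order of virtual edges of $\skel(\mu)$ in the (fixed) simultaneous embedding; every edge of $L_\mu^\cup$ joins two $\sigma$-consecutive elements; therefore $L_\mu^\cup \subseteq C_\sigma$ where $C_\sigma$ is the cycle realizing $\sigma$; conclude.

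I do not expect a real obstacle here — the lemma does the heavy lifting, and the remaining content is the elementary observation that a subgraph of a cycle is a cycle or a union of paths. The only thing worth stating explicitly for completeness is the boundary behavior: if $\mu$ has only one or two virtual edges then $L_\mu^\cup$ has at most two nodes and is trivially a path (or a single vertex, or a single edge), so the statement holds vacuously; and ``collection of paths'' should be read as allowing isolated vertices (virtual edges linked to nothing) and the empty graph. With these conventions the proof is a two-line consequence of Lemma~\ref{lem:link-graph-adjacent}.
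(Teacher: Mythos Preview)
Your proposal is correct and essentially identical to the paper's own proof: fix the cyclic order of the virtual edges in the simultaneous embedding, use Lemma~\ref{lem:link-graph-adjacent} to conclude that $L_\mu^\cup$ is a subgraph of the corresponding cycle, and observe that any subgraph of a cycle is a cycle or a collection of paths. Your additional remarks about the single global cyclic order and the degenerate cases $d\le 2$ are fine but not needed for the paper's level of detail.
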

\begin{proof}
  Let $B$ be a block of $G$ and let $\mu$ be a P-node of $\mathcal
  T(B)$.  Let $\skel(\mu)$ be embedded according to a simultaneous
  embedding of $(\1G, \2G)$ .  Let $\eps_1, \dots, \eps_k$ be the
  virtual edges of $\skel(\mu)$ embedded in this order.  Due to
  Lemma~\ref{lem:link-graph-adjacent}, two virtual edges $\eps_i$ and
  $\eps_j$ can be adjacent in $L_\mu^\cup$ only if $i + 1 = j$ or $i =
  k$ and $j = 1$.  Thus, $L_\mu^\cup$ is a subgraph of the cycle
  $\eps_1, \dots, \eps_k, \eps_1$.  Hence, $L_\mu^\cup$ is either a
  cycle or a collection of paths.
\end{proof}

Assume the union-link graph $L_\mu^\cup$ of a P-node $\mu$ is
connected (i.e., by Lemma~\ref{lem:link-graph-cycle-or-paths} a cycle
or a path containing all virtual edges).  Then
Lemma~\ref{lem:link-graph-adjacent} implies that the virtual edges in
$\skel(\mu)$ have to be embedded in a fixed order up to reversal.  In
this case, it remains to choose between two different embeddings,
although the $k$ virtual edges of $\skel(\mu)$ have $(k-1)!$ different
cyclic orders.  In the following we show that we can assume without
loss of generality that every union-link graph is connected.

Assume $L_\mu^\cup$ is not connected.  Then the poles $u$ and $v$ of
$\mu$ are a separating pair in the union graph.  Moreover, the
expansion graphs of two virtual edges from different connected
components of $L_\mu^\cup$ end up in different split components with
respect to $u$ and $v$.  Thus, we get at least two split components
with a common path from $u$ to $v$.  If this is the case, we say that
the separating pair $\{u, v\}$ \emph{separates a common cycle}.  We
obtain the following lemma; see
Figure~\ref{fig:union-sep-pair-sep-cycle}.

\begin{figure}
  \centering
  \includegraphics[page=2]{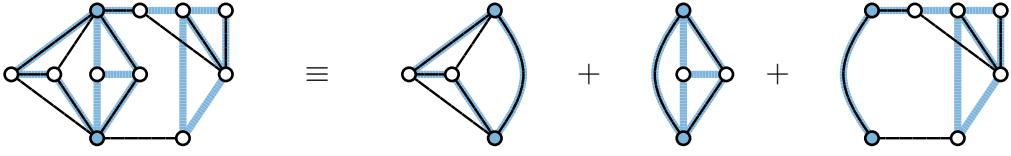}
  \caption{A union separating pair that separates a common cycle can
    be used to decompose the instances into simpler parts.}
  \label{fig:union-sep-pair-sep-cycle}
\end{figure}

\begin{lemma}
  \label{lem:union-separating-pair-separating-a-cycle}
  Let $\{u, v\}$ be a separating pair of the union graph $G^\cup$ that
  separates a common cycle and let $G_1^\cup, \dots, G_k^\cup$ be the
  split components.  Then $G^\cup$ admits a \textsc{Sefe} if and only
  if $G_i^\cup$ with the additional common edge $uv$ admits a
  \textsc{Sefe} for $i = 1, \dots, k$.
\end{lemma}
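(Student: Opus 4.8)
The plan is to prove both directions of the equivalence, with the forward direction being essentially trivial and the backward direction (merging partial solutions) being the real work.

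\medskip

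\textbf{Forward direction.} Suppose $G^\cup$ admits a \textsc{Sefe}, realized by planar drawings of $\1G$ and $\2G$ coinciding on $G$. For each split component $G_i^\cup$, restricting these drawings gives a \textsc{Sefe} of $G_i^\cup$. It remains to observe that we may add the common edge $uv$ to each $G_i^\cup$. Since $\{u,v\}$ separates a common cycle, there are (at least) two split components each containing a common $u$--$v$ path; call two such paths $\pi$ and $\pi'$, belonging to components $G_1^\cup$ and $G_2^\cup$ say. In the \textsc{Sefe} of $G^\cup$, the common cycle $\pi \cup \pi'$ is drawn, and it splits the sphere into two faces; every other split component $G_j^\cup$ lies in one of these two faces. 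Now for component $G_i^\cup$, route the new common edge $uv$ alongside a common $u$--$v$ path contained in \emph{some other} split component: concretely, in the drawing of $G^\cup$, the union $G_i^\cup \cup \pi$ (or $G_i^\cup \cup \pi'$ if $\pi \subseteq G_i^\cup$) is drawn planarly with $\pi$ a common path, and adding an edge $uv$ parallel to $\pi$ keeps everything planar in both $\1G$ and $\2G$; restricting this augmented drawing to $G_i^\cup + uv$ yields the desired \textsc{Sefe}.

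\medskip

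\textbf{Backward direction.} Suppose each $G_i^\cup + uv$ admits a \textsc{Sefe} $\E_i$. In $\E_i$ the edge $uv$ is drawn in both $\1{G_i}$ and $\2{G_i}$; together with $\E_i$ this means $u$ and $v$ lie on a common face of the drawing of $\1{G_i}$ and on a common face of the drawing of $\2{G_i}$ (the faces on either side of the edge $uv$), and moreover these can be chosen to be the \emph{same} pair of "sides", so that deleting $uv$ leaves a \textsc{Sefe} of $G_i^\cup$ in which $u$ and $v$ are incident to a common face whose two bounding walks (in $\1{G_i}$ and $\2{G_i}$) we may place on the outer side. The plan is then to glue the $\E_i$ along $\{u,v\}$: choose the outer face of each $\E_i$ to be this common face containing $u$ and $v$ on its boundary, and nest the components $G_2^\cup, \dots, G_k^\cup$ one after another inside the outer face of $G_1^\cup$, each time drawing $G_i^\cup$ in a small disk touching the boundary near the arc of the would-be edge $uv$. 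Because in each $\E_i$ the local rotation at $u$ (and at $v$) can be split at the position of the deleted edge $uv$, the rotations of all components can be concatenated consistently at $u$ and at $v$ for \emph{both} graphs simultaneously. This produces planar drawings of $\1G = \bigcup_i \1{G_i}$ and $\2G = \bigcup_i \2{G_i}$ that agree on $G = \bigcup_i G_i$, i.e.\ a \textsc{Sefe} of $G^\cup$.

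\medskip

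\textbf{Main obstacle.} The delicate point is the gluing at the separating pair $\{u,v\}$: a priori each $\E_i$ only tells us the order of edges around $u$ within $G_i^\cup$, and we must interleave these $k$ cyclic orders into one cyclic order around $u$ that is realized by a planar drawing of \emph{all} of $\1G$, and simultaneously do the analogous thing around $v$ and for $\2G$. The role of the added edge $uv$ is exactly to control this: it pins down a reference position in each rotation system at $u$ and at $v$, in both colors, so that the concatenation is forced to be consistent across the two graphs. I would make this precise by arguing that in $\E_i$ one may assume $u$ is incident to the outer face with the (phantom) edge $uv$ immediately clockwise of $G_i^\cup$'s edges at $u$ and immediately counterclockwise at $v$ — essentially the same "put $v$ on the outer face" normalization as in the proof of Lemma~\ref{lem:union-cut} — and then the merge is the straightforward nesting construction. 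One should also double-check the edge case of only two split components, but the argument is identical since a common cycle is still separated.
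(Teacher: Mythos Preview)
Your proposal is correct and follows essentially the same approach as the paper. The only notable difference is in the forward direction: where you argue geometrically that one can route the new edge $uv$ ``alongside'' a common $u$--$v$ path $\pi$ from another split component, the paper phrases this more crisply by observing that $G_i^\cup \cup \pi$ is a subgraph of $G^\cup$ (hence admits a \textsc{Sefe}) and that \textsc{Sefe} is closed under contracting common edges, so $\pi$ may be contracted to the single common edge $uv$; this sidesteps the need to verify that the ``alongside'' routing avoids $G_i^\cup$.
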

\begin{proof}
  Assume we have a solution for each subinstances $G_i^\cup + uv$.  As
  $uv$ is a common edge, we can assume without loss of generality that
  it lies in the boundary of the outer face.  It is thus easy to
  obtain a drawing of $G^\cup$ from these partial solutions without
  introducing any new crossings.  Thus, this yields a \textsc{Sefe} of
  $G^\cup$.

  Conversely, assume $G^\cup$ admits a \textsc{Sefe}.  As $\{u, v\}$
  separates a common cycle, we can assume that $G_1$ and $G_2$ both
  contain a path of common edges connecting $u$ and $v$.  We have to
  show that $G_i + uv$ admits a \textsc{Sefe} for every $i = 1, \dots,
  k$.  Assume that $i \not= 1$.  Let $\pi$ be the path of common edges
  connecting $u$ and $v$ in $G_1$.  The graph $G_i + \pi$ (which is a
  subgraph of $G^\cup$) admits a \textsc{Sefe} as the property of
  admitting a \textsc{Sefe} is closed under taking subgraphs.
  Moreover, it is also closed under contracting common edges.  Thus,
  we can assume that $\pi$ is actually the common edge $uv$.  This
  yields a \textsc{Sefe} of $G_i + uv$.  For $i = 1$ we can use the
  common path connecting $u$ and $v$ in $G_2$ instead.
\end{proof}

As argued above, a disconnected union-link graph implies the existence
of a separating pair that separates a common cycle.  We thus obtain
the following theorem.

\begin{theorem}
  \label{thm:all-link-graphs-are-connected}
  There is a linear-time algorithm that decomposes a \textsc{Sefe}
  instance into an equivalent set of \textsc{Sefe} instances of total
  linear size in which all union-link graphs are connected.
\end{theorem}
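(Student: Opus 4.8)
The plan is to turn the structural observation preceding the theorem into an algorithm by repeatedly applying Lemma~\ref{lem:union-separating-pair-separating-a-cycle}. First I would argue the correctness skeleton: by Lemma~\ref{lem:link-graph-cycle-or-paths} (applied under the assumption that the instance admits a \textsc{Sefe}; if it does not, there is nothing to decompose since we will output an equivalent set) every union-link graph $L_\mu^\cup$ of a P-node $\mu$ is a cycle or a disjoint union of paths. If some $L_\mu^\cup$ is disconnected, then as observed in the paragraph before the theorem, the poles $u,v$ of $\mu$ form a separating pair of $G^\cup$ that separates a common cycle, so Lemma~\ref{lem:union-separating-pair-separating-a-cycle} applies: replacing the instance by the split components $G_i^\cup + uv$ yields an equivalent set of instances. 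Iterating this until no P-node has a disconnected union-link graph gives an equivalent set of instances in which all union-link graphs are connected, proving the qualitative statement.

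Next I would address the two quantitative claims: total linear size and linear running time. For the size bound, the key point is that each application of Lemma~\ref{lem:union-separating-pair-separating-a-cycle} splits the union graph at a separating pair and adds only a single common edge $uv$ per resulting piece; since the split components share only $\{u,v\}$, the sum of the sizes of the $G_i^\cup + uv$ exceeds $|G^\cup|$ by only $O(k)$ where $k$ is the number of split components, and $\sum(k-1)$ over all splits performed is bounded by the number of virtual edges summed over all P-nodes of all blocks of $G$, which is $O(n)$. Hence the total size over the whole recursion stays $O(n)$. I would make this precise by charging each new copy of an edge $uv$ to a connected component of the link graph that becomes its own subinstance.

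For the running time, rather than recursing naively I would compute all the splits in one pass. Build the SPQR-trees of the blocks of $G$ in linear time~\cite{gm-lti-00}; for each P-node $\mu$ compute the union-link graph $L_\mu^\cup$ by identifying, for each virtual edge, which other virtual edges are reachable from its inner vertices via a path avoiding $B$ — this is a bounded amount of connectivity information per bridge of $G^\cup$ attached at the poles, computable globally in linear time by examining the connected components of $G^\cup - \{u,v\}$ restricted to the relevant block. Then the connected components of each $L_\mu^\cup$ directly prescribe how to partition the virtual edges, hence the split components, and the whole decomposition can be assembled in one traversal.

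The main obstacle I expect is the bookkeeping for simultaneous splits at several P-nodes inside the same block, and in the degenerate ``two split component'' case that the paper folds into the P-node language: one must make sure that splitting at one separating pair does not destroy or create link-graph structure at another in a way that breaks the size or time bound, and that the added edges $uv$ are consistently marked common so that no spurious union cutvertices or separating pairs are introduced that would need re-processing. Verifying that the process terminates with a genuinely linear-size output — i.e., that the splits are essentially ``parallel'' and do not compound multiplicatively — is the technical heart; the correctness of a single step is already handed to us by Lemma~\ref{lem:union-separating-pair-separating-a-cycle}.
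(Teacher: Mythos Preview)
Your correctness skeleton and the linear-size accounting are fine and match the paper: iterate Lemma~\ref{lem:union-separating-pair-separating-a-cycle}, and charge the extra edges $uv$ to virtual edges of P-nodes of the common blocks.

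The running-time argument, however, diverges from the paper and has a genuine gap. You work with the SPQR-trees of the blocks of the \emph{common} graph and propose to compute each union-link graph $L_\mu^\cup$ explicitly ``by examining the connected components of $G^\cup - \{u,v\}$''. As stated this is a per-P-node computation and gives no global linear bound; different P-nodes have different pole pairs, and you never explain how to amortise the connectivity queries across all of them. The ``bounded amount of connectivity information per bridge'' remark is where the real work would be (it is essentially the machinery the paper develops much later in Section~\ref{sec:prepr-2-comp} to make bridge/attachment computations linear), and you cannot just assert it here.

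The paper avoids this entirely by switching viewpoint: it builds the SPQR-tree of the \emph{union} graph $G^\cup$ (which exists and is linear-time computable even though $G^\cup$ may be non-planar). A virtual edge $\eps = uv$ in this tree is declared a \emph{common virtual edge} if $\expan(\eps)$ contains a common $uv$-path; then $\{u,v\}$ separates a common cycle iff, in the skeleton containing $\eps$, there is a second common $uv$-path disjoint from $\eps$ (two common virtual edges in a P-node; $\eps$ common plus a common-virtual-edge path in $\skel(\mu)-\eps$ for S/R-nodes). The common-virtual-edge labels are computed by one bottom-up pass (for non-parent edges) and one top-down pass (for parent edges), each spending $O(|\skel(\mu)|)$ per node via simple connectivity/bridge tests inside the skeleton. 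This directly yields all cycle-separating separating pairs in linear time without ever materialising any $L_\mu^\cup$. If you want to salvage your route, you would need to replace the per-P-node connectivity step with a single global structure (e.g., the component--subbridge tree of $G^\cup$ with respect to all common blocks) and argue linear total work; absent that, the linear-time claim is unsupported.
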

\begin{proof}
  Clearly, applying the decomposition implied by
  Lemma~\ref{lem:union-separating-pair-separating-a-cycle}
  exhaustively results in a set of instances of total linear size.  It
  remains to show that we can apply all decomposition steps in total
  linear time.  To this end, consider the SPQR-tree $\mathcal T$ of
  the union graph $G^\cup$.  Note that $G^\cup$ is non-planar in
  general and thus the R-nodes skeletons of $\mathcal T$ may be
  non-planar.  Nonetheless, $\mathcal T$ can be computed in linear
  time~\cite{gm-lti-00} and represents all separating pairs
  of~$G^\cup$.

  Let $\mu$ be an inner node of $\mathcal T$ and let $\eps = uv$ be a
  virtual edge in $\skel(\mu)$.  We say that $\eps$ is a \emph{common
    virtual edge} if the expansion graph of $\eps$ includes a common
  $uv$-path from.  Note that $\{u, v\}$ is a separating pair of
  $G^\cup$.  Moreover, if we know for each virtual edge whether it is
  a common virtual edge, we can determine whether $\{u, v\}$ separates
  a common cycle by only looking at $\skel(\mu)$.  More precisely, if
  $\mu$ is a P-node, then $\{u, v\}$ separates a common cycle if and
  only if two or more virtual edges are common virtual edges.  For S-
  and R-nodes, $\{u, v\}$ separates a common cycle if and only if the
  virtual edge $\eps$ is a common virtual edge and $\skel(\mu) - \eps$
  includes a path of common virtual edges from $u$ to~$v$.

  Let us assume, we know for each virtual edge, whether it is a common
  virtual edge.  Then we can easily compute the decomposition by
  rooting $\mathcal T$ and processing it bottom up.  Thus, it remains
  to compute the common virtual edges in linear time.  To this end,
  first root $\mathcal T$ at a Q-node.  By processing $\mathcal T$
  bottom up, one can easily compute for each virtual edge, except for
  the parent edges, whether it is a common virtual edge or not.

  It remains to deal with the parent edges.  We process $\mathcal T$
  top down.  When processing a node $\mu$, we assume that we know the
  common virtual edges of $\skel(\mu)$ (potentially including the
  parent edge).  We then compute in $O(|\skel(\mu)|)$ time for which
  children of $\mu$, the parent edge is a common virtual edge.  If
  $\mu$ is the root (i.e., a Q-node), then the only child of $\mu$ has
  a common virtual edge as parent edge if and only if the edge
  corresponding to the Q-node $\mu$ is a common edge.

  Let $\mu$ be a P-node and let $\eps$ be a virtual edge in
  $\skel(\mu)$.  Then $\twin(\eps)$ (which is the parent edge of the
  child corresponding to $\eps$) is a common virtual edge if and only
  if $\skel(\mu)$ includes a common virtual edge different from
  $\eps$.  Thus, $\mu$ can be processed in $O(|\skel(\mu)|)$ time.  If
  $\mu$ is an S-node, it similarly holds that $\twin(\eps)$ is a
  common virtual edge if and only if all virtual edges of $\skel(\mu)$
  except maybe $\eps$ are common virtual edges.

  Finally, if $\mu$ is an R-node, consider the graph $\skel'(\mu)$
  obtained from $\skel(\mu)$ by deleting all non-common virtual edges.
  Let $\eps$ be an arbitrary virtual edge of $\skel(\mu)$.  If $\eps$
  is non-common, then $\twin(\eps)$ is a common virtual edge if and
  only if the end vertices of $\eps$ lie in the same connected
  component of $\skel'(\mu)$.  If $\eps$ is a common virtual edge,
  then $\twin(\eps)$ is a common virtual edge if and only if $\eps$ is
  not a bridge in $\skel'(\mu)$.  Note that both of these properties
  can be checked in constant time for each virtual edge of
  $\skel(\mu)$ after $O(|\skel(\mu)|)$ preprocessing time.  Thus, we
  can also process R-nodes in $O(|\skel(\mu)|)$ time, which yields an
  overall linear running time.
\end{proof}

Let $B$ be a block of the common graph and let $\mu$ be a P-node of
$\mathcal T(B)$.  By Theorem~\ref{thm:all-link-graphs-are-connected},
we can assume that the union-link graph $L_\mu^\cup$ is connected.
Thus, the ordering of the virtual edges in $\skel(\mu)$ is fixed up to
reversal.  Hence, the embedding choices for $\mu$ are the same as
those for an R-node.

In the following, we provide further simplifications by eliminating
some types of simultaneous separating pairs.  Let $u$ and $v$ be the
poles of the P-node $\mu$.  Consider the case that $\{u, v\}$ is a
separating pair in the union graph $G^\cup$ with split components
$G_1^\cup, \dots, G_k^\cup$ (we can assume by
Theorem~\ref{thm:no-union-cutvertices} that neither $u$ nor $v$ is a
cutvertex in $G^\cup$).  As before, we denote the common graph and the
exclusive graphs corresponding to the \textsc{Sefe} instances
$G_i^\cup$ (for $i = 1, \dots, k$) by $G_i$, $\1{G_i}$ and $\2{G_i}$,
respectively.

We define $G_i^\cup$ to be \emph{common connected} if $u$ and $v$ are
connected by a path in $G_i$; see
Figure~\ref{fig:union-split-comp-conn-types}a.  The split component
$G_i^\cup$ is \emph{exclusive connected}, if it is not common
connected but $u$ and $v$ are connected by exclusive paths in both
graphs $\1{G_i}$ and $\2{G_i}$; see
Figure~\ref{fig:union-split-comp-conn-types}b.  It is
\emph{$\circled{1}$-connected}, if $u$ and $v$ are connected by a path
in $\1{G_i}$ but not in $\2{G_i}$; see
Figure~\ref{fig:union-split-comp-conn-types}c.  The term
\emph{$\circled{2}$-connected} is defined analogously; see
Figure~\ref{fig:union-split-comp-conn-types}d.  Note that being
$\circled{1}$- or $\circled{2}$-connected excludes being common or
exclusive connected.  Finally, if $G_i^\cup$ is neither of the above,
it is \emph{union connected}; see
Figure~\ref{fig:union-split-comp-conn-types}e.

We say that $\mu$ is an \emph{impossible P-node} if $\1{L_\mu}$ is a
cycle and one of the split components is $\circled{1}$-connected, if
$\2{L_\mu}$ is a cycle and one of the split components is
$\circled{2}$-connected, or if $\1{L_\mu} \cup \2{L_\mu}$ is a cycle
and one of the split components is exclusive connected.


\begin{figure}
  \centering
  \includegraphics[page=3]{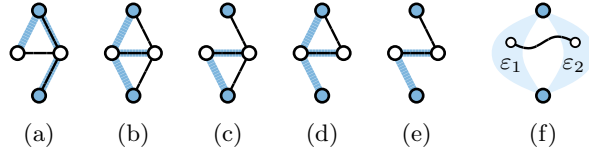}
  \caption{(a--e)~A split component that is common connected,
    exclusive connected, $\circled{1}$-connected,
    $\circled{2}$-connected, and union connected, respectively.
    (f)~The face between two virtual edges that are
    $\circled{1}$-linked (i.e., connected in $\1{L_\mu}$).}
  \label{fig:union-split-comp-conn-types}
\end{figure}

\begin{lemma}
  \label{lem:no-impossible-P-nodes}
  A \textsc{Sefe} instance with an impossible P-node is a no-instance.
\end{lemma}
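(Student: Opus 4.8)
The plan is to argue by contradiction. Suppose $(\1G,\2G)$ admits a simultaneous embedding, given by planar embeddings $\E_1$ of $\1G$ and $\E_2$ of $\2G$ inducing the same embedding on the common graph $G$, hence the same embedding on the block $B$ whose SPQR-tree contains $\mu$, and in particular the same cyclic order $\eps_1,\dots,\eps_k$ of the virtual edges of $\skel(\mu)$ with poles $u$ and $v$. Whenever $\1{L_\mu}$, $\2{L_\mu}$, or $\1{L_\mu}\cup\2{L_\mu}$ is a cycle, it is a spanning cycle on all $k$ virtual edges; since by Lemma~\ref{lem:link-graph-cycle-or-paths} the graph $L_\mu^\cup$ is a cycle or a collection of paths and contains this spanning cycle, it coincides with it, and by Lemma~\ref{lem:link-graph-adjacent} the order $\eps_1,\dots,\eps_k$ is then forced up to reversal. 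For $i=1,\dots,k$ (indices mod $k$) let $\ell_i$ be the face of $B$ between $\expan(\eps_i)$ and $\expan(\eps_{i+1})$; these are precisely the $k$ faces of $B$ incident to both $u$ and $v$. I will use two observations. \emph{(i)}~If $\eps_i$ and $\eps_{i+1}$ are linked in $\mathbf x\in\{1,2\}$, then (as in the proof of Lemma~\ref{lem:link-graph-adjacent}) the linking path joins an inner vertex of $\expan(\eps_i)$ to an inner vertex of $\expan(\eps_{i+1})$ with interior disjoint from $B$, so it is drawn in $\E_{\mathbf x}$ inside a common face of $\expan(\eps_i)$ and $\expan(\eps_{i+1})$, which---as $\eps_i,\eps_{i+1}$ are consecutive---is $\ell_i$; viewing $\ell_i$ as a disk whose boundary circle is divided by $u$ and $v$ into one arc on $\expan(\eps_i)$ and one arc on $\expan(\eps_{i+1})$, this linking arc and any $u$--$v$ arc drawn inside $\ell_i$ have interleaved endpoints on the circle and therefore cross. \emph{(ii)}~If a split component $G_j^\cup$ of $\{u,v\}$ in $G^\cup$ is not common connected, then $G_j^\cup\cap B=\{u,v\}$ (otherwise $G_j^\cup$ would contain some $\expan(\eps_i)$ and with it a common $u$--$v$ path), so any $u$--$v$ path in $\1{G_j}$ or $\2{G_j}$ has interior disjoint from $B$ and is drawn, in the respective embedding, inside a single lens $\ell_i$.

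Consider first the case that $\1{L_\mu}$ is a cycle and some split component $G_j^\cup$ is $\circled{1}$-connected (the case with $\2{L_\mu}$ and a $\circled{2}$-connected component is symmetric). Since $\1{L_\mu}$ spans all virtual edges, every lens $\ell_i$ contains a $\circled{1}$-linking arc in $\E_1$ by \emph{(i)}. By $\circled{1}$-connectivity, $\1{G_j}$ contains a $u$--$v$ path $\pi_1$, which by \emph{(ii)} lies in $\E_1$ inside some lens $\ell_p$. By \emph{(i)}, $\pi_1$ crosses the $\circled{1}$-linking arc of $\ell_p$; but that arc lies in the split component containing $\expan(\eps_p)$, which is distinct from $G_j^\cup$ by \emph{(ii)}, so it is vertex-disjoint from $\pi_1\subseteq G_j^\cup$ (distinct split components meet only in $\{u,v\}$, and the linking arc avoids $u$ and $v$). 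Thus $\E_1$ has crossing edges, a contradiction.

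Now suppose $\1{L_\mu}\cup\2{L_\mu}$ is a cycle and some split component $G_j^\cup$ is exclusive connected, so $\1{G_j}$ and $\2{G_j}$ contain $u$--$v$ paths $\pi_1$ and $\pi_2$, respectively. Since $\1{L_\mu}\cup\2{L_\mu}$ spans all virtual edges, each lens $\ell_i$ carries a linking arc in $\E_1$ or in $\E_2$. By \emph{(ii)}, $\pi_1$ lies in $\E_1$ inside some lens $\ell_p$; if $\ell_p$ carried a $\circled{1}$-linking arc it would cross $\pi_1$ exactly as in the previous case, so $\ell_p$ carries a $\circled{2}$- but no $\circled{1}$-linking arc. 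Symmetrically $\pi_2$ lies in $\E_2$ inside a lens $\ell_q$ carrying a $\circled{1}$- but no $\circled{2}$-linking arc; in particular $p\ne q$. It remains to rule out that the two exclusive $u$--$v$ paths of the single split component $G_j^\cup$ escape into different lenses. For this one uses that $G_j^\cup\setminus\{u,v\}$ is connected and, by \emph{(ii)}, disjoint from $B$: walking inside $G_j^\cup\setminus\{u,v\}$ from $\pi_1$ to $\pi_2$ one meets a connected component $D$ of $G$ with $D\subseteq G_j^\cup\setminus\{u,v\}$ (along a common edge, or at a common vertex where the walk passes from $\1G$-edges to $\2G$-edges), and $\E_1$ forces $D$ into lens $\ell_p$ while $\E_2$ forces $D$ into lens $\ell_q$; since the relative position of $D$ with respect to the component of $G$ containing $B$ is the same in $\E_1$ and $\E_2$, this yields $\ell_p=\ell_q$, a contradiction.

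I expect the last step of the third case to be the main obstacle. The first two cases are essentially the planarity argument already implicit in the proof of Lemma~\ref{lem:link-graph-adjacent}. In the third case the crossing argument only forbids $\pi_1$ (respectively $\pi_2$) from the lenses whose link is realized in $\E_1$ (respectively $\E_2$), and showing that $\pi_1$ and $\pi_2$ are nonetheless confined to the same lens requires exploiting carefully that $\1{G_j}$ and $\2{G_j}$---although possibly disconnected, and although their common part $G_j$ need not connect $u$ to $v$---are two halves of one split component, and then invoking the consistency, in any simultaneous embedding, of the relative positions of the common components of $G_j^\cup$. Making this precise, in particular tracking which component of $\1{G_j}$ or $\2{G_j}$ a given common component of $G$ belongs to, is the delicate point.
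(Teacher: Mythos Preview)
Your overall approach matches the paper's: in each case one shows that the lens of $B$ receiving the relevant $u$--$v$ path(s) of $G_j^\cup$ must be free of the corresponding link, contradicting that the link graph is a Hamiltonian cycle on the virtual edges. The first two cases are correct and coincide with the paper's argument.

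In the exclusive-connected case there is a gap. You locate a single common component $D$ along the walk from $\pi_1$ to $\pi_2$ and assert that $\E_1$ forces $D$ into $\ell_p$ \emph{and} $\E_2$ forces $D$ into $\ell_q$. The first half holds because $D$ is reached from $\pi_1$ by $\1G$-edges; but the second half would require the same $D$ to be reached from $\pi_2$ by $\2G$-edges, which fails whenever the walk changes colour more than once. (The paper, incidentally, does not spell this step out either; it simply asserts that $\1\pi$ and $\2\pi$ lie in the same common face ``as they both belong to the same split component''.) The fix is the chain argument you are gesturing at: for each vertex $w$ of $G_j^\cup\setminus\{u,v\}$ let $\phi(w)$ be the lens of $B$ containing $w$, computed via $\E_1$ if $w\in V(\1G)$ and via $\E_2$ if $w\in V(\2G)$; these agree on common vertices because the induced embeddings of $G$ coincide. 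Every edge of $G_j^\cup\setminus\{u,v\}$ lies in one of the $\E_i$ and avoids $B$, so $\phi$ is constant along it. Hence $\phi$ is constant on the connected graph $G_j^\cup\setminus\{u,v\}$, giving $\ell_p=\ell_q$.
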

\begin{proof}
  Let $G_1^\cup, \dots, G_k^\cup$ be the split components with respect
  to the poles $u$ and $v$ of the P-node $\mu$.  As $\mu$ is an
  impossible P-node, the union-link graph $L_\mu^\cup$ is a cycle.
  Thus, at most one split component can be common connected.  As $u$
  and $v$ are the poles of a P-node of the common graph, one of the
  split components must be common connected.  Thus, exactly one split
  component, without loss of generality $G_1^\cup$, is common
  connected.

  First assume that $\mu$ is an impossible P-node due to the fact that
  $\1{L_\mu}$ is a cycle and one of the split components, without loss
  of generality $G_2^\cup$ is $\circled{1}$-connected.  

  Assume the given \textsc{Sefe} instance $G^\cup$ admits a
  \textsc{Sefe} and assume $G_1^\cup$ and $G_2^\cup$ are embedded
  according to this \textsc{Sefe}.  As $G_2^\cup$ is
  $\circled{1}$-connected
  (Figure~\ref{fig:union-split-comp-conn-types}c), the graph $\1{G_2}$
  includes a path $\1\pi$ from $u$ to $v$.  Clearly, $\1\pi$ lies in a
  single face of $\1{G_1}$.  Let $f$ be the corresponding face of the
  common graph $G_1$.  The boundary of $f$ belongs to the expansion
  graphs of two different virtual edges $\eps_1$ and $\eps_2$ of
  $\skel(\mu)$; see Figure~\ref{fig:union-split-comp-conn-types}f.
  However, $\eps_1$ and $\eps_2$ cannot be $\circled{1}$-linked (as in
  Figure~\ref{fig:union-split-comp-conn-types}f), as otherwise $\1\pi$
  could not be embedded into the face $f$ without having a crossing in
  $\1G$.  It follows that $\1{L_\mu}$ cannot be a cycle, a
  contradiction.

  Analogously, if $\2{L_\mu}$ is a cycle and one of the split
  components is $\circled{2}$-connected, we find a path $\2\pi$ that
  is a witness for a pair of adjacent virtual edges that are not
  $\circled{2}$-linked.  It remains to consider the case where
  $\1{L_\mu} \cup \2{L_\mu}$ is a cycle and $G_2^\cup$ is exclusive
  connected.  In this case, $\1{G_2}$ and $\2{G_2}$ include paths
  $\1\pi$ and $\2\pi$, respectively, connecting $u$ and $v$.  As they
  both belong to the same split component, they have to be embedded in
  the same common face of $G_1$.  Thus, there are adjacent virtual
  edges that are neither $\circled{1}$- nor $\circled{2}$-linked.
  Hence, $\1{L_\mu} \cup \2{L_\mu}$ is not a circle.
\end{proof}

Due to this lemma, it is sufficient to consider the case that $\mu$ is
not an impossible P-node.  We want to show that the different split
components (in the union graph, with respect to the poles $u$ and $v$
of $\mu$) can be handled independently.  However, we have to exclude a
special case to make this true.  Let $G_i^\cup$ be one of the split
components that is exclusive connected.  We say that $G_i^\cup$ has
\emph{common ends} if it contains a common edge incident to $u$ or to
$v$.  Figure~\ref{fig:union-sep-pair-common-ends} shows an example,
where the following lemma does not hold without excluding exclusive
connected components with common ends.

\begin{figure}
  \centering
  \includegraphics[page=4]{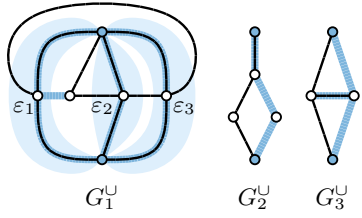}
  \caption{The union split component $G_1^\cup$ includes the expansion
    graphs of all three virtual edges $\eps_1$, $\eps_2$, and
    $\eps_3$.  The edge pairs $\eps_1, \eps_3$ and $\eps_2, \eps_3$
    are $\circled{1}$-linked, thus the exclusive connected split
    component $G_2^\cup$ cannot be embedded into the faces between
    $\eps_1$ and $\eps_3$ or between $\eps_2$ and $\eps_3$.  Although
    $\eps_1$ and $\eps_2$ are neither $\circled{1}$- nor
    $\circled{2}$-linked, $G_2^\cup$ cannot be embedded into the face
    between $\eps_1$ and $\eps_2$ due to its common end.  The
    component $G_3^\cup$ has no common end and can be embedded into
    the face between $\eps_1$ and $\eps_2$.}
  \label{fig:union-sep-pair-common-ends}
\end{figure}

\begin{lemma}
  \label{lem:union-sep-pair-no-common-ends}
  Let $G^\cup$ be a \textsc{Sefe} instance and let $\mu$ be a
  non-impossible P-node whose poles are a separating pair with split
  components $G_1^\cup, \dots, G_k^\cup$.  Assume $G_1^\cup$ is the
  only common connected split component and none of the exclusive
  connected components has common ends.  Then $G^\cup$ admits a
  \textsc{Sefe} if and only if $G_1^\cup$ admits a \textsc{Sefe} and
  $G_i^\cup$ together with the common edge $uv$ admits a \textsc{Sefe}
  for $i = 2, \dots, k$.
\end{lemma}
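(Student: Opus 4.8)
The proof will be a slightly more careful version of the argument for Lemma~\ref{lem:union-separating-pair-separating-a-cycle}, with the new ingredient being the hypothesis that no exclusive connected split component has common ends. I would prove the two directions separately.

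For the ``if'' direction, I would assume we are given a \textsc{Sefe} $\E_1$ of $G_1^\cup$ and a \textsc{Sefe} $\E_i$ of $G_i^\cup + uv$ for $i = 2,\dots,k$, and show how to merge them into a \textsc{Sefe} of $G^\cup$. In each $\E_i$ ($i \ge 2$) the common edge $uv$ lies on a face of the common graph, so without loss of generality we may assume $uv$ is on the outer face of $\E_i$; we then cut along $uv$ and glue $\E_i$ into a face of $\E_1$ incident to both $u$ and $v$. Such a face exists on the $\E_1$ side because $G_1^\cup$ is common connected and $\mu$ is a P-node of the common graph: in $\E_1$ the virtual edges of $\skel(\mu)$ are arranged cyclically around $\{u,v\}$, and we can place each $G_i^\cup$ into one of the resulting faces. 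The point where the hypotheses enter is that we must place the $G_i^\cup$ for $i\ge 2$ into faces of $\E_1$ without creating crossings \emph{within} $\1G$ or within $\2G$; for an exclusive connected $G_i^\cup$ this requires a face bounded by two virtual edges that are neither $\circled 1$- nor $\circled 2$-linked, and by the common-ends hypothesis such a face is available (this is exactly the distinction drawn in Figure~\ref{fig:union-sep-pair-common-ends} between $G_2^\cup$ and $G_3^\cup$). For $\circled 1$-connected components we need a face whose bounding virtual edges are not $\circled 1$-linked, which exists precisely because $\mu$ is not an impossible P-node, and symmetrically for $\circled 2$-connected ones; union connected components can go anywhere. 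A counting/pigeonhole argument over the cyclic arrangement, together with Lemma~\ref{lem:no-impossible-P-nodes}, shows the required faces can be chosen simultaneously.

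For the ``only if'' direction, I would assume $G^\cup$ admits a \textsc{Sefe} and exhibit solutions of the subinstances. Restricting the embedding to $G_1^\cup$ immediately gives a \textsc{Sefe} of $G_1^\cup$. For $i \ge 2$, I use that $G_1^\cup$ is common connected, so there is a common $u$--$v$ path $\pi$ in $G_1$; the graph $G_i^\cup + \pi$ is a subgraph of $G^\cup$, hence admits a \textsc{Sefe}, and since admitting a \textsc{Sefe} is closed under contracting common edges (as already used in the proof of Lemma~\ref{lem:union-separating-pair-separating-a-cycle}) we may contract $\pi$ to the single common edge $uv$, yielding a \textsc{Sefe} of $G_i^\cup + uv$. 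This direction does not use the common-ends or non-impossibility hypotheses at all.

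\textbf{Main obstacle.} The hard part is the ``if'' direction: carefully arguing that the faces of $\E_1$ into which the various split components must be inserted can be chosen \emph{all at once} without conflicts, i.e., that the demands imposed by exclusive connected components (need a face not $\circled 1$- and not $\circled 2$-linked), by $\circled j$-connected components (need a face not $\circled j$-linked), and by union connected components (no constraint) can be met simultaneously given that $\mu$ is non-impossible and no exclusive connected component has common ends. This is essentially a combinatorial feasibility statement about the cyclic link structure of $\skel(\mu)$, and getting the bookkeeping right — especially the interaction between the common-ends condition and the positions of common edges incident to $u$ and $v$ in $\E_1$ — is where the real work lies.
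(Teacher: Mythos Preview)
Your ``only if'' direction is correct and is exactly what the paper does: restrict to $G_1^\cup$, and for $i\ge 2$ take a common $u$--$v$ path $\pi\subseteq G_1$, observe that $G_i^\cup+\pi$ is a subgraph of $G^\cup$, and contract $\pi$ to the edge $uv$.

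Your ``if'' direction has the right shape but two genuine misconceptions.

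\textbf{(1) You misplace the role of the common-ends hypothesis.} The existence of a common face between two virtual edges that are neither $\circled 1$- nor $\circled 2$-linked comes from $\mu$ being non-impossible (i.e.\ $\1{L_\mu}\cup\2{L_\mu}$ is not a cycle), not from the common-ends assumption. What the common-ends hypothesis actually buys is different and more subtle. For an exclusive connected $G_i^\cup$ you must place all of $\1{G_i}$ in a single $\circled 1$-face $\1f$ and all of $\2{G_i}$ in a single $\circled 2$-face $\2f$, both inside the same common face $f$; but if $u$ is a cutvertex of $G_1$ with a block embedded inside $f$, then $\1f$ and $\2f$ can be forced to attach at \emph{different} incidences of $u$ to $f$ (this is precisely the obstruction in Figure~\ref{fig:union-sep-pair-common-ends}). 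That would make the cyclic order of common edges around $u$ inconsistent between the two drawings---unless $G_i^\cup$ contributes no common edges at $u$ or $v$, which is exactly the no-common-ends hypothesis. So the hypothesis is about edge-ordering consistency at the poles, not about face availability.

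\textbf{(2) Your ``main obstacle'' is not the obstacle.} You worry about a global feasibility argument allotting each $G_i^\cup$ its own face. The paper sidesteps this entirely by an iterative argument: insert $G_2^\cup$, then $G_3^\cup$, and so on, observing that each insertion preserves the existence of the faces needed for the next. The key device you are missing is that when $\ii{G_j}$ has no $u$--$v$ path (e.g.\ union connected, or $\circled{2}$-connected for $i=1$), you do \emph{not} place it in a single $\circled i$-face: you split $\ii{G_j}$ into the component containing $u$ and the rest, and put these into possibly different $\circled i$-faces $\ii{f_u}$ and $\ii{f_v}$ inside the same common face $f$. With this splitting, inserting a component never destroys the $u$--$v$ incidence of the relevant faces, so there is no combinatorial allocation problem to solve.
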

\begin{proof}
  Assume that $G_1^\cup$ and $G_i^\cup + uv$ (for $i = 2, \dots, k$)
  admit simultaneous embeddings.  We show how to combine the
  simultaneous embeddings of $G_1^\cup$ and $G_2^\cup + uv$ to a
  simultaneous embedding of $G_1^\cup \cup G_2^\cup$.  The procedure
  can then be iteratively applied to the other split components.  We
  have to distinguish the cases that $G_2^\cup$ is union connected,
  $\circled{1}$-connected, $\circled{2}$-connected, and exclusive
  connected (without common ends).

  First assume that $G_2^\cup$ is union connected.
  Figure~\ref{fig:union-sep-pair-example-combination} shows an example
  illustrating the proof for this case.  As $u$ and $v$ are the poles
  of a P-node, the common graph $G_1$ has a face $f$ that is incident
  to $u$ and to $v$.  Let further $\1{f_u}$ and $\1{f_v}$ be faces of
  $\1{G_1}$ incident to $u$ and $v$, respectively, that are both part
  of the union face $f$.  Similarly, we choose faces $\2{f_u}$ and
  $\2{f_v}$ in $\2{G_2}$ that are incident to $u$ and $v$,
  respectively, and that are both part of $f$.  Note that $u$ might
  have several incidences to the face $f$, i.e., when $u$ is a
  cutvertex in $G_1$ and one of the corresponding blocks is embedded
  into $f$.  In this case, we choose $\2{f_u}$ such that it has
  \emph{the same incidence} to $u$ as $\1{f_u}$, i.e., the common
  edges appearing in the cyclic order around $u$ before and after
  $\2{f_u}$ are the same as those that appear before and after
  $\2{f_u}$.  We ensure the same for $\1{f_v}$ and $\2{f_v}$.  In the
  example in Figure~\ref{fig:union-sep-pair-example-combination}, $f$
  has two incidences to $u$ and two incidences to $v$ and the chosen
  incidence is marked by an angle.

  \begin{figure}
    \centering
    \includegraphics[page=5]{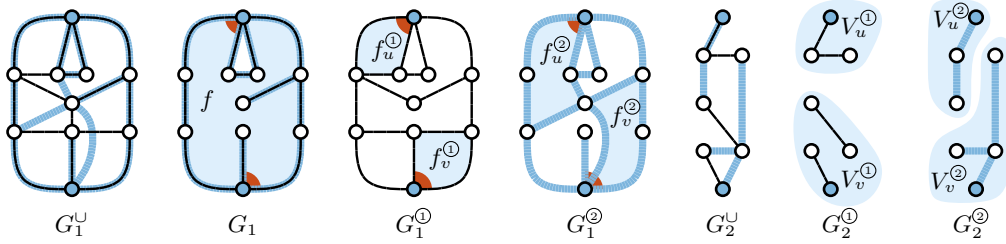}
    \caption{Two split components of the union graph illustrating the
      proof of Lemma~\ref{lem:union-sep-pair-no-common-ends}.}
    \label{fig:union-sep-pair-example-combination}
  \end{figure}

  Due to the common edge $uv$, we can assume that the \textsc{Sefe} of
  $G_2^\cup$ has $uv$ and $uv$ on the outer face.  As $G_2^\cup$ is not
  $\circled{1}$-connected, we can separate the vertices of $\1{G_2}$
  into two subsets $\1{V_u}$ and $\1{V_v}$, such that $\1{V_u}$
  contains all vertices of the connected component of $\1{G_2}$
  containing $u$, while $\1{V_v}$ contains all other vertices.  We can
  then embed the vertices of $\1{V_u}$ into $\1{f_u}$ and the vertices
  of $\1{V_v}$ into $\1{f_v}$ without changing the embedding of
  $\1{G_2}$.  In the same way, $\2{G_2}$ can be embedded into
  $\2{f_u}$ and $\2{f_v}$.

  As we did not change the embedding of $G_1^\cup$ or $G_2^\cup$, the
  edge orderings are consistent for all vertices except maybe $u$ and
  $v$.  Moreover, the relative positions between connected components
  in $G_1^\cup$ is consistent and the same holds for $G_2^\cup$.  As
  the four faces $\1{f_u}$, $\1{f_v}$, $\2{f_u}$, and $\2{f_v}$ belong
  to the same common face $f$, the relative positions of components in
  $G_2$ with respect to components in $G_1$ are also consistent.
  Moreover, all components of $G_1$ lie in the outer face of $G_2$
  with respect to $\1{G_2}$ and $\2{G_2}$.  Finally, the edge ordering
  at $u$ is consistent, as all edges incident to $u$ in $\1{G_2}$ and
  $\2{G_2}$ are embedded between the same pair of common edges in
  $G_1$.  As the same holds for $v$, we obtain a simultaneous
  embedding of $G_1^\cup \cup G_2^\cup$.

  If $G_2^\cup$ is $\circled{1}$-connected, we know that $\1{L_\mu}$
  is not a circle (otherwise, $\mu$ would be impossible).  Thus, we
  can choose the faces $f$, $\1{f_u}$, and $\1{f_v}$ such that
  $\1{f_u} = \1{f_v}$.  Then we can embed $\1{G_2}$ into this face
  without separating it.  All remaining arguments work the same as
  above.  The case that $G_2^\cup$ is $\circled{1}$-connected is
  symmetric.

  Finally, if $G_2^\cup$ is exclusive connected, there is a pair of
  virtual edges that are neither $\circled{1}$- nor
  $\circled{2}$-linked.  Thus, we can choose the common face $f$ and
  the faces $\1{f_u}$, $\1{f_v}$, $\2{f_u}$, and $\2{f_v}$ belonging
  to $f$ such that $\1{f_u} = \1{f_v} = \1f$ and $\2{f_u} = \2{f_v} =
  \2f$.  Unfortunately, we cannot always ensure that $\1f$ and $\2f$
  have the same incidence to $u$ or $v$; see
  Figure~\ref{fig:union-sep-pair-common-ends}.  However, the arguments
  form the previous cases still ensure that all relative positions and
  all cyclic orders except for maybe at $u$ and $v$ are consistent.
  As $G_2^\cup$ has no common ends, all common edges incident to $u$
  and $v$ are contained in $G_1$ and thus the cyclic orders around
  these vertices are also consistent.

  Note that combining the simultaneous embeddings $G_1^\cup$ and
  $G_2^\cup$ in this way (for all four cases) maintains the properties
  that there are faces in $G_1$, $\1{G_1}$, or $\2{G_2}$ that are
  incident to both poles $u$ and $v$.  Thus, we can continue adding
  embeddings of all remaining subinstances $G_3^\cup, \dots, G_k^\cup$
  in the same way.
\end{proof}

Assume we exhaustively applied
Lemma~\ref{lem:union-separating-pair-separating-a-cycle},
Lemma~\ref{lem:no-impossible-P-nodes}, and
Lemma~\ref{lem:union-sep-pair-no-common-ends} to a given instance of
\textsc{Sefe} and let $(\1G, \2G)$ be the resulting instance.  Let $u$
and $v$ be the poles of a P-node $\mu$ of the common graph such that
$\{u, v\}$ are a separating pair in the union graph.  By
Lemma~\ref{lem:union-separating-pair-separating-a-cycle} we can assume
that $\{u, v\}$ does not separate a common cycle.  Thus, exactly one
split component has a common $uv$-path.  By
Lemma~\ref{lem:no-impossible-P-nodes}, we can assume that $\mu$ is a
non-impossible P-node.  Thus, we could apply
Lemma~\ref{lem:union-sep-pair-no-common-ends} if there were split
components without common ends.  Hence we obtain the following theorem.

\begin{theorem}
  \label{thm:only-special-union-separating-pairs}
  Let $(\1G, \2G)$ be an instance of \textsc{Sefe}.  In linear time,
  we can find equivalent instances such that every union separating
  pair $\{u, v\}$ has one of the following properties.
  \begin{itemize}
  \item The vertices $u$ and $v$ are not the poles of a P-node of a
    common block.
  \item Every split component has a common edge incident to $u$ or to
    $v$ but only one has a common $uv$-path.
  \end{itemize}
\end{theorem}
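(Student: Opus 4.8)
The plan is to combine the three preprocessing lemmas of this subsection into a single exhaustive procedure and argue that it terminates in linear time. First I would observe that Theorem~\ref{thm:only-special-union-separating-pairs} is essentially a restatement of the discussion immediately preceding it: the point is to show that after exhaustively applying Lemma~\ref{lem:union-separating-pair-separating-a-cycle}, Lemma~\ref{lem:no-impossible-P-nodes}, and Lemma~\ref{lem:union-sep-pair-no-common-ends}, the only union separating pairs that can survive are exactly those of the two stated types. So the proof should have two halves: a correctness half explaining why a surviving separating pair must have one of the two properties, and an efficiency half explaining why the whole process runs in linear time.

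For the correctness half, I would take a union separating pair $\{u,v\}$ in the resulting instance and suppose it violates the first bullet, i.e.\ $u$ and $v$ are the poles of a P-node $\mu$ of some common block $B$. By Lemma~\ref{lem:union-separating-pair-separating-a-cycle} (applied exhaustively) $\{u,v\}$ does not separate a common cycle, so at most one split component has a common $uv$-path; since $u,v$ are poles of a P-node of the \emph{common} graph, at least one split component does, hence exactly one does. By Lemma~\ref{lem:no-impossible-P-nodes}, the resulting instance (being a yes-instance candidate; if it is a no-instance we may report that and stop) has no impossible P-node, so $\mu$ is non-impossible. Now if some split component had no common edge incident to $u$ or to $v$, then in particular the unique common-connected component is $G_1^\cup$ and there is a split component without common ends, so Lemma~\ref{lem:union-sep-pair-no-common-ends} would still be applicable, contradicting exhaustiveness. (One needs to be slightly careful here: Lemma~\ref{lem:union-sep-pair-no-common-ends} decomposes along the separating pair only when \emph{no} exclusive-connected component has common ends; a split component ``without common ends'' that is union-, $\circled{1}$-, or $\circled{2}$-connected is harmless since the lemma's hypothesis only restricts exclusive-connected ones — so I should phrase the second bullet's contrapositive as: every split component has a common edge incident to $u$ or $v$, which is exactly what blocks the lemma.) Hence every split component has a common edge incident to $u$ or $v$, and we already showed exactly one has a common $uv$-path, giving the second bullet.

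For the efficiency half, I would argue as in the proof of Theorem~\ref{thm:all-link-graphs-are-connected}: work on the SPQR-tree $\mathcal{T}$ of the union graph $G^\cup$, which can be built in linear time and represents all separating pairs. Lemma~\ref{lem:union-separating-pair-separating-a-cycle}'s decompositions are handled exactly by that theorem's algorithm. The remaining work — detecting impossible P-nodes (Lemma~\ref{lem:no-impossible-P-nodes}) and applying the split along non-impossible P-nodes with the common-ends exception (Lemma~\ref{lem:union-sep-pair-no-common-ends}) — can be done by computing, for each virtual edge of each P-node skeleton, the link-graph adjacencies and the ``common-connected / exclusive-connected / $\circled{1}$- / $\circled{2}$- / union-connected'' classification of the corresponding split component, plus a bit marking whether it carries a common edge incident to a pole. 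All of these are local to a P-node skeleton once one has propagated, bottom-up and top-down along $\mathcal{T}$, the information ``does the expansion graph of this virtual edge contain a common (resp.\ $\1G$-, resp.\ $\2G$-) $uv$-path'' and ``does it contain a common edge at $u$ / at $v$'' — each a constant amount of data per virtual edge, updated in $O(|\skel(\mu)|)$ time per node, as in the cited proof. Summing over all skeletons and noting that the total size of the produced instances is linear (each decomposition step splits the instance, adding only the edges $uv$) yields the claimed linear bound.

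The main obstacle I expect is bookkeeping rather than any genuine mathematical difficulty: one must be precise about \emph{which} union separating pairs are being decomposed and verify that the three decomposition operations do not interfere with one another — in particular that applying Lemma~\ref{lem:union-sep-pair-no-common-ends} cannot recreate a separating pair that separates a common cycle, and that none of the operations increases vertex degrees or introduces new common cutvertices (so they compose freely, as claimed at the start of Section~\ref{sec:prepr-algor}). I would handle this by noting that each operation only removes structure from a split-component side and adds a single common edge $uv$ joining the two already-present poles, so no new separating pair ``of interest'' is created and the process is monotone, guaranteeing termination; combined with the linear total-size bound this also rules out a blow-up in the number of iterations.
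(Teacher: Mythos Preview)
Your proposal is essentially correct and follows the same approach as the paper: correctness is exactly the discussion preceding the theorem (exhaustive application of Lemmas~\ref{lem:union-separating-pair-separating-a-cycle}, \ref{lem:no-impossible-P-nodes}, and~\ref{lem:union-sep-pair-no-common-ends}), and efficiency is handled via the SPQR-tree of $G^\cup$ with bottom-up/top-down propagation of per-virtual-edge flags, just as in Theorem~\ref{thm:all-link-graphs-are-connected}.

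Two minor points are worth flagging. First, your correctness paragraph has the same small informality the paper itself has: Lemma~\ref{lem:union-sep-pair-no-common-ends} as \emph{stated} requires that \emph{no} exclusive-connected component have common ends, so the existence of a single split component without a common edge at $u$ or $v$ does not literally make the lemma applicable if some other exclusive-connected component does have common ends. What you actually need (and what the proof of that lemma establishes, since it processes components one at a time) is the component-by-component version: any single split component that is union-, \circled{1}-, \circled{2}-connected, or exclusive-connected without common ends can be peeled off individually. With that reading, your contrapositive goes through cleanly. Second, for detecting impossible P-nodes the paper does not compute link graphs directly but instead observes that this test reduces to the linear-time algorithm of Section~\ref{sec:simult-embedd-union-bridge-constr} (\textsc{Sefe} with union bridge constraints on a biconnected common graph); your direct approach via link-graph adjacencies and split-component classification is a reasonable alternative, but note that the link graphs live in P-nodes of the \emph{common} block's SPQR-tree rather than in $\mathcal T(G^\cup)$, so you would need to say a word about how the relevant connectivity information transfers between the two trees.
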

\begin{proof}
  It remains to prove the claimed running time.  The linear running
  time for decomposing the instances along its union separating pairs
  that separate a common cycle was already shown for
  Theorem~\ref{thm:all-link-graphs-are-connected}.  In
  Section~\ref{sec:simult-embedd-union-bridge-constr} we extend the
  algorithm by Angelini et al.~\cite{adfpr-tsegi-12} for solving
  \textsc{Sefe} if the common graph is biconnected to the case where
  we allow exclusive vertices and have so-called union bridge
  constrains.  It is not hard to see that testing $(\1G, \2G)$ for the
  existence of impossible P-nodes can be done using the linear-time
  algorithm from Section~\ref{sec:simult-embedd-union-bridge-constr}.

  It remains to decompose the union graph $G^\cup$ according to
  separating pairs that separated $G^\cup$ according to
  Lemma~\ref{lem:union-sep-pair-no-common-ends}.  As in the proof of
  Theorem~\ref{thm:all-link-graphs-are-connected}, we consider the
  SPQR-tree $\mathcal T$ of $G^\cup$.  For
  Theorem~\ref{thm:all-link-graphs-are-connected}, we had to compute
  for every virtual edge, whether its expansion graph included a
  common path between its endpoints.  Now, we in addition have to know
  which expansion graphs are exclusive connected and have common ends.
  This can be done analogously to the proof of
  Theorem~\ref{thm:all-link-graphs-are-connected}.
\end{proof}

\subsection{Connected Components that are Biconnected}
\label{sec:conn-comp-that-are-biconn}


Let~$(\1G,\2G)$ be a \textsc{Sefe} instance and let~$C$ be a connected
component of the common graph~$G$ that is a cycle; see
Figure~\ref{fig:2-components-cycle}a.  A \emph{union bridge} of~$\1G$
and~$\2G$ with respect to~$C$ is a connected component of~$G^\cup - C$
together with all its \emph{attachment vertices} on~$C$; see
Figure~\ref{fig:2-components-cycle}b.  Equivalently, the union bridges
are the split components of $G^\cup$ with respect to the vertices of
$C$ excluding the edges of $C$.  Similarly, there are
\emph{$\circled{1}$-bridges} and \emph{$\circled{2}$-bridges}, which
are connected components of~$\1G - C$ and~$\2G -C$ together with their
attachment vertices on~$C$, respectively; see
Figure~\ref{fig:2-components-cycle}c--d.  We say that two bridges~$B_1$
and~$B_2$ \emph{alternate} if there are attachments~$a_1,b_1$ of~$B_1$
and attachments~$a_2,b_2$ of~$B_2$, such that the order along~$C$
is~$a_1a_2b_1b_2$; see Figure~\ref{fig:2-components-cycle}e.  We have
the following lemma, which basically states that we can handle
different union bridges independently

\begin{figure}
  \centering
  \includegraphics[page=1]{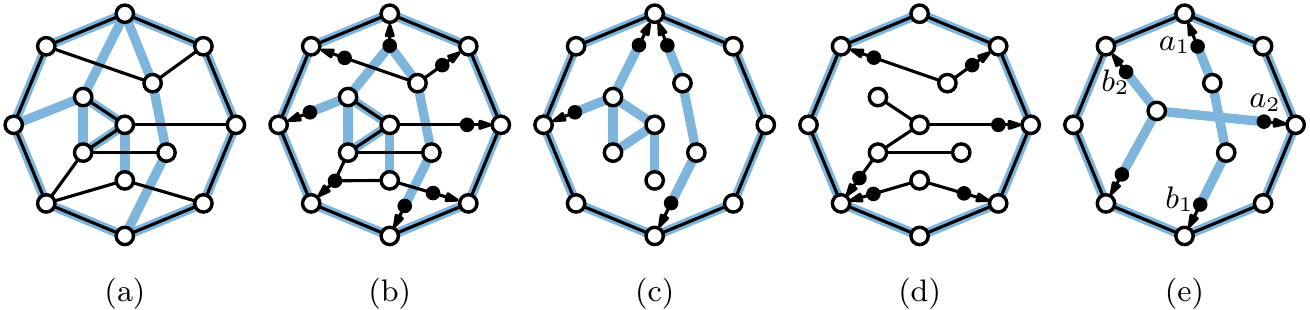}
  \caption{Situation where the connected component $C$ of $G$ is a
    cycle.  (a)~A simultaneous embedding of $(\1G,\2G)$ with $C$ on
    the outer face.  (b)~Removing $C$ yields a single connected
    component in $G^\cup$.  Thus, there is only one union bridge.
    Its attachment vertices are illustrated as black dots.  (c)~The
    two $\circled{1}$-bridges.  (d)~The three $\circled{2}$-bridges.
    Note that different bridges might share attachment vertices.
    (e)~Two alternating $\circled{1}$-bridges.}
  \label{fig:2-components-cycle}
\end{figure}

\begin{lemma}
  \label{lem:bridge-sefe}
  Let~$\1G$ and~$\2G$ be two planar graphs and let~$C$ be a connected
  component of the common graph that is a cycle.  Then the
  graphs~$\1G$ and~$\2G$ admit a \textsc{Sefe} where~$C$ is the boundary
  of the outer face if and only if
  \begin{inparaenum}[(i)]
    \item each union bridge admits a \textsc{Sefe} together with~$C$ and
    \item no two~$\circled{i}$-bridges of~$C$ alternate for~$i=1,2$.
  \end{inparaenum}
\end{lemma}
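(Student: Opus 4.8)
The plan is to prove the two directions separately, with the forward direction being essentially trivial and the backward direction (assembling a \textsc{Sefe} from the partial solutions) carrying all the work.

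\medskip

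\noindent\textbf{Forward direction.}
First I would assume $(\1G,\2G)$ admits a \textsc{Sefe} $\E$ with $C$ bounding the outer face. Restricting $\E$ to each union bridge together with $C$ immediately gives a \textsc{Sefe} of that subinstance, so (i) holds. For (ii), note that in $\E$ the cycle $C$ is drawn as a closed curve and each $\circled{i}$-bridge lies entirely inside or entirely outside $C$ (since $C$ is on the outer face, all $\circled{i}$-bridges lie inside, but the argument is symmetric). Two $\circled{i}$-bridges that alternate on $C$ cannot both be drawn on the same side of $C$ without crossing — this is the classical planarity obstruction for a single graph $\iGi{i}$ — so (ii) holds.

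\medskip

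\noindent\textbf{Backward direction.}
This is the main obstacle. Given a \textsc{Sefe} $\E_j$ of each union bridge $B_j$ together with $C$, I would first normalise: since $C$ is a cycle, I can assume in each $\E_j$ that $C$ bounds the outer face and hence $B_j$ is drawn inside $C$ (flip if necessary; both $\iGi{1}$ and $\iGi{2}$ are drawn consistently with $C$ because $C$ is common). The goal is to ``merge'' these drawings by placing all the union bridges simultaneously in the plane so that (a) edge orderings around each vertex are consistent between the two graphs, and (b) no two edges of the same graph $\iGi{i}$ cross. Vertices not on $C$ appear in exactly one union bridge, so there the orderings are inherited and nothing can go wrong; the delicate points are the vertices of $C$.

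The key step is to decide, for each attachment vertex $w$ of $C$, how the bridges attached at $w$ are interleaved in the rotation at $w$. For the union graph this is free, but I must choose an interleaving that is realisable planarly in both $\iGi{1}$ and $\iGi{2}$ simultaneously. Here I would invoke condition (ii): since no two $\circled{i}$-bridges alternate, the $\circled{i}$-bridges can be consistently placed on the inside of $C$ for each $i$ separately (this is exactly the condition under which a set of bridges of a cycle is planar when all forced to one side — a direct consequence of the interlacement graph of non-alternating bridges being a forest/bipartite-free, so one can nest them). A union bridge $B_j$ consists of $\circled{1}$- and $\circled{2}$-bridge pieces that share vertices; but each $B_j$ already comes with a fixed simultaneous embedding $\E_j$ inside $C$, which tells us the nesting order of its attachments. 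I would argue that these per-bridge nesting orders, on the common skeleton $C$, never conflict across different union bridges: two distinct union bridges $B_j, B_{j'}$ are vertex-disjoint off $C$, so on any arc of $C$ between two consecutive attachments we can freely insert one strictly inside the region carved out by the other whenever their attachment sets are nested, and place them side by side whenever the attachment sets are ``parallel''. The non-alternation condition (ii), applied to the $\circled{i}$-bridge pieces, is precisely what guarantees that no two union bridges have crossing (properly alternating) attachment patterns that would force a crossing of same-graph edges; whenever the union-bridge attachments alternate, at least the two graphs individually remain crossing-free because the offending edges belong to different exclusive graphs.

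\medskip

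\noindent\textbf{Putting it together.}
Concretely, I would process the union bridges in an order refining the nesting relation of their attachment sets on $C$: bridges whose attachments span a smaller arc, or are nested inside another bridge's region, are drawn first, deeper inside $C$. Insert each $\E_j$ into the innermost free face created so far that is incident to all of $B_j$'s attachments; such a face exists because the previously placed bridges either have attachment sets disjoint from, nested around, or nested inside that of $B_j$ (the alternating case being handled by the two-graph-separately argument, never forcing a bad crossing). Around each vertex $w\in C$ the resulting rotation interleaves the bridge pieces exactly as prescribed by the individual $\E_j$'s, so both (a) edge orderings and, for relative positions, the nesting structure agree between $\iGi{1}$ and $\iGi{2}$; and (b) same-graph edges do not cross because within each bridge we used the crossing-free $\E_j$, and across bridges the only potential conflicts are resolved by (ii). Hence the merged drawing is a \textsc{Sefe} of $(\1G,\2G)$ with $C$ on the outer face. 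The step I expect to be genuinely delicate is making the inter-bridge nesting argument precise — i.e., showing that the per-bridge embeddings $\E_j$, which each ``see'' only $C$ and one union bridge, can be simultaneously realised without a same-graph crossing — and this is exactly where condition (ii) must be used in full strength rather than just for a single union bridge.
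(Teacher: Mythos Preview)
Your forward direction is fine. The backward direction, however, takes a harder route than necessary and the construction has a real gap.

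You try to build a single simultaneous drawing by placing the union bridges as units inside $C$, processing them in a ``nesting order'' of their attachment sets. But two distinct union bridges \emph{can} alternate on $C$ (condition~(ii) only forbids alternation of $\circled{i}$-bridges, not of union bridges), and in that case no nesting order exists and no single planar placement of both union bridges inside $C$ is possible. You acknowledge this (``the alternating case being handled by the two-graph-separately argument'') but your procedure never actually says what to do then; the phrase ``insert each $\E_j$ into the innermost free face \dots\ incident to all of $B_j$'s attachments'' simply fails when $B_j$ alternates with an already-placed bridge. You also misidentify the delicate point: vertices of $C$ are \emph{not} delicate for edge orderings, because $C$ is an entire connected component of the common graph, so every vertex of $C$ has common degree exactly~$2$ and its common rotation is forced.

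The paper sidesteps all of this. It builds $\1\E$ and $\2\E$ \emph{separately}: glue the $\1{\E_j}$ into an embedding $\1\E$ of $\1G$ (a single-graph argument, possible by~(ii) for $i=1$), and likewise $\2\E$ from the $\2{\E_j}$. Then verify $(\1\E,\2\E)$ is a \textsc{Sefe} via the J\"unger--Schulz characterization: any common vertex of common degree $\ge 3$ lies off $C$, hence (with all its common neighbors) inside a single union bridge $B_j$, where consistency is inherited from $(\1{\E_j},\2{\E_j})$; relative positions are handled by noting that any common cycle other than $C$ sits in one union bridge. This decoupling is the key idea you are missing --- it removes the need to ever reason about how two different union bridges sit relative to each other.
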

\begin{proof}
  Clearly the conditions are necessary; we prove sufficiency.
  Let~$B_1,\dots,B_k$ be the union bridges with respect to~$C$, and
  let~$(\1{\E_1},\2{\E_1}), \dots, (\1{\E_k},\2{\E_k})$ be the
  corresponding simultaneous embeddings of~$B_i$ together with~$C$,
  which exist by condition~(i).  Note that each union bridge is
  connected, and hence all its edges and vertices are embedded on the
  same side of~$C$.  After possibly flipping some of the embeddings,
  we may assume that each of them has~$C$ with the same clockwise
  orientation as the outer face.

  We now glue~$\1{\E_1},\dots,\1{\E_k}$ to an embedding~$\1\E$
  of~$\1{G}$, which is possible by condition~(ii).  In the same way,
  we find an embedding~$\2\E$ of~$\2G$ from~$\2{\E_1},\dots,\2{\E_k}$.
  We claim that~$(\1\E,\2\E)$ is a \textsc{Sefe} of~$\1G$ and~$\2G$.
  For the consistent edge orderings, observe that any common
  vertex~$v$ with common-degree at least~3 is contained, together with
  all neighbors, in some union bridge~$B_i$.  The compatibility of the
  edge ordering follows since~$(\1{\E_i},\2{\E_i})$ is a \textsc{Sefe}.
  Concerning the relative position of a vertex~$v$ and some common
  cycle~$C'$, we note that the relative positions clearly coincide
  in~$\1\E$ and~$\2\E$ for~$C' = C$.  Otherwise~$C'$ is contained in
  some union bridge.  If~$v$ is embedded in the interior of~$C'$ in
  one of the two embeddings, then it is contained in the same union
  bridge as~$C'$, and the compatibility follows.  If this case does
  not apply, it is embedded outside of~$C'$ in both embeddings, which
  is compatible as well.
\end{proof}

We note that this approach fails, when the cycle $C$ is not a
connected component of $G$, i.e., when a union bridge contains common
edges incident to an attachment vertex.  The reason is that the order
of common edges incident to this attachment vertex is chosen in the
moment one reinserts the union bridges into $C$.

Now consider a connected component~$C$ of the common graph~$G$ of a
\textsc{Sefe} instance such that~$C$ is biconnected.  Such a component is
called \emph{2-component}.  If $C$ is a cycle, it is a \emph{trivial
  2-component}.  We define the union bridges, and the ~$\circled{1}$-
and~$\circled{2}$-bridges of~$\1G$ and~$\2G$ with respect to~$C$ as
above.  We call an embedding~$\mathcal E$ of $C$ together with an
assignment of the union bridges to its faces \emph{admissible} if and
only if, (i) for each union bridge, all attachments are incident to
the face to which it is assigned, and (ii) no two~$\circled{1}$- and
not two~$\circled{2}$-bridges that are assigned to the same face
alternate.

In the following, we try to solve the given \textsc{Sefe} instance
$(\1G, \2G)$ by first finding an admissible embedding of the
2-component $C$.  Then we test for every face of $C$ whether all union
bridges can be embedded inside the corresponding facial cycle.  By
Lemma~\ref{lem:bridge-sefe} we know that this is possible if and only
if each union bridge together with the facial cycle admits a
\textsc{Sefe} and no two $\circled{i}$-bridges (for $i = 1, 2$)
alternate.  The latter is ensured by property (ii) of the admissible
embedding of $C$.  The former yields simpler \textsc{Sefe} instances
in which the 2-component $C$ is represented by a simple cycle.  It
remains to show that, if this approach fails, there exists no
\textsc{Sefe}.  First note that the properties (i) and (ii) of an
admissible embedding are clearly necessary.  Thus, if there is no
admissible embedding of $C$, then there is no \textsc{Sefe}.  It
remains to show that it does not depend on the admissible embedding of
$C$ one chooses, whether a union bridge together with the facial cycle
of the face it is assigned to admits a \textsc{Sefe} or not.  In fact,
the following lemma shows that the facial cycle one gets for a union
bridge is more or less independent from the embedding of $C$, i.e.,
the attachment vertices of the bridge always appear in the same order
along this cycle.

\begin{lemma}
  \label{lem:bico-unique-ordering}
  Let~$G$ be a biconnected planar graph and let~$X$ be a set of
  vertices that are incident to a common face in some planar
  embedding of~$G$.  Then the order of $X$ in any simple cycle of
  $G$ containing $X$ is unique up to reversal.
\end{lemma}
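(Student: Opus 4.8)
The plan is to argue via uniqueness of the combinatorial embedding of $G$ restricted to the ``outer region'' determined by $X$, together with the fact that a simple cycle through $X$ must traverse the boundary faces in a prescribed cyclic pattern. First I would fix a planar embedding $\mathcal E$ of $G$ in which all vertices of $X$ lie on a common face $f$; by renaming we may take $f$ to be the outer face. Since $G$ is biconnected, $f$ is bounded by a simple cycle $C_f$, and the vertices of $X$ appear in a well-defined cyclic order $\sigma$ along $C_f$. The claim to establish is that \emph{every} simple cycle $C'$ of $G$ with $X \subseteq V(C')$ visits the elements of $X$ in the order $\sigma$ (up to reversal). Note this is stronger than just ``some cycle'': it pins down the order for all of them, hence they all agree.

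The key step is to show that the cyclic order $\sigma$ of $X$ on $C_f$ does not depend on which embedding $\mathcal E$ with $X$ on the outer face we picked, and that it is forced by any cycle through $X$. For the first part, I would use the standard fact that the planar embedding of a biconnected graph is unique up to reflection and choice of outer face (Whitney), so the only freedom is which face is outer; but we have fixed that all of $X$ is on the outer face, and if $X$ lies on two different faces simultaneously this already constrains things heavily. The cleaner route is the second part: take any simple cycle $C'$ containing $X$. In the embedding $\mathcal E$, $C'$ is a closed Jordan curve; the outer face $f$ lies entirely on one side of $C'$ (say outside, after possibly re-choosing which region we call outer), because $f$ is a face and hence is not crossed by $C'$. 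Now consider, for consecutive vertices $x, x'$ of $X$ along $C_f$ (in the order $\sigma$), the subpath $P_{xx'}$ of $C_f$ between them: $P_{xx'}$ together with a portion of the outer region bounds a region disjoint from the interior of $C'$, so $x$ and $x'$ cannot be separated on $C'$ by another vertex of $X$. Making this precise: walking along $C'$, between two $\sigma$-consecutive elements of $X$ there can be no third element of $X$, because such an element would have to lie in the region cut off by $P_{xx'}$ and the outer face, which contains no vertices of $X$ other than on its boundary arc $P_{xx'}$. This forces the cyclic order of $X$ on $C'$ to refine, hence equal, $\sigma$.

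I would organize the write-up as: (1) reduce to $f$ being the outer face with boundary cycle $C_f$ and define $\sigma$; (2) given an arbitrary simple cycle $C'$ through $X$, observe $C'$ bounds a disk whose complement contains $f$; (3) show that for $\sigma$-consecutive $x, x'$, the arc of $C_f$ between them lies in the closed complementary region and contains no further vertex of $X$, so no vertex of $X$ lies between $x$ and $x'$ along $C'$ either; (4) conclude the cyclic order on $C'$ equals $\sigma$ up to reversal, independent of $C'$. The main obstacle I anticipate is step (3): carefully justifying, using planarity of the fixed embedding and biconnectivity (so that $C_f$ is a genuine simple cycle and the outer face is simply connected), that the arc $P_{xx'}$ of $C_f$ cannot ``escape'' to the other side of $C'$ or pick up a stray vertex of $X$. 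This is essentially a Jordan-curve / nesting argument and should be phrased in terms of the faces of $C'$ inside $\mathcal E$; the biconnectivity hypothesis is exactly what prevents degenerate configurations (cut vertices would allow $C_f$ to pinch and $X$ to sit on both sides). Everything else is bookkeeping about cyclic orders.
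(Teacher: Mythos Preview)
Your overall plan is viable, but step~(3) as written has a real gap that you correctly flag but do not close. The phrase ``the region cut off by $P_{xx'}$ and the outer face'' does not name a region: $P_{xx'}$ is an arc and the outer face is a $2$-cell, and neither separately nor together do they bound anything. What you actually need is the region bounded by $P_{xx'}$ and one \emph{specific} arc $A$ of $C'$ between $x$ and $x'$, together with the claims that (i) $P_{xx'}\cup A$ is a simple closed curve and (ii) the complementary arc $Q_{xx'}$ of $C_f$, which carries all remaining vertices of $X$, lies outside that region. Both claims are delicate because $C'$ and $C_f$ typically share many vertices beyond $x,x'$ (all of $X$, and possibly more) and may share edges; in particular $P_{xx'}$ may meet $C'$ internally and $Q_{xx'}$ certainly meets $C'$ at every other vertex of $X$, so the ``region between $P_{xx'}$ and $C'$'' is not well-defined without further case analysis. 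A cleaner way to finish your topological approach is a four-point crossing argument: if the two cyclic orders on $X$ differ, extract $a,c,b,d$ in $C_f$-order but $a,b,c,d$ in $C'$-order; then the $C'$-arc from $a$ to $b$ avoiding $c,d$ is a Jordan arc in the closed disc bounded by $C_f$ that separates $c$ from $d$, contradicting the existence of the disjoint $C'$-arc from $c$ to $d$.

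The paper sidesteps the topology with a short combinatorial reduction. It contracts the maximal subpaths of $C_f$ whose internal vertices are not on $C'$, obtaining a cycle $C_f'$ on $V(C_f)\cap V(C')$ that visits $X$ in the same order as $C_f$. In the embedding inherited from $\mathcal E$, every edge of $C_f'$ lies on the $f$-side of $C'$, so the opposite side of $C'$ is a face of $C'\cup C_f'$ incident to every vertex; hence $C'\cup C_f'$ is outerplanar, and it is biconnected because $C'$ is Hamiltonian in it. In a biconnected outerplanar graph every simple cycle visits its vertices in the order of the unique Hamiltonian cycle, so $C_f'$ and $C'$ order $X$ identically. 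This packaging absorbs the shared-vertex and shared-edge issues automatically and would be a drop-in replacement for your step~(3).
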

\begin{proof}
  Consider a planar embedding~$\E$ of~$G$ where all vertices in~$X$
  share a face, and let~$C_X$ denote the corresponding facial cycle.
  Note that~$C_X$ is simple since~$G$ is biconnected.  Let~$C$ be an
  arbitrary simple cycle in~$G$ containing all vertices in~$X$.
  In~$\E$, all parts of~$C$ that are disjoint from~$C_X$ are embedded
  outside of~$C_X$.  Let~$C_X'$ denote the cycle obtained from~$C_X$
  by contracting all maximal paths whose internal vertices do not
  belong to~$C$ to single edges.  Observe that~$C_X$ and~$C_X'$ visit
  the vertices of~$X$ in the same order.  Consider the graph~$C \cup
  C_X'$, which is clearly outerplanar and biconnected.  Hence both~$C$
  and~$C_X'$ visit the vertices of~$X$ in the same order (up to
  complete reversal).  Since~$C$ was chosen arbitrarily, the claim
  follows.
\end{proof}

For a union bridge~$B$, let~$C_B$ denote the cycle consisting of the
attachments of~$B$ in the ordering of an arbitrary cycle of~$G$
containing all the attachments.  By
Lemma~\ref{lem:bico-unique-ordering}, the cycle $C_B$ is uniquely
defined.  Let further~$G_B$ denote the graph consisting of the union
bridge~$B$ and the cycle~$C_B$ connecting the attachment vertices of
$B$.  We call this graph the \emph{union bridge graph} of the bridge
$B$.  The following lemma formally states our above-mentioned strategy
to decompose a \textsc{Sefe} instance.

\begin{lemma}
  \label{lem:bico-sefe}
  Let~$\1G$ and~$\2G$ be two connected planar graphs and let~$C$ be a
  2-component of the common graph $G$.  Then the graphs~$\1G$
  and~$\2G$ admit a \textsc{Sefe} if and only if
  \begin{inparaenum}[(i)]
  \item $C$ admits an admissible embedding, and
  \item each union bridge graph admits a \textsc{Sefe}.
  \end{inparaenum}
  If a \textsc{Sefe} exists, the embedding of~$C$ can be chosen as an
  arbitrary admissible embedding.
\end{lemma}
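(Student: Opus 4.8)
The plan is to verify both directions, relying heavily on Lemma~\ref{lem:bridge-sefe} and Lemma~\ref{lem:bico-unique-ordering}. For necessity, suppose $(\1G,\2G)$ admits a \textsc{Sefe} $(\1\E,\2\E)$. This induces a planar embedding $\E$ of the common graph $G$, and in particular an embedding of the 2-component $C$. Each union bridge $B$ (being connected) lies entirely inside one face $f$ of $C$ in $\E$, and all of its attachments are incident to $f$; moreover, restricting to $C$ and $B$ gives a \textsc{Sefe} of $B$ together with the facial cycle of $f$. By Lemma~\ref{lem:bico-unique-ordering} applied to $G$ with $X$ the set of attachments of $B$, this facial cycle visits the attachments in the same cyclic order as $C_B$, so this restriction is in fact a \textsc{Sefe} of the union bridge graph $G_B$; this gives (ii). For (i), the induced embedding of $C$ together with the face assignment just described satisfies property~(i) of admissibility by the above, and property~(ii) follows because two $\circled{i}$-bridges assigned to the same face of $C$ that alternate would force a crossing in $\ii G$ within that face. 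Hence the induced embedding of $C$ is admissible.

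For sufficiency, suppose (i) and (ii) hold; fix an arbitrary admissible embedding $\E_C$ of $C$ with its bridge-to-face assignment. For each union bridge $B$, take a \textsc{Sefe} $(\1{\E_B},\2{\E_B})$ of the union bridge graph $G_B$, which exists by (ii). The idea is to install, inside each face $f$ of $\E_C$, the embeddings of all union bridges assigned to $f$. To do this I would first observe that, by Lemma~\ref{lem:bico-unique-ordering}, the facial cycle $C_f$ of $f$ in $C$ visits the attachments of each bridge $B$ assigned to $f$ in the same order as $C_B$ does; so the drawing of $B$ from $(\1{\E_B},\2{\E_B})$ can be placed inside $f$ so that its attachments land on the correct vertices of $C_f$, possibly after a flip so that the region bounded by $C_B$ that carries $B$ maps into $f$ rather than outside. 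Now I want to apply Lemma~\ref{lem:bridge-sefe} locally to the facial cycle $C_f$: condition~(i) of that lemma is exactly that each union bridge assigned to $f$ has a \textsc{Sefe} with $C_f$ — which we have, since a \textsc{Sefe} of $G_B$ restricts to one of $B$ with $C_f$ — and condition~(ii), non-alternation of $\circled{i}$-bridges in $f$, is precisely property~(ii) of the admissible embedding. Lemma~\ref{lem:bridge-sefe} then yields, for each face $f$, a \textsc{Sefe} of $(C_f \cup \bigcup_{B \to f} B)$ with $C_f$ bounding the outer face. Gluing these face-local \textsc{Sefe}s back along the shared edges of $C$ — which is consistent because $\E_C$ is a fixed planar embedding of $C$ — produces drawings $\1\E$ of $\1G$ and $\2\E$ of $\2G$ that coincide on $G$, i.e.\ a \textsc{Sefe}. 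The final sentence of the lemma is immediate, since the construction worked for an \emph{arbitrary} admissible embedding of $C$.

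The main obstacle I expect is the gluing/consistency bookkeeping in the sufficiency direction: specifically, making precise that the face-local \textsc{Sefe}s produced by Lemma~\ref{lem:bridge-sefe} for different faces of $C$ can be simultaneously realized in one drawing, i.e.\ that there is no global obstruction once each face is handled. The key point to nail down is that relative positions across different faces of $C$ are automatically consistent — a common cycle $C'$ in some union bridge $B$ assigned to face $f$ has every vertex not inside $C'$ placed outside $C'$ in both drawings unless that vertex is in the same union bridge, which is the same argument used at the end of the proof of Lemma~\ref{lem:bridge-sefe} — and that edge orderings at attachment vertices of $C$ are fine because a vertex of $C$ with common neighbors outside $C$ has all of them in a single union bridge (here using that $C$ is a \emph{connected component} of $G$, so $C$ itself contributes no common edges leaving $C$ at an attachment beyond the two cycle edges, and those are fixed by $\E_C$). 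Once this is observed, the two applications of the earlier lemmas slot together cleanly.
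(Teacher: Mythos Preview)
Your proposal is correct and follows essentially the same approach as the paper: both directions rely on Lemma~\ref{lem:bico-unique-ordering} to identify each $C_B$ with (a subdivision of) the appropriate facial cycle, and on Lemma~\ref{lem:bridge-sefe} applied face by face to combine the union-bridge embeddings, with the faces of the admissible embedding treated independently. The paper's proof is terser about the gluing step (it simply notes that faces can be handled independently), but your extra bookkeeping about relative positions and edge orderings at attachments is sound and matches what is implicit there; one small wording slip is that at an attachment vertex of $C$ there may be more than two edges of $C$ (since $C$ is biconnected, not a cycle), but all of them are fixed by $\E_C$, which is what you need.
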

\begin{proof}
  Clearly, a \textsc{Sefe} of~$\1G$ and~$\2G$ defines an embedding of~$C$
  and a bridge assignment that is admissible.  Moreover, it induces a
  \textsc{Sefe} of each union bridge graph.  

  Conversely, assume that~$C$ admits an admissible embedding and each
  union bridge graph admits a \textsc{Sefe}.  We obtain a \textsc{Sefe} of~$\1G$
  and~$\2G$ as follows.  Embed~$C$ with the admissible embedding and
  consider a face~$f$ of this embedding with facial cycle~$C_f$.
  Let~$B_1,\dots,B_k$ denote the union bridges that are assigned to
  this face, and let~$(\1{\E_1},\2{\E_1}),\dots,(\1{\E_k},\2{\E_k})$
  be simultaneous embeddings of the bridge graphs~$G_B$.  By
  subdividing the cycle~$C_B$, in each of the embeddings, we may
  assume that the outer face of each~$B_i$ in the
  embedding~$(\1{\E_i},\2{\E_i})$ is the facial cycle~$C_f$ with the
  same orientation in each of them.  By Lemma~\ref{lem:bridge-sefe},
  we can hence combine them to a single \textsc{Sefe} of all union bridges
  whose outer face is the cycle~$C_f$.  We embed this \textsc{Sefe} into the
  face~$f$ of~$C$.  Since we can treat the different faces of~$C$
  independently, applying this step for each face yields a \textsc{Sefe}
  of~$\1G$ and~$\2G$ with the claimed embedding of~$C$.
\end{proof}

Lemma~\ref{lem:bico-sefe} suggests a simple strategy for reducing
\textsc{Sefe} instances containing non-trivial 2-components.  Namely,
take such a component, construct the corresponding union bridge
graphs, where~$C$ occurs only as a cycle, and find an admissible
embedding of~$C$.  Finding an admissible embedding for~$C$ can be done
as follows.  To enforce the non-overlapping attachment property,
replace each~$\circled{1}$-bridge of~$C$ by a
\emph{dummy~$\circled{1}$-bridge} that consists of a single vertex
that is connected to the attachments of that bridge via edges
in~$\1E$.  Similarly, we replace~$\circled{2}$-bridges, which are
connected to attachments via exclusive edges in~$\2E$.  We seek a
\textsc{Sefe} of the resulting instance (where the common graph is
biconnected), additionally requiring that dummy bridges belonging to
the same union bridge are embedded in the same face.  We also refer to
such an instance as \emph{\textsc{Sefe} with union bridge
  constraints}.  A slight modification of the algorithm by Angelini et
al.~\cite{adfpr-tsegi-12} can decide the existence of such an
embedding in polynomial time.  This gives the following lemma.

\begin{lemma}
  \label{lem:find-admissible-embedding}
  Computing an admissible embedding of a 2-component $C$ is equivalent
  to solving \textsc{Sefe} with union bridge constraints on an
  instance having $C$ as common graph.  This can be done in polynomial
  time.
\end{lemma}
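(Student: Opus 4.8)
The construction is essentially the one described just before the statement; the plan is to make it precise, verify both directions of the equivalence, and then read off the running time. First I would fix the instance: starting from $(\1G,\2G)$ with the 2-component $C$, I replace every $\circled{1}$-bridge of $C$ by a \emph{dummy $\circled{1}$-bridge}, a fresh vertex joined to the attachments of that bridge by edges in $\1E$, and symmetrically every $\circled{2}$-bridge by a dummy joined via edges in $\2E$; call the result $(\1H,\2H)$. Its common graph is exactly $C$, since all new edges are exclusive, and I would first check that $(\1H,\2H)$ is a legitimate \textsc{Sefe} instance: $\1H$ arises from $\1G$ by contracting each connected component of $\1G-C$ to a single vertex, so it is a minor of the planar graph $\1G$ and hence planar (parallel edges caused by multiple attachments at one vertex do no harm), and likewise for $\2H$. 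I then pair each union bridge $B$ with the group of dummies that replace the $\circled{1}$- and $\circled{2}$-bridges contained in $B$, and the union bridge constraints say that all dummies of one such group must lie in a common face of $C$.

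For the forward direction I would take an admissible embedding $\mathcal E$ of $C$ with its bridge assignment. For a union bridge $B$ assigned to a face $f$, property~(i) of admissibility gives that every attachment of every $\circled{1}$- or $\circled{2}$-bridge contained in $B$ is incident to $f$, so each corresponding dummy star can be routed inside the disk $f$; property~(ii) says that the $\circled{1}$-dummies placed in $f$ pairwise do not alternate along the facial cycle of $f$ — which, via Lemma~\ref{lem:bico-unique-ordering} so that the cyclic order ``along $C$'' is well defined, is exactly the condition that these stars can be drawn inside $f$ without crossing in $\1H$ — and symmetrically for the $\circled{2}$-dummies in $\2H$. Since dummies of different colours never interact, this yields planar embeddings of $\1H$ and $\2H$ agreeing on $C$, i.e.\ a \textsc{Sefe}, and by construction all dummies of one union bridge sit in the same face, so the union bridge constraints hold.

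For the converse I would take a \textsc{Sefe} of $(\1H,\2H)$ satisfying the union bridge constraints; its induced embedding of $C$ is the candidate. For each union bridge $B$, all its dummies lie, by the constraint, in one face $f_B$ of $C$, and I assign $B$ to $f_B$. Property~(i) holds because the dummy star of each $\circled{1}$- or $\circled{2}$-sub-bridge of $B$ is drawn inside $f_B$, so all of its — hence all of $B$'s — attachments are incident to $f_B$. Property~(ii) holds because two $\circled{1}$-bridges assigned to the same face have their dummy stars drawn without crossing inside that face of the planar graph $\1H$, so their attachment sets do not interleave, i.e.\ the bridges do not alternate; the argument for $\circled{2}$-bridges is identical in $\2H$. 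Hence the embedding of $C$ is admissible.

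Finally, for the running time I would note that $(\1H,\2H)$ has size linear in the original instance, that its common graph $C$ is biconnected, and that deciding \textsc{Sefe} with union bridge constraints on such an instance is exactly what the modification of the algorithm of Angelini et al.~\cite{adfpr-tsegi-12} developed in Section~\ref{sec:simult-embedd-union-bridge-constr} does in polynomial time. I expect the main obstacle to be the precise matching between the purely combinatorial ``alternation'' condition and the planarity of the dummy stars within a single face — in particular, ensuring that the cyclic order of the attachments used to define alternation agrees with their order on the facial cycle of the assigned face, which is precisely where Lemma~\ref{lem:bico-unique-ordering} enters.
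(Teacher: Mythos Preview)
Your proposal is correct and follows exactly the construction the paper outlines in the paragraph preceding the lemma; the paper itself does not spell out a formal proof but simply asserts that the dummy-bridge construction together with the modified Angelini et al.\ algorithm yields the lemma, so your write-up is a faithful and more detailed verification of precisely that argument. Your use of Lemma~\ref{lem:bico-unique-ordering} to reconcile the combinatorial alternation condition with the cyclic order on the facial cycle is appropriate and addresses the one point the paper leaves implicit.
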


It then remains to treat the union bridge graphs.  Exhaustively
applying Lemma~\ref{lem:bico-sefe} (using
Lemma~\ref{lem:find-admissible-embedding} to find admissible
embeddings) results in a set of \textsc{Sefe} instances where each
2-component is trivial.  Note that we could go even further and
decompose along cycles that have more than one union bridge.  However,
this is not necessary to obtain the following theorem.

\begin{theorem}
  \label{thm:equivalence-disj-cycles-poly}
  Given a \textsc{Sefe} instance, an equivalent set of instances of total
  linear size such that each 2-component of these instances is trivial
  can be computed in polynomial time.
\end{theorem}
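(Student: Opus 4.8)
The plan is to combine Lemma~\ref{lem:bico-sefe} and Lemma~\ref{lem:find-admissible-embedding} into an exhaustive decomposition procedure, and then argue that the overall size blow-up is linear and that the whole process terminates. First I would set up the iteration: as long as the common graph contains a non-trivial 2-component $C$ (i.e.\ a biconnected component that is not a cycle), apply the decomposition of Lemma~\ref{lem:bico-sefe}. Concretely, use Lemma~\ref{lem:find-admissible-embedding} to test whether $C$ has an admissible embedding; if not, the instance is a no-instance by the necessity direction of Lemma~\ref{lem:bico-sefe} and we are done. Otherwise, fix any admissible embedding of $C$, form the union bridge graphs $G_B$ (one per union bridge $B$), in each of which $C$ has been contracted to the cycle $C_B$ spanning the attachments of $B$. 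By Lemma~\ref{lem:bico-sefe} the original instance admits a \textsc{Sefe} if and only if every $G_B$ does, so we replace the current instance by the collection of the $G_B$ and recurse on each of them.

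Next I would verify that this terminates and stays linear. Each union bridge $B$ together with its cycle $C_B$ uses each non-$C$ edge of $G^\cup$ at most once, and the cycle $C_B$ has length at most the number of attachments of $B$, which is bounded by the number of edges leaving $B$ into $C$; hence $\sum_B |G_B|$ is $O(|G^\cup|)$ for a single application. The key point for linearity of the \emph{total} size is that in each $G_B$ the 2-component $C$ has been replaced by the strictly simpler structure $C_B$, which is a cycle, and no new non-trivial 2-component is created: the biconnected components of $G_B$ other than $C_B$ are (subgraphs of) biconnected components that were already present, and by Lemma~\ref{lem:bico-unique-ordering} the ordering of the attachments on $C_B$ is forced, so no choices proliferate. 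Summing a constant-factor blow-up over the bounded number of 2-components of the input (each edge of $G$ belongs to at most one 2-component) yields total linear size, and the recursion depth is bounded since the number of non-trivial 2-components strictly decreases.

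Finally, for the running time I would invoke Lemma~\ref{lem:find-admissible-embedding} for the admissible-embedding test on each 2-component and note that computing the union bridges, their attachment cycles (via the forced ordering from Lemma~\ref{lem:bico-unique-ordering}), and the graphs $G_B$ is straightforward bookkeeping on the block-cut structure of $G^\cup$; since Lemma~\ref{lem:find-admissible-embedding} only guarantees polynomial time, the overall bound we can claim here is polynomial, which is exactly what the theorem asserts. Correctness of equivalence is immediate from iterating the ``if and only if'' of Lemma~\ref{lem:bico-sefe}.

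I expect the main obstacle to be the size/termination bookkeeping rather than any new structural idea: one has to be careful that forming union bridge graphs for \emph{all} union bridges of \emph{all} 2-components, and then recursing, does not blow up the total size super-linearly. The safeguard is precisely that the attachment cycles $C_B$ are forced (Lemma~\ref{lem:bico-unique-ordering}) and that each original edge of $G$ lies in at most one 2-component, so the replacement is ``edge-disjoint enough'' to keep the sum linear; making this accounting precise is the only real content of the proof.
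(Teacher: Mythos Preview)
Your proposal is correct and follows essentially the same approach as the paper: the paper's argument for this theorem is precisely the sentence ``exhaustively applying Lemma~\ref{lem:bico-sefe} (using Lemma~\ref{lem:find-admissible-embedding} to find admissible embeddings) results in a set of \textsc{Sefe} instances where each 2-component is trivial,'' and your write-up spells out exactly this iteration together with the (correct) observation that the replacement cycles $C_B$ are themselves connected components of the new common graph, so no new non-trivial 2-components arise. Your size accounting is in fact more detailed than what the paper provides at this point; the paper defers the careful linear-size and linear-time bookkeeping to the subbridge machinery in Section~\ref{sec:prepr-2-comp}, whereas for the polynomial-time statement it is content with the informal argument you give.
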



We can improve the running time in
Theorem~\ref{thm:equivalence-disj-cycles-poly} to linear.  However, it
is quite tedious work, involving a lot of data structures, and results
in a lengthy proof.  To not disturb the reading flow too much, the
proof is deferred to its own section (Section~\ref{sec:prepr-2-comp},
starting on page~\pageref{sec:prepr-2-comp}).  Here, we only sketch it
very roughly.

We do not apply an iterative process, removing one 2-component after
another (as suggested above), but we decompose the whole instance at
once.  For this, we introduce the notion of \emph{subbridges}.  A
subbridge of a graph~$G$ with respect to components~$C_1,\dots,C_k$ is
a maximal connected subgraph of~$G$ that does not become disconnected
by removing all vertices of one component~$C_i$; see
Figure~\ref{fig:subbridges}.

\begin{figure}
  \centering
  \includegraphics[page=1]{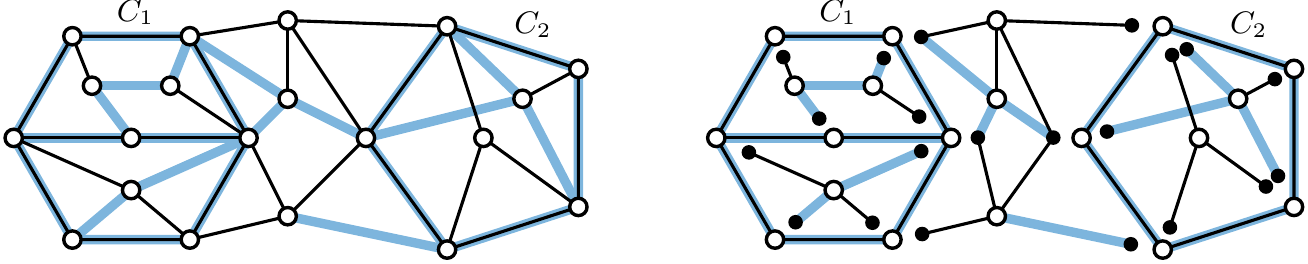}
  \caption{The instance on the left contains two 2-components $C_1$
    and $C_2$.  The corresponding union subbridges are shown on the
    right.}
  \label{fig:subbridges}
\end{figure}

Recall that we have to deal with bridges in two ways.  First, each
2-component forms a \textsc{Sefe} instance with its
$\circled{i}$-bridges, while the union bridges partition these
$\circled{i}$-bridges (yielding union bridge constraints).  Second,
each union bridge yields a union bridge graph, which is a simpler
instance one has to solve.  In both cases, one can deal with
subbridges instead of the whole bridges for the following reason.  For
the first case, we need the $\circled{i}$-bridges only to create the
corresponding dummy bridges.  Thus, it suffices to know their
attachment vertices.  It is readily seen that each bridge $B$ of $C_i$
contains a unique subbridge $S$ incident to $C_i$ and that the
attachments of~$S$ at~$C_i$ are exactly the attachments of~$B$
at~$C_i$.  As this also holds for the union bridges, the union
subbridges already define the correct grouping of the
$\circled{i}$-subbridges.  Concerning the second case, the
\textsc{Sefe} instances that remain after exhaustively applying
Lemma~\ref{lem:bico-sefe} are exactly the union subbridges together
with a set of cycles, one for each incident 2-component.  To conclude,
it remains to show that each of the following three steps runs in
linear time.





\begin{enumerate}
\item Compute for each 2-component the number of
  incident~$\circled{1}$- and $\circled{2}$-subbridges, for each such
  subbridge its attachments, and the grouping of these subbridges into
  union subbridges.
\item Solve \textsc{Sefe} with union bridge constraints on instances
  with biconnected common graph.
\item Compute for each union subbridge 
  a corresponding instance where each 2-component has been replaced by
  a suitable cycle.
\end{enumerate}

For step 1 (Section~\ref{sec:comp-sefe-inst-union-bridge-constr}), we
contract every 2-component of~$G^\cup$ into a single vertex.  The
union subbridges are then basically the split components with respect
to the resulting vertices.  The same holds for the $\circled{1}$- and
$\circled{2}$-subbridges (using $\1G$ and $\2G$ instead of $G^\cup$).
For step 2 (Section~\ref{sec:simult-embedd-union-bridge-constr}), we
modify the algorithm due to Angelini et al.~\cite{adfpr-tsegi-12}.
Augmenting it such that it computes admissible embeddings in
polynomial time is straightforward.  Achieving linear running time is
quite technical and, like the linear version of the original
algorithm, requires some intricate data structures.  For Step~3
(Section~\ref{sec:constr-subbr-inst}), computing the union subbridges
is easy.  To compute a suitable cycle for each incident 2-component
$C_i$, one can make use of the fact that we already know an admissible
embedding of $C_i$ from step~2.

\begin{theorem}
  \label{thm:equivalence-disj-cycles-linear}
  Given a \textsc{Sefe} instance, an equivalent set of instances of total
  linear size such that each 2-component of these instances is trivial
  can be computed in linear time.
\end{theorem}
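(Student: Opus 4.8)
The plan is to realize the three-step strategy outlined after Theorem~\ref{thm:equivalence-disj-cycles-poly}, each step in linear time, thereby obtaining a linear-time version of the decomposition already proved correct (Theorem~\ref{thm:equivalence-disj-cycles-poly}, via Lemma~\ref{lem:bico-sefe} and Lemma~\ref{lem:bridge-sefe}). Correctness is not in question here: the output instances are exactly those produced by exhaustively applying Lemma~\ref{lem:bico-sefe}, so it suffices to argue that the whole decomposition can be carried out at once in linear time and that the total output size is linear. I would first set up the global data structure: build $G^\cup$ with its edge colouring, identify all $2$-components of $G$ (blocks of $G$ that are biconnected, found by a single block-decomposition pass), and contract each $2$-component of $G^\cup$ to a single vertex. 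The union subbridges are then the split components with respect to the contracted vertices; the $\circled{1}$- and $\circled{2}$-subbridges are obtained the same way from $\1G$ and $\2G$. Since contraction and split-component computation are linear, and since (by the subbridge observation in the excerpt) the attachments of a subbridge $S$ at $C_i$ coincide with the attachments of the full bridge $B\supseteq S$, Step~1 delivers, for each $2$-component, the counts of incident $\circled{1}$- and $\circled{2}$-subbridges, their attachment sets, and the grouping into union subbridges, all in $O(n)$ time. Here one must be slightly careful that the total size of all attachment lists is linear; this holds because each edge of $G^\cup$ incident to a $2$-component contributes to at most one subbridge's attachment incidence.

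For Step~2, I would invoke the (modified) algorithm of Angelini et al.~\cite{adfpr-tsegi-12}: we run it on the instance whose common graph is the biconnected $C$, where each $\circled{i}$-subbridge has been replaced by a dummy $\circled{i}$-bridge (a single new vertex joined to the attachments by edges in $\1E$ resp.\ $\2E$), with the additional union-bridge constraints forcing dummy bridges of the same union subbridge into the same face. Augmenting their SPQR-tree/PQ-tree machinery to honour these constraints and to output an admissible embedding is conceptually straightforward; the real work — deferred to Section~\ref{sec:simult-embedd-union-bridge-constr} — is reproducing their linear running time while carrying the union-bridge constraints, which, like the original linear-time SEFE-for-biconnected algorithm, needs the intricate amortized data structures for PQ-tree reductions along the SPQR-tree. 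I would treat that section as a black box here and only note that the sizes of all the biconnected subinstances (one per $2$-component, with dummy vertices) sum to $O(n)$, since each dummy vertex and its incident dummy edges are charged to one subbridge.

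For Step~3, each union subbridge $S$ yields one output \textsc{Sefe} instance: $S$ itself together with, for every $2$-component $C_i$ it is incident to, a simple cycle $C_{B}$ on the attachments of $S$ at $C_i$, taken in the order in which those vertices occur on a facial cycle of the admissible embedding of $C_i$ computed in Step~2. By Lemma~\ref{lem:bico-unique-ordering} this order is well defined (independent of the admissible embedding and of the chosen cycle), so we may read it off directly from the embedding returned in Step~2; creating each such cycle and splicing it onto the attachments of $S$ is linear in the number of attachments, hence $O(n)$ overall. Assembling the pieces, the total running time is linear and the total output size is linear. \emph{The main obstacle} is Step~2: obtaining a genuinely linear-time algorithm for \textsc{Sefe} with union bridge constraints on a biconnected common graph, which requires carefully threading the union-bridge grouping through the data structures of the Angelini et al.\ algorithm — this is exactly why the excerpt defers it to its own section (Section~\ref{sec:simult-embedd-union-bridge-constr}). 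Steps~1 and~3 are routine graph-algorithmic bookkeeping by comparison.
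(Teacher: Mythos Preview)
Your proposal is correct and follows essentially the same three-step plan the paper carries out in Section~\ref{sec:prepr-2-comp}: contract each 2-component to compute subbridges and their attachments (the paper formalizes this via the component-subbridge tree, Lemma~\ref{lem:component-subbridge-tree}), solve \textsc{Sefe} with union bridge constraints on biconnected common graphs in linear time (the lengthy Section~\ref{sec:simult-embedd-union-bridge-constr}), and then read attachment orders off the facial cycles of the admissible embeddings to build the subbridge instances. You also correctly identify Step~2 as the real bottleneck and treat it as a black box, exactly as the paper's high-level argument does.
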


\subsection{Special Bridges and Common-Face Constraints}
\label{sec:spec-bridg-comm-face-constr}

In Section~\ref{sec:conn-comp-that-are-biconn}, we considered the case
that $C$ is a 2-component of the common graph $G$.  We called the
split components with respect to the vertices of $C$ bridges
(excluding edges in $C$).  Clearly, this definition extends to the
case where $C$ is an arbitrary connected component.  However, the
decomposition into smaller instances does not extend to this more
general case as for example Lemma~\ref{lem:bico-unique-ordering} fails
for non-biconnected graphs.  Nonetheless, we are able to eliminate
some special types of bridges in exchange for so-called common-face
constraints.  The reduction we describe in this section is thus in a
sense weaker than the previous reductions as we reduce a given
\textsc{Sefe} instance to a set of equivalent instances with
common-face constraints.  Many algorithmic approaches allow an easy
integration of additional common-face constraints (see
Section~\ref{sec:common-face-constr}) and hence the more simple
instances resulting from the reduction outweigh the disadvantages
caused by the additional constraints (see
Section~\ref{sec:cons-relat-posit}).

Let $(\1G, \2G)$ be an instance of \textsc{Sefe} with common graph $G$
and let $\mathcal F \subseteq 2^{V(G)}$ be a family of sets of common
vertices.  A given \textsc{Sefe} satisfies the \emph{common-face
  constraints} $\mathcal F$ if and only if $G$ has a face incident to
all vertices in $V'$ for every $V' \in \mathcal F$.  The common-face
constraints $\mathcal F$ are \emph{block-local} if for every $V' \in
\mathcal F$ all vertices in $V'$ belong to the same block of $G$.

Similarly, we say that a union bridge $B$ is \emph{block-local} if all
attachment vertices of $B$ belong to the same block of $C$.  Let
$\ii{B_1}, \dots, \ii{B_k}$ be the $\circled{i}$-bridges (for $i \in
\{1, 2\}$) belonging to $B$.  We say that $B$ is \emph{exclusively
  one-attached} if $\ii{B_j}$ has only a single attachment vertex for
$j = 1, \dots, k$.

Let $B$ be a block-local union bridge of the common connected
component $C$.  Then the attachment vertices of $B$ appear in the same
order in every cycle of $C$ (Lemma~\ref{lem:bico-unique-ordering}).
Thus, we can define the union bridge graph $G_B$ of $B$ as in
Section~\ref{sec:conn-comp-that-are-biconn}.  Consider the
\textsc{Sefe} instance $(\1H, \2H)$ obtained from $(\1G, \2G)$ by
removing the union bridge $B$ (the attachment vertices are not
removed).  It follows from Section~\ref{sec:conn-comp-that-are-biconn}
that $(\1G, \2G)$ admits a \textsc{Sefe} if and only if the union
bridge graph $G_B$ admits a \textsc{Sefe}, and $(\1H, \2H)$ admits a
\textsc{Sefe} with an assignment of $B$ to one of its faces $f$ such
\begin{inparaenum}[(i)]
\item all attachment vertices of $B$ are incident to $f$, and
\item for $i \in \{1, 2\}$, no $\circled{i}$-bridge in $B$ alternates
  with another $\circled{i}$-bridge in $f$.
\end{inparaenum}

If $B$ is not only block-local but also exclusive one-attached, the
latter requirement is trivially satisfied (a $\circled{i}$-bridge that
has only a single attachment vertex cannot alternate).  Thus, it
remains to test whether $G_B$ admits a \textsc{Sefe} and $(\1H, \2H)$
admits a \textsc{Sefe} with block-local common-face constraints.  We
obtain the following theorem.  The linear running time can be shown as
in Section~\ref{sec:prepr-2-comp}.

\begin{theorem}
  \label{thm:no-block-local-exclusive-one-attaced}
  Given a \textsc{Sefe} instance, an equivalent set of instances with
  block-local common-face constraints of total linear size can be
  computed in linear time such that each instances satisfies the
  following property.  No union bridge of a common connected component
  that is not a cycle is block-local and exclusively one-attached.
\end{theorem}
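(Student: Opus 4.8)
The plan is to combine the structural reduction for block-local, exclusively one-attached union bridges (sketched in the text just before the theorem) with the linear-time machinery already developed in Section~\ref{sec:prepr-2-comp} for trivial 2-components. First I would establish the purely combinatorial equivalence: given a \textsc{Sefe} instance $(\1G,\2G)$ and a block-local, exclusively one-attached union bridge $B$ of a common connected component $C$ that is not a cycle, let $(\1H,\2H)$ be obtained by deleting $B$ (keeping attachment vertices) and let $G_B$ be the union bridge graph of $B$. By Lemma~\ref{lem:bico-unique-ordering} the attachments of $B$ have a well-defined cyclic order in $C$, so $G_B$ is well-defined; by the discussion following Lemma~\ref{lem:bico-sefe} (adapted to a single block-local bridge), $(\1G,\2G)$ admits a \textsc{Sefe} iff $G_B$ admits a \textsc{Sefe} and $(\1H,\2H)$ admits a \textsc{Sefe} with $B$ assignable to a face $f$ incident to all attachments of $B$ and with no $\circled{i}$-bridge-in-$B$ alternation in $f$; since $B$ is exclusively one-attached the alternation condition is vacuous, so the second condition becomes a block-local common-face constraint $V'=\{\text{attachments of }B\}$ added to $(\1H,\2H)$. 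Thus one decomposition step replaces $(\1G,\2G)$ by the two instances $G_B$ and $(\1H,\2H)$-with-one-extra-common-face-constraint, preserving the yes/no answer.

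Next I would argue that exhaustively applying this step terminates with instances having no block-local exclusively one-attached union bridge of a non-cycle common component, and that the total size stays linear: each step strictly removes the edges of $B$ from the ``host'' side $(\1H,\2H)$ while the detached piece $G_B$ consists of $B$ plus one cycle of length at most the number of attachments, so the attachments (and the cycle replacing them) are charged against distinct bridges; the standard bridge-decomposition accounting used for Theorem~\ref{thm:equivalence-disj-cycles-linear} gives total linear size. I also need to check the side conditions promised in the introduction of Section~\ref{sec:prepr-algor}: the reduction neither introduces new cutvertices in the common graph (deleting a bridge can only destroy cutvertices of $G$; the new cycle in $G_B$ is biconnected) nor increases any vertex degree (vertices only lose incident edges on the host side, and in $G_B$ an attachment's degree in the new cycle is $2$, no larger than its degree in $C$). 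The common-face constraints produced are block-local by construction, and they accumulate additively, so the family $\mathcal F$ over all produced instances also has linear total size.

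The main obstacle is the linear-time implementation, and here I would explicitly defer to Section~\ref{sec:prepr-2-comp}, exactly as the theorem statement promises (``The linear running time can be shown as in Section~\ref{sec:prepr-2-comp}''). Concretely, instead of iterating, I would identify all relevant bridges simultaneously by a single bottom-up/top-down pass over the SPQR-trees of the blocks of $G^\cup$, using the subbridge data structure: compute the blocks of $G$, the union bridges (split components of $G^\cup$ w.r.t.\ the vertices of each non-cycle component $C$, excluding $C$-edges), and for each union bridge its $\circled{1}$- and $\circled{2}$-subbridge attachments, all in linear time as in step~1 of Section~\ref{sec:prepr-2-comp}. Testing whether a union bridge $B$ is block-local is a lookup in the block structure of $C$; testing exclusively one-attached is checking that each incident $\circled{i}$-subbridge has a single attachment, which is read off the precomputed attachment lists. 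The cycle $C_B$ for each qualifying $B$ is extracted from a fixed admissible embedding of the block of $C$ containing $B$'s attachments, as in step~3 of Section~\ref{sec:constr-subbr-inst}. Building all the $G_B$'s and the residual host instance with the accumulated common-face constraints is then a single linear-time surgery. Correctness of doing all removals at once (rather than one at a time) follows because distinct qualifying union bridges are vertex-disjoint except at attachment vertices lying on $C$, so their removals do not interfere and the composed equivalence is just the composition of the single-step equivalences established above.
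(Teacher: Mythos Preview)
Your proposal is correct and follows essentially the same approach as the paper: the paper's ``proof'' is just the paragraph preceding the theorem, which sketches exactly the equivalence you spell out (remove $B$, form $G_B$, replace the face-assignment condition by a block-local common-face constraint using that exclusive one-attachedness kills the alternation requirement), and then defers the linear-time implementation with the sentence ``The linear running time can be shown as in Section~\ref{sec:prepr-2-comp}.'' You supply more detail than the paper on both the size/termination accounting and the linear-time implementation, but the underlying decomposition and the appeal to the machinery of Section~\ref{sec:prepr-2-comp} are the same.
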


\section{Preprocessing 2-Components in Linear Time}
\label{sec:prepr-2-comp}

As promised in the end of Section~\ref{sec:conn-comp-that-are-biconn},
we prove in this section that the decomposition of a \textsc{Sefe}
instance into equivalent instances where every 2-component is a cycle
can be done in liner time.  Readers who want to skip this section can
continue with Section~\ref{sec:graph-with-common} on
page~\pageref{sec:graph-with-common}.

\subsection{Computing the \textsc{Sefe}-Instances with Union Bridge
  Constraints}
\label{sec:comp-sefe-inst-union-bridge-constr}

We first consider a slightly more general setting.  Let~$G=(V,E)$ be a
graph and let~$C_1,\dots,C_k$ be disjoint connected subgraphs of~$G$.
We are interested in computing the number of bridges of each connected
component $C_i$ together with the attachments to $C_i$.  We show that
this can be done in~$O(n+m)$ time (even if $G$ is non-planar),
where~$n = |V|$ and~$m=|E|$.  Instead of computing directly the
bridges and their attachments, our goal is rather to label each
edge~$e$ that is incident to a vertex of some~$C_i$ but does not
belong to any of the~$C_i$ itself, by the bridge of~$C_i$ that
contains~$e$.  Observe that, if each such incident edge has been
labeled, the information about the number of~$C_i$-bridges and their
attachments can easily be extracted by scanning all incidences of
vertices of~$C_i$ for~$i=1,\dots,k$.  This scanning process can
clearly be performed in total~$O(n + m)$ time.  In the following, we
thus focus on computing this incidence labeling.  Note that, since we
are only interested in the attachments of bridges, it suffices to
consider the corresponding subbridges as they have the same attachment
sets.

Recall that a subbridge is a maximal connected subgraph of~$G$ for
which none of the~$C_i$ is a separator.  Note the high similarity of
the definition of subbridges and the blocks of a graph, which are
maximal connected subgraphs for which no single vertex is a separator.
As with the blocks of a graph, it is readily seen that each edge
of~$G$ that is not contained in one of the~$C_i$ is contained in
exactly one subbridge of~$G$.  We exploit this similarity further and
define the \emph{component-subbridge tree} of~$G$ with respect
to~$C_1,\dots,C_k$ as the graph that contains one vertex~$c_i$ for
each component~$C_i$ and one vertex~$s_j$ for each subbridge~$S_j$.
Two vertices~$c_i$ and~$s_j$ are connected by an edge if and only if
the subbridge~$S_j$ is incident to the component~$C_i$.  Note that,
indeed, the component-subbridge tree is a tree.  Once the
component-subbridge tree has been computed, we can label each edge
of~$E \setminus (\bigcup_{i=1}^k C_i)$ with the subbridge containing
it.

\begin{lemma}
  \label{lem:component-subbridge-tree}
  The component-subbridge tree of a graph~$G$ with respect to disjoint
  connected subgraphs~$C_1,\dots,C_k$ can be computed in linear time.
\end{lemma}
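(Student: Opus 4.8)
The plan is to exploit the structural analogy with block decompositions that was pointed out just before the statement, and to adapt the classical depth-first-search algorithm that computes the block-cut tree in linear time. The first step is to \emph{contract} each component $C_i$ into a single \emph{marked} vertex $c_i$, obtaining a (multi-)graph $G'$; edges of $G$ lying inside some $C_i$ become self-loops and are discarded, while every remaining edge of $G'$ keeps a reference to the unique edge of $E(G)\setminus\bigcup_i E(C_i)$ it originates from. This contraction, together with the bookkeeping, clearly runs in $O(n+m)$ time. The purpose of the contraction is that removing $V(C_i)$ from $G$ corresponds to removing the single vertex $c_i$ from $G'$; hence a subbridge of $G$ with respect to $C_1,\dots,C_k$ corresponds exactly to a maximal connected subgraph of $G'$ in which no marked vertex is a cut vertex. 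Since unmarked vertices are never separators, such pieces may be glued freely across unmarked cut vertices, so subbridges correspond precisely to maximal unions of blocks of $G'$ that pairwise share only unmarked cut vertices.

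Second, I would compute the block-cut tree $T'$ of $G'$ in linear time by the classical depth-first-search-based algorithm, and then obtain the component-subbridge tree by ``splitting'' $T'$ at every marked cut vertex: delete from $T'$ every C-node equal to some $c_i$; each resulting connected piece (a bunch of B-nodes joined at unmarked C-nodes) becomes a subbridge node $s_j$ whose associated subbridge $S_j$ is the union of the blocks in that piece, and we add an edge between $s_j$ and $c_i$ whenever $c_i$ is a vertex of some block in the piece of $s_j$. A marked vertex that is not a cut vertex of $G'$ lies in a unique block and hence in a unique subbridge, and an isolated marked vertex $c_i$ (a component $C_i$ with no incident bridge) simply stays an isolated node; using that $T'$ is a tree it is readily checked that the resulting graph is again a tree, matching the definition. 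This step is a single traversal of $T'$ and of the incidences of the $c_i$, so it takes $O(n+m)$ time.

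Third, the block-cut tree computation already assigns each edge of $G'$ to its block; I would propagate this to an assignment of each edge of $G'$ to the subbridge node $s_j$ whose piece contains that block, and then push the label back to the corresponding edge of $G$ via the reference map. A final scan over the incidences of the vertices of each $C_i$ recovers, as described before the lemma, the number of $C_i$-bridges and their attachment sets. All parts run in linear time, which proves the claim; the same computation applied to $\1G$, $\2G$, and $G^\cup$ yields the $\circled{1}$-, $\circled{2}$-, and union subbridges.

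The step I expect to be the main obstacle is the correctness argument of the first step, namely that the contraction neither splits nor merges subbridges: one has to verify that two edges incident to $C_i$ which can be reconnected in $G$ by a walk avoiding $V(C_i)$ do end up in the same cluster of blocks of $G'$, and conversely that contraction creates no spurious cluster, together with a careful treatment of the degenerate cases (components that are a single vertex, components with no incident edge outside $\bigcup_i C_i$, and edges running directly between two distinct $C_i$). Once the invariant ``a subbridge is a maximal union of blocks of $G'$ that are cut-vertex-free at marked vertices'' is established, the remaining ingredients — the classical linear-time block-cut tree computation and the tree-splitting and edge-labelling bookkeeping — are routine.
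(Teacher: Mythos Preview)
Your proposal is correct and follows essentially the same approach as the paper: contract each $C_i$ to a single vertex, compute the block-cut tree of the resulting graph $G'$, and then merge blocks across unmarked cut vertices (equivalently, split at the marked ones) to obtain the subbridge nodes, finally attaching as leaves those $c_i$ that are not cut vertices of $G'$. The paper's proof is terser and omits the edge-labelling bookkeeping and the degenerate-case discussion you flagged, but the algorithmic content is the same.
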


\begin{proof}
  First, contract each component~$C_i$ to a single vertex~$c_i$; call
  the resulting graph~$G'$.  Note that, in~$G'$, the subbridges are
  exactly the maximal connected subgraphs for which none of the
  vertices~$c_i$ is a separator.  We compute the component-subbridge
  tree~$T$ in three steps.  First, compute the block-cutvertex tree
  of~$G'$.  Second, for each cutvertex~$v$ that does not correspond to
  one of the~$c_i$, remove~$v$ and merge its incident blocks into the
  same subbridge.  Finally, create for each component~$C_i$ that has
  only one bridge a corresponding vertex~$c_i$ and attach it as a leaf
  to the unique subbridge incident to~$C_i$.  Clearly, each of the
  steps can be performed in~$O(n+m)$ time.
\end{proof}

As argued above, Lemma~\ref{lem:component-subbridge-tree} can be used
to label in linear time the incident edges of the
components~$C_1,\dots,C_k$ by their corresponding bridges.  For step~1
of our reduction, we take~$C_1,\dots,C_k$ as the 2-components of a
\textsc{Sefe} instance~$(\1G,\2G)$.  We then use the above approach to
label the attachment incidences of the~$\circled{1}$-, $\circled{2}$-,
and the union bridges of~$C_1,\dots,C_k$.  From this we can create the
dummy bridges for each 2-component $C_i$ together with the union
bridge constraints in time linear in the sum of degrees of vertices
in~$C_i$.  By the arguments for
Lemma~\ref{lem:find-admissible-embedding}, the resulting instance
admits a \textsc{Sefe} if and only if $C_i$ has an admissible
embedding.  Since the~$C_i$ are disjoint, it follows that the
construction of all instances can be done in linear time.  This
finishes step~1.

\subsection{Constructing the Subbridge Instances}
\label{sec:constr-subbr-inst}

Let us assume that each 2-component has an admissible embedding, which
is found using the linear-time algorithm described in
Section~\ref{sec:simult-embedd-union-bridge-constr}.  Otherwise a
\textsc{Sefe} of the original graph does not exist.  In the final step
of our reduction, we substitute, in each subbridge, the incident
2-components by a cycle.  This results in a set of \textsc{Sefe}
instances---one for each subbridge---that all admit a solution if and
only if the original instance admits a solution.  They can hence be
handled completely independently.  To efficiently extract all
instances, we process the 2-components independently and replace each
one by a cycle in their incident subbridges.  The time to process a
single 2-component with all its incident subbridges is linear in the
size of the 2-component plus the number of attachments of these
subbridges in the respective 2-component.  It then immediately follows
that processing all 2-components in this way takes linear time.

Consider a fixed 2-component~$C$ with an admissible embedding as
computed in step~2 of the reduction.  Consider a fixed face~$f$
together with the subbridges that are embedded in that face.  For each
bridge~$B$ embedded in~$f$, we construct a list of attachments~$A_B$.
We traverse the facial cycle of~$f$.  At each vertex, we check the
edges embedded inside this face and append the vertex to the list of
each subbridge for which it is an attachment.  Afterwards, we traverse
for each subbridge its list of attachments and replace~$C$ by a cycle
that visits the attachments in the order of the attachment list.  The
time is clearly proportional to the size of~$f$ and the attachments of
the subbridges embedded in~$f$.  Hence processing all faces of all
components in this way takes linear time and yields the claimed
result.  This implements step 3 in linear time.

\subsection{Simultaneous Embedding with Union Bridge Constraints}
\label{sec:simult-embedd-union-bridge-constr}

In this section, we show how to solve \textsc{Sefe} with union bridge
constraints in linear time if the common graph is biconnected.  Our
algorithm is based on the algorithm by Angelini et
al.~\cite{adfpr-tsegi-12}.  Note that our extension to this algorithm
is twofold.  We allow bridges with an arbitrary number of attachment
vertices.  The original algorithm allows only two attachments per
$\circled{i}$-bridge (i.e., each bridge is a single exclusive edge).
Moreover, we have to deal with union bridge constraints.  To avoid
some special cases and simplify the description, we sometimes deviate
from the notation used by Angelini et al.  Our focus lies on a
linear-time implementation, the correctness of our approach directly
follows from the correctness of the algorithm by Angelini et al.

\begin{figure}
  \centering
  \includegraphics[page=1]{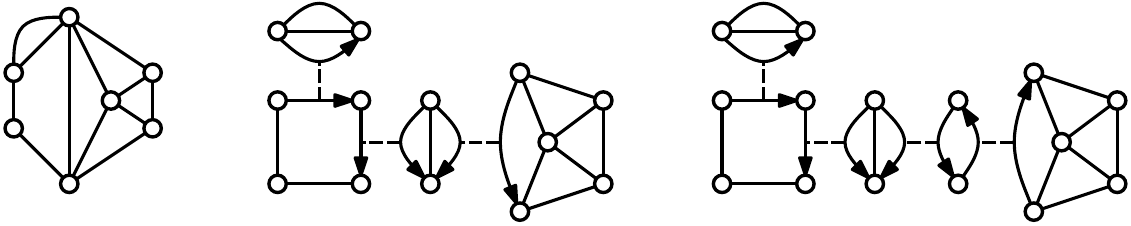}
  \caption{A graph (left) together with its SPQR-tree (middle).  The
    Q-nodes are omitted to improve readability.  The augmented
    SPQR-tree (right) contains an additional S-node whose skeleton is
    a cycle of length~2.}
  \label{fig:spqr-tree-augmented}
\end{figure}

Let $G$ be the biconnected common graph.  We consider the (unrooted)
\emph{augmented SPQR-tree} $\T$ of $G$, which is defined as follows;
see Figure~\ref{fig:spqr-tree-augmented}.  Let $\T'$ be the SPQR-tree
of $G$ and let $\mu_1$ and $\mu_2$ be two adjacent nodes in $\T'$ such
that each of them is a P- or an R-node.  We basically insert a new
S-node between $\mu_1$ and $\mu_2$ whose skeleton is a cycle of
length~2 (i.e., a pair of parallel virtual edges).  More precisely,
let $\eps_1 = \{s, t\}$ and $\eps_2 = \{s, t\}$ be the virtual edges
in $\mu_1$ and $\mu_2$, respectively, that correspond to each other.
We subdivide the edge ${\mu_1,\mu_2}$ in $\T'$; let $\mu$ be the new
subdivision vertex.  The skeleton $\skel(\mu)$ contains the vertices
$s$ and $t$ with two virtual edges $\eps_1'$ and $\eps_2'$ between
them corresponding to $\eps_1$ in $\skel(\mu_1)$ and $\eps_2$ in
$\skel(\mu_2)$, respectively.  Applying this augmentation for every
pair of adjacent nodes that are P- or R-nodes gives the augmented
SPQR-tree $\T$.  Note that P- and R-nodes in $\T$ have only S- and
Q-nodes as neighbors.  Moreover, no two S-nodes are adjacent.

Consider a bridge $B$ and let $\mu$ be a node of $\T$.  A virtual edge
$\eps$ in $\skel(\mu)$ is an \emph{attachment} of $B$ if its expansion
graph contains an attachment vertex of $B$.  We say that $B$ is
\emph{important} for $\mu$ if it has at least two distinct attachments
among the vertices and virtual edges of $\skel(\mu)$ that are not two
adjacent vertices in $\skel(\mu)$.  It is clearly necessary, that
$\skel(\mu)$ admits an embedding such that for every union bridge $B$
the attachments in $\skel(\mu)$ are incident to a common face.  An
embedding having this property is called \emph{compatible}.  This
leads to the following first step of the algorithm.

\begin{sefe-bico-step}[Compatible embeddings]
  \label{step:1}
  For every P- and R-node $\mu$, compute the important union bridges
  with their attachments.  If $\mu$ is an R-node, check whether the
  unique (up to flip) embedding $\skel(\mu)$ is compatible.  If $\mu$
  is a P-node check whether it admits a compatible embedding and fix
  such an embedding (up to flip).
\end{sefe-bico-step}

If Step~\ref{step:1} fails, the instance does not admit a
\textsc{Sefe}.  Note that the skeleton of a P-node might admit several
compatible embeddings.  However, fixing an arbitrary compatible
embedding up to flip does not make a solvable \textsc{Sefe} instance
unsolvable~\cite{adfpr-tsegi-12}.  Thus, after Step~\ref{step:1}, we
can assume that the embedding of every skeleton is fixed and it
remains to decide for each skeleton whether its embedding should be
flipped or not.  We call the embedding fixed for a skeleton its
\emph{reference embedding}.

For every P- and R-node $\mu$ let $x_\mu$ be a binary decision
variable with the following interpretation.  The skeleton $\skel(\mu)$
is embedded according to its reference embedding and according to the
flipped reference embedding if $x_\mu = 0$ and $x_\mu = 1$,
respectively.  By considering the S-nodes of the augmented SPQR-tree,
one can derive necessary conditions for these variables that form an
instance of \textsc{2-Sat} (actually, we only get equations and
inequalities, which is a special case of \textsc{2-Sat}).

Let $\mu$ be an S-node.  We assume the edges in $\skel(\mu)$ to be
oriented such that $\skel(\mu)$ is a directed cycle.  Thus, we can use
the terms left face and right face to distinguish the faces of
$\skel(\mu)$.  Let $B$ be a bridge that is important for $\mu$.  We
can either embed $B$ into the left or into the right face of
$\skel(\mu)$.  We define the binary variable $x_B^\mu$ with the
interpretation that $B$ is embedded into the right face and into the
left face of $\skel(\mu)$ if $x_B^\mu = 0$ and $x_B^\mu = 1$,
respectively.

\begin{figure}[tb]
  \centering
  \includegraphics[page=1]{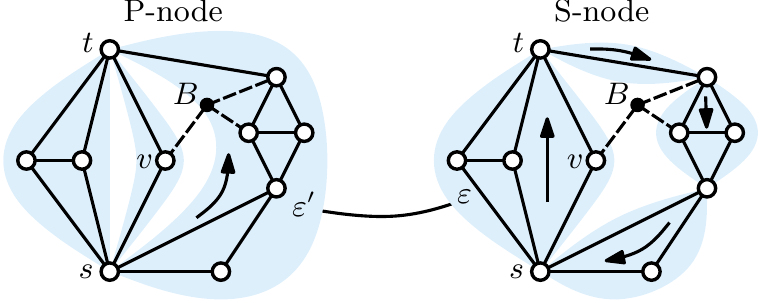}
  \caption{An S-node with a bridge $B$ having $\eps$ as right-sided
    attachment.  The virtual edges are illustrated as gray regions
    with their expansion graph inside.}
  \label{fig:attachment-in-s-node}
\end{figure}

Assume that the virtual edge $\eps = st$ (oriented from $s$ to $t$) in
$\skel(\mu)$ is an attachment of $B$, i.e., an attachment vertex $v
\notin \{s, t\}$ of $B$ lies in the expansion graph of $\eps$.  Let
$\mu'$ be the neighbor of $\mu$ corresponding to $\eps$ and let
$\eps'$ be the twin of $\eps$ in $\skel(\mu')$ (also oriented from $s$
to $t$); see Figure~\ref{fig:attachment-in-s-node}.  Clearly, $B$ is
also important for $\mu'$ as $\eps'$ contains an attachment vertex of
$B$ while the attachment vertex $v$ is not contained in $\eps'$.  In
Step~\ref{step:1} we ensured that $\skel(\mu')$ is embedded such that
$v$ (or the virtual edge containing $v$) shares a face with $\eps'$.
If this face lies to the left of $\eps'$ in $\skel(\mu')$, we say that
$v$ is an attachment \emph{on the right side} of $\eps$ in
$\skel(\mu)$ (as in the example in
Figure~\ref{fig:attachment-in-s-node}).  Otherwise, if this faces lies
to the right of $\eps'$, we say that $v$ is an attachment \emph{on the
  left side} of $\eps$.  For a bridge $B$ with attachment vertex $v$
we also say that the attachment $\eps$ is \emph{right-sided} and
\emph{left-sided} if $v$ lies on the right and left side of $\eps$,
respectively.

Assume $B$ has an attachment on the right side of the virtual edge
$\eps$ in the skeleton $\skel(\mu)$ (as in
Figure~\ref{fig:attachment-in-s-node}).  Assume further that the
skeleton $\skel(\mu')$ of the corresponding neighbor is not flipped,
i.e., $x_{\mu'} = 0$.  Then $B$ must be embedded into the face to the
right of the cycle $\skel(\mu)$, i.e., $x_B^\mu = 0$.  Conversely, if
the embedding of $\skel(\mu')$ is flipped, i.e., $x_{\mu'} = 1$, then
$B$ must lie in the left face of $\skel(\mu)$, i.e., $x_B^\mu = 1$.
This necessary condition is equivalent to the equation~$x_{\mu'} =
x_B^\mu$.  Similarly, if $B$ has an attachment on the left side of
$\eps$, we obtain the inequality $x_{\mu'} \not= x_B^\mu$.  We call
the resulting set of equations and inequalities the \emph{consistency
  constraints} of the bride $B$ in $\mu$.  This leads to the second
step of the algorithm.

\begin{sefe-bico-step}[Consistency constraints]
  \label{step:2}
  For every S-node $\mu$ compute the important $\circled{i}$-bridges
  (for $i \in \{1, 2\}$) and union bridges together with their
  attachments.  For attachments in virtual edges also compute whether
  they are left- or right-sided.  Then add the consistency constraints
  of these bridges in $\mu$ to a global \textsc{2-Sat} formula.
\end{sefe-bico-step}

The consistency constraints are necessary but not sufficient as they
do not ensure that no two alternating bridges of the same type are
embedded into the same face.  Consider an S-node $\mu$ with two
important bridges $B$ and $B'$.  Assume these two bridges alternate
(i.e., they have alternating attachments in the cycle $\skel(\mu)$).
Embedding $B$ and $B'$ on the same side of $\skel(\mu)$ yields a
crossing between an edge in $B$ and an edge in $B'$.  Thus, if $B$ and
$B'$ are both $\circled{1}$- or both $\circled{2}$-bridges, then they
must be embedded to different side of $\skel(\mu)$.  In this case, we
obtain the inequality $x_B^\mu \not= x_{B'}^\mu$.  This inequality is
called \emph{planarity constraint}.

\begin{sefe-bico-step}[Planarity constraints]
  \label{step:3}
  For every S-node $\mu$ compute the pairs of important
  $\circled{1}$-bridges that alternate in $\skel(\mu)$.  Do the same
  for $\circled{2}$-bridges.  For each such pair add the planarity
  constraint to the global \textsc{2-Sat} formula.
\end{sefe-bico-step}

Finally, we have to embed $\circled{i}$-bridges belonging to the same
union bridge into the same face.  Let $B$ be an $\circled{i}$-bridge
and let $B'$ be the union bridge it belongs to.  The
\emph{union-bridge constraint} of $B$ in $\mu$ is the equations
$x_{B}^\mu = x_{B'}^\mu$.

\begin{sefe-bico-step}[Union-bridge constraints]
  \label{step:4}
  For every S-node $\mu$, add the union-bridge constraint of each
  important $\circled{i}$-bridge to the global \textsc{2-Sat} formula.
\end{sefe-bico-step}

After Steps~\ref{step:2}--\ref{step:4}, the global \textsc{2-Sat}
formula is solved in linear time~\cite{apt-ltatt-79}.  The solution
determines for every P- and every R-node $\mu$, whether the reference
embedding of $\skel(\mu)$ should be flipped or not, which completely
fixes the embedding of the common graph $G$.  Of course, there might
be different solutions of the \textsc{2-Sat} formula, yielding
different embeddings.  However, if one of these solutions yields a
\textsc{Sefe}, then any of the solutions does~\cite{adfpr-tsegi-12}.
Thus, one can simply take one solution and check whether it yields a
\textsc{Sefe} (with union bridge constraints) or not.

\begin{sefe-bico-step}[Final step]
  \label{step:5}
  Test whether the given instance admits a \textsc{Sefe} with union
  bridge constraints assuming that the embedding of the common graph
  is fixed.
\end{sefe-bico-step}

It remains to implement Steps~\ref{step:1}--\ref{step:5} in linear
time, which is done in the following.  We first note that there are
too many important bridges to be able to compute them in linear time
(as required in Step~\ref{step:1} and Step~\ref{step:2}).  However,
similar to Angelini et al.~\cite{adfpr-tsegi-12}, we can show that
many important bridges can be omitted without loosing the correctness
of the algorithm, which leads to a linear-time implementation.

\subsubsection{Too Many  Bridges are Important}

We start with the observation that computing all important union
bridges for every P- and R-node of the SPQR-tree is actually a bad
idea, as there may be $\Omega(n)$ bridges each being important in
$\Omega(n)$ nodes.  Thus, explicitly computing all of them would
require $\Omega(n^2)$ time.  Consider the graph $G$ in
Figure~\ref{fig:compatible-embeddings-quadratic-size} whose SPQR-tree
$\mathcal T$ (without Q-nodes) is a path.  Let $\mu$ be one of the
P-nodes (note that $\skel(\mu)$ has two virtual edges and one normal
edge) and let $B$ be one of the bridges shown in
Figure~\ref{fig:compatible-embeddings-quadratic-size}.  Clearly, the
expansion graphs of both virtual edges of $\skel(\mu)$ contain at
least one attachment vertex of $B$.  Thus, $B$ is important for $\mu$
and $B$ has the two virtual edges of $\skel(\mu)$ as attachments.  As
this may hold for a linear number of bridges, we get the above
observation.

\begin{figure}
  \centering
  \includegraphics[page=1]{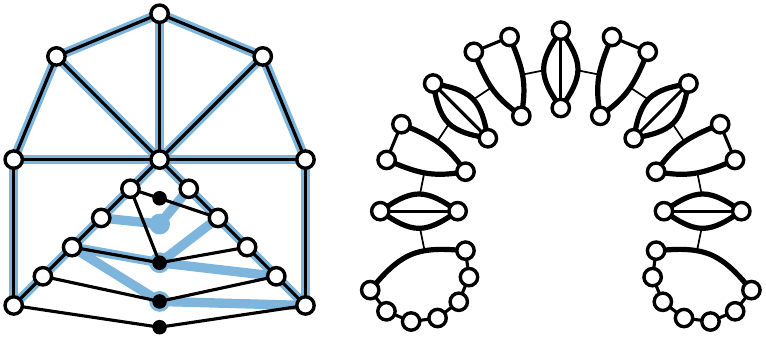}
  \caption{A graph with many bridges (left) each of which being
    important for every node of the SPQR-tree (right).}
  \label{fig:compatible-embeddings-quadratic-size}
\end{figure}

To resolve this issue, note that from the perspective of $\mu$ (in the
above example), all bridges look the same in the sense that they have
the same set of attachments.  Intuitively, they thus lead to similar
constraints and it seems to suffices to know only one of these
bridges.  In the following, we first show that omitting some of the
bridges is indeed safe in the sense that the algorithm remains
correct.  Then we show how to compute the remaining bridges
efficiently.

\subsubsection{Omitting Some Important Bridges}

In this section we show how to change the algorithm described above
(Steps~\ref{step:1}--\ref{step:5}) slightly without changing its
correctness.  We say it is \emph{safe} to do something if doing it
does preserve the correctness of the algorithm.  In particular, we
show that it is safe to omit some of the important bridges.  In the
subsequent sections we then show that the remaining important bridges
can be computed efficiently leading to an efficient implementation of
all five steps.

Let $G$ be the common graph and let $\mathcal T$ be its SPQR-tree.
Let $\mu$ be an inner node of $\mathcal T$ and let $B$ be a bridge
that is important for $\mu$ with attachments $a_1, \dots, a_\ell$ in
$\skel(\mu)$ (recall that each of the $a_i$ is either a vertex or a
virtual edge of $\skel(\mu)$).  We call an attachment $a_i$ (with $1
\le i \le \ell$) \emph{superfluous}, if $a_i$ is a vertex in
$\skel(\mu)$ such that $B$ has another attachment $a_j$ that is a
virtual edge incident to the vertex $a_i$; see
Figure~\ref{fig:superfluous-attachment}a.  The following lemma shows
that the term ``superfluous'' is justified.

\begin{figure}
  \centering
  \includegraphics[page=1]{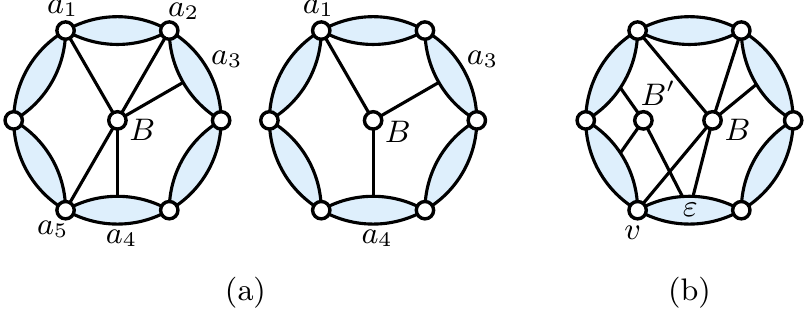}
  \caption{(a)~The bridge $B$ has five attachments $a_1, \dots, a_5$.
    Attachments $a_2$ and $a_5$ are superfluous due to $a_3$ and
    $a_4$, respectively.  (b)~The bridges $B$ and $B'$ alternate only
    due to the superfluous attachment $v$ of $B$.  However, the
    consistency constraints already synchronize $B$ and $B'$ as they
    have $\eps$ as common attachment.}
  \label{fig:superfluous-attachment}
\end{figure}

\begin{lemma}
  \label{lem:superfluous-attachments}
  Omitting superfluous attachments is safe.
\end{lemma}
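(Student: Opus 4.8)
The plan is to show that whenever a superfluous attachment $a_i$ (a vertex of $\skel(\mu)$ coinciding with an endpoint of a virtual-edge attachment $a_j$ of the same bridge $B$) is dropped from the attachment set of $B$, none of the five steps of the algorithm changes its behaviour on any instance. Since the only places where attachment sets of bridges are consulted are (a) the compatibility test and fixing of a compatible reference embedding in Step~\ref{step:1}, (b) the computation of left-/right-sided attachments and hence the consistency constraints in Step~\ref{step:2}, and (c) the alternation test producing the planarity constraints in Step~\ref{step:3} (Step~\ref{step:4} only uses the union-bridge grouping, which is unaffected), it suffices to check these three in turn.

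First I would handle compatibility (Step~\ref{step:1}). In a P- or R-node $\mu$, an embedding is compatible iff for every important union bridge $B$ its attachments lie on a common face. If $a_i$ is a vertex of $\skel(\mu)$ and $a_j$ is a virtual edge incident to $a_i$ that is also an attachment of $B$, then in any embedding every face incident to $a_j$ is incident to $a_i$; hence requiring all attachments including $a_i$ to share a face is exactly the same condition as requiring all attachments except $a_i$ to share a face. So the set of compatible embeddings of $\skel(\mu)$ is unchanged, and the (arbitrary, up to flip) reference embedding chosen in Step~\ref{step:1} can be taken to be the same. This also shows that the notion of left-/right-sidedness of the remaining attachments in neighbouring S-nodes is computed against an unchanged reference embedding, so nothing downstream is perturbed through that channel.

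Next, the consistency constraints (Step~\ref{step:2}): these are generated, for an S-node $\mu'$, from the virtual-edge attachments of $B$ in $\skel(\mu')$ together with their sidedness; a vertex attachment of $B$ in $\skel(\mu')$ contributes no equation at all (a vertex of a directed cycle lies on both faces). So deleting $a_i$ — which is a \emph{vertex} attachment in $\skel(\mu)$, and by the tree structure of the (augmented) SPQR-tree can be a vertex or virtual-edge attachment in other skeletons — needs a short argument: I would note that $a_i$ becomes a virtual-edge attachment of $B$ in some other skeleton $\nu$ only via a virtual edge whose expansion graph contains an attachment vertex of $B$ other than the endpoints, but the \emph{same} attachment vertex already forces $a_j$ (or some other retained attachment) to be an attachment of $B$ in $\nu$, so the set of virtual-edge attachments of $B$ in every skeleton is in fact unchanged once we restrict to \emph{retained} vertices of $B$; here it is cleanest to fix a single attachment vertex witnessing $a_j$ and track it through the decomposition. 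Thus no consistency constraint is added or removed.

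Finally, the planarity constraints (Step~\ref{step:3}): two $\circled{i}$-bridges $B$ and $B'$ that are important for an S-node $\mu'$ get an inequality $x_B^{\mu'}\neq x_{B'}^{\mu'}$ iff they alternate in the cycle $\skel(\mu')$. Removing a superfluous vertex attachment $a_i$ of $B$ can only \emph{remove} alternations, so the worry is that $B$ and $B'$ alternated \emph{only} because of $a_i$. As illustrated in Figure~\ref{fig:superfluous-attachment}b, in that case $B$ has the virtual-edge attachment $\eps$ (incident to $a_i$) in $\skel(\mu')$ as well, and $B'$ has an attachment vertex inside $\expan(\eps)$ too (that is precisely what it means for the alternation with $B'$ to route through the vertex $a_i$ rather than through $\eps$); hence $\eps$ is a common virtual-edge attachment of $B$ and $B'$, and the consistency constraints already pin both $x_B^{\mu'}$ and $x_{B'}^{\mu'}$ to $x_{\mu''}$ (where $\mu''$ is the neighbour across $\eps$) up to a fixed sign, forcing $x_B^{\mu'}\neq x_{B'}^{\mu'}$ automatically. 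So the dropped planarity constraint is implied by constraints that remain, and the \textsc{2-Sat} instance has the same solution set. The main obstacle, and the one I would spend the most care on, is the middle step: making the bookkeeping precise so that ``$a_i$ is superfluous in $\skel(\mu)$'' really does imply ``$a_i$ contributes no new virtual-edge attachment, and hence no new sidedness information, in any other skeleton'' — i.e. that the expansion graph on the $a_i$-side of $\eps$ is the right object to track, and that the witness attachment vertex for $\eps$ can be chosen once and for all and never ``detaches'' from a retained virtual edge as we move around the tree.
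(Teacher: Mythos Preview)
Your treatment of Step~\ref{step:1} is fine and matches the paper. Your worry about Step~\ref{step:2} is misplaced: ``omitting a superfluous attachment'' is a purely local operation on how attachments of $B$ are \emph{listed} in a single skeleton $\skel(\mu)$; the underlying attachment vertices of $B$ in $G$ are untouched, so nothing propagates to other skeletons and there is no bookkeeping to do. Since only virtual-edge attachments generate consistency constraints, dropping a vertex attachment in an S-node changes nothing there, full stop.

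The genuine gap is in your Step~\ref{step:3} argument. You assert that when $B$ and $B'$ share the virtual-edge attachment~$\eps$, the consistency constraints ``forc[e] $x_B^{\mu'}\neq x_{B'}^{\mu'}$ automatically.'' This is false: the consistency constraints tie each of $x_B^{\mu'}$ and $x_{B'}^{\mu'}$ to $x_{\mu''}$ by an equation \emph{or} an inequality depending on whether the respective attachment inside $\expan(\eps)$ sits on the left or right side of $\twin(\eps)$ in the reference embedding of $\skel(\mu'')$. If both sit on the same side, the consistency constraints force $x_B^{\mu'}=x_{B'}^{\mu'}$, not the inequality. In that case the original \textsc{2-Sat} (with the planarity constraint) is infeasible while your modified one may well be feasible, so your claim that ``the \textsc{2-Sat} instance has the same solution set'' is wrong. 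The paper closes this gap differently: it does not argue equivalence of the \textsc{2-Sat} formulae at all, but observes that if the constraints force $B$ and $B'$ into the same face, the instance is a no-instance and this will be detected in Step~\ref{step:5} regardless. You need either that fallback to Step~\ref{step:5}, or a separate argument that the same-side case cannot arise (it can).
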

\begin{proof}
  There are two situation in which missing superfluous attachments
  might play a role.  First, when we check a P- or R-node skeleton for
  a compatible embedding (Step~\ref{step:1}).  Second, when we add the
  planarity constraints for alternating $\circled{i}$-bridges
  (Step~\ref{step:3}).  Let $v$ be the superfluous attachment of $B$
  and let $\eps$ be a virtual edge incident to $v$ that is also an
  attachment of $B$ in $\skel(\mu)$.  Concerning compatible
  embeddings, we have to make sure that $\skel(\mu)$ admits an
  embedding where all attachments of $B$ are incident to a common
  face.  Clearly, $v$ is incident to both faces the virtual edge
  $\eps$ is incident to.  Thus, omitting the attachment $v$ does not
  change anything.

  Concerning the planarity constraints, we have to consider the case
  that $B$ alternates with another bridge $B'$ only due to the
  attachment $v$ in $B$.  This can only happen if $B'$ has $\eps$ as
  attachment; see Figure~\ref{fig:superfluous-attachment}b.  However,
  then the consistency constraints (Step~\ref{step:2}) either forces
  $B$ and $B'$ into different faces (if their attachment in $\eps$
  lies on different sides of $\eps$) and everything is fine, or they
  force $B$ and $B'$ to lie in the same face.  In the latter case, the
  instance is clearly not solvable, which will be found out in
  Step~\ref{step:5}.
\end{proof}

In the following we always omit superfluous attachments even when we
do not mention it explicitly.  Note that this retroactively changes
the definition of important bridges slightly, i.e., a bridge $B$ is
important for a node $\mu$ if $B$ has at least two (non-superfluous)
attachments in $\skel(\mu)$.  

To show that we can omit sufficiently many important bridges to get a
linear running time, we have to root the SPQR-tree $\mathcal T$.  More
precisely, we choose an arbitrary Q-node as the root of~$\mathcal T$.

We categorize the important bridges in different types of bridges
depending on their attachments.  To this end, we first define
different types of attachments.  Let $\mu$ be a node of the (rooted)
SPQR-tree and let $B$ be an important bridge of $\mu$ with attachments
$a_1, \dots, a_\ell$.  Recall that an attachment is either a vertex of
a virtual edge of $\skel(\mu)$.  If $a_i$ is a pole of $\skel(\mu)$,
we call it a \emph{pole attachment}.  If $a_i$ is the parent edge of
$\skel(\mu)$, we call it \emph{parent attachment}.  All other
attachments are called \emph{child attachments}.  

We say that the important bridge $B$ is a \emph{regular bridge} of
$\mu$ if $B$ has at least two child attachments.  If $B$ has only a
single child attachment, it has either a parent attachment or one or
two pole attachments (note that pole attachments are superfluous in
the presence of a parent attachment).  We call $B$ \emph{parent
  bridge} and \emph{pole bridge} in the former and latter case,
respectively.  Note that $B$ must have at least one child attachment
as it otherwise cannot be important.

As shown before, we cannot hope to compute all important bridges
efficiently as there may be too many of them.  Thus, we show in the
following that omitting some important bridges is safe.

\begin{lemma}
  \label{lem:omitting-bridges}
  Let $B$ be a parent bridge whose child attachment $a$ is a virtual
  edge.  Omitting another parent bridge or a pole bridge with child
  attachment $a$ (while keeping $B$) is safe.
\end{lemma}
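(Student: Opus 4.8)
The plan is to show that the important bridge $B'$ we wish to omit (a parent bridge or a pole bridge whose child attachment is the virtual edge $a$) leads to the same constraints in the algorithm as $B$, so that the algorithm's behaviour is unchanged when $B'$ is dropped. Recall from the categorisation that $B'$ has exactly one child attachment, which is $a$, together with either a parent attachment (parent bridge) or one or two pole attachments (pole bridge); $B$ likewise has $a$ among its attachments, plus at least one further child attachment. The essential observation is that in $\skel(\mu)$, the vertex set that $B'$ can "see'' is a subset of what $B$ can see once we fix the reference embedding: both bridges have $a$ as an attachment, and any additional attachment of $B'$ is either the parent edge or a pole, both of which are incident to the two faces bounded by $a$ in any embedding of $\skel(\mu)$ compatible with $B$.

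First I would treat the three places where an important bridge influences the algorithm: the compatibility test in Step~\ref{step:1}, the consistency constraints in Step~\ref{step:2}, and the planarity constraints in Step~\ref{step:3} (Step~\ref{step:4} only concerns union-bridge constraints between an $\circled{i}$-bridge and the union bridge it lies in, and Step~\ref{step:5} is the final explicit check, which will catch any genuine infeasibility). For compatibility: if $\skel(\mu)$ admits an embedding in which all attachments of $B$ are incident to a common face $f$, then in particular $a$ is incident to $f$, hence $f$ is one of the two faces of $a$; since the poles and the parent edge of $\skel(\mu)$ are incident to both faces of the virtual edge $a$ (they are endpoints of $a$, respectively adjacent via $a$ in the S/P/R-skeleton), every attachment of $B'$ is also incident to $f$. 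So the compatibility requirement of $B'$ is implied by that of $B$, and omitting $B'$ cannot turn an infeasible P- or R-node into a feasible one.

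For the consistency and planarity constraints I would argue that the constraint $B'$ would contribute is already enforced, up to the final check, by $B$. Since $a$ is a child attachment of both $B$ and $B'$, after Step~\ref{step:1} the side of $a$ on which each of $B$ and $B'$ must be embedded (left- or right-sided, as determined by the fixed reference embedding of the neighbour corresponding to $a$) is the \emph{same} for $B$ and $B'$ — this is exactly the content of the left/right-sided definition, which depends only on $\eps = a$ and on the fixed embedding of the neighbour, not on the bridge. Hence in any S-node on the path between $\mu$ and the child across $a$, the consistency constraints of $B$ and of $B'$ relate their side-variables to the same flip variable in the same way, so they are synchronised; any alternation $B'$ has with a third bridge $B''$ can then be routed through $B$ (which also attaches at $a$ and on the same side), exactly as in the proof of Lemma~\ref{lem:superfluous-attachments}: either $B''$ is forced to the opposite side, or the instance is infeasible and Step~\ref{step:5} reports it. The one subtlety — and the main obstacle — is to handle the case where $B''$ also carries a label in $\{1,2\}$ different from that of $B$: then no planarity constraint directly links $B$ and $B''$, and one must argue via the union bridges and the union-bridge constraints of Step~\ref{step:4} that the relevant side-variable of $B$ and that of $B''$ are nonetheless tied together, or else that the only way $B'$'s planarity constraint could be violated is in an embedding that the final Step~\ref{step:5} rejects anyway. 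Once that bookkeeping is done, dropping $B'$ changes neither the set of feasible embeddings recognised by Steps~\ref{step:1}--\ref{step:4} nor the verdict of Step~\ref{step:5}, so the omission is safe.
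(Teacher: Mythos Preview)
Your proposal has two concrete errors and misses the structural idea of the paper's proof.

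First, a definitional slip: you write that ``$B$ likewise has $a$ among its attachments, plus at least one further child attachment''. That would make $B$ a \emph{regular} bridge. A parent bridge has exactly \emph{one} child attachment (here $a$) plus the parent edge $\eps$ as attachment; that is the whole point of the classification. Several of your later arguments rely on this wrong picture.

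Second, your Step~\ref{step:1} argument is geometrically incorrect. You claim ``the poles and the parent edge of $\skel(\mu)$ are incident to both faces of the virtual edge $a$ (they are endpoints of $a$\ldots)''. The poles are the endpoints of the \emph{parent edge} $\eps$, not of the child virtual edge $a$; in an R-node skeleton, the poles and $\eps$ need not be incident to either face of $a$. The correct argument is the paper's: the compatibility requirement of $B$ forces $a$ and $\eps$ to share a face $f$; since both poles are endpoints of $\eps$, they are automatically on $f$, so $B'$'s attachments (which are $a$ together with $\eps$ or poles) all lie on $f$.

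For Steps~\ref{step:2}--\ref{step:4}, your plan to show that $B'$'s constraints are ``already enforced'' by $B$ is too crude, and the obstacle you flag about mismatched labels $\{1,2\}$ is a red herring. The paper does \emph{not} argue that the two \textsc{2-Sat} formulae are equivalent; indeed omitting $B'$ can make an unsatisfiable formula satisfiable (Step~\ref{step:5} then catches this). What must be shown is only that, assuming the original formula is satisfiable, any solution of the reduced formula fixes the flip variables $x_\eta$ of P/R-nodes in a way that is also a solution of the original. The clean device is to look at the \emph{conflict graph} (vertices $=$ variables, edges $=$ (in)equalities) and show that removing the vertex $x_{B'}^\mu$ does not disconnect any pair of flip variables. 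One checks that every neighbour of $x_{B'}^\mu$ is already connected to $x_B^\mu$ by a path avoiding $x_{B'}^\mu$: for consistency constraints this is immediate since $B$ shares the attachments $a$ and $\eps$; for planarity constraints with a third bridge $B''$, if $B'$ is a parent bridge then $B$ and $B'$ have identical attachments and alternate with the same bridges, and if $B'$ is a pole bridge and $B''$ alternates with $B'$ but not $B$, then $B''$ must have $\eps$ as attachment, giving the path $x_B^\mu\,x_\eta\,x_{B''}^\mu$ via consistency constraints of the parent $\eta$; the union-bridge constraint of $B'$ is handled similarly through $a$. No cross-label reasoning is needed.
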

\begin{proof}
  Let $B'$ be an other important bridge of $\mu$ and let $a$ be the
  child attachment of $B$ and~$B'$.  Let $\eps$ be the parent edge of
  $\skel(\mu)$.  In Step~\ref{step:1}, the bridge $B$ forces the
  embedding of $\skel(\mu)$ to have $a$ and $\eps$ on a common face.
  If $B'$ is a parent bridge, it requires the same (and thus no
  additional) condition for the embedding of $\skel(\mu)$.  If $B'$ is
  a pole bridge, it requires one of the poles (or both of them) to be
  incident to a common face with $a$.  However, this is clearly the
  case if the parent edge $\eps$ (which is incident to both poles)
  shares a face with $a$.  Hence, omitting $B'$ does not change
  anything in Step~\ref{step:1}.

  We cannot argue separately for Steps~\ref{step:2}--\ref{step:4} as
  they all contribute to the same global \textsc{2-Sat} formula.  We
  call the \textsc{2-Sat} formula we get when omitting $B'$
  \textsc{new}.  The formula we get when not omitting $B'$ is called
  \textsc{original}.  The straightforward way to prove that omitting
  $B'$ is safe would be to show that the new and the original
  \textsc{2-Sat} formulae are equivalent in the sense that they have
  the same solutions.  However, this is not true as omitting $B'$ can
  make the \textsc{2-Sat} formula solvable whereas it was unsolvable
  before.

  Assume the original \textsc{2-Sat} formula is not solvable.  Then
  the given instance does not admit a \textsc{Sefe} with union bridge
  constraints.  In this case, Step~\ref{step:5} will never succeed no
  matter what we do in the steps before.  Thus, we only have to argue
  for the case that the original \textsc{2-Sat} formula is solvable.
  In this case, the only thing that can go wrong is that the new
  \textsc{2-Sat} formula admits a solution that is not a solution in
  the original formula.  This solution could then result in an
  embedding of the common graph $G$ that does not admit a
  \textsc{Sefe}, whereas all solutions of the original \textsc{2-Sat}
  formula lead to a \textsc{Sefe}.

  To get a handle on this, we consider the conflict graph of a
  \textsc{2-Sat} formula.  First note that the formula we obtain is
  special in the sense that each constraint is either an equation or
  an inequality.  We define the conflict graph to have a vertex for
  each variable and an edge connecting two vertices if there is an
  (in-)equality constraint between the corresponding variables.
  Recall that the \textsc{2-Sat} formula has a variable $x_\mu$ for
  every P- or R-node $\mu$ indicating whether the embedding of
  $\skel(\mu)$ has to be flipped or not.  In the following we show
  that two such variables $x_\mu$ and $x_\eta$ that are in the same
  connected component of the conflict graph of the original
  \textsc{2-Sat} formula are also in the same connected component with
  respect to the new formula.  This shows that every embedding
  resulting from a solution of the new formula also yields a solution
  of the original formula (assuming that the original formula has a
  solution at all).

  First note that omitting an important bridge $B'$ in an S-node $\mu$
  has the effect that the vertex corresponding to the variable
  $x_{B'}^\mu$ is deleted in the conflict graph (for convenience we
  also denote the vertex by $x_{B'}^\mu$).  Recall that the variable
  $x_{B'}^\mu$ represents the decision of whether $B'$ is embedded
  into the left or the right face of $\skel(\mu)$.  We show that
  $x_{B'}^\mu$ is dominated by $x_B^\mu$ in the sense that every
  neighbor of $x_{B'}^\mu$ is also connected to $x_B^\mu$ by a path
  not containing~$x_{B'}^\mu$.  Thus, removing $x_{B'}^\mu$ does not
  change the connected components of the conflict graph.

  Consider the consistency constraints of $B'$ (Step~\ref{step:2}).
  Let $a$ be an attachment of $B'$.  If $a$ is a vertex in
  $\skel(\mu)$, we do not get a consistency constraint there.  If $a$
  is a virtual edge, it corresponds to a neighbor $\eta$ of $\mu$ and
  we get the constraint $x_\eta = x_{B'}^\mu$ or $x_\eta \not=
  x_{B'}^\mu$.  But then $B$ has also $a$ as attachment and thus we
  also have one of the two constraints $x_\eta = x_{B}^\mu$ or $x_\eta
  \not= x_{B}^\mu$.

  For the planarity constraints (Step~\ref{step:3}), first assume that
  $B'$ is a parent bridge.  Then $B$ alternates with another bridge
  $B''$ if and only if $B'$ alternates with $B''$.  Thus, if we have
  the constraint $x_{B'} \not= x_{B''}$ we also have the constraint
  $x_{B} \not= x_{B''}$.  If $B'$ is a pole bridge, there might be a
  bridge $B''$ alternating with $B'$ (yielding the constraint $x_{B'}
  \not= x_{B''}$), whereas $B$ and $B''$ do not alternate.  However,
  this can only happen if $B''$ has the parent edge $\eps$ of
  $\skel(\mu)$ as attachment.  Let $\eta$ be the neighbor of $\mu$
  corresponding to $\eps$ (i.e., the parent of $\mu$).  From
  Step~\ref{step:2} we then have the consistency constraint between
  $x_\eta$ and $x_{B''}^\mu$ and also one between $x_\eta$ and
  $x_{B}^\mu$ ($B$ is a parent bridge).  Thus, the conflict graph
  includes the path $x_B^\mu x_\eta x_{B''}^\mu$, which still exists
  after removing $B'$.

  Finally, if $B''$ is the union-bridge including $B'$, we loose the
  union-bridge constraint (Step~\ref{step:4}) $x_{B'}^\mu =
  x_{B''}^\mu$ by omitting $B'$.  Note that the attachments of $B'$
  are a subset of the attachments of $B''$.  Thus, $B''$ has the child
  attachment $a$ (which is a virtual edge).  Let $\eta$ be the
  neighbor of $\mu$ corresponding to $a$.  Then we have a consistency
  constraint connecting $x_{B''}^\mu$ with $x_\eta$.  Moreover,
  $x_\eta$ is also connected to $X_B^\mu$ as $a$ is an attachment of
  $B$.  This yields a connection from $x_B^\mu$ to $x_{B''}^\mu$.
\end{proof}

This lemma gives rise to the following definition.  Let $B'$ be a pole
bridge of $\mu$ with child attachment $a$ that is a virtual edge in
$\skel(\mu)$.  Let further $B$ be a parent bridge of $\mu$ with child
attachment $a$.  We say that $B'$ is \emph{dominated} by $B$.  If $B'$
is a parent bridge of $\mu$ with child attachment $a$ instead, we say
that $B$ and $B'$ are \emph{equivalent}.  Note that being equivalent
is clearly an equivalence relation.  Lemma~\ref{lem:omitting-bridges}
shows that we can omit dominated pole bridges and all but one parent
bridge for each equivalence class. 

\subsubsection{Computing the Remaining Important Bridges}

In this section we show that all bridges that are not omitted due to
Lemma~\ref{lem:omitting-bridges} can be computed in linear time.
Actually, we compute slightly more information, which we need in some
intermediate steps.  We for example never omit a parent bridge $B$ in
$\mu$ if $B$ is regular in the parent of $\mu$.  We call such a bridge
\emph{semi-regular} in $\mu$.

Moreover, consider an important bridge $B$ of $\mu$ and let $a$ be an
attachment of $B$ in $\skel(\mu)$ that is a virtual edge.  A vertex
$v$ of $G$ that is not incident to $a$ in $\skel(\mu)$ is a
\emph{representative} of the attachment $a$ if $v$ is an attachment
vertex of $B$ and included in the expansion graph of $a$.  If the
attachment $a$ is a vertex of $\skel(\mu)$, we say that $a$ is its own
representative.  Note that every attachment of $B$ has at least one
representative vertex.  When we compute the important bridges of a
node $\mu$, we also compute their attachments in $\mu$ together with
at least one representative for each attachment.  Before we actually
compute bridges, we provide some general tools.

\begin{lemma}
  \label{lem:parent-and-pole-attachments}
  After linear preprocessing time, the pole and parent attachments of
  a bridge in a given node together with a representative of each
  attachment can be computed in constant time.  
\end{lemma}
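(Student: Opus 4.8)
The plan is to reduce each query to a constant number of ancestor and lowest-common-ancestor tests in $\T$, after a linear-time preprocessing that locates every vertex of $G$ in $\T$ and equips every bridge with a constant-size fingerprint. Keeping $\T$ rooted at a Q-node, as already fixed, I would first build a linear-time-constructible oracle answering lowest-common-ancestor and ancestor-of queries in $O(1)$, and then, in a single bottom-up sweep, compute for every vertex $w$ of $G$ its topmost node $t(w)$, i.e.\ the node closest to the root whose skeleton contains $w$ (the nodes whose skeleton contains $w$ form a connected subtree of $\T$ rooted at $t(w)$). From $t(w)$, a succinct encoding of that subtree, and the two poles recorded at every non-root node, one obtains the basic primitive: given $w$ and an arbitrary node $\mu$, decide in $O(1)$ whether $w$ is a vertex of $\skel(\mu)$ — and if so whether it is a pole of $\mu$ — and otherwise whether $w$ projects onto the parent edge of $\skel(\mu)$ or onto the child virtual edge of $\mu$ leading to one specific child. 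The single fact behind this is that $w$ projects onto the parent edge of $\mu$ exactly when $w\notin V(\pert(\mu))$, which is a statement about the positions of $t(w)$ and $\mu$ in $\T$; this is the usual vertex-allocation machinery and runs in linear time.

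Next I would process the bridges. Running once over the attachment set $A_B$ of each bridge $B$ — affordable since the attachments of distinct bridges are covered by pairwise edge-disjoint sets of non-common edges, so $\sum_B|A_B|$ is linear in the instance size — I compute $\LCA(B) := \LCA_{\T}\{t(w)\mid w\in A_B\}$ and store with $B$ a constant-size fingerprint: which of the two poles of $\LCA(B)$ lie in $A_B$ (they are their own representatives), and which vertices and which incident virtual edges of $\skel(\LCA(B))$ are met by $A_B$, each with one representative attachment vertex. Together with the vertex-location data this is done in total linear time.

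A query for a bridge $B$ and a node $\mu$ is then answered as follows. A pole $p$ of $\mu$ is an attachment of $B$ iff $p\in A_B$; since $p\in A_B$ forces $p\in V(\pert(\LCA(B)))$, the position of $\mu$ relative to $\LCA(B)$ pins down where in the fingerprint to look, so this is decided with $O(1)$ ancestor/LCA tests, and $p$ itself is the representative. The parent edge of $\mu$ is an attachment of $B$ iff $A_B\not\subseteq V(\pert(\mu))$; again $\LCA(B)$ together with the fingerprint reduces this to $O(1)$ tests, and any fingerprint vertex of $B$ lying outside $V(\pert(\mu))$ — one in a child branch of $\LCA(B)$ other than the one containing $\mu$, or a vertex of $\skel(\LCA(B))$ not in $\pert(\mu)$ — is returned as the representative of the parent attachment.

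\textbf{Main obstacle.} The heart of the argument is the structural claim that a constant-size fingerprint really does suffice, in particular that whenever $B$ genuinely has a parent attachment in $\mu$ one of the constantly many recorded representatives lies outside $V(\pert(\mu))$. This uses that $\LCA(B)$ is the \emph{true} lowest common ancestor, so that $A_B$ either meets $\skel(\LCA(B))$ or is spread over at least two child branches of $\LCA(B)$; one must also handle — or argue away — the case in which $B$ descends through a single child branch of $\LCA(B)$ for several levels before reaching $\mu$, where the only witness of the parent attachment can be an attachment vertex strictly between $\LCA(B)$ and $\mu$, and the boundary case in which an attachment vertex coincides with a pole of $\mu$ (so that it must be excluded from the parent-attachment witnesses while still feeding the pole-attachment test). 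Getting these cases right is the real work; everything else is a routine composition of $O(1)$ LCA queries with linear-time traversals of the SPQR-tree.
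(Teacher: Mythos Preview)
Your overall architecture is sound, but the proof as written has a real gap exactly where you flag the ``main obstacle'': the fingerprint you describe is not constant-size. ``Which vertices and which incident virtual edges of $\skel(\LCA(B))$ are met by $A_B$'' can be $\Omega(|\skel(\LCA(B))|)$ many items when $\LCA(B)$ is an R-node, so storing one representative per such item does not give an $O(1)$ object. You acknowledge that ``getting these cases right is the real work'' and leave it open; that is precisely the part the lemma is about, so the proposal is a sketch rather than a proof.

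The paper's argument avoids this difficulty entirely with a much simpler fingerprint. It builds the same SPQR-vertex-tree (each vertex of $G$ attached as a leaf to the unique node containing it as a non-pole), numbers the leaves by a DFS order, and then sorts the attachment set of each bridge by this order once, in total linear time via bucket sort. The only per-bridge data retained is the sorted list, and a query for $(B,\mu)$ looks solely at the \emph{first two and last two} attachment vertices of $B$. Because the leaves descending from any node $\mu$ form a contiguous DFS interval, the minimum and maximum already detect whether some attachment lies outside $\pert(\mu)$; taking two from each end covers the boundary case where an extreme vertex happens to be one of the (at most two) poles of $\mu$. This four-element window is the constant-size fingerprint you were looking for, and it sidesteps the whole ``single child branch for several levels'' difficulty you identify: the DFS-interval test is oblivious to how far down $\mu$ sits below $\LCA(B)$. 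Your LCA machinery is not wrong, just heavier than necessary; replacing your fingerprint by the sorted extremes would make your argument go through.
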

\begin{proof}
  We first do some preprocessing.  We define the
  \emph{SPQR-vertex-tree} $\mathcal T$ to be the tree obtained from
  the SPQR-tree by removing its Q-nodes and attaching every vertex $v$
  of $G$ as a leaf to the highest node that contains $v$ in its
  skeleton.  Note that this is the unique node that contains $v$ in
  its skeleton but not as pole.
  For a non-pole vertex $v$ in a skeleton $\skel(\mu)$, we say that
  the leaf $v$ (which is a child of $\mu$) corresponds to $v$.  Note
  that this makes sure that every child attachment in $\skel(\mu)$
  corresponds to a child of $\mu$ in $\mathcal T$.

  Assume the vertices of $G$ (i.e., the leaves of $\mathcal T'$) to be
  numbered according to a DFS-ordering (which we get in $O(n)$ time).
  We start by sorting the attachment vertices of each bridge according
  to this DFS-ordering.  For all bridges $B_1, \dots, B_k$ together,
  this can be done in time $O\left(n + \sum_{i=1}^k|B_i|\right)$ as
  follows.  To simplify the notation, we identify every
  bridge $B_i$ with its set of attachment vertices.  First, sort all
  pairs $(v, B_i)$ with $v \in B_i$ (i.e., basically the disjoint
  union of the $B_i$) using bucket sort with buckets $1, \dots, n$
  (for each vertex $v$, one bucket containing all pairs $(v, B_i)$).
  It then suffices to iterate over all these sorted pairs to extract
  the sets $B_1, \dots, B_k$ sorted according to this DFS-ordering.

  For a node of $\mathcal T$, the leaves that are descendents of this
  node appear consecutively in the DFS-ordering.  By going bottom-up
  in $\mathcal T$, we can compute for every node $\mu$ the leaf with
  the smallest and the leaf with the largest number among the
  descendents of $\mu$.  Thus, given a vertex $v$ of $G$, we can
  decide in constant time whether $v$ is a descendent of $\mu$.

  Now we answer the queries.  Let $B$ be a bridge that is important in
  a node $\mu$.  If the vertex with the smallest and the vertex with
  the largest number in $B$ (which we can find in constant time as $B$
  is sorted) are both descendents of $\mu$, all attachment vertices of
  $B$ are descendents of $\mu$.  In this case $B$ has neither a pole
  nor a parent attachment in $\mu$.  Otherwise, by looking at the
  first two and the last two elements in $B$ we can distinguish the
  following two cases.
  \begin{inparaenum}[(i)]
  \item There is an attachment vertex $v$ that is not a descendent of
    $\mu$ and not a pole of $\skel(\mu)$.  Then the parent edge of
    $\skel(\mu)$ is an attachment of $B$ in $\skel(\mu)$ and $v$ is a
    representative for this attachment.  If $B$ also has pole
    attachments in $\skel(\mu)$, they are superfluous and we can
    ignore them by Lemma~\ref{lem:superfluous-attachments}.
  \item There is no such vertex $v$.  Then all attachment vertices in
    $B$ are descendents of $\mu$ except for maybe the two poles.  In
    this case, the poles of $\skel(\mu)$ that are attachments of $B$
    are among the first or last two vertices in $B$ and thus we find
    all pole attachments of $B$ in $\skel(\mu)$.
  \end{inparaenum}
  This concludes the proof.
\end{proof}

\begin{lemma}
  \label{lem:regular-bridges}
  The regular bridges and their attachments together with a
  representative for each attachment can be computed in linear time.
\end{lemma}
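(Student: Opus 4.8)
The plan is to exploit, as in the analogous result of Angelini et al.~\cite{adfpr-tsegi-12}, that regular bridges have a very restricted set of relevant nodes.  We keep the SPQR-vertex-tree $\mathcal{T}$ rooted at a Q-node and observe that, up to the treatment of superfluous vertex-attachments discussed below, a bridge $B$ is regular in a node $\mu$ if and only if at least two children of $\mu$ in $\mathcal{T}$ contain an attachment vertex of $B$ in their subtree.  Hence the nodes in which $B$ is regular are exactly the \emph{branching nodes} of the minimal subtree $T_B$ of $\mathcal{T}$ spanning the attachment vertices of $B$, i.e.\ the nodes of the ``virtual tree'' of $B$ that have at least two children in it.  With all degree-two vertices suppressed, $T_B$ has $O(k_B)$ nodes, where $k_B$ is the number of attachment vertices of $B$, and $\sum_B k_B$ is at most the number of edges between bridges and the common graph, hence $O(n)$.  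Thus the total amount of output --- the regular bridges of each node, with their attachments and representatives --- is linear, and the task is to produce it within that bound.

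For the preprocessing I would reuse the first part of the proof of Lemma~\ref{lem:parent-and-pole-attachments}: in total linear time, DFS-number the leaves of $\mathcal{T}$ (the vertices of $G$), compute for every node its smallest and largest descendant number, and sort the attachment vertices of every bridge by the DFS-ordering using bucket sort.  In addition, build a linear-size lowest-common-ancestor structure on $\mathcal{T}$ with constant query time and record the two poles of every skeleton.

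Then process each bridge $B$ separately.  From its attachment vertices $v_1,\dots,v_{k_B}$, already sorted by the DFS-ordering, build in $O(k_B)$ time the virtual tree of $B$ by the standard stack-based construction, which uses only the queries $\LCA(v_i,v_{i+1})$ and yields, for every branching node $\mu$, its children in the virtual tree together with a witnessing terminal of $B$ in each of the corresponding subtrees of $\mathcal{T}$.  For such a $\mu$, every virtual-tree child lying in the subtree of a child $c$ of $\mu$ in $\mathcal{T}$ yields a child attachment of $B$ in $\skel(\mu)$, namely the virtual edge of $\skel(\mu)$ corresponding to $c$ (or the vertex itself, if $c$ is a leaf of $\mathcal{T}$); the witnessing terminal is a valid representative, because the poles of $c$ are vertices of $\skel(\mu)$ and are therefore attached strictly above $c$ in $\mathcal{T}$, so no terminal in the subtree of $c$ is a pole of $c$, i.e.\ it is an inner vertex of the expansion graph.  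The pole and parent attachments of $B$ in $\mu$, with representatives, are obtained in $O(1)$ from Lemma~\ref{lem:parent-and-pole-attachments}, and whether the parent edge of $\skel(\mu)$ is actually an attachment of $B$ is read off in $O(1)$ from the recorded descendant numbers.  We append $B$, with its complete attachment list, to the regular-bridge list of each such $\mu$; over all bridges this takes $O\bigl(n+\sum_B k_B\bigr)=O(n)$ time.

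The technical heart is the interaction with superfluous attachments.  A branching node $\mu$ of $T_B$ fails to be a regular node of $B$ precisely when it has exactly two virtual-tree children, one of which is a leaf $v$ attached at $\mu$ while the other is a child $c$ of $\mu$ having $v$ as a pole --- so the vertex-attachment $v$ is superfluous and only a single non-superfluous child attachment survives.  Using the precomputed pole lists, this is an $O(1)$ test at each branching node, after which the node is simply not recorded for $B$; symmetrically, whenever a surviving vertex child attachment is superfluous because of an incident virtual-edge attachment, it is dropped from the attachment list.  The remaining, delicate point is to verify that these purely local $O(1)$ corrections reproduce exactly the (retroactively narrowed) notion of regular bridge without increasing the per-bridge work beyond $O(k_B)$; the same bookkeeping incidentally also flags, for each $\mu$, the semi-regular parent bridges, since a parent bridge of $\mu$ is semi-regular iff it is regular in the parent of $\mu$, which has already been determined.
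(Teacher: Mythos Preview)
Your approach is essentially the paper's, recast in virtual-tree language: the paper also reduces to the LCAs of DFS-consecutive attachment pairs (which is exactly the branching-node set of your $T_B$) and also invokes Lemma~\ref{lem:parent-and-pole-attachments} for pole and parent attachments.

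There is, however, one step you elide that the paper treats explicitly. When you say that a virtual-tree child of $\mu$ ``lying in the subtree of a child $c$ of $\mu$ in $\mathcal{T}$'' yields the virtual edge corresponding to $c$, you never explain how $c$ is obtained. The virtual-tree child (or its witnessing terminal) may sit arbitrarily deep below $\mu$ in $\mathcal{T}$, and recovering the \emph{immediate} $\mathcal{T}$-child of $\mu$ on the path to it is a separate ``which child contains this descendant?'' query that the stack-based virtual-tree construction does not answer for free. The paper handles precisely this by collecting all such queries and processing $\mathcal{T}$ bottom-up with a union--find structure whose union sequence is fixed in advance, so each query costs amortized $O(1)$ via Gabow--Tarjan~\cite{gt-ltascdsu-85}. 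Your superfluous-attachment test (checking whether a leaf $v$ attached at $\mu$ is a pole of the sibling $\mathcal{T}$-child $c$) requires the very same information about $c$. Once you plug in that missing step --- or an equivalent level-ancestor structure --- the two proofs coincide.
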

\begin{proof}
  Let $\mathcal T$ be the SPQR-vertex-tree of $G$ and assume again
  that the vertices in every bridge are sorted according to a
  DFS-order of the leaves in $\mathcal T$.  We first show how to
  compute all nodes in which a given bridge $B_i$ is regular in
  $O(|B_i|)$ time.  Note that this implicitly shows that $B_i$ has
  only $O(|B_i|)$ regular vertices.

  The \emph{lowest common ancestor (LCA)} of two vertices $u$ and $v$
  is the highest node on the path between $u$ and $v$.  We denote it
  by $\LCA(u, v)$.  Clearly, the vertices $u$ and $v$ are descendents
  of different children of $\LCA(u, v)$.  Thus, if $B_i$ has the
  attachment vertices $u$ and $v$, then $B_i$ has two different child
  attachments in $\LCA(u, v)$ with representatives $u$ and $v$.  Thus,
  $\LCA(u, v)$ is active.  Conversely, if $B_i$ is regular in a node
  $\mu$, it has two different child attachments.  Let $u$ and $v$ be
  representatives of these two attachments.  Clearly, $u$ and $v$ are
  descendents of different children of $\mu$ and thus $\mu = \LCA(u,
  v)$ holds.  Hence, the nodes in which $B_i$ is regular are exactly
  the LCAs of pairs of attachment vertices in $B_i$.

  To see that it is not necessary to consider all pairs of vertices in
  $B_i$, let $u$, $v$, and $w$ with $u < v < w$ (according to the
  DFS-ordering) be contained in $B_i$.  Then $\LCA(u, w) = \LCA(u, v)$
  or $\LCA(u, w) = \LCA(v, w)$ holds for the following reason.  If
  $\mu = \LCA(u, v) = \LCA(v, w)$, then $u$, $v$, and $w$ are
  descendants of three different children of $\mu$.  Thus, $\LCA(u, w)
  = \mu$ also holds.  Otherwise, assume $\mu = \LCA(u, v)$ is a
  descendant of $\eta = \LCA(v, w)$.  Thus, from the perspective of
  $\eta$, the vertices $u$ and $v$ and the node $\mu$ are descendants
  of the same child whereas $w$ is the descendant of a different
  child.  Thus $\LCA(u, w) = \LCA(v, w) = \eta$.  Hence, to compute
  all nodes in which $B_i$ is regular, it suffices to compute the LCA
  for pairs of attachment vertices in $B_i$ that are consecutive (with
  respect to the ordering we computed before).  As the LCA can be
  computed in constant time after $O(n)$ preprocessing
  time~\cite{ht-fafnca-84}, this gives us all regular bridges of all
  nodes of the SPQR-tree in overall $O(n + \sum_{i = 1}^k|B_i|)$ time.

  Given a node $\mu$ and a regular bridge $B$ of $\mu$, we are still
  lacking the attachments of $B$ in $\skel(\mu)$.  We can get the
  parent and pole attachments using
  Lemma~\ref{lem:parent-and-pole-attachments}.  Consider a child
  attachment $a$ of $B$ in $\skel(\mu)$.  Note that the attachment
  vertices of $B$ that are descendants of $a$ are consecutive with
  respect to the DFS-order.  When choosing the first or last of these
  vertices, we get an attachment vertex $v$ in $B$ with the following
  properties: $v$ is a descendent of the child of $\mu$ that
  corresponds to $a$ and either $\mu = \LCA(u, v)$ for the predecessor
  $u$ of $v$ in $B$ (ordered according to the DFS-order) or $\mu =
  \LCA(v, w)$ for the successor $w$ of $v$.  Thus, we actually already
  know a representative for every child attachment of $B$ in
  $\skel(\mu)$.  To find for an attachment vertex $v$ (the
  representative) the corresponding attachment in $\skel(\mu)$, we
  need to find the child of $\mu$ that has the leaf $v$ as a
  descendant.  This can be done for all active bridges simultaneously
  by processing $\mathcal T$ bottom-up while maintaining a union-find
  data structure.  As the sequence of union operations is known in
  advance, each union and find operation takes amortized $O(1)$
  time~\cite{gt-ltascdsu-85}.  Thus, it takes $O\left(n +
    \sum_{i=1}^k|B_i|\right)$ time in total.
\end{proof}

\begin{lemma}
  \label{lem:semi-regular-bridges}
  The semi-regular bridges and their attachments together with a
  representative for each attachment can be computed in linear time.
\end{lemma}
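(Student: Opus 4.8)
The plan is to obtain the semi-regular bridges directly from the regular bridges already computed in Lemma~\ref{lem:regular-bridges}. Recall that $B$ is semi-regular in $\mu$ precisely when $B$ is a parent bridge of $\mu$ and $B$ is regular in the parent $\eta$ of $\mu$ in the rooted SPQR-vertex-tree. First I would observe the following correspondence. If $B$ is semi-regular in $\mu$, then, since $B$ has a parent attachment in $\mu$, it has an attachment vertex outside the subtree of $\mu$, and, since $B$ has a (unique) child attachment in $\mu$, it has an attachment vertex inside that subtree; hence the virtual edge of $\skel(\eta)$ corresponding to $\mu$ is a child attachment of $B$ in $\eta$, and $B$ is a regular bridge of $\eta$ by the definition of semi-regular. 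Conversely, let $B$ be a regular bridge of $\eta$ and let $a$ be a child attachment of $B$ in $\skel(\eta)$ whose associated child $\mu$ of $\eta$ is an S-, P-, or R-node. Since $B$ is regular in $\eta$ it has a second child attachment in $\eta$, whose representative lies outside the subtree of $\mu$ and therefore is a representative of the parent attachment of $B$ in $\mu$; and the representative of $a$ lies inside the subtree of $\mu$ and witnesses at least one child attachment of $B$ in $\mu$. Thus $B$ is important for $\mu$ and is either a regular bridge of $\mu$ (at least two child attachments in $\mu$) or a parent bridge of $\mu$ (exactly one), in the latter case semi-regular. So the semi-regular bridges of $\mu$ are exactly the regular bridges $B$ of $\eta=\operatorname{parent}(\mu)$ that have a child attachment $a\leftrightarrow\mu$ in $\skel(\eta)$ but only one child attachment in $\mu$.

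So the algorithm enumerates, for every regular bridge $B$ of every node $\eta$ and every child attachment $a\leftrightarrow\mu$ of $B$ in $\skel(\eta)$ with $\mu$ an S-, P-, or R-node, and it must decide whether $B$ has one or at least two child attachments in $\mu$ --- equivalently, whether $\mu$ itself is one of the regular nodes of $B$. By Lemma~\ref{lem:regular-bridges} these nodes are the lowest common ancestors of consecutive attachment vertices of $B$; for a fixed $B$ there are only $O(|B|)$ of them, and the number of candidate children $\mu$ considered for $B$ is likewise $O(|B|)$, since it is bounded by the total number of child attachments of $B$ over all its regular nodes. Hence, processing one bridge $B$ at a time, I would mark all regular nodes of $B$ in a global boolean array indexed by the nodes of the SPQR-vertex-tree, answer all of these membership queries for $B$ in constant time each, and afterwards clear the marks; over all bridges this costs $O\big(n+\sum_i|B_i|\big)$ time and yields a linear bound on the number of semi-regular bridge incidences produced.

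It remains to list, for each semi-regular bridge $B$ of $\mu$, its non-superfluous attachments in $\skel(\mu)$ with a representative each. These are the parent edge of $\skel(\mu)$ --- represented by the out-of-subtree attachment vertex already available from a second child attachment of $B$ in $\eta$ --- and the single child attachment, represented by the representative of $a$, which lies in the subtree of $\mu$; the corresponding virtual edge or vertex of $\skel(\mu)$ is recovered, simultaneously for all semi-regular bridges and exactly as in the proof of Lemma~\ref{lem:regular-bridges}, by a single bottom-up union--find sweep of the SPQR-vertex-tree. Any pole attachments are superfluous in the presence of the parent attachment and are discarded by Lemma~\ref{lem:superfluous-attachments}. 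All steps run in $O\big(n+\sum_i|B_i|\big)$ time, which is linear. The only delicate point --- as everywhere in this section --- is the bookkeeping that keeps the work per bridge $B$ proportional to $|B|$ rather than to the size of the tree; the structural claims themselves are immediate from Lemmas~\ref{lem:superfluous-attachments}, \ref{lem:parent-and-pole-attachments}, and~\ref{lem:regular-bridges}.
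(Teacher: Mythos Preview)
Your proposal is correct and follows essentially the same approach as the paper: enumerate, for each regular bridge $B$ of a node $\eta$ and each child attachment $a\leftrightarrow\mu$, a candidate semi-regular incidence at $\mu$, then filter out those where $B$ is already regular in $\mu$; finally recover the single child attachment in $\skel(\mu)$ via a bottom-up union--find sweep. The only minor differences are implementation details: the paper performs the filtering by keeping the regular-bridge lists sorted in a fixed bridge order and comparing the candidate list with the regular list of $\mu$ via a simultaneous scan, whereas you mark the regular nodes of each bridge in a global array and query membership directly; and the paper obtains the parent-attachment representative via Lemma~\ref{lem:parent-and-pole-attachments}, whereas you take it from the representative of a second child attachment of $B$ in $\eta$. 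Both variants are valid and yield the same linear bound.
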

\begin{proof}
  By Lemma~\ref{lem:regular-bridges}, we know for every node $\mu$ all
  the bridges that are regular in $\mu$.  We can assume that the
  regular bridges of a $\mu$ are stored as a list that is sorted
  according to an arbitrarily chosen order of the bridges (we just
  have to process the bridges in this order in the proof of
  Lemma~\ref{lem:regular-bridges}).  Moreover, we know all the
  attachments of $B$ in $\skel(\mu)$ together with a representative
  for each attachment.

  Let $a$ be a child attachment of $B$ in $\skel(\mu)$ with
  representative $v$.  Let $\eta$ be the child of $\mu$ corresponding
  to $a$.  If $\eta$ is a leaf, it actually must be the vertex $v$ and
  there is nothing to do.  Otherwise, $v$ is a descendent of $\eta$
  and thus $B$ has a child attachment in $\eta$.  Moreover, it also
  has a parent attachment in $\eta$ as it otherwise would not be
  regular in the parent $\mu$ of $\eta$.  Thus, $B$ is either regular
  or a parent bridge (and hence semi-regular) in $\eta$.

  By processing all regular bridges of $\mu$ like that, we can build
  for $\eta$ (and all other children of $\mu$) a list of bridges that
  contains all semi-regular bridges and additionally some regular
  bridges.  Note that building up this list takes constant time for
  each attachment of regular bridges.  As we computed those
  attachments in linear time, there cannot be more than that many
  attachments.

  To get rid of the bridges that are actually regular and not
  semi-regular, note that we can assume the list of semi-regular and
  regular bridges to be sorted (according to the arbitrary order of
  bridges we chose before).  Thus, we can simply process this list and
  the list of regular bridges of $\eta$ (which we computed using
  Lemma~\ref{lem:regular-bridges}) simultaneously and throw out all
  those that appear in both lists.  This leaves us with a list of
  semi-regular bridges for every node.  In addition to that, we know
  an attachment vertex for every semi-regular bridge that is a
  representative of the child-attachment of that bridge.  Thus, we
  also get the actual attachments in $\skel(\eta)$ in linear time
  using one bottom-up traversal as in the proof of
  Lemma~\ref{lem:regular-bridges}.  Moreover, we get the parent
  attachments using Lemma~\ref{lem:parent-and-pole-attachments}.
\end{proof}

\begin{lemma}
  \label{lem:parent-and-pole-bridges}
  The parent and pole bridges and their attachments together with a
  representative for each attachment can be computed in linear time
  when omitting dominated pole bridges and all but one parent bridge
  of each equivalent classes.

\end{lemma}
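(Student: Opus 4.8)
The plan is to split the parent and pole bridges according to their unique child attachment $a$: either $a$ is a non-pole vertex of $\skel(\mu)$, or $a$ is a child virtual edge of $\skel(\mu)$; the two cases are treated separately.

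The vertex case is the easy one. If $a$ is a vertex, then $a$ is an attachment vertex of $B$ and $\mu$ is forced to be the unique node $\mu_a$ of the SPQR-vertex-tree to which the leaf $a$ is attached. Conversely, I would iterate over all pairs $(B,a)$ with $a$ an attachment vertex of $B$ and test, in $O(1)$ from the DFS-interval of $\mu_a$ and the already-sorted attachment list of $B$, whether $a$ is the only attachment vertex of $B$ among the descendants of $\mu_a$; if so, $a$ is the unique child attachment of $B$ in $\skel(\mu_a)$, the bridge $B$ necessarily has a further parent or pole attachment there which --- since $a$ is a vertex --- is non-superfluous, and that attachment together with a representative follows from Lemma~\ref{lem:parent-and-pole-attachments}. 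Since $a$ is a vertex, neither equivalence nor domination applies, so every such bridge is kept; the total time is $O(n+\sum_i|B_i|)=O(n)$.

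For the main case, $a=\eps_\eta$ a child virtual edge with $\eta$ the corresponding child of $\mu$, the structural fact I would use is: if $B$ is a parent or pole bridge of $\mu$ with child attachment $\eps_\eta$, then the parent edge of $\skel(\eta)$ is an attachment of $B$ and $B$ has at least one child attachment in $\skel(\eta)$ --- an inner attachment vertex of $\expan(\eps_\eta)$ is a descendant of $\eta$ and hence not a pole of $\skel(\eta)$ --- so $B$ is either regular in $\eta$ or a parent bridge of $\eta$, whose own child attachment is again a vertex (handled above) or a virtual edge. Iterating this downward, the nodes at which a fixed $B$ is a parent bridge with virtual-edge child attachment form upward paths whose lower ends are regular nodes of $B$ (available from Lemma~\ref{lem:regular-bridges}) or vertex-child-attachment nodes, and along each of which $B$ can leave only as a pole bridge at the topmost node. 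I would therefore process the SPQR-vertex-tree once, bottom-up, maintaining at each node $\eta$ the bridges that are regular in $\eta$ or are parent bridges of $\eta$ carried up from below, each stored with a representative of its child attachment (from which Lemma~\ref{lem:parent-and-pole-attachments} gives its parent and pole attachments in $O(1)$); passing to the parent $\mu$ of $\eta$ then amounts to an $O(1)$ test per maintained bridge, against its sorted attachment list and the DFS-interval of $\eta$, of whether $\eps_\eta$ is its unique child attachment in $\skel(\mu)$ and whether it then carries a parent or pole attachment there.

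The hard part --- and where I expect most of the work to lie --- is avoiding a quadratic blow-up: naively, arbitrarily many bridges can be parent bridges along a single common chain, all with the same child attachment at every node of it, which is exactly the $\Omega(n^2)$ situation of Figure~\ref{fig:compatible-embeddings-quadratic-size}. The remedy is to collapse, at every pair (node $\mu$, child virtual edge $\eps_\eta$), the whole equivalence class of parent bridges with child attachment $\eps_\eta$ to a single retained bridge, and to discard every pole bridge with child attachment $\eps_\eta$ once such a parent bridge is present; that these omissions are safe is Lemma~\ref{lem:omitting-bridges}. The delicate point is to collapse in a way that still supplies a correct representative at every higher node of every absorbed chain; I would argue this by observing that a parent bridge of $\mu$ has all of its attachments outside the DFS-interval of $\mu$ on a single side (otherwise it becomes regular at a proper ancestor of $\mu$, where it is recaptured among the already-known regular bridges), so how far its chain extends above $\mu$ is controlled by one extreme attachment number, which can be maintained along the chain at $O(1)$ cost per node, and it suffices to carry, per (node, child-edge) pair, only the bridges attaining the extreme on each side. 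With this in place, the number of (node, maintained-bridge) incidences is bounded by the number of virtual edges of $\T$ plus the number of regular-bridge incidences, i.e.\ $O(n+\sum_i|B_i|)=O(n)$, and every individual operation is supported by the linear preprocessing --- DFS-numbering, the LCA data structure, and the union-find over the SPQR-vertex-tree --- already in place for Lemmas~\ref{lem:parent-and-pole-attachments}--\ref{lem:semi-regular-bridges}; this yields the claimed linear running time.
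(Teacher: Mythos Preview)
Your overall strategy is sound and closely parallels the paper's: a bottom-up sweep of the SPQR-vertex-tree, exploiting that a parent or pole bridge at $\mu$ whose child attachment is a virtual edge $\eps_\eta$ must already be regular or a parent bridge at the child $\eta$, so these bridges form upward chains rooted at regular nodes or vertex-attachment leaves. Your handling of the vertex case is correct.

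The difference is in the collapsing mechanism, and yours has a gap. The paper maintains for each node $\eta$ a single list $\pot(\eta)$, initialized at every leaf $v$ with all bridges having $v$ as attachment, and obtained at inner nodes by concatenating the children's lists after partial processing. When $\mu$ processes $\pot(\eta)$, it removes bridges with no parent/pole attachment in $\mu$, removes pole bridges after outputting them, skips first-seen regular bridges, and---this is the whole trick---\emph{stops} the moment it encounters a parent bridge. Every remaining entry of $\pot(\eta)$ would have the same child attachment $\eps_\eta$ in $\mu$, hence is equivalent to or dominated by the one just found; the rest of the list simply survives unprocessed into $\pot(\mu)$. The amortization is then immediate: every non-deleting step is either a first-seen regular bridge (overall linear) or the single parent bridge that terminates that child's scan. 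No heuristic choice of representative is needed.

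Your ``keep the extreme on each side'' rule is more delicate than this and does not quite work as stated. The bridge whose external attachment has the extreme DFS number can become a \emph{pole} bridge at some ancestor $\tau$---its extremal external vertex happens to be a pole of $\tau$---while a discarded bridge with a strictly less extreme, but non-pole, external attachment is a genuine \emph{parent} bridge there. You would then carry no representative of the parent-bridge equivalence class at $\tau$, and Lemma~\ref{lem:omitting-bridges} does not license omitting it. Your side remark that two-sided bridges are ``recaptured among the already-known regular bridges'' is true only at the LCA of the two sides and above; at intermediate ancestors the bridge may well be a parent bridge that you have already thrown away. The paper's stop-on-first-parent-bridge device sidesteps the whole issue: whatever parent bridge happens to be first in the list is a valid representative, and everything behind it is kept (unscanned) for the levels above.
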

\begin{proof}
  Let $\mathcal T$ be the SPQR-vertex-tree of the common graph $G$.

  We now process $\mathcal T$ bottom-up computing a list of bridges
  for each node $\mu$.  This list will contain all bridges that
  potentially are parent or pole bridges in $\mu$ and we denote it by
  $\pot(\mu)$.  More precisely, if a bridge $B$ has an attachment
  vertex that is a descendent of $\mu$ and other attachment vertices
  that are not descendents of $\mu$, then $B$ is contained in
  $\pot(\mu)$.  As we do not have the time to tidy up properly,
  $\pot(\mu)$ can also contain bridges whose attachment vertices are
  all descendents of $\mu$.

  We start with the leaves of $\mathcal T$, which are vertices of $G$.
  Let $v$ be such a leaf.  Then $\pot(v)$ is the list of all bridges
  having $v$ as attachment.  By initializing $\pot(v)$ with the empty
  list and then processing all bridges once, we can compute $\pot(v)$
  for all vertices in $O\left(n + \sum_{i=1}^k|B_i|\right)$ time.  Now
  consider an inner node $\mu$.  We basically obtain the list
  $\pot(\mu)$ by concatenating the lists of all children.  But before
  concatenation we (partially) process these lists separately. 

  While processing $\mu$ (and the lists computed for the children of
  $\eta$) we want to answer for a given bridge $B$ the following
  queries in constant time.  First, is $B$ regular in $\mu$?  Second,
  have we seen $B$ already while processing $\mu$?  This can be done
  using timestamps.  Assume we have one global array with an entry for
  each bridge.  Before processing $\mu$ we increment a global
  timestamp and write this timestamp into the fields of the array
  corresponding to bridges that are regular in $\mu$.  While
  processing $\mu$, we can then check in constant time whether the
  current timestamp is set for a given bridge $B$ and thus whether $B$
  is regular.  Setting up this array takes time linear in the number
  of regular bridges of $\mu$ and thus we have an overall linear
  overhead.  We can handle the second query in constant time,
  analogously.

  Now let $\mu$ be the node we currently process, let $\eta$ be a
  child of $\mu$ and assume that $\pot(\eta)$ is already computed.  We
  process the list $\pot(\eta)$; let $B$ be the current bridge.  By
  Lemma~\ref{lem:parent-and-pole-attachments} we can check in constant
  time whether $B$ has pole or parent attachments in $\mu$.  If not,
  all attachment vertices of $B$ are descendents of $\mu$ and we
  remove $\mu$ form the list $\pot(\eta)$ as $B$ cannot be a parent or
  pole in $\mu$ or in any ancestor of $\mu$.

  Otherwise, $B$ has a parent or a pole attachment in $\mu$.  We first
  check (in constant time) whether $B$ is regular in $\mu$.  Assume it
  is regular and we see $B$ the first time since processing $\mu$
  (which we can also check in constant time as mentioned above).  Then
  we simply skip $B$ and continue processing $\pot(\eta)$.  If $B$ is
  regular in $\mu$ and $B$ occurred before while processing $\mu$, it
  would be contained twice in the list $\pot(\mu)$ when concatenating
  the lists of all children of $\mu$.  Thus we can remove this
  occurrence of $B$ from $\pot(\eta)$.  Afterwards, we continue
  processing the remaining bridges in $\pot(\eta)$.

  Now assume that $B$ is not regular.  Then $B$ is either a parent or
  a pole bridge.  If $B$ is a parent bridge, we store it as a parent
  bridge of $\mu$.  Note that we also know the attachments of $B$ in
  $\skel(\mu)$ (the parent edge and the attachment corresponding to
  the child $\eta$) and a representative for each of these
  attachments.  Afterwards, we stop processing $\pot(\eta)$ and
  continue with another child of $\mu$.  By stopping after processing
  $B$, we might miss a bridge $B'$ in $\pot(\eta)$ that is also a pole
  or parent bridge of $\mu$.  However, the child attachment of $B'$
  would be the attachment corresponding to $\eta$ and thus $B'$ is in
  the same equivalence class as $B$ (if $B'$ is a parent bridge) or
  $B'$ is dominated by $B$ (if $B'$ is a pole bridge).  In both cases
  we can omit $B'$.

  Finally, consider the case that $B$ is a pole bridge in $\mu$.  We
  save $B$ together with its attachments in $\skel(\mu)$ (and their
  representatives) as pole bridge of $\mu$.  Then we remove $B$ from
  $\pot(\eta)$ and continue processing $\pot(\eta)$.  Removing $B$
  from $\pot(\eta)$ has the effect that $B$ does not occur when
  processing an ancestors of $\mu$.  Thus, we have to show that $B$ is
  not a pole or parent bridge in one of these ancestors.  Consider an
  ancestor $\tau$ of $\mu$.  Then either all attachment vertices of
  $B$ are descendents of $\tau$ and $B$ is neither pole nor parent
  bridge for $\tau$.  Otherwise, the only attachment vertex of $B$
  that is not a descendent of $\tau$ is $s$ (without loss of
  generality).  However, the virtual edge in $\skel(\tau)$ containing
  all attachment vertices of $B$ is then incident to $s$ and thus the
  attachment $s$ is superfluous.  Hence, we can omit the bridge $B$ in
  $\skel(\tau)$ by Lemma~\ref{lem:superfluous-attachments}.

  It remains to show that the above procedure runs in linear time.
  First note that we add elements to lists $\pot(\cdot)$ only in the
  leaves of $\mathcal T$.  As we add each bridge $B$ to exactly $|B|$
  such bridges, the total size of these lists is linear.  Assume we
  are processing a list $\pot(\eta)$ and let $B$ be a bridge we delete
  after processing it.  Then we can ignore the (constant) running time
  for processing $B$, as we have only linearly many such deletion
  operations.  Otherwise, $B$ is a regular bridge that we see the
  first time or it is a parent bridge.  The former case happens only
  as many times as there are regular bridges of $\mu$ (which is
  overall linear).  The latter case happens at most once for each
  child of $\mu$ as we stop processing $\pot(\eta)$ afterwards.
  Hence, the overall running time is linear.
\end{proof}

\subsubsection{Linear Time Implementation of Steps~\ref{step:1}--\ref{step:5}}

\begin{lemma}
  \label{lem:step1-linear-time}
  Step~\ref{step:1} can be performed in linear time.
\end{lemma}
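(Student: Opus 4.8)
The plan is to let the previous lemmas do essentially all the work. Using them we can list, in total linear time, for every node of the (rooted) augmented SPQR-tree the important union bridges that are relevant, each with its attachments in the skeleton and a representative vertex for each attachment; it then only remains to turn this data into the compatibility test of Step~\ref{step:1}, spending at most $O(|\skel(\mu)|)$ plus the number of bridge--attachment incidences at each node~$\mu$.

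First I would apply Lemmas~\ref{lem:regular-bridges}, \ref{lem:semi-regular-bridges}, and~\ref{lem:parent-and-pole-bridges} to the family of union bridges (using Lemma~\ref{lem:parent-and-pole-attachments} for the pole and parent attachments). This produces, for every node~$\mu$, all regular and semi-regular union bridges together with all parent and pole union bridges except dominated pole bridges and all but one parent bridge per equivalence class, each with its attachments in~$\skel(\mu)$ and a representative for every attachment, in total time $O(n+\sum_i|B_i|)=O(n)$. By Lemma~\ref{lem:superfluous-attachments} superfluous attachments may be disregarded, and the proof of Lemma~\ref{lem:omitting-bridges} explicitly checks that dropping a dominated pole bridge or a redundant parent bridge changes nothing in Step~\ref{step:1}, so it is safe to run the step on this reduced collection alone. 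A virtual-edge attachment is delivered as a representative vertex~$v$ lying in its expansion graph; mapping~$v$ to the concrete virtual edge of~$\skel(\mu)$ it represents is done for all bridges at once by the bottom-up union--find traversal already used in the proof of Lemma~\ref{lem:regular-bridges}. After this, every P- and R-node~$\mu$ carries a list --- of sizes summing to $O(n)$ over all~$\mu$ --- of its relevant important union bridges, each with its attachments resolved to concrete vertices and virtual edges of~$\skel(\mu)$.

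Next I would carry out the per-node test. At a P-node~$\mu$ with poles~$s,t$, every face of~$\skel(\mu)$ is incident to~$s$, to~$t$, and to exactly its two bounding edges, so pole attachments impose nothing and a bridge is satisfiable only if it has at most two virtual-edge attachments; if some bridge of~$\mu$ has three or more, I report that no \textsc{Sefe} exists. Otherwise I build the graph~$H$ on the virtual edges of~$\skel(\mu)$ having an edge $\{\eps,\eps'\}$ for each bridge whose virtual-edge attachments are exactly~$\eps,\eps'$, and check in $O(|\skel(\mu)|)$ time whether~$H$ embeds into a cyclic order of the edges of~$\skel(\mu)$ (it does iff~$H$ is a union of paths, or, when~$\skel(\mu)$ has no non-virtual edge, possibly a single spanning cycle). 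If so, a witnessing order --- hence a compatible embedding, fixed up to flip --- is read off from the components of~$H$; fixing an arbitrary compatible embedding is safe by~\cite{adfpr-tsegi-12}. At an R-node~$\mu$ the embedding of~$\skel(\mu)$ is fixed up to a flip, which is irrelevant for face-sharing, so I compute its faces and, with one pass per face, a structure answering ``is vertex~$x$ incident to face~$f$'' in~$O(1)$ time and of total size~$O(|\skel(\mu)|)$; then, for each relevant important bridge~$B$ of~$\mu$, I decide whether its attachments lie on a common face. If~$B$ has a virtual-edge attachment~$\eps$, the only candidates are the at most two faces incident to~$\eps$, and each is tested against all attachments of~$B$ in time proportional to their number. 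If all attachments of~$B$ are vertices of~$\skel(\mu)$, the question is whether these vertices share a face of the fixed embedding.

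The main obstacle I expect is exactly this last sub-case at R-nodes: done naively, deciding for a bridge with several vertex attachments whether they lie on a common face may cost a scan of all faces around one of the vertices, which across many bridges sharing that vertex would break the per-node budget. I would resolve it by deduplicating bridges by their attachment set (leaving only $O(n)$ distinct such queries in total) and then answering the queries in batch against the precomputed face--vertex incidence lists rather than one at a time. Everything else --- the bridge enumeration, the P-node test, and the virtual-edge-attachment case at R-nodes --- is a direct application of the already-established machinery. Since each node is then processed within $O(|\skel(\mu)|)$ plus the number of attachment incidences of its relevant important bridges, and these quantities sum to $O(n)$, Step~\ref{step:1} runs in linear time.
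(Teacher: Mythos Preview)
Your proposal follows the paper's approach almost exactly: the same use of Lemmas~\ref{lem:regular-bridges}--\ref{lem:parent-and-pole-bridges} (with Lemma~\ref{lem:omitting-bridges} justifying the omissions), the same P-node test via the auxiliary graph on virtual edges, and the same split at R-nodes into the virtual-edge-attachment case versus the all-vertex-attachment case. The one place where you diverge is precisely the spot you flag as the main obstacle, and there your resolution is not sufficient.

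For an R-node~$\mu$ and a bridge~$B$ whose attachments are all vertices of~$\skel(\mu)$, ``deduplicating by attachment set'' and ``batch-answering against face--vertex incidence lists'' does not give a linear bound without further ideas: distinct bridges can have distinct attachment sets (so deduplication need not shrink the workload), and a batch scheme over face--vertex incidences still has to, for each bridge, intersect the face-lists of its attachment vertices, which is not obviously proportional to the number of attachments. The paper handles this case differently and cleanly by exploiting triconnectedness of~$\skel(\mu)$: two non-adjacent vertices of a triconnected planar graph share at most one face. So the algorithm first looks for a non-adjacent pair among the attachments (if every pair is adjacent there are at most three attachments, a trivial case); for such a pair there is a \emph{unique} candidate face~$f$, which is retrieved in~$O(1)$ time via the bounded-distance shortest-path data structure of Kowalik and Kurowski on an auxiliary planar graph~$\skel'(\mu)$ that encodes vertex--face incidences as length-$2$ paths. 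It then verifies the remaining attachments against~$f$ in time linear in their number. This yields~$O(k)$ per bridge with~$k$ attachments and hence overall linear time. Replacing your batch idea with this triconnectedness-plus-shortest-path-oracle argument closes the gap; the rest of your write-up matches the paper.
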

\begin{proof}
  Recall that Step~\ref{step:1} consist of computing the important
  union bridges for P- and R-nodes.  For every P- and R-node $\mu$ we
  then have to test whether $\skel(\mu)$ admits a compatible
  embedding, i.e., whether $\skel(\mu)$ can be embedded such that the
  attachments of each important bridge of $\mu$ share a face.

  For each node $\mu$, we compute the regular and semi-regular bridges
  using Lemma~\ref{lem:regular-bridges} and
  Lemma~\ref{lem:semi-regular-bridges}, respectively.  We moreover
  compute some of the pole and parent bridges using
  Lemma~\ref{lem:parent-and-pole-attachments}.  In this way we of
  course miss some important bridges but we know by
  Lemma~\ref{lem:omitting-bridges} that it is safe to do so.  Thus, we
  can focus on computing compatible embeddings.

  Let $\mu$ be a node of the SPQR-tree and assume that $\mu$ is a
  P-node.  Each of the two vertices $s$ and $t$ in $\skel(\mu)$ is
  incident to every face.  If an important bridge $B$ has $s$ or $t$
  as attachment, this attachment does not constrain the embedding of
  $\skel(\mu)$ (it shares a face with all other attachments of $B$ if
  and only if all other attachments share a face).  Thus, we can
  assume that only the virtual edges of $\skel(\mu)$ are attachments.
  If $B$ has three (or more) attachments in $\skel(\mu)$, it is
  impossible to find a compatible embedding as every face of
  $\skel(\mu)$ is incident to only two virtual edges.  It remains to
  deal with the case where every bridge has two virtual edges as
  attachment.  We build the conflict graph with one vertex $v(\eps)$
  for every virtual edge $\eps$ and an edge between two such vertices
  $v(\eps_1)$ and $v(\eps_2)$ if and only if there is a bridge with
  attachments $\eps_1$ and $\eps_2$.  It is not hard to see that
  $\skel(\mu)$ admits a compatible embedding if and only if this
  conflict graph has maximum degree~2 and either contains no cycle or
  is a Hamiltonian cycle.

  If $\mu$ is an R-node, its skeleton is triconnected and therefore
  has a fixed planar embedding.  To test whether the embedding of
  $\skel(\mu)$ is compatible, we need to check for every bridge $B$,
  whether there is a face incident to all its attachments.  We
  consider the graph $\skel'(\mu)$ obtained from $\skel(\mu)$ by
  subdividing every edge and inserting a vertex into every face that
  is connected to all incident vertices.  We denote the new vertex
  created by subdividing the virtual edge $\eps$ by $v(\eps)$ and the
  new vertex inserted into the face $f$ by $v(f)$.  For a vertex $v$
  that already existed in $\skel(\mu)$ we also write $v(v)$.  For a
  bridge $B$ with attachments $a_1, \dots, a_k$, we need to test
  whether there is a face $f$ such that for every pair $v(a_i)$ and
  $v(a_j)$ the path $v(a_i)v(f)v(a_j)$ is contained in $\skel'(\mu)$.
  To make sure that all paths of length~2 between $v(a_i)$ and
  $v(a_j)$ include a vertex $v(f)$ corresponding to a face $f$, we
  subdivide every edge twice except for those edges incident to a
  vertex $v(f)$ corresponding to a face.

  As $\skel'(\mu)$ is planar, we can use the data structure by Kowalik
  and Kurowski~\cite{kk-spqpgct-03} that can be computed in linear
  time and supports shortest path queries for constant distance in
  constant time.  More precisely, for any constant $d$, there exists a
  data structure that can test in $O(1)$ whether a pair of vertices is
  connected by a path of length $d$.  If so, a shortest path is
  returned.  We first rule out some easy cases.

  If there is an attachment $a_i$ that is a virtual edge in
  $\skel(\mu)$, the vertex $v(a_i)$ has only four neighbors in
  $\skel'(\mu)$.  Thus, we get the two faces incident to $a_i$ in
  constant time and can check in $O(k)$ time whether one of them is
  incident to every attachment of $B$.  We thus assume that all
  attachments are vertices.  If one of these vertices is adjacent to
  three or more others, there cannot be a compatible embedding.  Thus,
  we either find (in $O(k)$ time) a pair of non-adjacent attachments
  $a_i$ and $a_j$ or there are only two or three pairwise adjacent
  attachments.  As the latter case is easy, we can assume that we have
  non-adjacent attachments $a_i$ and $a_j$.  As $\skel(\mu)$ is
  triconnected, $a_i$ and $a_j$ are incident to at most one common
  face $f$.  Thus, there is only one path $v(a_i)v(f)v(a_j)$ of
  length~2 from $a_i$ to $a_j$ in $\skel'(\mu)$, which gives us $f$ in
  constant time.  Then it remains to check whether all other
  attachments are incident to $f$, which takes $O(k)$ time.  Doing
  this for every bridge yields a linear-time algorithm testing whether
  an R-node skeleton admits a compatible embedding.
\end{proof}

\begin{lemma}
  Step~\ref{step:2} can be performed in linear time.
\end{lemma}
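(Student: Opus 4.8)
The plan is to reduce Step~\ref{step:2} to work that the previous lemmas already perform, plus a constant-time-per-item sidedness computation. Recall that for each S-node $\mu$ the step must (a)~produce the important $\circled{1}$-, $\circled{2}$- and union bridges of $\mu$ with their attachments in $\skel(\mu)$, (b)~decide for every attachment that is a virtual edge whether it is left- or right-sided, and (c)~append the resulting equation or inequality to the global \textsc{2-Sat} formula. For (a) I would invoke Lemma~\ref{lem:regular-bridges}, Lemma~\ref{lem:semi-regular-bridges} and Lemma~\ref{lem:parent-and-pole-bridges} three times, once with the bridges of $\1G$, once with those of $\2G$, and once with those of $G^\cup$, all taken with respect to the common graph $G$. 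By Lemma~\ref{lem:omitting-bridges} it is safe to drop the pole bridges and all but one parent bridge of each equivalence class exactly as Lemma~\ref{lem:parent-and-pole-bridges} does, and by Lemma~\ref{lem:superfluous-attachments} it is safe to suppress superfluous attachments. Since these lemmas run in linear time, the total number of (bridge, attachment) pairs produced over all S-nodes is linear. Introducing one fresh variable $x_B^\mu$ for each important bridge $B$ of each S-node $\mu$ (again linearly many) and one variable $x_\mu$ per P- or R-node, it therefore suffices to handle each (bridge, attachment) pair in amortized constant time after linear preprocessing.

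The crux is the sidedness test. Fix an S-node $\mu$, an important bridge $B$, and an attachment of $B$ that is a virtual edge $\eps = st$ of $\skel(\mu)$; let $\mu'$ be the neighbour of $\mu$ across $\eps$ and $\eps' = st$ the twin of $\eps$ in $\skel(\mu')$. Here $\mu'$ is the parent of $\mu$ if $\eps$ is the parent edge of $\skel(\mu)$ and the corresponding child otherwise, and it is necessarily a P- or an R-node: it cannot be another S-node (no two S-nodes are adjacent in the augmented SPQR-tree) and it cannot be a Q-node (then $\expan(\eps)$ would contain no inner vertex, so $\eps$ could not be an attachment). By definition, whether $\eps$ is left- or right-sided for $B$ is determined by which of the two faces of the oriented edge $\eps'$, in the reference embedding of $\skel(\mu')$ fixed in Step~\ref{step:1}, is the one shared by $\eps'$ and the remaining attachments of $B$ in $\skel(\mu')$. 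My plan is to read off this face from data produced by Step~\ref{step:1}. When checking $\skel(\mu')$ for a compatible embedding (Lemma~\ref{lem:step1-linear-time}) the algorithm already locates, for each non-omitted important union bridge $\hat B$ of $\mu'$, the unique face $f_{\hat B,\mu'}$ of $\skel(\mu')$ incident to all attachments of $\hat B$; I would simply have it store this face. Now $B$ is contained in a union bridge $\hat B$, and every object of $\skel(\mu')$ that contains an attachment vertex of $B$ coming from the $\mu'$-side of $\eps$ is an attachment of $\hat B$ there, as is $\eps'$; hence in the reference embedding all of these are incident to $f_{\hat B,\mu'}$, so $f_{\hat B,\mu'}$ is exactly the face whose side of $\eps'$ we need. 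If $\hat B$ itself was omitted at $\mu'$ by Lemma~\ref{lem:parent-and-pole-bridges}, the kept representative of its equivalence class has the same attachments in $\skel(\mu')$, hence the same shared face, so the value is still available. It then remains to decide on which side of the oriented edge $\eps'$ the face $f_{\hat B,\mu'}$ lies; I would precompute once, for every skeleton of the augmented SPQR-tree, its reference embedding together with face identifiers, from which this side is obtained in constant time. Depending on the outcome the pair contributes the equation $x_{\mu'} = x_B^\mu$ (face to the left of $\eps'$) or the inequality $x_{\mu'} \neq x_B^\mu$ (to the right), which is appended to the global \textsc{2-Sat} formula.

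Putting this together, every one of the linearly many (bridge, attachment) pairs is handled in amortized constant time, contributes $O(1)$ clauses, and all preprocessing (running the bridge lemmas, augmenting and running Step~\ref{step:1}, computing the reference embeddings with face identifiers, and recording the tree adjacencies so that $\mu'$ and $\eps'$ are found in $O(1)$) is linear; hence Step~\ref{step:2} runs in linear time. I expect the main obstacle to be exactly this bookkeeping: making the sidedness decision constant-time forces us to piggy-back on Step~\ref{step:1}'s compatibility computation and to have the full rotation systems of all reference embeddings precomputed, and the fact that the neighbouring skeleton's embedding is fixed only up to a flip means we must be careful to record sides relative to the reference embedding and let the flip variables $x_{\mu'}$ absorb the remaining freedom.
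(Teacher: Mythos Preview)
Your high-level plan matches the paper: invoke the three bridge lemmas (for $\circled 1$-, $\circled 2$-, and union bridges) to obtain linearly many (bridge, attachment) pairs over all S-nodes, and then emit one (in)equation per virtual-edge attachment from a sidedness decision in the neighbouring P- or R-node $\mu'$.

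The gap is in the sidedness lookup. You propose to read off the face $f_{\hat B,\mu'}$ of the containing union bridge $\hat B$, but you do not explain how this lookup is $O(1)$: the output of Lemma~\ref{lem:step1-linear-time} is organised per node as a list of kept union bridges with their faces, not as a random-access table keyed by $(\hat B,\mu')$. More seriously, when $\hat B$ is omitted at $\mu'$ you appeal to the ``kept representative of its equivalence class'', but locating that representative requires knowing $\hat B$'s single child attachment in $\skel(\mu')$---which is exactly what was \emph{not} computed for omitted bridges. So ``the value is still available'' is asserted rather than argued.

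The paper avoids both issues by a different mechanism. It also computes the kept important $\circled i$-bridges at P- and R-nodes (not just at S-nodes), and builds for every virtual edge $\eps$ a sorted list $\bridges(\eps)$ of the kept bridges having $\eps$ as attachment. For an S-node $\mu$ with virtual edge $\eps$ and twin $\eps'$ in $\skel(\mu')$, it scans $\bridges(\eps)$ and $\bridges(\eps')$ in parallel. If $B$ appears in both lists, its attachments at $\mu'$ are at hand and its face (hence the side of $\eps'$) is obtained as in Lemma~\ref{lem:step1-linear-time}. If $B$ is missing from $\bridges(\eps')$, the paper does an explicit case split on whether $\mu'$ is the parent or a child of $\mu$; in the child case one shows that $B$ must already be a parent or pole bridge of $\mu$ with child attachment $\eps$, of which only constantly many are kept, so one may spend $O(|\skel(\mu')|)$ time to locate $B$'s child attachment in $\skel(\mu')$ without exceeding an overall linear bound. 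Your union-bridge shortcut could be completed along the same lines (the same case split applies to $\hat B$), but as written the proposal leaves precisely this bookkeeping unresolved.
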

\begin{proof}
  Recall that Step~\ref{step:2} consist of three parts.  First we have
  to compute the important $\circled i$-bridges of S-nodes together
  with their attachments.  For each attachment we then have to test
  whether it is left- or right-sided.  Finally, we have to add the
  consistency constraints.

  As for Step~\ref{step:1}, we use Lemma~\ref{lem:regular-bridges},
  Lemma~\ref{lem:semi-regular-bridges}, and
  Lemma~\ref{lem:parent-and-pole-bridges} to compute all important
  $\circled i$-bridges together with their attachments.  We actually
  compute these important bridges not only for the S-nodes but also
  for P- and R-nodes.  We use this additional information to compute
  which attachments are left- and which are right-sided.

  Note that the final step of adding the consistency constraints to a
  global 2-\textsc{Sat} formula is trivial.  Thus, it remains to show
  that we can compute for every attachment whether it is left- or
  right-sided.

  Let $\mu$ be a node of the SPQR-tree $\mathcal T$.  We iterate over
  all important bridges of $\mu$ (except those we omitted).  For every
  bridge $B$ we iterate over all attachments in $\mu$ and for every
  virtual edge $\eps$ among those attachments, we append $B$ to the
  list $\bridges(\eps)$.  Afterwards, $\bridges(\eps)$ contains all
  important (but not omitted) bridges of $\mu$ that have $\eps$ as
  attachment.  Note that we can assume that the bridges in
  $\bridges(\eps)$ are sorted according to an arbitrary but fixed
  order of the bridges.

  Let $\mu$ be an S-node with virtual edge $\eps$ in $\skel(\mu)$.
  Let further $\mu'$ be the neighboring P- or R-node corresponding to
  $\eps$ and let $\eps'$ be the virtual edge in $\skel(\mu')$
  corresponding to the S-node $\mu$.  For every bridge $B$ in
  $\bridges(\eps)$ we want to know whether the attachment $\eps$ is
  left- or right-sided.  To this end we iterate over the lists
  $\bridges(\eps)$ and $\bridges(\eps')$ simultaneously.  For every
  bridge $B$ in $\bridges(\eps)$ there are two different cases.
  Either $B$ also occurs in $\bridges(\eps')$ or it does not.  It is
  not hard to see that the latter can only happen if $B$ is in $\mu'$
  a pole or parent bridge that was omitted.

  If $B$ also occurs in $\bridges(\eps')$, we know from
  Step~\ref{step:1} that $\skel(\mu')$ has a (unique) face incident to
  all attachments of $B$ in $\skel(\mu')$.  In particular, this face
  is either the right or the left face of $\eps'$ and thus we
  immediately know whether the attachment $\eps$ of $B$ in
  $\skel(\mu)$ is left- or right-sided.

  It remains to consider the case where $B$ occurs in $\bridges(\eps)$
  but not in $\bridges(\eps')$.  If $\mu'$ is the parent of $\mu$, the
  bridge $B$ must be a pole- or parent bridge in $\mu'$ with child
  attachment~$\eps'$.  As in the proof of
  Lemma~\ref{lem:step1-linear-time}, we can find the (unique) face
  incident to $\eps'$ and one of the poles.  As $B$ has to lie in this
  face we know whether it lies to the right or to the left face of
  $\eps'$ in $\skel(\mu')$ and thus we know whether the attachment
  $\eps$ in $\skel(\mu)$ is left- or right-sided.

  If $\mu'$ is a child of $\mu$, then $B$ cannot be regular in $\mu$
  as otherwise $B$ would be semi-regular in $\mu'$ and thus contained
  in $\bridges(\eps')$ (which is not the case we consider).  Hence $B$
  is either a pole or a parent bridge (recall that semi-regular
  bridges are also parent bridges).  Thus, $B$ is a pole or parent
  bridge in $\mu$ with child attachment $\eps$.  By
  Lemma~\ref{lem:omitting-bridges}, we can omit all but a constant
  number of such bridges with $\eps$ as child attachment.  Thus, we
  can assume that $\bridges(\eps)$ contains only a constant number of
  bridges that do not occur in $\bridges(\eps')$.  For these bridges
  we allow a running time linear in the size of $\skel(\mu')$.  As
  $\mu$ is the unique parent of $\mu'$ this happens only a constant
  number of times for $\mu'$ and thus takes overall linear time.

  Let $B$ be such a bridge and let $v$ be the attachment vertex of the
  common graph $G$ representing the child attachment $\eps$ of $B$ in
  $\skel(\mu)$.  We show how to find a child attachment of $B$ in
  $\skel(\mu')$ in $O(|\skel(\mu')|)$ time.  Then we can (as in the
  cases before) find in constant time which face incident to $\eps'$
  contains $B$ in $\skel(\mu')$ and we are done.  As before we assume
  to have a DFS-ordering on the leaves of the SPQR-vertex-tree.  Then
  the leaves that a descendents of an inner node form an interval with
  respect to this order.  These intervals can be easily computed in
  linear time by processing the SPQR-vertex-tree bottom up once.
  Afterwards, we can check in constant time whether a vertex $v$ is
  the descendent of an inner node.  Hence we can check in
  $O(|\skel(\mu')|)$ time which child of $\mu'$ is an ancestor of $v$
  and thus which virtual edge or vertex in $\skel(\mu')$ represents
  $v$.  This yields the child attachment of $B$ in $\skel(\mu')$ and
  we are done.
\end{proof}

\begin{lemma}
  Step~\ref{step:3} can be performed in linear time.
\end{lemma}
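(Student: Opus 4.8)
The plan is to reuse the bridge-computation machinery already in place and then, \emph{per S-node}, replace the possibly quadratically many planarity constraints by a linear-size logically equivalent set. First I would recall that Lemmas~\ref{lem:regular-bridges}, \ref{lem:semi-regular-bridges}, and \ref{lem:parent-and-pole-bridges} already deliver, in total linear time, every important $\circled{i}$-bridge of every S-node $\mu$ together with its attachments in $\skel(\mu)$ (the bridges we are allowed to drop by Lemma~\ref{lem:omitting-bridges} being safely omitted), and that the total number of (bridge, attachment, node) incidences is linear; hence for a single S-node $\mu$ we have a budget of $O(\sum_B |\mathrm{att}_\mu(B)|)$, summing to $O(n)$. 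For each fixed color $i$, the planarity constraints of $\mu$ are exactly the inequalities $x_B^\mu \neq x_{B'}^\mu$ ranging over the edges of the \emph{interlacement graph} $H_i^\mu$ whose nodes are the important $\circled{i}$-bridges of $\mu$ and whose edges are the alternating pairs in $\skel(\mu)$.

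The obstacle is that $H_i^\mu$ can have $\Theta(n^2)$ edges even for one S-node, so it must never be listed explicitly. The plan is to observe that the global \textsc{2-Sat} formula only needs these inequalities \emph{up to logical equivalence}: conjoined with everything else, $\bigwedge_{BB'\in H_i^\mu}(x_B^\mu \neq x_{B'}^\mu)$ expresses precisely that the variables $x_\bullet^\mu$ properly 2-colour each connected component of $H_i^\mu$. If a component is bipartite, this is captured by any spanning tree of it; if a component is non-bipartite, the requirement is unsatisfiable, which is the correct answer (then the $\circled{i}$-bridges cannot all be assigned to two faces without an alternating monochromatic pair), and it is captured by a spanning tree plus one further edge closing an odd cycle. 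So it suffices to add, per component of $H_i^\mu$, a spanning tree plus (in the non-bipartite case) one extra edge, i.e.\ at most $|V(H_i^\mu)|$ inequalities. Unsatisfiability arising only in combination with the consistency constraints of Step~\ref{step:2} or the union-bridge constraints of Step~\ref{step:4} is still detected in Step~\ref{step:5}, so adding this thinned-out but equivalent set is safe.

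It remains to compute, within the stated budget and without materialising $H_i^\mu$, its connected components, the bipartiteness status of each, and such a spanning-tree-plus-odd-edge certificate. The plan is a sweep around the cycle $\skel(\mu)$, cut at one vertex into a linear sequence of positions and processing each attachment a constant number of times, using the characterisation that a bridge $B'$ does \emph{not} alternate with $B$ exactly when all attachments of $B'$ lie inside a single gap between cyclically consecutive attachments of $B$; maintaining the currently ``open'' bridges in a structure reflecting this laminar part, one can, whenever a bridge is closed, link it to one already-seen representative bridge per gap it spans, and then a graph search on the resulting sparse ``local'' conflict structure yields the components, the bipartiteness test, and the certificate. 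Pairs that alternate across the cut are handled separately, e.g.\ by also tracking the bridges attached at the cut vertex. The main difficulty, and the heart of this lemma, is precisely this linear-time sweep: the classical two-stack bipartition argument for bridges of a cycle assumes each bridge has only two attachments, so one must argue that the nested-bridges-with-gap-representatives bookkeeping still reproduces the correct sign-components of $H_i^\mu$ while touching each attachment only $O(1)$ times. Summing over all S-nodes then gives the claimed linear running time.
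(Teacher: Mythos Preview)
Your logical reduction is sound: the planarity constraints of an S-node $\mu$ are precisely a proper 2-colouring condition on the interlacement graph $H_i^\mu$, and replacing the full edge set by a spanning forest (plus one odd edge per non-bipartite component) yields an equivalent, linear-size set of (in)equalities. That part is fine.

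The gap is exactly where you locate it yourself: the sweep. You need, in time linear in the total number of attachments, the connected components of $H_i^\mu$, a bipartition of each, and a certificate edge for each non-bipartite one, for bridges with \emph{arbitrarily many} attachments. Your sketch (``open bridges in a structure reflecting the laminar part, link a closed bridge to one representative per gap it spans'') is not an algorithm yet: non-alternating multi-attachment bridges need not be laminar (they may share attachment positions or interleave in more than two gaps), the number of gaps a single bridge spans is not bounded, and it is not clear why one representative per gap suffices to reconstruct both the component structure and the parity. The classical two-stack/PQ arguments you allude to really do rely on two endpoints per chord, and extending them is nontrivial. As written, this is a plan that still owes the reader the central lemma.

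The paper avoids this difficulty altogether by a different, and simpler, device. For each S-node $\mu$ and each colour it builds the auxiliary graph $H$: take the cycle $\skel(\mu)$, subdivide every edge once so that every possible attachment $a$ is a vertex $v(a)$, and add for every important $\circled{i}$-bridge $B$ a new vertex $v(B)$ adjacent to $v(a_1),\dots,v(a_k)$. Then an assignment of bridges to the two sides of the cycle with no alternating pair on the same side is exactly a planar embedding of $H$. Since $H$ is biconnected, its SPQR-tree $\mathcal T_H$ (computable in $O(|H|)$ time) encodes all such embeddings: a bridge whose side is decided inside a P-node of $\mathcal T_H$ alternates with nothing and needs no constraint; for each R-node $\eta$ of $\mathcal T_H$ one introduces a fresh binary variable $x_\eta$ and, for every bridge $B$ whose side is fixed by the flip of $\skel(\eta)$, adds $x_\eta = x_B^\mu$ or $x_\eta \neq x_B^\mu$. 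This produces $O(|H|)$ constraints that are equivalent to the full set of planarity constraints, and $|H|$ is linear in $|\skel(\mu)|$ plus the total number of attachments of important bridges in $\mu$, so summing over all S-nodes stays linear. In effect the SPQR-tree of $H$ does for free what your sweep was meant to do: it groups the bridges into rigid pieces (the R-nodes) whose relative sides are forced, and leaves the independent choices to P-nodes. This is why the paper never has to reason about the interlacement graph directly.
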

\begin{proof}
  Let $\mu$ be an S-node.  Note that every important bridge in $\mu$
  may alternate with a linear number of important bridges.  Thus,
  there are instances where the planarity constraints have quadratic
  size.  In the following we describe how to compute constraints that
  are equivalent to the planarity constraints but have linear size.
  We only consider $\circled{1}$-bridges; for $\circled{2}$-bridges,
  the same procedure can be applied.

  We define the graph $H$ as follows.  We start with $H = \skel(\mu)$
  and subdivide every edge once.  Thus, $H$ has a vertex $v(a)$ for
  each attachment $a$ in $\skel(\mu)$.  For every bridge $B$ with
  attachments $a_1, \dots, a_k$, we add a \emph{bridge vertex} $v(B)$
  and connect it to the vertices $v(a_1), \dots, v(a_k)$.  When using
  the term \emph{cycle of $H$}, we refer to the subgraph of $H$ one
  obtains by removing the bridge vertices.

  Assume we have a planar embedding of $H$.  Then, every bridge vertex
  lies on one of the two sides of the cycle and no two bridges on the
  same side of the cycle alternate.  Conversely, an assignment of the
  bridges to the two sides of the cycles such that no two bridges
  alternate yields a planar embedding.  The choices the planarity
  constraints leave are thus equivalent to the embedding choices of
  $H$.

  As $H$ is biconnected, the embedding choices consist of reordering
  parallel edges in P-nodes and mirroring R-nodes of the SPQR-tree
  $\mathcal T_H$ of $H$.  Let $\eta$ be a P-node in $\mathcal T_H$.
  If the embedding of $\skel(\eta)$ determines on which side of the
  cycle a bridge $B$ lies, then $B$ is clearly not alternating with
  any other bridge.  For an R-node $\eta$ of $\mathcal T_H$, fixing
  the embedding of $\skel(\eta)$ to one of the two flips determines
  the side for some of the bridges.  We create a new binary variable
  $x_\eta$ with the interpretation that $x_\eta = 0$ if $\skel(\eta)$
  is embedded according to a reference embedding and $x_\eta = 1$ if
  the embedding is flipped.  For a bridge $B$ whose side is determined
  by the embedding of $\skel(\eta)$ we can then add the constraint
  $x_\eta = x_B^\mu$ or $x_\eta \not= x_B^\mu$ (depending on whether
  the reference embedding of $\skel(\eta)$ fixes $B$ to the left or
  right side of the cycle).

  It is not hard to see that one can compute for each R-node $\eta$
  the brides whose side is determined by the embedding of
  $\skel(\eta)$ in overall linear time in the size of $H$.  Thus, we
  get the above constraints (which are equivalent to the planarity
  constraints) in $O(|H|)$ time.  Clearly, the size of $H$ is linear
  in the size of $\skel(\mu)$ plus the total number of attachments of
  important bridges in $\mu$.  Thus, we get an overall linear running
  time.
\end{proof}

\begin{lemma}
  Step~\ref{step:4} can be performed in linear time.
\end{lemma}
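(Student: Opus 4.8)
The plan is to proceed as for Steps~\ref{step:2} and~\ref{step:3}: by the time Step~\ref{step:4} is reached all the bridges we keep, together with their attachments, have already been computed, so adding the union-bridge constraints is essentially bookkeeping, and the only real task is, for each kept important $\circled{i}$-bridge $B$ of an S-node $\mu$, to name the variable $x_{B'}^\mu$ of the union bridge $B'$ containing $B$ in amortized constant time. Concretely, by the proof that Step~\ref{step:2} runs in linear time we may assume that for every node $\mu$ of the augmented SPQR-tree we have, in total linear time, the lists of kept important $\circled{1}$-, $\circled{2}$-, and union bridges of $\mu$, each with its attachments in $\skel(\mu)$ and a representative vertex per attachment; in particular every variable $x_B^\mu$ (for kept important $\circled{i}$-bridges) and every variable $x_{B'}^\mu$ (for kept important union bridges) referenced by a union-bridge constraint was created during Step~\ref{step:2}.

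First I would build, once and for all, a global map assigning to each $\circled{1}$- and each $\circled{2}$-bridge the union bridge it belongs to. This grouping is part of the input of \textsc{Sefe} with union bridge constraints, and it can in any case be recomputed in $O(n+m)$ time, since each union bridge restricted to $\1G$ (respectively $\2G$) is a disjoint union of $\circled{1}$-bridges (respectively $\circled{2}$-bridges); one labeling pass over $G^\cup$ suffices. For each S-node $\mu$ I then scan its list of kept important $\circled{i}$-bridges; for such a bridge $B$ the map yields the union bridge $B'\supseteq B$. Because the attachment set of $B$ is contained in that of $B'$, the bridge $B'$ is important for $\mu$, and the representative vertex of the child attachment of $B$ (a descendant of the child of $\mu$ corresponding to that attachment) is also an attachment vertex of $B'$, so $B'$ has the same child attachment; in particular, if $B$ is regular in $\mu$ then so is $B'$, hence $B'$ was kept. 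In that case $x_{B'}^\mu$ is available, we retrieve it from the data already stored for $\mu$ (e.g.\ via the timestamp bookkeeping used in the computation of the parent and pole bridges), and we emit the equation $x_B^\mu = x_{B'}^\mu$ in amortized constant time.

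The part I expect to be the main obstacle is the interaction with the bridges discarded by Lemma~\ref{lem:omitting-bridges}: it can happen that a kept important $\circled{i}$-bridge $B$ is a parent or pole bridge of $\mu$ whose containing union bridge $B'$ has been omitted, so that no variable $x_{B'}^\mu$ exists. Here one has to argue — as in the proof of Lemma~\ref{lem:omitting-bridges} and its use in the linear-time realization of Step~\ref{step:2} — that this is harmless: $B'$ then has the same unique child attachment $a$ as $B$, so either $B'$ is dominated by the kept parent bridge over $a$, or $B$ and $B'$ lie in the same parent-bridge equivalence class and $B$ is itself its kept representative, in which case every other kept important bridge of that class, and of the group $B'$, coincides with $B$. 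In all these cases the synchronization is either vacuous or is already realized through the consistency constraints at the child of $\mu$ corresponding to $a$, so simply skipping the union-bridge constraint for $B$ preserves correctness (any spurious solution of the resulting formula is detected in Step~\ref{step:5}). Finally, summing up: the global labeling costs $O(n+m)$, and by the lemmas on regular, semi-regular, parent, and pole bridges the kept important bridges together with all their attachments, taken over all nodes, total $O(n)$, so the whole step runs in linear time.
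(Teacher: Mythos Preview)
Your approach works, but the paper takes a simpler route that sidesteps the obstacle you spend most of your effort on. Instead of emitting the literal constraint $x_B^\mu = x_{B'}^\mu$ (which needs the variable $x_{B'}^\mu$ of the union bridge, and hence the case analysis you sketch when $B'$ has been omitted), the paper never touches $x_{B'}^\mu$ at all. It keeps a global array $A$ indexed by union bridges and, while scanning the kept important $\circled{i}$-bridges of an S-node $\mu$, appends each such $B$ to $A[B']$, using a timestamp so that stale entries from earlier nodes are overwritten lazily. After the scan, $A[B']$ is the list of kept $\circled{i}$-bridges of $\mu$ lying in the union bridge $B'$, and one simply adds equalities $x_{B_j}^\mu = x_{B_{j+1}}^\mu$ along each list; transitivity does the rest. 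This is precisely the timestamp bookkeeping you mention, used the other way round, and it costs time linear in the number of kept important $\circled{i}$-bridges with no special cases.

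Your case analysis for the omitted-$B'$ scenario is also not quite right. A $\circled{i}$-bridge $B$ and a union bridge $B'$ are never in the same parent-bridge equivalence class, since Lemma~\ref{lem:omitting-bridges} is applied to each bridge type separately; and it is \emph{not} true that every kept important bridge inside the group $B'$ coincides with $B$: a kept $\circled{1}$-parent bridge and a kept $\circled{2}$-parent bridge over the same child attachment $a$ can both sit in an omitted union parent bridge $B'$. Your final appeal to Step~\ref{step:5} does rescue the argument --- both such bridges carry a consistency constraint to the same $x_\eta$, so conflict-graph connectivity on the $x_\mu$'s is preserved in the sense of Lemma~\ref{lem:omitting-bridges} --- but the justification you actually give is off. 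The paper's chaining avoids all of this.
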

\begin{proof}
  To add the union-bridge constraints, we have to group the
  $\circled{i}$-bridges that are important in an S-node $\mu$
  according to their union bridges.  To this end, we once create a
  global array $A$ with one entry $A[B']$ for each union bridge $B'$
  (which we can access in constant time).  Consider an
  $\circled{i}$-bridge $B$ that is important in $\mu$ (and was not
  omitted).  Let $B'$ be the union bridge containing $B$ (we can get
  $B'$ in constant time as every $\circled{i}$-bridge is contained in
  only one union bridge).  If the entry $A[B']$ was not modified while
  processing $\mu$ so far, we clear $A[B']$ (which might contain
  something from previous nodes) and set $A[B']$ to be a list
  containing only $B$.  If $A[B']$ was already modified, we append $B$
  to the list $A[B']$.  We can keep track of which entries of $A$ were
  already modified by using timestamps.

  For every union bridge $B'$ that contains an important
  $\circled{i}$-bridge, the entry $A[B']$ holds a list of all
  important $\circled{i}$-bridges of $\mu$ that belong to $B'$.  For
  each of these lists we add the union-bridge constraint for every
  pair of consecutive $\circled{i}$-bridges.  The transitivity
  enforces all pairwise union-bridge constraints (although not
  explicitly stated).  Clearly, this procedure takes linear time in
  the number of important $\circled{i}$-bridges.
\end{proof}

For Step~\ref{step:5}, assume the embedding of the biconnected common
graph $G$ is fixed.  We have to test whether this embedding of $G$ can
be extended to a \textsc{Sefe} that satisfies the union-bridge
constraints.  Thus, we basically have to assign each union bridge to a
face of $G$ such that no two $\circled{1}$-bridges and no two
$\circled{2}$-bridges alternate.

We first distinguish three different types of union bridges.  Let $B$
be a union bridge.  A face $f$ of $G$ is \textsc{feasible} for $B$ if
all attachment vertices of $B$ are incident to $f$.  We say that $B$
is \emph{flexible} if it has at least three feasible faces.  We say
that $B$ is \emph{binary} if $B$ has two feasible faces and
\emph{fixed} if it has only one feasible face.  If there is a bridge
that has no feasible face then the instance is obviously not solvable.

The overall strategy for Step~\ref{step:5} is the following.  We first
determine which union bridges are flexible, binary, and fixed,
respectively.  We first assign the fixed union bridges to their faces.
For the binary union bridges, we can encode the decision for one of
the two possible faces with a binary variable.  For two union bridges
with a common feasible face that include alternating
$\circled{i}$-bridges (for the same $i \in \{1, 2\}$), we have to make
sure that they are not embedded into the same face.  Note that these
kind of conditions are very similar to the planarity constraints we
had in Step~\ref{step:3}, which again leads to a \textsc{2-Sat}
formula.  Any solution of this formula induces an assignment of the
binary union bridges to faces.  Finally, we check whether the flexible
union bridges can be added.  For this to work we have to show that
this final step of assigning the flexible bridges is independent from
the solution we chose for the \textsc{2-Sat} formula.  We obtain the
following lemma.

\begin{lemma}
  Step~\ref{step:5} can be performed in linear time.
\end{lemma}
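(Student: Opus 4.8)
The plan is to implement the five phases outlined above, one after another, each in linear time.

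\emph{Classifying the union bridges.} We first decide, for every union bridge $B$, whether it is fixed, binary, flexible, or infeasible, and in the first two cases compute its at most two feasible faces. The structural fact we use is that no three pairwise distinct vertices of a biconnected planar graph are incident to three common faces: otherwise the vertex--face incidence graph, which is planar, would contain a $K_{3,3}$. Hence a union bridge with at least three distinct attachment vertices has at most two feasible faces, and we obtain them by intersecting the cyclic face lists around a constant number of its attachments (in the fixed embedding of $G$ every face knows its incident vertices and conversely). A union bridge with at most two attachment vertices is handled separately: its feasible faces are exactly the faces incident to both, respectively to the single, attachment vertex, and whether there are zero, one, two, or at least three of them---and in the relevant cases which ones---can be read off from the rotation system at those vertices, charging the work to their degrees. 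Over all union bridges this takes $O(n+\sum_i|B_i|)=O(n)$ time; if some union bridge has no feasible face we reject.

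\emph{Fixed and binary bridges.} Every fixed union bridge is assigned to its unique feasible face. For every binary union bridge $B$ we introduce a variable $y_B$ selecting one of its two feasible faces. The only conditions to be enforced are: two union bridges $B,B'$ sharing a feasible face $f$ that contain, for a common $i\in\{1,2\}$, a $\circled{i}$-bridge each that alternate along the facial cycle of $f$, may not both be assigned to $f$. Each such condition is a clause over at most two of the variables $y_{\cdot}$ (fixed bridges acting as constants), yielding a \textsc{2-Sat} instance solvable in linear time~\cite{apt-ltatt-79}. To compute these conditions without a quadratic blow-up we proceed face by face exactly as in Step~\ref{step:3}: for a fixed face $f$ we build the auxiliary graph consisting of the subdivided facial cycle of $f$ together with one bridge vertex per incident $\circled{i}$-bridge, joined to its (subdivided) attachments; the SPQR-tree of this planar graph captures precisely which pairs of incident $\circled{i}$-bridges are forced onto opposite sides of the cycle, and from it an equivalent linear-size family of clauses is extracted in time linear in the size of the auxiliary graph. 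Since each attachment incidence of a $\circled{i}$-bridge is charged to exactly one such auxiliary graph, the total size, and hence the total time, is linear.

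\emph{Flexible bridges.} It then remains to add the flexible union bridges, greedily assigning each to one of its feasible faces that contains no $\circled{i}$-bridge alternating with one of its own $\circled{i}$-bridges. Here we use that a flexible bridge $B$, being incident to at least three faces, has by the first phase at most two attachment vertices, and hence so does every $\circled{i}$-bridge contained in $B$; in particular all $\circled{i}$-bridges of $B$ of a fixed type share the same attachment set, which is a single vertex or a fixed pair $\{u,v\}$. The facts that a $\circled{i}$-bridge with at most one attachment never alternates with anything, and that two $\circled{i}$-bridges whose attachment sets both equal $\{u,v\}$ can be drawn nested and hence never alternate, show that flexible bridges impose no constraints on each other, so they can be handled last and independently. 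The crux of the step---and the part I expect to be the main obstacle---is to prove that whether this greedy assignment succeeds is \emph{independent of the chosen solution of the \textsc{2-Sat} formula}: this requires a careful analysis of how a $\circled{i}$-bridge of another (fixed or binary) union bridge can interleave the pair $\{u,v\}$ along a facial cycle incident to both $u$ and $v$, showing that the at most two feasible faces of each potential blocker cannot, over all blockers, exhaust the at least three feasible faces of $B$ in a way that depends on the binary choices. Granted this independence, picking any \textsc{2-Sat} solution and running the greedy reinsertion settles feasibility, and the whole procedure runs in $O(n)$ time.
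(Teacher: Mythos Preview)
Your overall architecture matches the paper's: classify union bridges into fixed, binary, and flexible; assign the fixed ones; solve a \textsc{2-Sat} for the binary ones; greedily place the flexible ones. Your $K_{3,3}$ argument for why three attachment vertices force at most two feasible faces is a pleasant alternative to the paper's SPQR-based reasoning.

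However, there is a genuine gap. You yourself flag the independence of the greedy flexible-bridge assignment from the chosen \textsc{2-Sat} solution as ``the main obstacle'' and then write ``Granted this independence''---you never prove it. The paper does, and the argument is short once one sees it, but it is not the vague exhaustion argument you sketch. A flexible union bridge $B$ has exactly two attachment vertices $u,v$, which must be the poles of a P-node $\mu$ of $G$; its feasible faces are precisely the faces $f_1,\dots,f_k$ between consecutive virtual edges $\eps_1,\dots,\eps_k$ of $\skel(\mu)$. Suppose some $\circled{i}$-bridge of another union bridge $B'$ alternates in $f_j$ with a $\circled{i}$-bridge of $B$. Since the latter has attachment set $\{u,v\}$, the alternating bridge must have attachments $u'\in\expan(\eps_j)\setminus\{u,v\}$ and $v'\in\expan(\eps_{j+1})\setminus\{u,v\}$. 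But such $u'$ and $v'$ share \emph{only} the face $f_j$, so $B'$ is a \emph{fixed} union bridge. Hence no binary bridge ever blocks a flexible one, and the outcome for flexible bridges is determined solely by the fixed bridges---independent of the \textsc{2-Sat} solution. Without this observation your last phase is not justified.

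A secondary point: your running-time claims for the classification phase are too casual. ``Intersecting the cyclic face lists around a constant number of attachments'' is not obviously linear---for two vertices $a,b$ this is naively $\Theta(\deg(a)\cdot\deg(b))$. The paper handles this with the constant-time short-distance query structure of Kowalik and Kurowski on an auxiliary planar graph (vertices, subdivided edges, and face-vertices), which lets one find a common face of two given vertices in $O(1)$ after linear preprocessing; it also uses the SPQR-vertex-tree to locate in $O(1)$ the P-node (for flexible bridges) or S-node (for binary bridges) witnessing the classification. You should either invoke these tools or give an alternative linear-time implementation.
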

\begin{proof}
  Let $B$ be a flexible union bridge and assume all binary and fixed
  union bridges are already assigned to faces.  We first show the
  following.  If $B$ cannot be embedded into one of its feasible faces
  (due to alternating $\circled{i}$-bridges), then $B$ cannot be
  embedded into this face even when omitting all binary union bridges.
  This shows that the above strategy of first assigning the fixed,
  then the binary, and finally the flexible union bridges to faces is
  correct.  Afterwards we show how to do it in linear time.

  As the union bridge $B$ is flexible, it has at least three feasible
  faces.  As the common graph $G$ is biconnected, $B$ can have only
  two attachment vertices; let $u$ and $v$ be these attachment
  vertices.  Moreover, $u$ and $v$ must be the poles of a P-node $\mu$
  of the SPQR-tree of $G$.  Let $\eps_1, \dots, \eps_k$ be the virtual
  edges of $\skel(\mu)$ appearing in this order and let $f_i$ be the
  face between $\eps_i$ and $\eps_{i + 1}$ (subscripts are considered
  modulo $k$).  The feasible faces of $B$ are exactly the faces $f_1,
  \dots, f_k$.

  Assume $B$ cannot be assigned to the feasible face $f_i$, i.e., an
  $\circled{i}$-bridge contained in $B$ alternates with a
  $\circled{i}$-bridge belonging to another union bridge $B'$ that was
  assigned to $f_i$.  As $u$ and $v$ are the only attachment vertices
  of $B$, $B'$ must have attachment vertices $u'$ and $v'$ with $u',
  v' \notin\{u, v\}$ that belong to the expansion graphs of $\eps_i$
  and $\eps_{i+1}$, respectively.  Then $u'$ and $v'$ can share only a
  single face, namely $f_i$, and thus $B'$ is a fixed union bridge.
  This shows the above claim and thus it remains to show how to
  implement the procedure in linear time.

  Let $B$ be an arbitrary union bridge.  We show how to detect whether
  $B$ is flexible, binary, or fixed in $O(|B|)$ time.  For $B$ to be
  flexible, it must have only two different attachment vertices.  This
  can be easily tested in $O(|B|)$ time.  If $B$ has only two
  attachment vertices $u$ and $v$, we need to test whether $u$ and $v$
  are the poles of a P-node in the SPQR-tree of $G$.  We show how this
  can be done in constant time.  To this end, let $\mathcal T$ be the
  SPQR-vertex-tree of $G$.  Assume $\mu$ is a P-node with poles $u$
  and $v$.  Let $\eta$ be the parent of $\mu$.  Then $\skel(\eta)$
  contains both vertices $u$ and $v$ but at most one of them as pole.
  Assume without loss of generality that $u$ is not a pole of
  $\skel(\eta)$.  Then the leaf $u$ of $\mathcal T$ is a child of
  $\eta$ and thus we find $\eta$ in constant time (together with the
  vertex $u$ in $\skel(\eta)$).  If $v$ is not a pole of
  $\skel(\eta)$, we find $v$ in $\skel(\eta)$ in the same way,
  otherwise $v$ is a pole and we also get $v$ in $\skel(\eta)$ in
  constant time (by checking both poles).  As $\skel(\eta)$ is a
  planar graph we can get the virtual edge between the two vertices
  $u$ and $v$ in constant time via a shortest-path data
  structure~\cite{kk-spqpgct-03}.  Thus, we also find the
  corresponding child $\mu$ having $u$ and $v$ as poles.  Hence, we
  can either find the P-node $\mu$ with poles $u$ and $v$ in constant
  time (implying that $B$ is flexible) or we can conclude that such a
  P-node does not exists (implying that $B$ is not flexible).

  Next we determine whether $B$ is binary or fixed.  First note that
  the bridge $B$ (that is not flexible) is binary if and only if there
  exists an S-node $\mu$ such that every attachment vertex of $B$ is a
  vertex in $\skel(\mu)$.  This can be tested in $O(|B|)$ time using
  the SPQR-vertex-tree.  Assume $\mu$ is the S-node such that every
  attachment vertex of $B$ is a vertex in $\skel(\mu)$.  We can handle
  the case where $B$ has only the two poles of $\skel(\mu)$ as
  attachment vertex analogously to the case above (about flexible
  bridges) except that $\mu$ is an S-node instead of a P-node.  Thus,
  assume that $v$ is an attachment vertex of $B$ that is not a pole of
  $\skel(\mu)$.  Then we can find $\mu$ in constant time as it is the
  parent of the leaf $v$ in $\mathcal T$.  Every other attachment
  vertex in $B$ is either also a child of $\mu$ or a pole of
  $\skel(\mu)$, which we can check in constant time per attachment
  vertex.  Thus, we can detect in $O(|B|)$ time whether $B$ is binary
  or fixed.

  At this point we know which bridges are binary and which are
  flexible.  All remaining bridges are either fixed or do not have a
  feasible face at all, which implies that there is no \textsc{Sefe}.
  We show for such a bridge $B$ how we can assign it to its unique
  feasible face or decide that such a face does not exist.  Recall
  from Lemma~\ref{lem:step1-linear-time} that we can compute in
  constant time a face that is shared by a given pair of vertices (or
  conclude that such a face does not exist).  If $B$ has only two
  attachment vertices $u$ and $v$, then we can either find the unique
  feasible face of $B$ or decide that $B$ has no feasible face in
  constant time.

  We can thus assume that $B$ has at least three attachment vertices.
  Let $u$, $v$, and $w$ be three attachment vertices of $B$.  In
  constant time, we find a face $f_{u, v}$ that is incident to $u$ and
  $v$.  Analogously we find faces $f_{u, w}$ and $f_{v, w}$.  There
  are two different cases.  If there is a pair of vertices among $u$,
  $v$, and $w$ that shares only a single face, then one of the faces
  $f_{u, v}$, $f_{u, w}$, or $f_{u, w}$ is the only possible feasible
  face of $B$.  We can check that in $O(|B|)$ time.  Otherwise, assume
  there is a face $f \notin \{f_{u, v}, f_{u, w}, f_{u, w}\}$ that is
  feasible for $B$.  Then $u$ and $v$ are commonly incident to at
  least two different faces (namely $f$ and $f_{u, v}$) and thus $\{u,
  v\}$ is a separating pair of an edge.  The same holds for $u$ and
  $w$ and for $v$ and $w$.  In this case there must exist a node $\mu$
  in the SPQR-tree of $G$ such that $\skel(\mu)$ contains the triangle
  $u,v,w$.  Note that we can find this node as we did before for the
  flexible and binary bridges. 

  If $\mu$ is an S-node, $B$ must contain another attachment vertex
  $x$ (otherwise $B$ is binary).  Then $x$ is contained in the
  expansion graph of one of the three virtual edges in $\skel(\mu)$.
  Assume without loss of generality that $x$ belongs to the expansion
  graph of $uv$.  Then $w$ and $x$ share only a single face (otherwise
  $w$ and $x$ would be a separating pair which contradicts the fact
  that $\skel(\mu)$ is a triangle).  Thus, we can find the desired
  face $f$ by finding a common face of $x$ and $w$ in constant time.
  Of course one then needs to check if this face is actually incident
  to each attachment vertex in $B$ ($B$ has no feasibly face if not).

  It remains to consider the case that $\mu$ is an R-node.  First test
  whether the triangle $u, v, w$ forms a face in $\skel(\mu)$.  If so,
  this face is unique and thus we know the only potentially feasible
  face of $B$.  Note that this gives us only the face in $\skel(\mu)$.
  However, one can easily compute a mapping from the faces of
  skeletons to the faces in the actual graph in linear time in the
  size of $G$ (this has to be done only once for all bridges).

  To conclude, we now ensured that every union bridge that is neither
  flexible nor binary is fixed and we assigned the fixed union bridges
  to their unique feasible faces.  Let us continue with the binary
  bridges.  For every binary bridge $B$ we already computed the S-node
  $\mu$ containing all the attachment vertices of $B$.  Note that this
  already gives us the two possible faces in which $B$ can be embedded
  (of course we again have to translate from faces in a skeleton to
  faces in $G$).

  When assigning the binary bridges to faces, we have to make sure
  that no two $\circled{i}$-bridges alternate.  This can be ensured
  using a \textsc{2-Sat} formula as for Step~\ref{step:3}.  As before
  we can compute and solve this \textsc{2-Sat} formula in linear time.
  Thus, it remains to add the flexible bridges.

  Let $\mu$ be a P-node of the SPQR-tree of $G$ and let $s$ and $t$ be
  the poles of $\skel(\mu)$.  Let further $\eps_1, \dots, \eps_k$ be
  the virtual edges of $\skel(\mu)$ and let the faces $f_1, \dots,
  f_k$ be defined as before.  Assume we still know the important
  bridges for $\mu$ from the previous steps.  Assume the union
  flexible bridge $B$ contains only $\circled{1}$-bridges.  Clearly,
  we can embed $B$ into $f_i$ if and only if there is no
  $\circled{1}$-bridge with attachments $\eps_i$ and $\eps_{i+1}$.
  The analogous statement holds if $B$ contains only
  $\circled{2}$-bridges.  If $B$ contains both, $\circled{1}$- and
  $\circled{2}$-bridges, we can embed it into $f_i$ if and only if
  there is neither a $\circled{1}$ nor a $\circled{2}$ bridge with
  attachments $\eps_i$ and $\eps_{i+1}$.  Thus, we can check in
  $O(|\skel(\mu)|)$ time for an appropriate face for $B$.  Note that
  we cannot afford this amount of time for every flexible bridge with
  attachments $s$ and $t$.  However, consider two such bridges $B$ and
  $B'$ as equivalent in the sense that $B$ contains $\circled{1}$- and
  $\circled{2}$-bridges if and only if $B'$ contains $\circled{1}$-
  and $\circled{2}$-bridges, respectively.  Then $B$ and $B'$ can be
  assigned to the same face.  Thus, we have to spend $O(|\skel(\mu)|)$
  time only a constant number of times for each P-node.  This
  concludes the proof.
\end{proof}

\begin{theorem}
  \textsc{Sefe} with union bridge constraints can be solved in linear
  time if the common graph is biconnected.
\end{theorem}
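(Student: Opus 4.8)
The plan is to assemble the algorithm given by Steps~\ref{step:1}--\ref{step:5} and argue that, after a suitable pruning of the important bridges, every step runs in linear time. Correctness of the scheme---that the global \textsc{2-Sat} formula correctly encodes the flip decisions of the R- and P-node skeletons, that fixing an arbitrary compatible embedding of each skeleton up to flip does not turn a yes-instance into a no-instance, and that any \textsc{2-Sat} solution extending to a \textsc{Sefe} with union bridge constraints is detected in Step~\ref{step:5}---follows directly from the correctness of the algorithm of Angelini et al.~\cite{adfpr-tsegi-12}. Hence the only thing left to establish is the running time.

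The first and main obstacle is that there can be $\Omega(n)$ important bridges per node and a single bridge can be important for $\Omega(n)$ nodes, so the important bridges cannot be listed in full. I would handle this via Lemma~\ref{lem:superfluous-attachments} and Lemma~\ref{lem:omitting-bridges}: it is safe to drop superfluous attachments, dominated pole bridges, and all but one parent bridge per equivalence class. After this pruning, the surviving important bridges of a node are regular, semi-regular, or pole/parent bridges, and Lemmas~\ref{lem:regular-bridges}, \ref{lem:semi-regular-bridges}, and \ref{lem:parent-and-pole-bridges}---built on the SPQR-vertex-tree, a DFS numbering of its leaves, $O(1)$ LCA queries after linear preprocessing~\cite{ht-fafnca-84}, and an incremental union-find structure~\cite{gt-ltascdsu-85}---compute all of them, together with their attachments and a representative vertex per attachment, in total linear time, with Lemma~\ref{lem:parent-and-pole-attachments} supplying the pole and parent attachments in constant time per query.

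With these bridges available, I would verify each step in turn. Step~\ref{step:1} reduces to a maximum-degree-$2$/Hamiltonicity test on a conflict graph for P-nodes and, for R-nodes, to constant-distance shortest-path queries in a planar auxiliary graph via the data structure of Kowalik and Kurowski~\cite{kk-spqpgct-03}. Steps~\ref{step:2}--\ref{step:4} populate a single \textsc{2-Sat} formula; the one nontrivial point is that the planarity constraints of Step~\ref{step:3} can be of quadratic size, so they must be replaced by an equivalent linear-size set obtained from the SPQR-tree of an auxiliary graph $H$ that encodes the bridge-side choices as embedding choices. Step~\ref{step:5} classifies each union bridge as flexible, binary, or fixed in time linear in its size, assigns the fixed and then the binary bridges (the latter again via a linear-time \textsc{2-Sat} instance as in Step~\ref{step:3}), and finally inserts the flexible ones; the structural key here is that a flexible bridge that cannot be placed into one of its feasible faces is already blocked there by a fixed bridge, which makes the fixed-then-binary-then-flexible order correct. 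Since each \textsc{2-Sat} formula is solvable in linear time~\cite{apt-ltatt-79}, all five steps run in linear time and the theorem follows. The entire difficulty is concentrated in the bridge-pruning and bridge-computation lemmas; once those are in place, what remains is routine bookkeeping.
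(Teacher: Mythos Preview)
Your proposal is correct and follows essentially the same approach as the paper: the theorem is obtained by assembling Steps~\ref{step:1}--\ref{step:5}, invoking the correctness of Angelini et al., pruning the important bridges via Lemmas~\ref{lem:superfluous-attachments} and~\ref{lem:omitting-bridges}, computing the surviving bridges via Lemmas~\ref{lem:parent-and-pole-attachments}--\ref{lem:parent-and-pole-bridges}, and then verifying that each step runs in linear time exactly as you outline (including the auxiliary graph~$H$ for Step~\ref{step:3} and the fixed/binary/flexible classification with its structural justification for Step~\ref{step:5}).
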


\section{Edge Orderings and Relative Positions}
\label{sec:graph-with-common}

In this section, we consider a \textsc{Sefe} instance $(\1G, \2G)$
that has common P-node degree~3 and simultaneous cutvertices of common
degree at most~3.  Recall that $(\1G, \2G)$ admits a simultaneous
embedding if and only if $\1G$ and $\2G$ admit planar embeddings that
have consistent edge orderings and consistent relative positions on
the common graph.  We show how to address both requirements (more or
less) separately, by formulating necessary and sufficient constraints
using equations and inequalities on Boolean variables.  Moreover, we
show how to incorporate equations and inequalities equivalent to
block-local common-face constraints.  Together with the preprocessing
algorithms from the previous sections, this leads to a polynomial time
algorithm for instances with P-node degree~3 and simultaneous
cutvertices of common degree at most~3.

Before we can follow this strategy, we need to address one problem.
The relative position of a component $H'$ of $G$ with respect to
another connected component $H$, denoted by $\pos_H(H')$, is the face
of $H$ containing $H'$.  However, the set of faces of $H$ depends on
the embedding of $H$.  To be able to handle relative positions
independently from edge orderings, we need to express the relative
positions independently from faces.

This is done in the following section.  Afterwards, we show how to
enforce consistent edge orderings (Section~\ref{sec:cons-edge-order}),
block-local common-face constraints
(Section~\ref{sec:common-face-constr}), and consistent relative
positions (Section~\ref{sec:cons-relat-posit}).  Finally, we conclude
in Section~\ref{sec:putt-things-togeth}.

Before we start, we need one more definition.  Assume we have a set
$X$ of binary variables such that every variable $x \in X$ corresponds
to a binary embedding choice in a given graph $G$.  Let $\alpha \colon
X \to \{0, 1\}$ be a \emph{variable assignment}.  We say that an
embedding of $G$ \emph{realizes} the assignment $\alpha$, if the
embedding decision in $G$ corresponding to a variable $x \in X$ fits
to the value $\alpha(x)$.  Note that not every variable assignment can
be realized as the embedding choices can depend on each other.

\subsection{Relative Positions with Respect to a Cycle Basis}
\label{sec:relat-posit-with}

A \emph{generalized cycle} $C$ in a graph $H$ is a subset of its edges
such that every vertex of $H$ has an even number of incident edges in
$C$.  The \emph{sum} $C \oplus C'$ of two generalized cycles is the
symmetric difference between the edge sets, i.e., an edge $e$ is
contained in $C \oplus C'$ if and only if it is contained in $C$ or in
$C'$ but not in both.  The resulting edge set $C \oplus C'$ is again a
generalized cycle.  The set of all generalized cycles in $H$ is a
vector space over $\mathbb F_2$.  A basis of this vector space is
called \emph{cycle basis} of $H$.

Instead of considering the relative position $\pos_H(H')$ of a
connected component~$H'$ with respect to another component $H$, we
choose a cycle basis $\mathcal C$ of $H$ and show that the relative
positions of $H'$ with respect to the cycles in $\mathcal C$ suffice
to uniquely define $\pos_H(H')$, independent from the embedding of
$H$.  We assume $H$ to be biconnected.  All results can be extended to
connected graphs by using a cycle basis for each block.

Let $C_0, \dots, C_k$ be the set of facial cycles with respect to an
arbitrary planar embedding of $H$.  The set $\mathcal C = \{C_1,
\dots, C_k\}$ obtained by removing one of the facial cycles is a cycle
basis of $G$.  A cycle basis that can be obtained in this way is
called \emph{planar cycle basis}.  In the following we assume all
cycle bases to be planar cycle bases.  Moreover, we consider all
cycles to have an arbitrary but fixed orientation.  The binary
variable $\pos_C(p)$ represents the relative position of a point $p$
with respect to a cycle $C$, where $\pos_C(p) = 0$ and $\pos_C(p) = 1$
have the interpretation that $p$ lies to the right and left of $C$,
respectively.

\begin{theorem}
  \label{thm:cycle-basis}
  Let $H$ be a planar graph embedded on the sphere, let $p$ be a point
  on the sphere, and let $\mathcal C = \{C_1, \dots, C_k\}$ be an
  arbitrary planar cycle basis of $H$.  Then the face containing $p$
  is determined by the relative positions $\pos_{C_i}(p)$ for $1 \le i
  \le k$.
\end{theorem}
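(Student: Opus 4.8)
The plan is to show that the map sending each face $f$ of the given embedding of $H$ to the vector $(\pos_{C_i}(f))_{i=1}^{k} \in \mathbb F_2^{k}$ is injective; since $p$ lies in a unique face, this determines the face containing $p$. The one place where I will use that $\mathcal C$ is an \emph{arbitrary} basis (rather than the facial basis of the given embedding) is that $\mathcal C$ spans the whole cycle space, so every cycle of $H$ is an $\mathbb F_2$-combination of the $C_i$. First I would record the routine fact that $\pos_C(p)$ is constant as $p$ ranges over a single face $f$: any two points of $f$ are joined by an arc inside $f$ crossing no edge of $H$, hence no edge of any $C$, so their side with respect to $C$ coincides. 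Thus $\pos_{C_i}(f)$ is well defined and it remains to separate distinct faces.

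The engine of the argument is an $\mathbb F_2$-linearity property of relative position. Fixing two distinct faces $f,f'$ and points in each, I would choose a generic arc $\gamma$ between them (avoiding vertices, crossing edges transversally) and, for an \emph{arbitrary} edge set $C$, define $\tilde\delta_C$ to be the parity of the total number of crossings of $\gamma$ with edges of $C$. This is manifestly linear over $\mathbb F_2$, since crossings along the symmetric difference $C\oplus C'$ are obtained by adding the contributions of $C$ and of $C'$ and discarding the doubly counted ones. The step to verify is that for a \emph{simple} cycle $C$ this parity equals the intrinsic quantity $\pos_C(f)\oplus\pos_C(f')$: a simple cycle is drawn as a Jordan curve, $\gamma$ meets it exactly at its transversal crossings with edges of $C$, and by the Jordan curve theorem the parity of these crossings is $1$ precisely when $f$ and $f'$ lie on opposite sides. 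Since each basis element $C_i$ is a facial cycle of some embedding, it is a simple cycle of $H$, so $\tilde\delta_{C_i}=\pos_{C_i}(f)\oplus\pos_{C_i}(f')$ for every $i$.

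With these pieces the separation is immediate. I would let $D$ denote the facial cycle bounding $f$ in the \emph{given} embedding; as $H$ is biconnected (the standing assumption of this section) every face is a disk bounded by a simple cycle, so $f$ lies on one side of $D$ and every other face, in particular $f'$, on the other, giving $\tilde\delta_D=\pos_D(f)\oplus\pos_D(f')=1$. Now $D$ belongs to the cycle space of $H$, and $\mathcal C$ is by definition a basis of that space, so $D=\bigoplus_{i\in S}C_i$ for some $S\subseteq\{1,\dots,k\}$. If $f$ and $f'$ had the same position vector, then $\tilde\delta_{C_i}=0$ for all $i$, and linearity would force $\tilde\delta_D=\bigoplus_{i\in S}\tilde\delta_{C_i}=0$, contradicting $\tilde\delta_D=1$. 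Hence distinct faces have distinct position vectors, and the face containing $p$ is determined by $(\pos_{C_i}(p))_i$.

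I expect the main obstacle to be the middle step: producing a notion of "relative position" that is $\mathbb F_2$-linear across the \emph{entire} cycle space, whereas the paper's left/right convention is defined directly only for simple cycles. Routing everything through the single crossing-parity functional $\tilde\delta$ along one fixed arc $\gamma$ sidesteps the orientation bookkeeping that a direct left/right treatment of sums of cycles would demand, and it is exactly linearity plus the spanning property of $\mathcal C$ that upgrades the facial-basis case to an arbitrary planar cycle basis.
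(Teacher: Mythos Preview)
Your argument is correct. Both your proof and the paper's establish that the map from faces to position vectors is injective, and both ultimately rest on a crossing-parity argument, but the mechanics differ. The paper fixes a face $f$, writes its facial cycle as $C = C_1 \oplus \dots \oplus C_\ell$, and tracks the position vector $(\pos_{C_1}(p),\dots,\pos_{C_\ell}(p))$ as a point $p$ moves around the sphere; the key step uses the combinatorial property of a \emph{planar} basis that every edge lies in at most two of the $C_i$, so crossing an edge outside $C$ flips zero or two coordinates and preserves the parity of the number of entries equal to~$1$, while crossing an edge of $C$ flips exactly one. Your approach instead fixes one arc $\gamma$ between two faces and packages the crossing parity into an $\mathbb F_2$-linear functional $\tilde\delta$ on the whole cycle space; linearity plus the spanning property of $\mathcal C$ then does the work. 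The gain is that your proof does not rely on the ``each edge in at most two basis cycles'' structure specific to planar bases: it would go through verbatim for any cycle basis whose members are simple cycles. The paper's argument, on the other hand, exploits exactly that structure and thereby avoids introducing the auxiliary functional $\tilde\delta$. Both are short; yours is slightly more general.
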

\begin{proof}
  Let $f$ be a face and let $C$ be the corresponding facial cycle.  We
  assume without loss of generality that $C = C_1 \oplus \dots \oplus
  C_\ell$ holds.  We show that the point $p$ belongs to the face $f$
  if and only if $\pos_{C_i}(p) = \pos_{C_i}(f)$ holds for $1 \le i
  \le \ell$.  Obviously, if $\pos_{C_i}(p) \not= \pos_{C_i}(f)$ holds
  for one of the basis cycles $C_i$, then $C_i$ separates $p$ from $f$
  and thus $p$ cannot belong to~$f$.

  Conversely, we have to show that there is no point lying on the same
  sides of the cycles $C_i$ for $1 \le i \le \ell$ not belonging to
  $f$.  To this end we define the \emph{position vector} $\pos(p) =
  (\pos_{C_1}(p), \dots, \pos_{C_\ell}(p))$ of a point $p$.  We show
  that there is not point outside $f$ having the same position vector
  as the points inside $f$.  Consider how the position vector of a
  point $p$ changes when moving it around.  First, all points inside
  $f$ have the same position vector.  Second, when $p$ does not lie in
  $f$ and crosses an edge $e$ while moving it, then $e$ is either
  contained in zero or in two of the basis cycles $C_1, \dots,
  C_\ell$.  This comes from the facts that $C = C_1 \oplus \dots
  \oplus C_\ell$ holds, that $e$ is not contained in $C$ and that our
  cycle basis is planar.  In the former case the position vector does
  not change, in the latter case exactly two values toggle.  No matter
  which case applies, the parity of the number of entries with the
  value \textsc{left} does not change, that is this number either
  remains odd or even.  Thus, this parity is the same for all points
  outside of $f$.  Finally, when $p$ moves from inside $f$ to the
  outside of $f$ (or the other way round), it has to cross an edge $e$
  contained in the cycle $C$.  Since $e$ has to be contained in
  exactly one of the cycles $C_1, \dots, C_\ell$, exactly one entry in
  the position vector $\pos(p)$ changes from \textsc{left} to
  \textsc{right} or vice versa.  Thus the parity of the number of
  entries with the value \textsc{left} changes.  It follows, that for
  every point not contained in $f$ this parity differs from the parity
  of the points in $f$.  Thus also the position vector must differ,
  which concludes the proof.
\end{proof}

To represent the relative position of one connected component $H$ with
respect to another connected component $H'$, it thus suffices to
consider the relative positions of $H$ with respect to cycles in a
cycle basis of $H'$.  However, there is one case for which we have a
slightly stronger requirement.  To motivate this, consider the
following example; see also Figure~\ref{fig:cycle-basis-not-suff}.

\begin{figure}
  \centering
  \includegraphics[page=1]{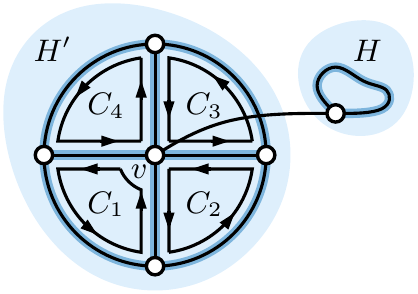}
  \caption{The exclusive edge connecting $H$ to the vertex $v$ of $H'$
    requires $H$ to lie to the left of exactly one of the cycles $C_1,
    \dots, C_4$ (or to the right of exactly one cycle if the embedding
    of $H'$ is flipped).  This type of constraint cannot be expressed
    using only equations or inequalities.  If, however, the cycle
    basis contained the facial cycle of the outer face of $H'$, it
    would be sufficient to require that $H$ and $v$ lie on the same
    side of this cycle, which can be expressed using an equation.}
  \label{fig:cycle-basis-not-suff}
\end{figure}

Consider the graph $\1G$ containing the common graph $G$ with
connected components $H$ and $H'$.  Let $\mathcal C$ be a cycle basis
of $H'$.  Let further $v$ be a vertex of $H'$ that is cutvertex of
$\1G$ separating $H$ from $H'$ and let $C \in \mathcal C$.  If $v$
lies to the right of~$C$ in a given embedding of $\1G$, then $H$ also
lies to the right of~$C$.  Conversely, if $v$ lies to the left of $C$,
then $H$ lies to the left of~$C$.  However, requiring for every cycle
$C \in \mathcal C$ (that does not contain $v$) that $v$ and $H$ lie on
the same side of $C$ does not ensure that $H$ lies in a face of $H'$
that is incident to $v$.  Figure~\ref{fig:cycle-basis-not-suff} shows a
somewhat degenerate example, were $v$ is contained in every cycle of
$\mathcal C$.

Thus, the relative positions of $v$ with respect to all cycles in
$\mathcal C$ (that do not contain $v$) do not uniquely determine a
face of $H' - v$.  To resolve this issue, we add further cycles of
$H'$ to $\mathcal C$.  More precisely, an \emph{extended cycle basis}
of $H'$ is a set of cycles $\mathcal C$ in $H'$ such that $\mathcal C$
includes a cycle basis of $H'$ and a cycle basis of $H' - v$ for every
vertex $v$ of $H'$.

Note that one can for example obtain an extended cycle basis of $H'$
as follows.  First choose an embedding of $H'$ and start with the
corresponding planar cycle basis for $\mathcal C$.  For every vertex
$v$, consider the induced embedding of $H' - v$ and add to $\mathcal
C$ all cycles in the corresponding planar cycle basis of $H' - v$ that
are not already contained in $\mathcal C$.  It directly follows that
an extended cycle basis has $O(n^2)$ size and can be computed in
$O(n^2)$ time.  Moreover, we get the following lemma.

\begin{lemma}
  \label{lem:extended-cycle-basis}
  Let $H$ be an embedded planar graph and let $\mathcal C$ be an
  extended cycle basis of $H$.  If a vertex $v$ of $H$ is not incident
  to $f$ of $H$, then $\mathcal C$ contains a cycle $C$ (not
  containing $v$) such that $\pos_C(v) \not= \pos_C(f)$.
\end{lemma}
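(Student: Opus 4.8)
The plan is to pass to the graph $H - v$ and reduce the claim to an application of Theorem~\ref{thm:cycle-basis}. First I would record the topological observation that deleting $v$ does not affect $f$: since $v$ is not incident to $f$, no edge on the boundary of $f$ is incident to $v$ (such an edge would put $v$ on $\partial f$), so $f$ does not merge with any neighbouring face when $v$ and its incident edges are removed. Hence $f$ is still a face of $H - v$, and its closure does not contain the point $v$. Letting $g$ denote the unique face of $H - v$ that contains the point $v$, we therefore get $g \neq f$. (Under the standing assumption that $H$ is biconnected, $H - v$ is connected, so speaking of a planar cycle basis of $H - v$ via its facial walks is unproblematic; the general case is handled block by block as remarked before Theorem~\ref{thm:cycle-basis}.)

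Next I would use that $\mathcal C$ is an \emph{extended} cycle basis, so by definition it contains a planar cycle basis $\mathcal C'$ of $H - v$. Pick an arbitrary point $p_f$ in the interior of $f$. Then $p_f$ lies in the face $f$ of $H - v$ while $v$ lies in the face $g \neq f$. Applying Theorem~\ref{thm:cycle-basis} to the embedded graph $H - v$ with the planar cycle basis $\mathcal C'$ shows that the position vectors of $p_f$ and of $v$ with respect to $\mathcal C'$ must differ, since the two points lie in different faces of $H - v$. Thus there is a cycle $C \in \mathcal C' \subseteq \mathcal C$ with $\pos_C(v) \neq \pos_C(p_f)$. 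Because $C$ is a cycle of $H - v$ it does not pass through $v$, and because $f$ is a face of $H - v$ it lies entirely on one side of $C$, so $\pos_C(f)$ is well defined and equals $\pos_C(p_f)$. This $C$ is precisely the cycle required by the lemma.

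The only real obstacle is the topological step in the first paragraph: verifying that deleting $v$ leaves $f$ intact as a face of $H - v$ and that $v$ ends up strictly inside a different face. If one wants to be fully careful, one argues that face merging in a plane graph occurs only across deleted edges, and every deleted edge here is incident to $v$, whereas no boundary edge of $f$ is; this simultaneously gives $f \in \text{faces}(H-v)$ and $v \notin \overline f$. Everything after that is a direct application of Theorem~\ref{thm:cycle-basis} together with the defining property of an extended cycle basis, so no further computation is needed.
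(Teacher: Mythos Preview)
Your proposal is correct and follows essentially the same approach as the paper: pass to $H-v$, observe that $f$ and the face $g$ containing $v$ are distinct faces of $H-v$, and apply Theorem~\ref{thm:cycle-basis} to the planar cycle basis of $H-v$ contained in $\mathcal C$. The paper's version is terser and omits the topological justification that $f$ survives as a face of $H-v$; your added care there is harmless but not strictly needed at the level of rigour the paper maintains.
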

\begin{proof}
  Consider the graph $H - v$ together with the embedding induced by
  the embedding of $H$.  Let $f_v$ be the face of $H - v$ that
  contains $v$ in the embedding of $H$.  As $v$ is not incident to
  $f$, we have $f_v \not= f$.  By Theorem~\ref{thm:cycle-basis}, the
  cycle basis of $H - v$ (and thus $\mathcal C$) must contain a cycle
  $C$ such that $\pos_C(f_v) \not= \pos_C(f)$.
\end{proof}

If we refer to a cycle basis in one of the following sections, we
always assume to actually have an extended cycle basis.

\subsection{Consistent Edge Orderings}
\label{sec:cons-edge-order}

We first assume that the graphs $\1G$ and $\2G$ are biconnected and
then show how to extend our approach to exclusive cutvertices and
simultaneous cutvertices of common degree~3.

\subsubsection{Biconnected Graphs}
\label{sec:biconnected-graphs}

Let $\1G$ and $\2G$ be biconnected planar graphs.  There exists an
instance of \textsc{Simultaneous PQ-Ordering} that has a solution if and
only if $\1G$ and $\2G$ admit embeddings with consistent edge
ordering~\cite{br-spoacep-13}.  This solution is based on the
\emph{PQ-embedding representation}, an instance of \textsc{Simultaneous
  PQ-Ordering} representing all embeddings of a biconnected planar
graph.  We describe this embedding representation and show how to
simplify it for instances that have common P-node degree~3.

For each vertex $\1v$ of $\1G$, the PQ-embedding representation,
denoted by $D(\1G)$, contains the \emph{embedding tree} $T(\1v)$
having a leaf for each edge incident to $\1v$, representing all
possible orders of edges around $\1v$.  For every P-node $\1\mu$ in
the SPQR-tree $\T(\1G)$ that contains $\1v$ in $\skel(\1\mu)$ there is
a P-node in $T(\1v)$ representing the choice to order the virtual
edges in $\skel(\1\mu)$.  Similarly, for every R-node $\1\mu$ of $\1G$
containing $\1v$ in its skeleton, there is a Q-node in $T(\1v)$ whose
flip corresponds to the flip of $\skel(\1\mu)$.  As the orders of
edges around different vertices of $\1G$ cannot be chosen
independently from each other, so called \emph{consistency trees} are
added as common children to enforce Q-nodes stemming from the same
R-node in $\T(\1G)$ to have the same flip and P-nodes stemming from
the same P-node to have consistent (i.e., opposite) orders.  Every
solution of the resulting instance corresponds to a planar embedding
of $\1G$ and vice versa~\cite{br-spoacep-13}.

As we are only interested in the order of common edges, we modify
$D(\1G)$ by projecting each PQ-tree to the leaves representing common
edges.  As $\1G$ and $\2G$ have common P-node degree~3, all P-nodes of
the resulting PQ-trees have degree~3 and can be assumed to be Q-nodes
representing a binary decision.  We call the resulting instance
\emph{Q-embedding representation} and denote it by $D(\1G)$.  

Let $\1\mu$ be an R-node of the SPQR-tree $\T(\1G)$ whose embedding
influences the ordering of common edges around a vertex.  Then the
Q-embedding representation contains a consistency tree consisting of a
single Q-node representing the flip of $\skel(\1\mu)$.  We associate
the binary variable $\ord(\1\mu)$ with this decision.

For a P-node $\1\mu$ we get a similar result.  Let $\1u$ and $\1v$ be
the poles of $\1\mu$.  If the consistency tree enforcing a consistent
decision in the embedding trees $T(\1u)$ and $T(\1v)$ has degree~3,
its flip represents the embedding decision for $\skel(\1\mu)$ and we
again get a binary variable $\ord(\1\mu)$.  Otherwise, this
consistency tree contains two or less leaves and can be ignored.  Then
the choices for the Q-nodes corresponding to $\1\mu$ in $T(\1u)$ and
$T(\1v)$ are independent and we get one binary variable for each of
these Q-nodes.  We denote these variables by $\ord(\1{\mu_u})$ and
$\ord(\1{\mu_v})$.

We call these variables we get for $\1G$ the
\emph{$\circled{1}$-PR-ordering variables}.  The
\emph{$\circled{2}$-PR-ordering} variables are defined analogously.
The \emph{PR-ordering variables} are the union of these two variable
sets.  Let $\ii X$ be the $\circled{i}$-PR-ordering variables and
assume we have a fixed variable assignment $\ii\alpha$.  Then this
variable assignment already determines all edge orderings of the
common graph, i.e., every embedding of $\ii G$ realizing the
assignment $\ii\alpha$ induces the same edge orderings on $G$.  In the
following we describe a set of necessary equations and inequalities on
the PR-ordering variables that ensure that $\1G$ and $\2G$ induce the
same edge orderings on $G$.

For a common vertex $v$ occurring as $\1v$ and $\2v$ in $\1G$ and
$\2G$, respectively, we add a so-called \emph{common embedding tree}
$T(v)$ (consisting of a single P-node) as child of the embedding trees
$T(\1v)$ and $T(\2v)$ in the Q-embedding representations of $\1G$ and
$\2G$; see Figure~\ref{fig:sim-pq-ord-instances}.  Obviously, this
common child ensures that the common edges around $v$ are ordered the
same with respect to $\1G$ and $\2G$.  Adding the common embedding
tree for every common vertex yields the instance $D(\1G, \2G)$.

\begin{figure}[tb]
  \centering
  \includegraphics[page=1]{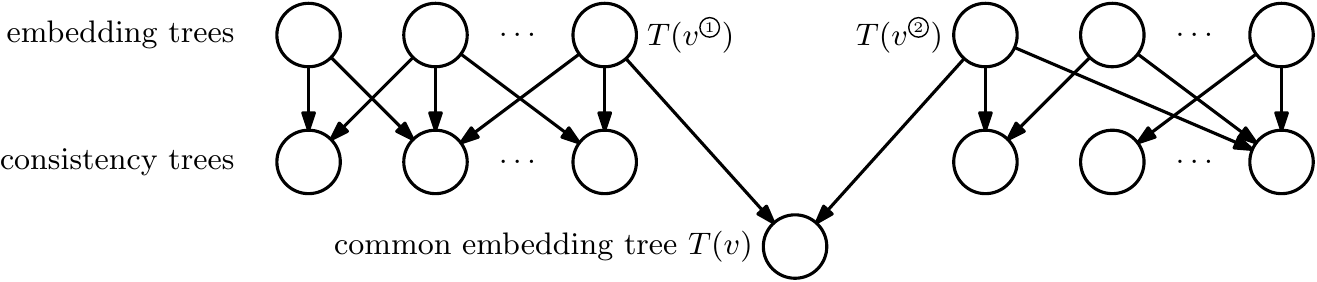}
  \caption{The Q-embedding representations of $\1G$ and $\2G$ together
    with the common embedding tree $T(v)$ of $v$.}
  \label{fig:sim-pq-ord-instances}
\end{figure}  

As the embedding trees (which contain only Q-nodes) are the sources of
$D(\1G, \2G)$, normalizing $D(\1G, \2G)$ yields an equivalent
instances containing no P-nodes~\cite{br-spoacep-13}.  Instances with
this property are equivalent to a set of Boolean equations and
inequalities, containing one variable for each Q-node in each PQ-tree
(and thus includes the PR-ordering variables).  We call this set of
equations and inequalities the \emph{biconnected PR-ordering
  constraints}.  To obtain the following lemma, it remains to proof
the running time.

\begin{lemma}
  \label{lem:ordering-biconnected}
  Let $\1G$ and $\2G$ be two biconnected graphs with common P-node
  degree~3 and let $\alpha$ be a variable assignment for the
  PR-ordering variables.  The graphs $\1G$ and $\2G$ admit embeddings
  that realize $\alpha$ and have consistent edge orderings if and only
  if $\alpha$ satisfies the biconnected PR-ordering constraints.
 
  The biconnected PR-ordering constraints have size $O(n)$ and can be
  computed in $O(n)$ time.
\end{lemma}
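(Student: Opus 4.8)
The correctness of the lemma is already established by the construction preceding it: the embeddings of $\1G$ and $\2G$ that realize $\alpha$ and induce consistent edge orderings on $G$ are exactly the solutions of $D(\1G,\2G)$ that extend $\alpha$, and by~\cite{br-spoacep-13} the normalized, P-node-free instance obtained from $D(\1G,\2G)$ is equivalent to the system of $\mathbb F_2$-equations and -inequalities that we call the biconnected PR-ordering constraints, whose variable set contains the PR-ordering variables. The plan is therefore to prove only the claimed bounds, namely that this system has size $O(n)$ and is computable in $O(n)$ time, which I would do in three stages: building $D(\1G,\2G)$, normalizing it, and reading off the constraints.

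For the first stage I would recall from~\cite{br-spoacep-13} that the PQ-embedding representation of a biconnected planar graph has linear size and is linear-time computable; concretely, the leaves of all embedding trees $T(\1v)$ together number $\sum_v\deg(v)=O(n)$, their inner nodes (one per incidence of a vertex to a P- or R-node skeleton) sum to the total skeleton size $O(n)$, and the consistency trees add $O(n)$ as well. Projecting every PQ-tree to its common-edge leaves neither increases the total size nor costs more than linear time. The crucial use of the hypothesis happens here: a P-node of an embedding tree surviving the projection stems from a P-node $\1\mu$ of $\T(\1G)$ at least two of whose virtual edges carry a common edge at the relevant pole, and common P-node degree~$3$ bounds this number by three, so the surviving P-node---and likewise the associated P-node consistency tree---has degree at most three and can be replaced by a Q-node, while degree-$2$ P-nodes disappear. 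The same applies to $\2G$. Adding, for every common vertex $v$, one common embedding tree $T(v)$ (a single P-node with the common edges at $v$ as leaves) contributes a further $O(n)$, since the common degrees sum to $O(n)$. Hence $D(\1G,\2G)$ has size $O(n)$ and is built in $O(n)$ time.

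For the normalization stage I would exploit that $D(\1G,\2G)$ has a two-level shape: its sources are the embedding trees $T(\1v),T(\2v)$, which after the above replacement contain only Q-nodes, and every remaining PQ-tree---a consistency tree, already a single Q-node, or a common embedding tree---is a sink with one or two embedding trees as parents. Normalizing in the sense of~\cite{br-spoacep-13} then leaves each consistency tree untouched and replaces each common embedding tree $T(v)$ by the PQ-tree on the common edges at $v$ obtained by combining the constraints imposed by its parents; since those parents are Q-node trees this PQ-tree is again P-node-free and has size linear in the common degree of $v$, and any reduction of an embedding tree still has size linear in its number of leaves. Thus the normalized instance is P-node-free, still of size $O(n)$, and is obtained in $O(n)$ time using standard PQ-tree projection and reduction. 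Finally, a normalized P-node-free \textsc{Simultaneous PQ-Ordering} instance is equivalent to a system of $\mathbb F_2$-equations and -inequalities with one variable per Q-node, writable in linear time~\cite{br-spoacep-13}; applied here it yields the biconnected PR-ordering constraints, of size $O(n)$, computed in $O(n)$ time, and containing every PR-ordering variable as the variable of some Q-node.

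I expect the main obstacle to be the projection step of the first stage together with the accompanying use of common P-node degree~$3$: this is precisely the point at which the problem collapses from general \textsc{Simultaneous PQ-Ordering} (where normalization can blow up the size and P-nodes of unbounded degree persist) to the benign P-node-free setting that is equivalent to an $\mathbb F_2$-system. A complete argument has to verify carefully that after projection every surviving P-node---both in the embedding trees and in the P-node consistency trees---has degree at most three, which requires relating the virtual edges of a P-node $\1\mu$ of $\T(\1G)$ that carry common edges at its two poles to the definition of common P-node degree. Once this is in place, the remaining size and running-time claims follow by routine bookkeeping on top of the results of~\cite{br-spoacep-13}.
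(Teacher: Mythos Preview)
Your proposal is correct and follows essentially the same route as the paper: correctness is deferred to the construction and to~\cite{br-spoacep-13}, and the work goes into the linear bound on building $D(\1G,\2G)$, normalizing, and extracting the (in)equalities.

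The one place where the paper is more careful than your sketch is the normalization step. Normalizing an arc $(T,T')$ costs time linear in the size of the \emph{parent} $T$, not of the child, so an embedding tree with many consistency-tree children could a priori take quadratic time in total; merely observing that each consistency tree is ``left untouched'' because it is already a single Q-node does not by itself bound the cost of computing the node-to-representative mappings. The paper's argument is that when the Q-embedding representation is built directly from the SPQR-tree one obtains these mappings for the consistency-tree arcs for free, so the only arcs that still need work are those from an embedding tree to its (unique) common embedding tree; normalizing that one arc costs $O(|T(\ii v)|)$, and summing over all vertices gives $O(n)$. Making this point explicit would close the only loose spot in your argument.
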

\begin{proof}
  Let $D(\1G, \2G)$ be the instances of \textsc{Simultaneous
    PQ-Ordering} as described above.  Clearly, $D(\1G, \2G)$ can be
  constructed in linear time and its size is linear in the size of the
  input graphs.  In general, the equations and inequalities for a
  given instance of \textsc{Simultaneous PQ-Ordering} can be computed
  in quadratic time~\cite{br-spoacep-13}.  In this specific case, it
  can be done in linear time for the following reasons.

  For every arc in $D(\1G, \2G)$ one needs to compute the
  normalization of the child (which takes linear time in the size of
  the parent~\cite{br-spoacep-13}) and a mapping from each inner node
  of the parent to its representative in the child (which takes again
  linear time in the size of the parent~\cite{br-spoacep-13}).  When
  computing the Q-embedding representations for $\1G$ and $\2G$ from
  their SPQR-trees (which can be done in linear time), we can make
  sure that the resulting instances are already normalized and that we
  already know the mapping from the nodes of the embedding trees to
  the consistency trees.  The remaining arcs in $D(\1G, \2G)$ are arcs
  from embedding trees to common embedding trees.  Thus, for every
  PQ-tree $T$ in $D(\1G, \2G)$, it suffices to normalize a single
  outgoing edge, which can be done in time linear in the size of~$T$.
\end{proof}

\subsubsection{Allowing Cutvertices}
\label{sec:allowing-cutvertices}

In the following, we extend this result to the case where we allow
exclusive cutvertices and simultaneous cutvertices of common degree~3.
Let $\1{B_1}, \dots, \1{B_k}$ be the blocks of $\1G$ and let $\2{B_1},
\dots, \2{B_\ell}$ be the blocks of $\2G$.  We say that embeddings of
these blocks have \emph{blockwise consistent edge orderings} if for
every pair of blocks $\1{B_i}$ and $\2{B_j}$ sharing a vertex $v$ the
edges incident to $v$ they share are ordered consistently.  To have
consistent edge orderings, it is obviously necessary to have blockwise
consistent edge orderings.

When composing the embeddings of two blocks that share a cutvertex,
the edges of each of the two blocks have to appear consecutively (note
that this is no longer true for three or more blocks), which leads to
another necessary condition.  Let $v$ be an exclusive cutvertex of
$\1G$.  Then $v$ is contained in a single block of $\2G$ whose
embedding induces an order $\2O$ on all common edges incident to $v$.
Let $\1{B_i}$ and $\1{B_j}$ (for $i, j \in \{1, \dots, k\}$ with $i
\not= j$) be a pair of blocks containing $v$ and let $\2{O_{i,j}}$ be
the order obtained by restricting $\2O$ to the common edges in
$\1{B_i}$ and $\1{B_j}$.  Then the common edges of $\1{B_i}$ must be
consecutive in the order $\2{O_{i, j}}$.  If this is true for every
pair of blocks at every exclusive cutvertex, we say that the
embeddings have \emph{pairwise consecutive blocks}.

\begin{lemma}
  \label{lem:blockwise-cutvertices}
  Two graphs without simultaneous cutvertices admit embeddings with
  consistent edge orderings if and only if their blocks admit
  embeddings that have blockwise consistent edge orderings and
  pairwise consecutive blocks.
\end{lemma}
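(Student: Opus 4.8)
The plan is to prove both directions by viewing an embedding of $\1G$ (and, symmetrically, of $\2G$) as a fixed embedding of each of its blocks together with, for every cutvertex, a choice of how the incident blocks are nested around it.

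\emph{Necessity.} Given embeddings of $\1G$ and $\2G$ with consistent edge orderings, I would restrict them to the blocks. Blockwise consistency is then immediate, since the rotation at a vertex shared by two blocks is the restriction of the (already consistent) rotation in the ambient graph. For pairwise consecutive blocks at an exclusive cutvertex $v$ of $\1G$: because $v$ is not a cutvertex of $\2G$, it lies in a unique block $\2B$ of $\2G$, so $\2O$ is precisely the restriction to the common edges of the rotation of $\1G$ at $v$. The key observation is that in the embedding of any two blocks $\1{B_i}$ and $\1{B_j}$ glued at $v$ induced by the embedding of $\1G$, the connected graph $\1{B_j}$ lies inside a single face of $\1{B_i}$; hence the edges of $\1{B_i}$ at $v$ form a contiguous arc of the rotation, and restricting further to common edges shows that the common edges of $\1{B_i}$ are consecutive in $\2{O_{i,j}}$.

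\emph{Sufficiency.} I would use the given block embeddings as they are. An embedding of $\1G$ arises from them by choosing, at each cutvertex $v$, how the blocks incident to $v$ are nested around $v$. Since the rotation at $v$ in the resulting embedding depends only on the embeddings of the blocks at $v$ and on this choice at $v$, the choices at different cutvertices are independent, so it suffices to realize, at each common vertex $v$, the \emph{same} cyclic order of the common edges for $\1G$ and for $\2G$. If $v$ is a cutvertex of neither graph, this is exactly blockwise consistency. If $v$ is an exclusive cutvertex of $\1G$ (the case of $\2G$ is symmetric; the case where $v$ is a cutvertex of both is that of a union cutvertex, which either does not occur once union cutvertices have been removed or is handled by the same argument applied simultaneously on both sides), then $\2G$ prescribes the target order $\2O$ and I must realize it by nesting the blocks of $\1G$ around $v$. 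The heart of the argument is the claim that a cyclic order $\tau$ of the common edges at $v$ is realizable in this way if and only if (a) $\tau$ restricts to the common-edge order of each block of $\1G$ at $v$, and (b) no two such blocks interlace in $\tau$ (for every pair, the common edges of one are consecutive in the restriction of $\tau$ to the common edges of the two); for $\tau = \2O$, these are precisely blockwise consistency and pairwise consecutive blocks. Necessity of (a) and (b) follows as in the first part. For sufficiency I would induct on the number of blocks with common edges at $v$: one such block --- say, one for which the shortest arc of $\tau$ containing its common edges is as short as possible --- must in fact have its common edges forming a contiguous arc of $\tau$, since otherwise either a block with a strictly shorter such arc or a pair of interlacing blocks is forced; I then remove this block, realize the remaining ones recursively, and nest the removed block back into the position of its arc.

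The main obstacle is this combinatorial characterization of the realizable rotations at a cutvertex, in particular the proof that the purely pairwise condition (b) is already sufficient to realize $\tau$. Once it is in hand, the rest is routine: one assembles the global embeddings of $\1G$ and $\2G$ and checks that they are planar and induce the prescribed common-edge orderings at every vertex, the only additional care being the (already indicated) bookkeeping that the nesting choices along the block-cut tree at distinct cutvertices do not interact.
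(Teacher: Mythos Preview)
Your proposal is correct and follows essentially the same approach as the paper: necessity is treated as straightforward, and sufficiency is proved by induction on the number of blocks at a cutvertex, peeling off one block whose common edges are consecutive in $\2O$ and recursing. Your minimality argument (take a block whose shortest containing arc is minimal) spells out precisely the step the paper dismisses as ``it is easy to see that there must be one block \dots\ whose edges appear consecutively in $\2O$,'' so your write-up is in fact more detailed than the paper's own proof.
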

\begin{proof}
  Let $v$ be a cutvertex in $\1G$ and let $\1{B_1}, \dots, \1{B_k}$ be
  the blocks containing $v$.  Moreover, let $\2O$ be the order of
  common edges around $v$ given by the unique block of $\2G$
  containing $v$.  We only have to show that the embeddings of the
  blocks $\1{B_1}, \dots, \1{B_k}$ can be composed such that they
  induce the order $\2O$ on the common edges.  For $k = 2$ this is
  clear, as we can choose arbitrary outer faces for $\1{B_1}$ and
  $\1{B_2}$ and combine their embeddings by gluing them together at
  $v$.  For $k > 2$ it is easy to see that there must be one block,
  without loss of generality $\1{B_1}$, whose edges appear
  consecutively in $\2O$.  The embeddings of all remaining blocks
  $\1{B_2}, \dots, \1{B_k}$ can be composed by induction such that the
  edges they contain are ordered the same as in $\2O$.  Moreover,
  composing the embedding of $\1{B_1}$ with the resulting embedding of
  $\1{B_2}, \dots, \1{B_k}$ works the same as the composition of two
  blocks.
\end{proof}

To extend Lemma~\ref{lem:ordering-biconnected} to the case where we
allow exclusive cutvertices, we consider the Q-embedding
representations of each block.  Ensuring blockwise consistent edge
orderings works more or less the same as ensuring consistent edge
orderings in the biconnected case.  Moreover, we will see how to add
additional PQ-trees to ensure pairwise consecutive blocks.

Note that the Q-embedding representations again yield PR-ordering
variables.  Fixing these variables determines the edge orderings of
common edges in each block of $\1G$ and in each block of $\2G$.  If we
have no simultaneous cutvertices, every vertex is either not a
cutvertex in $\1G$ or not a cutvertex in $\2G$.  Thus, the PR-ordering
variables actually determine all edge orderings of the common graph.
Thus, although there are new types of embedding choices at the
cutvertices in $\1G$ and at the cutvertices in $\2G$, these choices
are already covered by the PR-ordering variables (at least in terms of
edge orderings).

Let us formally describe the instance of \textsc{Simultaneous
  PQ-Ordering} announced above.  Let $\1{B_1}, \dots \1{B_k}$ be the
blocks of $\1G$ and let $\2{B_1}, \dots, \2{B_\ell}$ be the blocks of
$\2G$.  We start with an instance of \textsc{Simultaneous PQ-Ordering}
containing the Q-embedding representation of each of these blocks.
Let $v$ be a vertex of $G$ that is not a cutvertex.  Then $v$ is
contained in a single block of $\1G$ and in a single block of $\2G$,
let $\1v$ and $\2v$ be the occurrences of $v$ in these blocks.  As
before, there are two embedding trees $T(\1v)$ and $T(\2v)$ describing
the order of edges around $v$ in $\1G$ and $\2G$, respectively.  As
before we can enforce consistent ordering around $v$ by inserting a
common embedding tree as a common child of $T(\1v)$ and $T(\2v)$.

Let $v$ be a cutvertex of $\1G$ (the case that $v$ is a cutvertex of
$\2G$ is symmetric).  Then $v$ occurs in several blocks of $\1G$,
without loss of generality $\1{B_1}, \dots, \1{B_r}$.  We denote the
occurrences of $v$ in these blocks by $\1{v_1}, \dots, \1{v_r}$.  As
$v$ is not a simultaneous cutvertex, it occurs in a single block $\2B$
of $\2G$.  We denote this occurrence by $\2v$.  The embedding tree
$T(\2v)$ contains a leaf for each of common edge incident to $v$, thus
fixing the order of $T(\2v)$ already fixes the order of all common
edges around $v$.  To ensure consistency, we have to enforce that
conditions of Lemma~\ref{lem:blockwise-cutvertices}, i.e., blockwise
consistent edge orderings and pairwise consecutive blocks.

Ensuring blockwise consistent edge orderings is equivalent to
enforcing that the common edges incident to $v$ in $\1{B_i}$ (for $i =
1, \dots, k$) are ordered the same with respect to the Q-embedding
representations of $\1{B_i}$ and $\2B$.  This can be done by inserting
a new child consisting of a single P-node as common child of $T(\2v)$
and $T(\1{v_i})$.  We call this tree the \emph{blockwise common
  embedding tree}.

To ensure pairwise consecutive blocks, consider the two blocks
$\1{B_i}$ and $\1{B_j}$.  We create a PQ-tree $T_{i, j}(\1v)$ that has
a leaf for each common edge incident to $\1v$ that belongs to one of
the two blocks $\1{B_i}$ or $\1{B_j}$.  The structure of $T_{i,
  j}(\1v)$ is chosen such that all common edges in $\1{B_i}$ are
consecutive; see Figure~\ref{fig:pairwise-consec-tree}.  We call $T_{i,
  j}(\1v)$ \emph{pairwise consecutivity trees} and add it as a child
of the embedding tree $T(\2v)$, which has a leaf for every common edge
incident to $v$.

\begin{figure}
  \centering
  \includegraphics[page=1]{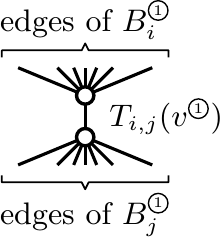}
  \caption{A pairwise consecutivity tree $T_{i, j}(\1v)$.}
  \label{fig:pairwise-consec-tree}
\end{figure}

We denote the resulting instance of \textsc{Simultaneous PQ-Ordering}
by $D(\1G, \2G)$.  As before, all sources in $D(\1G, \2G)$ contain
only Q-nodes.  Thus, normalizing $D(\1G, \2G)$ leads to an instance
containing only Q-nodes~\cite{br-spoacep-13}, which is again
equivalent to a set of equations and inequalities.  We call this set
the \emph{PR-ordering constraints}.

So far, the PR-ordering constraints only ensure blockwise consistent
edge orderings and pairwise consecutive blocks if every cutvertex is
an exclusive cutvertex.  Recall that there is no need for handling
union cutvertices; see Theorem~\ref{thm:no-union-cutvertices}.  Assume
we allow simultaneous cutvertices of common degree~3 and let $v$ be
such a cutvertex.  Then there are two possibilities.  If $v$ does not
separate the three common edges incident to $v$ in one of the graphs
$\ii G$ (for $i \in \{1, 2\}$), then the PR-ordering variables of $\ii
G$ also determine the common edge ordering around $v$ and thus this
simultaneous cutvertex actually behaves like an exclusive cutvertex.
Otherwise, the common edges incident to $v$ are separated by $v$ in
$\1G$ and in $\2G$.  Thus, changing the edge ordering of the common
edges at $v$ in an embedding of $\1G$ has no effect on any other edge
ordering.  As the same holds for $\2G$, we actually choose an
arbitrary edge ordering for the common edges incident to $v$,
independent from all other edge orderings.

Thus, we do not need to add additional constraints for the case that
we allow simultaneous cutvertices of common degree~3. To obtain the
following lemma, it remains to prove the running time.

\begin{lemma}
  \label{lem:ordering-no-sim-cutv}
  Let $\1G$ and $\2G$ have common P-node degree~3 and simultaneous
  cutvertices of common degree at most~3.  Let $\alpha$ be a variable
  assignment for the PR-ordering variables.  The graphs $\1G$ and
  $\2G$ admit embeddings that realize $\alpha$ and have consistent
  edge orderings if and only if $\alpha$ satisfies the PR-ordering
  constraints.

  The PR-ordering constraints have size $O(n^2)$ and can be computed
  in $O(n^2)$ time.
\end{lemma}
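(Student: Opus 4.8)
The plan is to observe that the ``if and only if'' part of the statement is already a consequence of the construction of the instance $D(\1G, \2G)$ of \textsc{Simultaneous PQ-Ordering} described above, so that only the bounds on the size of the PR-ordering constraints and on the time to compute them remain to be proved. For the equivalence: by Lemma~\ref{lem:blockwise-cutvertices}, once a variable assignment $\alpha$ for the PR-ordering variables is fixed --- which determines the cyclic order of the common edges inside every block and leaves the order at a simultaneous cutvertex of common degree~$3$ entirely free --- the graphs $\1G$ and $\2G$ admit consistent edge orderings realizing $\alpha$ precisely when their blocks admit embeddings realizing $\alpha$ that have blockwise consistent edge orderings and pairwise consecutive blocks. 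The instance $D(\1G, \2G)$ was assembled so that its solutions realizing $\alpha$ are in bijection with exactly such block embeddings, and since all its sources consist only of Q-nodes, normalizing it yields an equivalent P-node-free instance, which in turn is equivalent to the system of equations and inequalities called the PR-ordering constraints~\cite{br-spoacep-13}. It thus remains to establish the two quantitative claims.

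For the size, the Q-embedding representations of all blocks of $\1G$ and $\2G$ together have size $O(n)$; each common embedding tree of a non-cutvertex and each blockwise common embedding tree has size proportional to the number of common edges it collects, so these contribute $O(n)$ in total; and the only potentially superlinear contribution is that of the pairwise consecutivity trees. For an exclusive cutvertex $v$ of $\1G$ lying in the blocks $\1{B_1}, \dots, \1{B_{k_v}}$, write $d_i$ for the number of common edges incident to $v$ in $\1{B_i}$, so that $\sum_i d_i = \deg_G(v)$ and $k_v \le \deg_G(v)$. Then the pairwise consecutivity trees at $v$ have total size $\sum_{i<j}(d_i + d_j) = (k_v - 1)\sum_i d_i \le \deg_G(v)^2$. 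Since $G$ is planar, $\sum_v \deg_G(v) = O(n)$, and hence $\sum_v \deg_G(v)^2 \le \bigl(\max_v \deg_G(v)\bigr)\bigl(\sum_v \deg_G(v)\bigr) = O(n^2)$. Summing over all cutvertices gives $|D(\1G,\2G)| = O(n^2)$, so the PR-ordering constraints have $O(n^2)$ variables and $O(n^2)$ constraints.

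For the running time, the SPQR-trees and the already-normalized Q-embedding representations of all blocks, together with the representative maps into their consistency trees, are obtained in $O(n)$ time exactly as in Lemma~\ref{lem:ordering-biconnected}; adding the common embedding trees, the blockwise common embedding trees, and the pairwise consecutivity trees costs time linear in their total size, i.e.\ $O(n^2)$. What remains is to normalize the arcs leaving the embedding trees and to read off the constraints. For each such arc one normalizes its (comparatively small) child and computes the relevant portion of the representative map; after an $O(|T|)$ preprocessing performed once per embedding tree $T$ (so $O(n)$ overall), every outgoing arc of $T$ can be handled in time linear in the size of its child, and each such child was already counted in the $O(n^2)$ size bound. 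Reading the equations and inequalities off the normalized instance is then linear in its size. Altogether the PR-ordering constraints are computed in $O(n^2)$ time.

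The step I expect to need the most care is this final amortization. Unlike the biconnected case of Lemma~\ref{lem:ordering-biconnected}, where each embedding tree has a single outgoing arc, an embedding tree $T(\2v)$ at a cutvertex $v$ of $\1G$ now carries $\Theta(k_v^{2})$ outgoing arcs (one per pair of blocks at $v$). A naive analysis charging each arc the size of its parent would give a bound of order $\sum_v k_v^{2}\deg_G(v)$, which can be cubic; the point is therefore to refine the linear-time normalization of~\cite{br-spoacep-13} so that every arc is charged only the size of its child, with the linear-in-parent work amortized into one preprocessing pass per embedding tree.
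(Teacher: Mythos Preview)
Your overall structure is right and matches the paper: the equivalence is inherited from the construction of $D(\1G,\2G)$ together with Lemma~\ref{lem:blockwise-cutvertices}, the size bound comes out as you compute, and the only real work is the running-time analysis of normalizing the arcs out of the embedding trees. Your size computation is in fact more explicit than the paper's.

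The gap is exactly where you flagged it, but your proposed fix is both stronger than needed and not actually established. You assert that a single $O(|T|)$ preprocessing pass on each embedding tree $T(\2v)$ suffices so that \emph{every} outgoing arc can then be processed in time linear in its child. You give no mechanism for this, and it is not obvious: normalizing a child of $T(\2v)$ requires a reduction of $T(\2v)$ with respect to the child's leaf set, and there are $\Theta(k_v^2)$ distinct leaf sets at a cutvertex. The paper does \emph{not} claim $O(|T|)$ preprocessing; it only achieves (and only needs) $O(\deg_G(v)^2)$ per vertex, which still sums to $O(n^2)$.

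The paper's concrete trick is this. For each block $\1{B_i}$ containing $v$, compute once the reduction of $T(\2v)$ with respect to the leaves in $\1{B_i}$ and project to those leaves, obtaining a rooted tree $T_i(\1v)$ with root $\eta_i$ (the node that separates the $\1{B_i}$-leaves from the rest). This costs $O(\deg_G(v))$ per block, hence $O(\deg_G(v)^2)$ per vertex. The point is that the normalized pairwise consecutivity tree $T_{i,j}(\1v)$ is then simply the two trees $T_i(\1v)$ and $T_j(\1v)$ with their roots $\eta_i$ and $\eta_j$ identified, and the representative map extends in time linear in $|T_{i,j}(\1v)|$. Summing over all pairs gives your $O(\deg_G(v)^2)$ bound on total child sizes, so the whole vertex costs $O(\deg_G(v)^2)$ and the instance costs $O(n^2)$. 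Replace your unspecified ``one preprocessing pass per embedding tree'' with this per-block projection step and the argument is complete.
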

\begin{proof}
  The size of the PR-ordering constraints is linear in the size of the
  instance $D(\1G, \2G)$.  Clearly, the Q-embedding representations of
  the blocks have overall linear size.  The common embedding tree for
  a vertex $v$ has size $O(\deg(v))$.  Similarly, the blockwise common
  embedding trees of a vertex $v$ have total size $O(\deg(v))$.
  However, if $\1G$ has a linear number of blocks incident to a vertex
  $v$, then we get a quadratic number of pairwise consecutivity
  trees. 
  

  To get the PR-ordering constraints from the instance $D(\1G, \2G)$
  we have to compute for each arc in $D(\1G, \2G)$ the normalization
  and the mapping from the Q-nodes of the source to their
  representatives in the target.  As in the proof of
  Lemma~\ref{lem:ordering-biconnected}, the arcs of $D(\1G, \2G)$
  belonging to the Q-embedding representation can be computed in
  linear time.  All remaining arcs have an embedding tree as source.
  An embedding tree of a vertex $v$ has only $O(\deg(v))$ common
  embedding trees or blockwise common embedding trees as children.
  Processing all arcs to these children takes $O(\deg(v)^2)$ time and
  thus overall $O(n^2)$ time.

  It remains to deal with arcs of the following type.  The source is
  the embedding tree $T(\2v)$ and the target is the pairwise
  consecutivity tree $T_{i, j}(\1v)$ for a pair of blocks $\1{B_i}
  \not= \1{B_j}$.  Normalizing such an arc would usually take
  $O(\deg(v))$ time, which has to be done for $O(\deg(v)^2)$ pairwise
  consecutivity trees, resulting in the running time $O(\deg(v)^3)$.
  To improve this, we can make use of the fact that the subtree of
  $T_{i, j}(\1v)$ (after the normalization) that represents only the
  edges in $\1{B_i}$ has always the same structure, independent of the
  other block $\2{B_i}$.

  We first compute for each block $\1{B_i}$ the reduction of $T(\2v)$
  with the leaves belonging to $\1{B_i}$, which takes $O(\deg(v))$
  time for each of the $O(\deg(v))$ blocks.  The resulting tree
  contains a single node $\eta_i$ separating the leaves belonging to
  $\1{B_i}$ from all other leaves.  We project this tree to the leaves
  belonging to $\1{B_i}$ to obtain the tree $T_i(\1v)$ with root
  $\eta_i$.  Computing these trees $T_i(\1v)$ together with the
  mapping from the nodes in $T(\2v)$ to their representatives in
  $T_i(\1v)$ takes $O(\deg(v))$ time for each block and thus overall
  $O(\deg(v)^2)$ time.

  The desired (normalized) pairwise consecutivity tree $T_{i, j}(\1v)$
  can be obtained by identifying the roots $\eta_i$ and $\eta_j$ of
  the trees $T_i(\1v)$ and $T_j(\1v)$ with each other.  Extending the
  mapping to the resulting tree $T_{i, j}(\1v)$ can be easily done in
  linear time in the size of $T_{i, j}(\1v)$.  Hence, for the whole
  instance, we get the running time $O(n^2)$.
%
\end{proof}

\subsubsection{Cutvertex-Ordering Variables}
\label{sec:cutv-order-vari}

Let $v$ be an exclusive cutvertex of $\1G$ and let $\1{B_1}, \dots,
\1{B_k}$ be the blocks incident to $v$ that include common edges
incident to $v$.  As mentioned before, the choice of how these blocks
are ordered around $v$ and how they are nested into each other is
determined by the PR-ordering variables of $\2G$.  Consider a pair of
blocks $\1{B_i}$ and $\1{B_j}$.  As the choice of how $\1{B_i}$ and
$\1{B_j}$ are embedded into each other at $v$ may also have an effect
on some relative positions, we would like to have more direct access
to this information (not only via the PR-ordering variables of $\2G$).

Let $\1B \in \{\1{B_1}, \dots, \1{B_k}\}$ be a block of $\1G$ that
contains the common edge $e$ incident to $v$ and let $e_1$ and $e_2$
be two common edges incident to $v$ that are contained in one block
distinct from $\1B$.  We create a variable $\ord(e_1, e_2, \1B)$ to
represent the binary decision of ordering the edges $e_1$, $e_2$, and
$e$ in this order or in its reversed order.  Note that this is
independent from the choice of the edge $e$ of $\1B$ (the blocks are
pairwise consecutive in every embedding).  We create such a variable
for every such triple $e_1$, $e_2$, and $\ii B$ and call them the
\emph{exclusive cutvertex-ordering variables}.


To make sure that the exclusive cutvertex-ordering variables are
consistent with the PR-ordering variables, it suffices to slightly
change the above instance $D(\1G, \2G)$ of \textsc{Simultaneous
  PQ-Ordering}.  Let $\ord(e_1, e_2, \1B)$ be an exclusive
cutvertex-ordering variable for the cutvertex $v$.  Let $e$ be a
common edge in $\1B$ incident to $v$ and let $\2v$ be the occurrence
of $v$ in $\2G$.  Then $D(\1G, \2G)$ contains the embedding tree
$T(\2v)$ that has a leaf for every common edge incident to $v$.  This
includes $e_1$, $e_2$, and $e$.  We create a new PQ-tree $T(e_1, e_2,
e)$ with three leaves corresponding to $e_1$, $e_2$, and $e$ and add
it as child of $T(\2v)$.  In every solution of the resulting instance
of \textsc{Simultaneous PQ-Ordering}, the value of $\ord(e_1, e_2,
\1B)$ then simply corresponds to the orientation chosen for PQ-tree
$T(e_1, e_2, e)$.

Adding this tree for every exclusive cutvertex-ordering variable
establishes the desired connection between these variables and the
PR-ordering variables.  We call the constraints we get from the
resulting instance of \textsc{Simultaneous PQ-Ordering} in addition to
the PR-ordering constraints the \emph{cutvertex-ordering constraints}.

Let $v$ be a simultaneous cutvertex of common degree~3 such that the
common edges incident to $v$ are separated by $v$ in $\1G$ and in
$\2G$.  Recall that this is the unique case, where PR-ordering
variables do not determine the order of the common edges around $v$.
Let $e_1$, $e_2$, and $e_3$ be the common edges incident to $v$.  To
make sure that assigning values to all variables actually determines
all edge-orderings, we add the variable $\ord(e_1, e_2, e_3)$
associated with the order of these three edges.  Recall that changing
this order in $\1G$ or $\2G$ has no effect on any other edge ordering.
Hence, there is no need to add further constraints.  If two of the
edges, without loss of generality $e_1$ and $e_2$, belong to the same
block of $\1G$ and $e_3$ belongs to another block $\1B$, we denote
$\ord(e_1, e_2, e_3)$ also by $\ord(e_1, e_2, \1B)$ to obtain
consistency with the naming of the exclusive cutvertex-ordering
variables.

We call these variables together with the exclusive cutvertex-ordering
variables the \emph{cut\-vertex-ordering variables}.  The PR-ordering
variables together with the cutvertex-ordering variables are simply
called \emph{ordering variables}.  Moreover, the PR-ordering
constraints together with the cutvertex-ordering constraints are
called \emph{ordering constraints}.  To extend
Lemma~\ref{lem:ordering-no-sim-cutv} to incorporate the
cutvertex-ordering variables, it remains to show that the
cutvertex-ordering constraints have $O(n^3)$ size and can be computed
in $O(n^3)$ time.

\begin{lemma}
  \label{lem:ordering-no-sim-cutv-with-cutv-var}
  Let $\1G$ and $\2G$ have common P-node degree~3 and simultaneous
  cutvertices of common degree at most~3.  Let $\alpha$ be a variable
  assignment for the ordering variables.  The graphs $\1G$ and $\2G$
  admit embeddings that realize $\alpha$ and have consistent edge
  orderings if and only if $\alpha$ satisfies the ordering
  constraints.

  The ordering constraints have size $O(n^3)$ and can be computed in
  $O(n^3)$ time.
\end{lemma}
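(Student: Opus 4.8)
The plan is to split the proof into a \emph{correctness} part and a \emph{complexity} part; the former is essentially inherited from Lemma~\ref{lem:ordering-no-sim-cutv} together with the construction of the three-leaf trees described just before the statement, while the $O(n^3)$ size and time bound is the real work. For correctness I would argue both directions by relating a variable assignment $\alpha$ on the ordering variables to solutions of the augmented instance $D(\1G, \2G)$. If $\1G$ and $\2G$ admit embeddings realizing $\alpha$ with consistent edge orderings, then these embeddings induce a solution of the whole instance $D(\1G, \2G)$: the embedding trees are ordered consistently by Lemma~\ref{lem:ordering-no-sim-cutv}, and each newly added tree $T(e_1,e_2,e)$ is simply assigned the cyclic order induced on $e_1,e_2,e$, which automatically extends the order at its parent $T(\2{v})$ and, by the definition of ``realize'', equals $\alpha(\ord(e_1,e_2,\1{B}))$; hence $\alpha$ satisfies the ordering constraints, which are equivalent to solvability of the normalized instance. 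Conversely, if $\alpha$ satisfies the ordering constraints, then its restriction to the PR-ordering variables satisfies the PR-ordering constraints, so Lemma~\ref{lem:ordering-no-sim-cutv} yields embeddings of $\1G$ and $\2G$ realizing that restriction with consistent edge orderings. For an exclusive cutvertex (and for a simultaneous cutvertex of common degree~$3$ whose three common edges are not separated by it in one of the two graphs), the value of the associated cutvertex-ordering variable is forced by the order at $T(\2{v})$, hence by the PR-ordering variables, so the chosen embeddings already realize it; for a simultaneous cutvertex of common degree~$3$ separating its three common edges in both $\1G$ and $\2G$, the variable is free, and since reordering those three edges in $\1G$ (respectively $\2G$) affects no other edge ordering, I would reorder them in both graphs simultaneously to match $\alpha$ while keeping all orderings consistent. (Union cutvertices need not be considered, by Theorem~\ref{thm:no-union-cutvertices}.)

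For the size bound, the quantity to control is the number of exclusive cutvertex-ordering variables. At a cutvertex $v$ of common degree $d_v$ there are $O(d_v)$ incident blocks carrying common edges at $v$, and the number of pairs of common edges at $v$ contained in a single block, summed over all these blocks, is at most $\binom{d_v}{2}$; hence there are $O(d_v^{3})$ such variables at $v$, each contributing one three-leaf tree of constant size. Since $G$ is planar we have $\sum_v d_v = O(n)$ and trivially $d_v \le n$, so $\sum_v d_v^{3} \le n^{2}\sum_v d_v = O(n^{3})$; the simultaneous cutvertices of common degree~$3$ add only $O(n)$ further variables and trees. Thus $D(\1G,\2G)$ — and therefore the ordering constraints — have size $O(n^{3})$.

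The main obstacle is obtaining the constraints from the augmented instance within $O(n^3)$. Converting an arc to equations requires, by~\cite{br-spoacep-13}, normalizing the child and computing, for each node of the parent, its representative in the child; done naively this costs time linear in the parent, so for the $O(d_v^{3})$ new arcs with the common parent $T(\2{v})$ of size $\Theta(d_v)$ it would cost $O(d_v^{4})$ — that is $O(n^4)$ overall, too much. I would avoid this just as the pairwise-consecutivity trees were handled in Lemma~\ref{lem:ordering-no-sim-cutv}: because the instance has common P-node degree~$3$, every projected embedding tree $T(\2{v})$ consists only of Q-nodes, so the cyclic order of any three of its leaves $e_1,e_2,e$ is governed by the Q-node at their unique branch point (or is free), and this branch point can be located by lowest-common-ancestor queries. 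After $O(n)$ total preprocessing of all embedding trees for LCA queries~\cite{ht-fafnca-84}, each of the $O(d_v^{3})$ triples at $v$ yields its branch point, and hence the single equation relating $\ord(e_1,e_2,\1{B})$ to the corresponding PR-ordering variable (or the observation that the variable is free), in $O(1)$ time. This gives $O(d_v^{3})$ time at $v$ and $O(n^{3})$ in total; the remaining arcs of $D(\1G,\2G)$ are exactly those already present in Lemma~\ref{lem:ordering-no-sim-cutv} and are handled in $O(n^2)$ time as there.
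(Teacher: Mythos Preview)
Your proposal is correct and follows essentially the same approach as the paper: correctness is inherited from Lemma~\ref{lem:ordering-no-sim-cutv} and the construction of the three-leaf trees, the $O(n^3)$ size bound comes from the $O(d_v^3)$ count of cutvertex-ordering variables per vertex, and the running time is obtained by locating the branch point of each triple $e_1,e_2,e$ in $T(\2v)$ via constant-time LCA queries after linear preprocessing~\cite{ht-fafnca-84}. The only cosmetic difference is that the paper re-roots $T(\2v)$ at each edge $e$ (incurring $O(\deg(v)^2)$ total preprocessing per vertex) and then takes $\LCA(e_1,e_2)$, whereas you root once and use pairwise LCAs to find the median; both yield the same $O(n^3)$ bound.
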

\begin{proof}
  The size of the cutvertex-ordering constraints is clearly linear in
  the number of cutvertex-ordering variables.  For each cutvertex $v$,
  the number of cutvertex-ordering variables is clearly in
  $O(\deg(v)^3)$.  Thus, it remains to show how to get the
  cutvertex-ordering constraints from the resulting instance $D(\1G,
  \2G)$ of \textsc{Simultaneous PQ-Ordering}.

  Let $v$ be a cutvertex in $\1G$ and let $\2v$ be the occurrence of
  $v$ in $\2G$.  For every cutvertex-ordering variable of $v$ we have
  three common edges $e_1$, $e_2$, and $e$ and we need to find the
  node $\eta$ of the embedding tree $T(\2v)$ that separates the leaves
  corresponding to $e_1$, $e_2$, and $e$ from each other.  When
  rooting $T(\2v)$ at $e$, this node $\eta$ is the lowest common
  ancestor of $e_1$ and $e_2$.  Thus, after $O(\deg(v))$ preprocessing
  time, we can get the cutvertex-ordering constraint for every
  cutvertex-ordering variable that includes $e$ in constant time per
  variable~\cite{ht-fafnca-84}.  We have to spend this $O(\deg(v))$
  preprocessing time at most once for each common edge incident to
  $v$, yielding a total preprocessing time of $O(\deg(v)^2)$.  As we
  have $O(\deg(v)^3)$ cutvertex-ordering variables for $v$, the
  running time is dominated by the constant time LCA-queries, which
  yield the running time $O(\deg(v)^3)$.  For the whole instance, this
  gives the claimed $O(n^3)$ running time.
\end{proof}

\subsection{Common-Face Constraints}
\label{sec:common-face-constr}

Recall from Section~\ref{sec:spec-bridg-comm-face-constr} that we can
assume that there are no bridges that are block-local and exclusive
one-attached if we in return solve \textsc{Sefe} with block-local
common-face constraints.  In this section, we show how to handle these
additional constraints.  To this end, we show that satisfying
block-local common-face constraints in a given instance of
\textsc{Sefe} is equivalent to satisfying a set of equations and
inequalities.  The union of these constraints with the constraints
from the previous section thus enforce that the embeddings of each
common connected component are consistent and satisfy given
block-local common-face constraints.

Let $B$ be a block of the common graph.  Let $\mu$ be an R-node of
$B$.  Then we introduce the binary variable $\ord(\mu)$ where
$\ord(\mu) = 0$ indicates that $\skel(\mu)$ is embedded according to
its reference embedding and $\ord(\mu) = 1$ indicates that
$\skel(\mu)$ is flipped.  In case $\mu$ is a P-node of $B$, we can
assume by Theorem~\ref{thm:all-link-graphs-are-connected} that the
union-link graph of $\mu$ is connected.  Recall that this implies that
the embedding of $\skel(\mu)$ is fixed up to a flip
(Lemma~\ref{lem:link-graph-adjacent}).  Thus, we also get a reference
embedding for $\skel(\mu)$ and can describe the embedding choice for
$\skel(\mu)$ with a binary variable $\ord(\mu)$.  We call these
variables the \emph{common PR-node variables}.

It follows directly from
Section~\ref{sec:simult-embedd-union-bridge-constr} that common-face
constraints for $B$ are equivalent to a set of equations and
inequalities on the variables $\ord(\mu)$ (we basically get the
consistency and union-bridge constraints from Step~\ref{step:2} and
Step~\ref{step:4}).

Note that the constraints from Section~\ref{sec:cons-edge-order}
enforcing consistent edge orderings do not contain common PR-node
variables.  They only contain PR-ordering variables determining the
embeddings of $\1G$ and $\2G$ (Lemma~\ref{lem:ordering-no-sim-cutv}).
As fixing the PR-node variables for $\1G$ fixes the embedding of $G$,
it also fixes the values for all common PR-node variables.  It remains
to show that this dependency of the common PR-node variables from the
PR-ordering variables in $\1G$ can be expressed using a set of
equations and inequalities.

To this end, consider a common vertex $v$ in the common block $B$ and
let $\1B$ be the block of $\1G$ containing $B$.  Let $T(v)$ be the
embedding tree of $v$ in $B$, i.e., the PQ-tree describing the
possible edge orderings of common edges incident to $v$.  Note that
each inner node of $T(v)$ is actually a Q-node and that $T(v)$ has one
inner node for each P- or R-node whose embedding affects the edge
ordering around $v$.  Let $\1v$ be the occurrence of $v$ in $\1B$ and
let $T(\1v)$ be the embedding tree of $\1v$ in $\1B$ projected to the
common edges incident to $v$.  Then $T(\1v)$ describes all orders of
common edges incident to $v$ that can be induced by an embedding of
$\1B$.

Clearly, $T(\1v)$ is more restrictive than $T(v)$ in the sense that
every order represented by $T(\1v)$ is also represented by $T(v)$.
Thus, $T(\1v)$ is a reduction of $T(v)$.  It is not hard to
see~\cite{br-spoacep-13} that every Q-node in $T(v)$ has a unique
Q-node in $T(\1v)$ (called its \emph{representative}) that determines
its flip.  Thus, for every P- or R-node $\mu$ of $G$, we find at least
one vertex $v$ such that $\ord(\mu)$ corresponds to the flip of a
Q-node in $T(v)$, which corresponds to a flip of a Q-node in $T(\1v)$,
which corresponds to a PR-ordering variable $\ord(\1\mu)$ (or
$\ord(\1{\mu_v})$) of $\1G$.  Thus, $\ord(\mu) = \ord(\1\mu)$ or
$\ord(\mu) \not= \ord(\1\mu)$ gives us the desired connection between
the common PR-node variables and the PR-ordering variables.

We call the set of all equations and inequalities described in this
section the \emph{common-face constraints}.  With the results form
Section~\ref{sec:simult-embedd-union-bridge-constr} (and with standard
PQ-tree operations), we can compute the common-face constraints in
linear time.  This yields the following lemma where $n$ is the total
input size, i.e., the size of the two graphs plus the size of the
common-face constraints.

\begin{lemma}
  \label{lem:common-face-constraints}
  Let $(\1G, \2G)$ be an instance of \textsc{Sefe} with common-face
  constraints and let $\alpha$ be a variable assignment for the
  PR-ordering and the common PR-node variables.  Every embedding of
  $\1G$ realizing $\alpha$ satisfies the common-face constraints if
  and only if $\alpha$ satisfies the common-face constraints.

  The common-face constraints have $O(n)$ size and can be computed in
  $O(n)$ time.
\end{lemma}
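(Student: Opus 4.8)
The plan is to unpack what the \emph{common-face constraints}, viewed as a system of equations and inequalities, actually are, and then verify the two claims separately. The system splits into two groups. Group~A consists, for each block $B$ of the common graph $G$, of the consistency and union-bridge constraints obtained by running Step~\ref{step:2} and Step~\ref{step:4} of Section~\ref{sec:simult-embedd-union-bridge-constr} with $B$ in the role of the biconnected common graph; by that section these are, for a \emph{fixed} embedding of $B$ encoded through the common PR-node variables $\ord(\mu)$ (one per P- or R-node $\mu$ of $B$), equivalent to the block-local common-face constraints restricted to $B$. Group~B consists, for each P- or R-node $\mu$ of $G$, of the single equation or inequality relating $\ord(\mu)$ to the PR-ordering variable $\ord(\1\mu)$ (or $\ord(\1{\mu_v})$) of $\1G$, exactly as described in the paragraphs preceding the lemma via reductions of embedding trees and their representative Q-nodes.

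For the equivalence I would fix an embedding $\E$ of $\1G$ that realizes $\alpha$ and let $\E_G$ be the induced embedding of $G$. The key observation is that, for every P- or R-node $\mu$ of a block $B$ of $G$, the value of $\ord(\mu)$ read off $\E_G$ is forced: choosing a vertex $v$ of $B$ whose embedding tree $T(v)$ contains a Q-node for $\mu$, the flip of that Q-node is determined by the flip of its representative Q-node in the reduction $T(\1v)$ of $T(v)$, and the latter is exactly the PR-ordering variable of $\1G$ that $\E$ realizes. Hence the common PR-node values of $\E_G$ agree with the values assigned by $\alpha$ precisely when $\alpha$ satisfies Group~B. Given that, Section~\ref{sec:simult-embedd-union-bridge-constr} says that $\E_G$ satisfies the block-local common-face constraints on $B$ if and only if these common PR-node values satisfy Group~A for $B$. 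Finally, since the common-face constraints are block-local, all embeddings of $\1G$ realizing $\alpha$ induce the same embedding on each block of $G$, so ``every embedding of $\1G$ realizing $\alpha$ satisfies the common-face constraints'' is equivalent to ``$\alpha$ satisfies Groups~A and~B'', which is the assertion. (When $\alpha$ violates Group~B no embedding of $\1G$ realizes the common PR-node part, and one reads the statement accordingly.)

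For the size and running time, Group~A for a block $B$ has size $O(|B|)$ and is computed in $O(|B|)$ time by Section~\ref{sec:simult-embedd-union-bridge-constr}; summing over the blocks of $G$ gives $O(n)$. Group~B has one constraint per P- or R-node of $G$, hence $O(n)$ constraints; computing one amounts to picking a witnessing vertex $v$ of $\mu$, forming the reduction/projection $T(\1v)$ of $T(v)$, locating the representative Q-node of $\mu$ in it, and reading off the sign --- all standard PQ-tree operations~\cite{br-spoacep-13} that, performed once per vertex, total $O(n)$ time.

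The step I expect to be the main obstacle is the first observation in the equivalence paragraph: making precise that fixing the PR-ordering variables of $\1G$ really pins down each common PR-node variable, and that the Group~B constraint extracted from one witnessing vertex $v$ of $\mu$ is consistent with the one extracted from any other witness. Both facts rest on the representative-Q-node machinery of~\cite{br-spoacep-13}, and the delicate part is to invoke that machinery cleanly rather than re-derive it; the remaining bookkeeping and the size/time estimates are routine.
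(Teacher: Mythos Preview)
Your proposal is correct and matches the paper's approach: the paper does not give a separate formal proof of this lemma but simply states it as following from the preceding discussion (Group~A from Section~\ref{sec:simult-embedd-union-bridge-constr}, Group~B from the representative-Q-node link to the PR-ordering variables of~$\1G$, and standard PQ-tree operations for the linear bound). You have faithfully unpacked that discussion into a proof.
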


\subsection{Consistent Relative Positions}
\label{sec:cons-relat-posit}

Let $H$ and $H'$ be two connected components of the common graph $G$.
To represent the relative position $\pos_{H'}(H)$ of $H$ with respect
to $H'$, we use the relative positions $\pos_C(H)$ of $H$ with respect
to the cycles $C$ in an extended cycle basis of $H'$.  With $\pos_C(H)
= 0$ and $\pos_C(H) = 1$, we associate the cases that $H$ lies to the
right of $C$ and to the left of $C$, respectively.  We call these
variables the \emph{component position variables}.  Note that fixing
all position variables determines all relative positions of common
connected components with respect to each other
(Theorem~\ref{thm:cycle-basis}).  Thus, fixing the PR-ordering
variables (which also fixes the cutvertex-ordering variables) and the
position variables completely determines the embedding of the common
graph $G$.

In this section, we give a set of necessary equations and inequalities
on the position variables of two graphs $\1G$ and $\2G$ that enforce
consistent relative positions on their common graph $G$.  As fixing
the PR-ordering variables may also determine some position variables,
we also have to make sure that these two types of variables are
consistent with each other.

Let $\1G$ be a connected planar graph containing $G$, let $C$ be a
cycle in $G$ (a cycle from the extended cycle basis), and let $H$ be a
connected component of $G$ not containing $C$.  Depending on how $C$
and $H$ are located in $\1G$, different embedding choices of $\1G$
determine the relative position $\pos_C(H)$~\cite{br-drpse-15}.  We
quickly list these embedding choices here and describe the constraints
arising from them in the following sections.

Let $\1\mu$ be an R-node of $\1G$ such that $C$ induces a cycle
$\kappa$ in $\skel(\1\mu)$.  If $\skel(\1\mu)$ has a virtual edge
$\eps$ that is not part of $\kappa$ such that the expansion graph
$\expan(\eps)$ includes a vertex of $H$, then the embedding of
$\skel(\1\mu)$ determines the relative position $\pos_C(H)$.  We say
that $\pos_C(H)$ is \emph{determined by the R-node $\1\mu$}.

Let $\1\mu$ be a P-node of $\1G$ such that $C$ induces a cycle
$\kappa$ in $\skel(\1\mu)$.  Then $\kappa$ consists of two virtual
edges $\eps_1$ and $\eps_2$.  The relative position $\pos_C(H)$ is
determined by the embedding of $\skel(\1\mu)$ if $H$ is contained in
the expansion graph of a virtual edge $\eps$ with $\eps \not= \eps_i$
for $i \in \{1, 2\}$.  We say that $\pos_C(H)$ is \emph{determined by
  the P-node $\1\mu$}.

If $C$ and $H$ belong to the same block $\1B$ of $\1G$, then
$\pos_C(H)$ is either determined by an R-node or by a P-node of
$\1B$~\cite{br-drpse-15}.  Otherwise, let $v$ be the cutvertex in
$\1G$ that separates $C$ from $H$ and belongs to the block of $C$.  If
$v$ is not contained in $C$, then we introduce the variable
$\pos_C(v)$ corresponding to the decision of embedding $v$ to the
right or to the left of $C$.  We call such a variable the
\emph{cutvertex position variables}.  Clearly, $H$ and $v$ lie on the
same side in each embedding of $\1G$.  Moreover, the relative position
of $v$ with respect to $C$ is determined by an R-node or by a P-node
$\1\mu$ (the belong to the same block of $\1G$).  In this case, we
also say that both variables, $\pos_C(v)$ and $\pos_C(H)$, are
determined by the R-node or P-node $\1\mu$.  Moreover, we also say
that $\pos_C(H)$ is \emph{determined by $\pos_C(v)$}.

If the cutvertex $v$ is contained in $C$, then the relative position
$\pos_C(H)$ is determined by the embedding choices made at the
cutvertex.  We distinguish two cases.  Let $\1S$ be the split
component with respect to the cutvertex $v$ that contains $H$.  If
$\1S$ includes a common edge incident to $v$, we say that
$\pos_C(H)$ is \emph{determined by the common cutvertex $v$}.
Otherwise, we say that $\pos_C(H)$ is \emph{determined by the
  exclusive cutvertex $v$} (note that $v$ might still be a
cutvertex of the common graph in this case).  These two cases are in
so far different as changing $\pos_C(H)$ affects the edge ordering of
the common graph in the former case, whereas it does not in the latter
case.

The component position variables together with the cutvertex position
variables are simply called \emph{position variables} In the following
sections, we describe for each of the four cases different constraints
in the form of equations and inequalities on position variables,
PR-ordering variables, and cutvertex-ordering variables.

\subsubsection{Relative Positions Determined by R-Nodes}

We start with the simplest case, where $\pos_C(H)$ is determined by an
R-node $\1\mu$ of $\1G$.  If $\pos_C(H)$ is also determined by a
cutvertex position variable $\pos_C(v)$, we simply set $\pos_C(H) =
\pos_C(v)$ to make sure that the cutvertex $v$ and the component
$H$ are on the same side of $C$.

Otherwise, $C$ induces a cycle $\kappa$ in $\skel(\1\mu)$ and $H$
shares a vertex with the expansion graph of a virtual edge $\eps$ not
belonging to $\kappa$.  The PR-ordering variable $\ord(\1\mu)$
determines whether $\skel(\1\mu)$ is embedded according to its
reference embedding ($\ord(\1\mu) = 0$) or whether it is flipped
($\ord(\1\mu) = 1$).

Assume $\eps$ lies to the right of $\kappa$ in the reference embedding
of $\skel(\1\mu)$.  Then $\ord(\1\mu) = 0$ implies that $\eps$ lies to
the right of $\kappa$, which implies that $H$ lies to the right of
$C$, i.e., $\pos_C(H) = 0$.  Moreover, flipping $\skel(\1\mu)$ brings
$\eps$ to the left of $\kappa$.  Thus, $\ord(\1\mu) = 1$ implies
$\pos_C(H) = 1$, which yields $\ord(\1\mu) = \pos_C(H)$ as necessary
condition.  If $\eps$ lies to the left of $\kappa$ in the reference
embedding, we obtain $\ord(\1\mu) \not= \pos_C(H)$ instead.

Analogously, we set $\ord(\1\mu) = \pos_C(v)$ or $\ord(\1\mu) \not=
\pos_C(v)$ for every cutvertex position variable $\pos_C(v)$
determined by $\1\mu$.

\paragraph{Sufficiency.}

We call the constraints defined in this section the \emph{R-node
  constraints}.  The following lemma states the more or less obvious
necessity and sufficiency of the R-node constraints.


\begin{lemma}
  \label{lem:r-node-constr-suff}
  Let $\1\mu$ be an R-node of $\1G$ and let $X$ be the set of position
  variables determined by $\1\mu$ together with the PR-ordering
  variable $\ord(\1\mu)$.  A variable assignment $\alpha$ of $X$ can
  be realized by an embedding of $\1G$ if and only if $\alpha$
  satisfies the R-node constraints.
%
\end{lemma}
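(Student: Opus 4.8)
The plan is to verify both directions essentially by unwinding the definitions set up in this subsection, exploiting the fact that an R-node skeleton has a planar embedding that is unique up to the single binary flip encoded by $\ord(\1\mu)$.

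For necessity, suppose $\alpha$ is realized by an embedding $\E$ of $\1G$. The embedding $\E$ induces an embedding of $\skel(\1\mu)$, which is either the reference embedding or its flip; this choice is exactly the value $\alpha(\ord(\1\mu))$. For each position variable $\pos_C(H) \in X$ determined by $\1\mu$, the cycle $C$ induces a cycle $\kappa$ in $\skel(\1\mu)$ and $H$ lies inside the expansion graph of some virtual edge $\eps$ not on $\kappa$; since $\eps$ lies entirely on one side of $\kappa$ in each of the two embeddings of $\skel(\1\mu)$, and flipping swaps the two sides, the side of $C$ on which $H$ lies in $\E$ is determined by $\alpha(\ord(\1\mu))$ in precisely the way the R-node constraint records (equation or inequality according to the reference side of $\eps$). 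The same argument applies verbatim to cutvertex position variables $\pos_C(v)$ determined by $\1\mu$, and for a variable $\pos_C(H)$ that is additionally tied to some $\pos_C(v)$ via the constraint $\pos_C(H) = \pos_C(v)$, the point is that $H$ and $v$ always lie in the same face of $C$ because $v$ separates $H$ from $C$ in $\1G$. Hence $\alpha$ satisfies all R-node constraints.

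For sufficiency, suppose $\alpha$ satisfies the R-node constraints. Start from any embedding $\E_0$ of $\1G$ that extends the reference embedding of $\skel(\1\mu)$ (such an embedding exists since the reference embedding of an R-node skeleton is a genuine planar embedding). If $\alpha(\ord(\1\mu)) = 1$, flip the contribution of $\skel(\1\mu)$ inside $\E_0$ to obtain $\E$; otherwise set $\E = \E_0$. By construction $\E$ realizes the value $\alpha(\ord(\1\mu))$. For every position variable in $X$ determined by $\1\mu$, the side of $C$ realized by $\E$ is, by the same reasoning as above, the value forced by $\alpha(\ord(\1\mu))$ through the R-node constraint, which equals $\alpha$ of that variable since $\alpha$ satisfies the constraints. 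For a variable $\pos_C(H)$ governed by $\pos_C(H) = \pos_C(v)$, the embedding $\E$ places $H$ on the same side of $C$ as $v$ (again because $v$ separates them), so its realized value agrees with the realized value of $\pos_C(v)$, which agrees with $\alpha$. Thus $\E$ realizes $\alpha$ on all of $X$.

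I expect essentially no serious obstacle here; the statement is, as the text says, ``more or less obvious.'' The only mild subtlety worth stating carefully is the independence of the flip of $\skel(\1\mu)$ from the rest of the embedding, i.e.\ that flipping one R-node skeleton in a fixed embedding of the remaining skeletons yields another valid embedding of $\1G$ and changes exactly the relative positions determined by $\1\mu$ (leaving all other position variables in $X$ untouched, since $X$ only contains variables determined by $\1\mu$). This is immediate from the correspondence between embeddings of a biconnected graph and independent embedding choices of its SPQR-tree skeletons, recalled in the preliminaries.
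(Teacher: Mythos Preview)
Your proposal is correct and matches the paper's approach: the paper in fact gives no proof at all for this lemma, merely stating that the R-node constraints are ``more or less obvious'' necessary and sufficient conditions, and your argument is precisely the straightforward unwinding of definitions that justifies this claim. The one subtlety you flag---that flipping $\skel(\1\mu)$ independently of the other skeletons yields a valid embedding of $\1G$ and toggles exactly the relative positions determined by $\1\mu$---is indeed the only point requiring any care, and it follows from the SPQR-tree correspondence as you say.
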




\subsubsection{Relative Positions Determined by P-Nodes}

Let $\1\mu$ be a P-node of $\1G$ with poles $u$ and $v$ and let
$\eps_1, \dots, \eps_k$ be the virtual edges of $\skel(\1\mu)$.  Let
$C$ be a cycle in $G$, that induces a cycle $\kappa$ in
$\skel(\1\mu)$.  Without loss of generality, $\kappa$ consists of the
virtual edges $\eps_1$ and $\eps_2$.  Let $H$ be a common connected
component whose relative position with respect to $C$ is determined by
$\1\mu$.

As for R-nodes, we first consider the case that a relative position
$\pos_C(H)$ is determined by a cutvertex position variable
$\pos_C(v)$.  In this case we simply set $\pos_C(H) = \pos_C(v)$.
In the following we assume that $\pos_C(H)$ is determined by the
P-node $\1\mu$ but not by a cutvertex position variable.  All
cutvertex position variables $\pos_C(v)$ are handled analogously.

As $\pos_C(H)$ is determined by $\1\mu$, the common connected
component $H$ is contained in a virtual edge $\eps \in \{\eps_3,
\dots, \eps_k\}$ not belonging to $\kappa$.  Note that embedding
$\eps$ to the right or to the left of $\kappa$ determines not only the
relative position of $H$ but the position of every connected component
that is contained in the expansion graph of $\eps$.  We make sure that
these relative positions fit to each other by introducing a new
variable $\pos_\kappa(\eps)$ with the interpretation that
$\pos_\kappa(\eps) = 0$ if and only if $\eps$ is embedded to the right
of $\kappa$.  Clearly, $\pos_\kappa(\eps) = \pos_C(H)$ is a necessary
condition for every connected component $H$ contained in the expansion
graph of $\eps$.

If there is another common cycle $C'$ inducing the same cycle $\kappa$
in $\skel(\mu)$, we use the same variable $\pos_\kappa(\eps)$ to
determine on which side of $\kappa$ the virtual edge $\eps$ is
embedded.  If $C'$ induces the same cycle oriented in the opposite
direction, we use the negation of $\pos_\kappa(\eps)$ instead.

The above constraints are not sufficient for two reasons.  First,
changing the embedding of $\skel(\1\mu)$ may change edge orderings in
the common graph.  In this case, there are PR-ordering variables
partially determining the embedding of $\skel(\1\mu)$ and we have to
make sure that their values and the values of the position variables
determined by $\1\mu$ fit to each other.  Second, if different common
cycles induce different cycles in $\skel(\1\mu)$, then not every
combination of relative positions with respect to these cycles can
actually be achieved by embedding the skeleton.

\paragraph{Connection to Ordering Variables.}

As mentioned before, we have to add additional constraints to ensure
consistency between the position variables and the ordering variables.
Let $\kappa$ be the cycle induced by the cycle $C$ in $\skel(\1\mu)$
and assume without loss of generality that $\kappa$ consists of the
virtual edges $\eps_1$ and $\eps_2$.  Embedding another virtual edge
$\eps$ to the right or to the left of $\kappa$ changes the edge
ordering at the poles $u$ or $v$ if and only if the expansion graph of
$\eps$ includes common edges incident to $u$ or $v$, respectively.  In
this case, we have a PR-ordering variable determining this embedding
choice and we can make sure that the edge ordering and the relative
positions fit to each other using an equation or an inequality.

To make this more precise, we define the following five types of
virtual edges.  Each virtual edge $\eps \in \{\eps_1, \dots, \eps_k\}$
is of exactly one of the following five types.
\begin{enumerate}[\bf{Type }1.]
\item $\expan(\eps)$ includes a common path from $u$ to $v$.
\item $\expan(\eps)$ has common edges incident to $u$ and to $v$ but
  is not of Type~1.
\item $\expan(\eps)$ has a common edge incident to $u$ but none
  incident to $v$.
\item $\expan(\eps)$ has a common edge incident to $u$ but none
  incident to $v$.
\item $\expan(\eps)$ has no common edges incident to $u$ or to $v$.
\end{enumerate}

As the cycle $\kappa$ consists of $\eps_1$ and $\eps_2$, they must
both be of Type~1.  Choosing the relative position of $\eps$ with
respect to $\kappa$ affects the edge ordering at $u$ or at $v$ if and
only if $\eps$ is not of Type~5.  If $\eps$ is of Type~1 or Type~2,
then there is a PR-ordering variable $\ord(\1\mu)$ determining the
order of the edges $\eps_1, \eps_2, \eps$.  If the ordering
corresponding to $\ord(\1\mu) = 0$ has $\eps$ to the right of
$\kappa$, we set $\ord(\1\mu) = \pos_\kappa(\eps)$.  Otherwise, we set
$\ord(\1\mu) \not= \pos_\kappa(\eps)$.

If $\eps$ is of Type~3, the edge ordering of $\eps_1, \eps_2, \eps$ is
determined by the PR-ordering variable $\ord(\1{\mu_s})$.  As before,
we get either the equation $\ord(\1{\mu_s}) = \pos_\kappa(\eps)$ or
the inequality $\ord(\1{\mu_s}) \not= \pos_\kappa(\eps)$.  The case
that $\eps$ is of Type~4 is analogous, except that we have the
PR-ordering variable $\ord(\1{\mu_t})$ instead of $\ord(\1{\mu_s})$.

\paragraph{Multiple Cycles.}

If there are common cycles inducing different cycles in the P-node
$\1\mu$, then at least three virtual edges must be of Type~1 (i.e.,
their expansion graph includes a common path between the poles $u$ and
$v$).  As we assume that $\1\mu$ has common P-node degree~3, three
virtual edges are of Type~1 and all remaining virtual edges are of
Type~5.  Let $\eps_1$, $\eps_2$, and $\eps_3$ be the virtual edges of
Type~1 and let $\eps$ be another virtual edge of $\skel(\1\mu)$.
Denote the cycle consisting of $\eps_i$ and $\eps_j$ by $\kappa_{i,
  j}$ ($i, j \in \{1, 2, 3\}$ and $i < j$).  To simplify the notation,
we use $\pos_{i, j}(\eps)$ as short form for the relative position
$\pos_{\kappa_{i, j}}(\eps)$.

Let $\eps \in \{\eps_4, \dots, \eps_k\}$ be another virtual edge of
$\skel(\1\mu)$.  Then we are interested in the three position
variables $\pos_{1, 2}(\eps)$, $\pos_{1, 3}(\eps)$, and $\pos_{2,
  3}(\eps)$, which are not independent from each other.  Moreover,
which combinations of relative positions can actually be realized
depends on the ordering of $\eps_1$, $\eps_2$, and $\eps_3$.  This
ordering is determined by the PR-ordering variable $\ord(\1\mu)$.  In
the remainder of this section, we first figure out which combinations
of values for $\ord(\1\mu)$, $\pos_{1, 2}(\eps)$, $\pos_{1, 3}(\eps)$,
and $\pos_{2, 3}(\eps)$ are actually possible and then show that
restricting the variables to these combinations is equivalent to a set
of equations and inequalities.

\begin{figure}
  \centering
  \includegraphics[page=1]{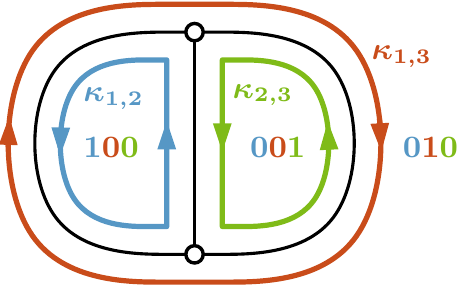}
  \caption{Three virtual edges of Type~1 in a P-node with the three
    different virtual cycles $\kappa_{1, 2}$, $\kappa_{1, 3}$, and
    $\kappa_{2, 3}$.  For each face, the variable assignment
    corresponding to this face is given.}
  \label{fig:4-constr-in-p-node}
\end{figure}

When fixing the order variable $\ord(\1\mu)$ (without loss of
generality to 0), we get the situation shown in
Figure~\ref{fig:4-constr-in-p-node}.  From the set of eight
combinations for the three position variables, only the three
combinations 100, 010 and 001 are possible.  When changing the order
of the edges (setting $\ord(\1\mu)$ to~1), every bit is reversed.
Thus, for the tuple $(\ord(\1\mu), \pos_{1, 2}(\eps),\allowbreak \pos_{1,
  3}(\eps),\allowbreak \pos_{2, 3}(\eps))$ we get the possibilities 0100, 0010,
0001 and their complements 1011, 1101, 1110.  A restriction equivalent
to this is called \emph{P-node 4-constraint} (where equivalent means
that an arbitrary subset of variables may be negated).

We note that, in the short version of this paper~\cite{bkr-se-13}, we
missed the fact that the combinations 1000 and 0111 are not possible.
This lead to the wrong assumption that a combination is feasible if
and only if there is an odd number of 1s, which can be expressed as a
linear equation over $\mathbb F_2$.  As a matter of fact, the P-node
4-constraint allows six different combinations, which is not a power
of two and can thus not be the solution space of a system of linear
equations over $\mathbb F_2$.  Thus, a P-node 4-constraint is in
particular not equivalent to a set of equations or inequalities.  We
resolve this issue with the following lemma in conjunction with the
new results from Section~\ref{sec:union-separ-pairs}.

\begin{lemma}
  \label{lem:p-node-4-const-simple}
  If two variables of a P-node 4-constraint are known to be equal or
  unequal, the P-node 4-constraint is equivalent to a set of equations
  and inequalities.
\end{lemma}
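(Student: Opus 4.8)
The plan is to reduce the statement to a one-line finite check after a small amount of bookkeeping. Recall that a P-node 4-constraint is, by definition, a restriction on four Boolean variables whose set of satisfying assignments is, up to negating an arbitrary subset of the variables, exactly $S = \{0100,\, 0010,\, 0001,\, 1011,\, 1101,\, 1110\}$ (listing the variables in the order $a, x, y, z$ below). Write the four variables as $a$ (the order variable $\ord(\1\mu)$) and $x, y, z$ (the position variables $\pos_{1,2}(\eps)$, $\pos_{1,3}(\eps)$, $\pos_{2,3}(\eps)$); then membership in $S$ says precisely that exactly one of $x,y,z$ equals $1$ when $a = 0$, and exactly two of them equal $1$ when $a = 1$. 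In particular $S$ is invariant under every permutation of $x, y, z$, so among the three position variables there is no distinguished one, whereas $a$ is distinguished.

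First I would normalize the extra hypothesis. We are told that two of the four variables, say $v_i$ and $v_j$, satisfy $v_i = v_j$ or $v_i \neq v_j$. Negating $v_i$ turns an equality between $v_i$ and $v_j$ into an inequality and vice versa, and it turns the given P-node 4-constraint into another P-node 4-constraint (this is exactly the ``up to negation'' clause in the definition); hence it is enough to treat the case of an equality $v_i = v_j$, and we may also assume that the 4-constraint is literally $S$ on the variables $(a, x, y, z)$. Using the permutation symmetry of $S$ on $\{x, y, z\}$, only two cases for the unordered pair $\{v_i, v_j\}$ remain: either $\{v_i, v_j\} = \{x, y\}$, or $\{v_i, v_j\} = \{a, x\}$.

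Then I would simply intersect $S$ with the equality in each case. If $x = y$, the assignments in $S$ that survive are $0001$ and $1110$, and these are exactly the solutions of the system $\{a = x,\ a = y,\ a \neq z\}$. If $a = x$, the surviving assignments are $0010$, $0001$, $1101$, $1110$, which are exactly the solutions of $\{a = x,\ y \neq z\}$. In both cases the combined constraint is equivalent to a set of equations and inequalities. Undoing the normalization only re-negates some variables in these equations and inequalities (and permutes the roles of $x, y, z$), which keeps them equations and inequalities, so the lemma follows.

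The only point requiring care — there is no genuine obstacle — is the normalization step: one has to check that a single negation simultaneously fixes the $=$/$\neq$ of the known relation and preserves membership in the family of P-node 4-constraints, and that the three position variables really are interchangeable inside $S$ (as opposed to $a$, which is not symmetric with them, forcing the two-case split). Once this is set up, the proof is the two finite intersections above.
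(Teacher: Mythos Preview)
Your overall strategy and the two finite checks you perform are correct, but the normalization step has a genuine gap. You claim you may simultaneously assume the extra relation is an equality \emph{and} that the constraint is literally $S$. Each reduction is individually fine: negating $v_i$ flips $=$/$\neq$ while keeping the constraint in the family of P-node 4-constraints; and any P-node 4-constraint can be brought to $S$ by some negation. But the two reductions do not compose. After you negate $v_i$ to force equality, the resulting constraint $C'$ equals $\sigma(S)$ for some negation $\sigma$; to bring $C'$ back to $S$ you must apply $\sigma^{-1}$, and $\sigma^{-1}$ may well negate exactly one of $v_i,v_j$, turning your equality back into an inequality. Concretely, the only negations fixing $S$ are the identity and full complementation, and neither changes the $=$/$\neq$ status of any pair; hence the instance $(S,\ a\neq x)$ cannot be normalized to $(S,\ a=x)$, and your two checked cases do not cover it. Your closing remark that ``a single negation simultaneously fixes the $=$/$\neq$ of the known relation and preserves membership in the family'' is true but beside the point: preserving membership in the family is strictly weaker than preserving the specific constraint $S$.

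The fix is immediate: also check the inequality cases for $S$. One finds $S\cap\{x\neq y\}=\{0100,0010,1011,1101\}=\{a=z,\ x\neq y\}$ and $S\cap\{a\neq x\}=\{0100,1011\}=\{a\neq x,\ a=y,\ a=z\}$. The paper's proof avoids the issue altogether by never normalizing the relation: it groups $S$ into the three complementary pairs $\{0100,1011\}$, $\{0010,1101\}$, $\{0001,1110\}$, observes that for every pair of variables neither $=$ nor $\neq$ holds on all three pairs (so imposing either keeps exactly one or two of the three pairs), and then writes each of the resulting subsets explicitly as equations and inequalities. This handles both $=$ and $\neq$ uniformly.
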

\begin{proof}
  We basically have the three possibilities 0100, 0010, and 0001 and
  their complements for the variables $abcd$.  No pair of variables is
  equal in all three possibilities and no pair of variables in unequal
  in all three possibilities.  Thus requiring equality or inequality
  for one of the pairs eliminates exactly one or two of these three
  possibilities.  If exactly one possibility and its complement
  remains, this is obviously equivalent to a set of equations and
  inequalities.

  If 0100 and 0010 (and their complements) remain, this is equivalent
  to $a = d$ and $b \not= c$.  If 0100 and 0001 remain, this is equivalent
  to $a = c$ and $b \not= d$.  Finally, if 0010 and 0001 remain, this
  is equivalent to $a = b$ and $c \not= d$.
\end{proof}

In the following, we show that for every P-node 4-constraint, we
always find an equation or inequality between a pair of variables,
turning all P-node 4-constraints into a set of equations and
inequalities.

Consider the union graph $G^\cup$.  If the poles $\{u, v\}$ are a
separating pair in $G^\cup$, then each split component is the union of
the expansion graphs of several virtual edges of $\skel(\1\mu)$.  As
the expansion graphs of $\eps_1$, $\eps_2$, and $\eps_3$ have common
$uv$-paths, we can assume that they are not separated
(Theorem~\ref{thm:only-special-union-separating-pairs}).  Moreover,
having common P-node degree~3 implies that none of the other expansion
graphs has a common edge incident to $u$ or to $v$ (they are all of
Type~5).  Thus, again by
Theorem~\ref{thm:only-special-union-separating-pairs}, we can assume
that $\{u, v\}$ is not a separating pair in $G^\cup$.

It follows that there must be a path $\pi$ in $G^\cup$ that connects a
vertex of $\expan(\eps)$ with a vertex of (without loss of generality)
$\expan(\eps_1)$ that does not pass through $u$ or $v$ or vertices of
the expansion graphs of $\eps_2$ and $\eps_3$.  It follows that the
relative position of $\eps$ with respect to $\kappa_{2, 3}$ must be
the same as the relative position of any internal vertex of
$\expan(\eps_1)$ with respect to $\kappa_{2, 3}$.  As this relative
position is determined by the order of $\eps_1$, $\eps_2$, and
$\eps_3$, we obtain either $\ord(\1\mu) = \pos_{2, 3}(\eps)$ or
$\ord(\1\mu) \not= \pos_{2, 3}(\eps)$.

\paragraph{Sufficiency.}

We call the constraints defined in this section the \emph{P-node
  constraints}.  The following lemma follows directly from the
previous considerations.

\begin{lemma}
  \label{lem:p-node-constr-suff}
  Let $\1\mu$ be a P-node of $\1G$ and let $X$ be the set of position
  variables determined by $\1\mu$ together with the PR-ordering
  variables of $\1\mu$.  A variable assignment $\alpha$ of $X$ can be
  realized by an embedding of $\1G$ if and only if $\alpha$ satisfies
  the P-node constraints.
\end{lemma}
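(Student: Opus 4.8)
The plan is to prove the two directions separately, treating necessity as bookkeeping and putting all the real work into sufficiency. For necessity I would simply collect the observations made in the discussion preceding the lemma: every equation and inequality that makes up the P-node constraints was introduced there as a condition forced by the embedding of $\skel(\1\mu)$ — the identifications of $\pos_\kappa(\eps)$ with the component position variables $\pos_C(H)$ of the components $H$ contained in $\expan(\eps)$; the equations tying a relative position to a cutvertex position variable $\pos_C(v)$; the equations/inequalities linking the $\pos_\kappa(\eps)$ to the ordering variables $\ord(\1\mu)$, $\ord(\1{\mu_s})$, $\ord(\1{\mu_t})$ according to the type of $\eps$; and the reformulated P-node 4-constraints. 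So necessity takes only a sentence of assembly.

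For sufficiency the key step is to turn a satisfying assignment $\alpha$ of $X$ into a concrete planar embedding of $\skel(\1\mu)$, i.e.\ a cyclic order of $\eps_1,\dots,\eps_k$, whose induced variable values match $\alpha$. Since the embeddings of the skeletons of the SPQR-tree of a block (and the embeddings of the different blocks) can be chosen independently, extending this by arbitrary embeddings of all remaining skeletons yields an embedding of $\1G$ that realizes $\alpha$, because every variable in $X$ depends only on the embedding of $\skel(\1\mu)$. I would then follow the case split already set up in the text. In the first case, where all common cycles through $\skel(\1\mu)$ induce the same cycle $\kappa=\{\eps_1,\eps_2\}$ (up to orientation), I would place $\kappa$ and put each remaining virtual edge on the side of $\kappa$ dictated by the value of $\pos_\kappa(\cdot)$ under $\alpha$, then order the edges on each side arbitrarily. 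Here I would check that the induced cyclic orders of common edges around $u$ and around $v$ agree with the ordering variables: common P-node degree~3 guarantees that besides $\eps_1,\eps_2$ at most one virtual edge carries common edges at each pole, so each such order is a single binary choice, and the constraint connecting $\pos_\kappa$ to the relevant $\ord$ variable pins it down exactly — with Type~1 and Type~2 edges sharing the single variable $\ord(\1\mu)$, which is what keeps the orders at $u$ and at $v$ coherent. Hanging components at cutvertices are handled by the equations $\pos_C(H)=\pos_C(v)$.

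The main obstacle is the second case, where different common cycles induce different cycles in $\skel(\1\mu)$. I would first recall the earlier observation that common P-node degree~3 forces exactly three Type~1 virtual edges $\eps_1,\eps_2,\eps_3$ and makes all others Type~5; fixing the order of $\eps_1,\eps_2,\eps_3$ by $\ord(\1\mu)$ creates three faces, one between each consecutive pair, and each remaining virtual edge $\eps$ must land in one of them. The subtlety is that the admissible tuples $(\ord(\1\mu),\pos_{1,2}(\eps),\pos_{1,3}(\eps),\pos_{2,3}(\eps))$ form the six-element P-node 4-constraint, which is not by itself a conjunction of equations and inequalities. The resolution, as prepared in the text, is that Theorem~\ref{thm:only-special-union-separating-pairs} lets us assume $\{u,v\}$ is not a separating pair of $G^\cup$, so a path in $G^\cup$ forces an extra equation or inequality between $\ord(\1\mu)$ and one of the three position variables; Lemma~\ref{lem:p-node-4-const-simple} then turns the 4-constraint into ordinary equations and inequalities, and conversely any $\alpha$ satisfying these lies in one of the six tuples, which designates a unique face for $\eps$. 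Placing $\eps$ together with its whole expansion graph — hence all components it contains — into that face, and noting that Type~5 edges impose no condition on the edge orderings at the poles, completes the embedding with no hidden consistency obligation remaining. I expect the final write-up to be short, since the heavy lifting was front-loaded into the preceding sections, but I would be explicit about why the assignment-to-embedding direction leaves nothing further to check.
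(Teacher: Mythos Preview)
Your proposal is correct and follows exactly the approach the paper intends: the paper's own proof is the single sentence ``follows directly from the previous considerations,'' and what you have written is precisely a careful unpacking of those considerations, including the case split on whether one or three Type~1 virtual edges are present and the use of Lemma~\ref{lem:p-node-4-const-simple} together with Theorem~\ref{thm:only-special-union-separating-pairs} in the latter case.
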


\subsubsection{Relative Positions Determined by Common Cutvertices}

Recall that the relative position $\pos_C(H)$ is determined by a
common cutvertex $v$ of $\1G$ if $C$ contains $v$ and $H$ lies in a
split component $\1S$ (with respect to $v$) different from the split
component containing $C$ such that $\1S$ has a common edge incident to
$v$.

First note that the whole split component $\1S$ has to be embedded on
one side of $C$.  Thus, for every common connected component in $\1S$,
we would get the same set of constraints.  To reduce the number of
constraints, we introduce the variable $\pos_C(\1S)$ representing the
decision of embedding $\1S$ either to the right or to the left of $C$.
Clearly, $\pos_C(H) = \pos_C(\1S)$ for every common connected
component $H$ in~$\1S$ is a necessary condition.

Note that this condition is very similar to the first type of
constraints we required for P-nodes (connected components in the
expansion graph of the same virtual edge have the same relative
positions).  As for the P-nodes, we have to address two potential
issues.  First, embedding the split component $\1S$ to one side or
another of $C$ changes the edge ordering around the cutvertex $v$.
Second, if there are multiple cycles through $v$, then the relative
positions of $\1S$ with respect to all these cycles must be
consistent.

\paragraph{Connection to Ordering Variables.}

Let $\1B$ be the block of $\1S$ containing $v$ and let $e_1$ and $e_2$
be the two edges of $C$ incident to $v$.  Moreover, let $e$ be a
common edge of $\1B$ incident to $v$.  Recall that the cyclic order of
$e_1$, $e_2$, and $e$ is described by the cutvertex-ordering variable
$\ord(e_1, e_2, \1B)$.

Assume without loss of generality that $e_1$ is oriented towards $v$
and $e_2$ is oriented away from $v$ (in the cycle $C$).  Then the
(clockwise) cyclic order $e_1, e, e_2$ forces the block $\1B$, and
thus the whole split component $\1S$, to lie left of $C$.  The
opposite cyclic order forces $\1S$ to the right of $C$.  Thus,
depending on the orientation of $C$, we either get $\ord(e_1, e_2,
\1B) = \pos_C(\1S)$ or $\ord(e_1, e_2, \1B) \not= \pos_C(\1S)$ as
necessary conditions.

\paragraph{Multiple Cycles.}

Assume multiple common cycles $C_1, \dots, C_k$ contain the cutvertex
$v$ and assume that these cycles are already embedded.  We have to
make sure that every assignment to the variables $\pos_{C_i}(\1S)$ for
$i = 1, \dots, k$ actually corresponds to a face of $C_1 \cup \dots
\cup C_k$ that is incident to $v$.

We cannot directly express this requirement as a set of equations and
inequalities on the position variables.  However, if we assume that a
given variable assignment for the cutvertex-ordering variables of $v$
can be realized by an embedding of $\1G$ (which is ensured by the
constraints from Section~\ref{sec:cons-edge-order}), then the above
constraints establishing the connection between the cutvertex-ordering
variables and the position variables make sure that the corresponding
values for the position variables are also realized.

\paragraph{Sufficiency.}

We call the constraints from this section the \emph{common cutvertex
  constraints}.  Let $\alpha$ be a variable assignment for the
cutvertex-ordering variables of a cutvertex $v$.  We say that $\alpha$
is \emph{order realizable}, if $\1G$ admits an embedding realizing
$\alpha$.  We obtain the following lemma.

\begin{lemma}
  \label{lem:common-cutvertex-constr}
  Let $X$ be the set of position variables that are determined by the
  common cutvertex $v$ in $\1G$ and let $Y$ be the cutvertex-ordering
  variables of $v$.  A variable assignment $\alpha$ of $X \cup Y$ can
  be realized by an embedding of $\1G$ if and only if $\alpha$
  satisfies the common cutvertex constraints and $\rst{\alpha}{Y}$ is
  order realizable.
\end{lemma}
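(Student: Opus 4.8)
The plan is to prove the two implications separately, in both cases reducing the statement about the whole graph $\1G$ to a purely local statement about the cyclic arrangement of split components around the cutvertex $v$, together with the cycles through $v$. First I would fix notation: let $\1S_1, \dots, \1S_m$ be the split components of $\1G$ with respect to $v$, let $C_1, \dots, C_k$ be the cycles of the extended cycle basis (of the various common components) that pass through $v$, and recall that each $C_j$ lives entirely in one split component, say $\1S_{\sigma(j)}$, which is itself one of the components having a common edge incident to $v$. The position variables in $X$ are the $\pos_{C_j}(\1S_i)$ for pairs where $\pos_{C_j}(H)$ is determined by the common cutvertex $v$, i.e.\ $\1S_i \neq \1S_{\sigma(j)}$ and $\1S_i$ has a common edge at $v$, and the cutvertex-ordering variables $Y$ encode, for each block $\1B$ at $v$ carrying a common edge, the cyclic order of the common edges at $v$ (and hence the cyclic interleaving of the blocks and of $C_j$'s two edges at $v$).

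The forward direction (necessity) is routine and I would dispatch it quickly: given an embedding $\E$ of $\1G$ realizing $\alpha$, the restriction $\rst{\alpha}{Y}$ is order realizable by definition (it is realized by the same $\E$), and the common cutvertex constraints are necessary because (i) a split component is connected, hence lies in a single face of each $C_j$ not containing it, so $\pos_{C_j}(H)=\pos_{C_j}(\1S_i)$ for all common components $H\subseteq\1S_i$; and (ii) the ``connection to ordering variables'' equation/inequality just records the geometric fact that which side of $C_j$ the block $\1B$ sits on is exactly determined by the clockwise order of $e_1, e, e_2$ around $v$, where $e_1,e_2$ are the two edges of $C_j$ at $v$ and $e$ a common edge of $\1B$ at $v$ --- this is a planar-drawing triviality around a single vertex.

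The substance is the converse. Suppose $\alpha$ satisfies the common cutvertex constraints and $\rst{\alpha}{Y}$ is order realizable; I must build an embedding of $\1G$ realizing all of $\alpha$. The plan is: first take an embedding $\E_0$ of $\1G$ realizing $\rst{\alpha}{Y}$ --- this fixes the cyclic order of all edges (hence all blocks) around $v$, and in particular fixes the mutual interleaving of the cycles $C_1,\dots,C_k$ at $v$ and the arrangement of the split components in the ``local'' faces at $v$. Now I claim the embedding $\E_0$ may be modified, \emph{without changing anything inside any individual split component and without changing the rotation at $v$}, so that each split component $\1S_i$ (with a common edge at $v$) is moved into whichever face of $C_1\cup\dots\cup C_k$ is prescribed by the values $\alpha(\pos_{C_j}(\1S_i))$. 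The key point --- and the main obstacle --- is to show such a face exists and that the move is consistent: the values $\alpha(\pos_{C_j}(\1S_i))$, $j=1,\dots,k$, for a \emph{fixed} $i$ must single out a genuine face of $C_1\cup\dots\cup C_k$ incident to $v$. This is where both hypotheses are used together: the ``connection to ordering'' constraints force $\alpha(\pos_{C_j}(\1S_i))$ to agree with the side of $C_j$ dictated by $\rst{\alpha}{Y}$ whenever $C_j$ and $\1S_i$ share the block at $v$ determining that side, and since $\rst{\alpha}{Y}$ is realizable there is a consistent cyclic picture at $v$ realizing all these sidedness choices simultaneously; the ``multiple cycles'' paragraph already argued precisely that under these circumstances the tuple of sides corresponds to a face of the union of cycles incident to $v$ (this is the same phenomenon handled for relative positions determined by P-nodes, where the analogous consistency was reduced to the ordering variables, cf.\ Lemma~\ref{lem:p-node-4-const-simple} and the surrounding discussion). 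Having located the target face, I move $\1S_i$ into it by a local homeomorphism fixing $v$ and its rotation; since distinct split components only meet at $v$, and since components without a common edge at $v$ are unconstrained (they may be re-nested freely along exclusive paths, exactly as in the proof of Lemma~\ref{lem:union-sep-pair-no-common-ends}), these moves do not interfere and the edge orderings of the common graph are untouched. The resulting embedding realizes $\alpha$ on $X\cup Y$, completing the proof. I expect the delicate bookkeeping to be: verifying that re-placing $\1S_i$ into the chosen face does not force a change in the rotation at $v$ or disturb some other $\pos$ or $\ord$ value --- essentially checking that the only embedding freedom ``determined by the common cutvertex $v$'' is exactly the choice of these faces, which is what the preceding paragraphs of the section have set up.
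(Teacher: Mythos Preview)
Your argument is correct in outcome, and the forward direction is handled properly. For the converse, however, you add an unnecessary step that obscures the simplicity of the situation. Once you fix an embedding $\E_0$ of $\1G$ realizing $\rst{\alpha}{Y}$, there is nothing left to ``move'': every position variable $\pos_C(\1S)$ appearing in the common cutvertex constraints is linked by one of those constraints to a single cutvertex-ordering variable $\ord(e_1,e_2,\1B)\in Y$, and that (in)equality records exactly the geometric fact that the cyclic order of $e_1,e,e_2$ around $v$ determines on which side of $C$ the block $\1B$---and hence the whole split component $\1S$---lies. Since $\E_0$ realizes $\alpha(\ord(e_1,e_2,\1B))$ and $\alpha$ satisfies the constraint, $\E_0$ already realizes $\alpha(\pos_C(\1S))$; the equations $\pos_C(H)=\pos_C(\1S)$ then handle all component position variables in $X$ automatically. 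This is precisely the argument the paper sketches in the ``Multiple Cycles'' paragraph preceding the lemma, and it dissolves your closing worry about whether re-placing $\1S_i$ might disturb other $\pos$ or $\ord$ values: no re-placing ever happens. Incidentally, your appeal to Lemma~\ref{lem:p-node-4-const-simple} is misplaced---that lemma treats P-node 4-constraints and plays no role at cutvertices.
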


\subsubsection{Relative Positions Determined by Exclusive Cutvertices}

As in the previous section, let $\1S$ be the split component with
respect to the cutvertex $v$ that contains the connected component
$H$.  As before, every common connected component of $\1S$ has to be
embedded on the same side of $C$.  However, in this case, we need
slightly stronger constraints.

Let $H_v$ be the connected component of the common graph that includes
the cutvertex $v$.  Let further $B_1^\cup, \dots, B_k^\cup$ be the
union bridges of $H_v$ (note that this is the first time, where the
second graph $\2G$ comes into play).  As the union bridge $B_i^\cup$
(for $i = 1, \dots, k$) has to be completely embedded into a single
face of $H_v$, every common connected component in $B_i^\cup$ lies on
the same side of $C$.  As before for the split components, we
represent the decision of putting $B_i^\cup$ to the right or to the
left of $C$ using the variable $\pos_C(B_i^\cup)$.  Then the
constraint $\pos_C(B_i^\cup) = \pos_C(H)$ for every common connected
component $H$ in $B_i^\cup$ is clearly necessary.  Note that the
resulting constraints are strictly stronger than setting $\pos_C(\1S)
= \pos_C(H)$ for every common connected component $H$ in $\1S$, as
$\1S$ is contained in a single union bridge.

Recall that (in contrast to the previous section) $\1S$ does not
contain a common edge incident to $v$.  It follows that the decision
of putting $H$ to the right or to the left of $C$ in an embedding of
$\1G$ has no influence on the edge ordering at $v$.  Thus, there is no
need for further constraints to ensure consistency between edge
orderings and relative positions.  Moreover, we will see that there is
no need for additional constraints to make sure that the relative
positions actually describe a face (in case $v$ is contained in
multiple cycles).


\paragraph{Sufficiency.}

We call the constraints defined in this section the \emph{exclusive
  cutvertex constraints}.  Assume that $H_v$ is already embedded.
Then we can choose to embed $\1S$ into an arbitrary face of $H_v$
incident to $v$, which determines the relative positions of the
components in $\1S$ with respect to cycles through $v$ without
affecting any other embedding choice.  It only remains to make sure
that the relative positions of $H$ with respect to all cycles through
$v$ actually describe a face incident to $v$.  Unfortunately, the
exclusive cutvertex constraints do not guarantee this property and we
are not able to give additional constraints enforcing it.

However, we can prove the following lemma by exploiting the fact that
$\pos_C(H)$ is in $\2G$ determined by an R-node, by a P-node, or by a
common cutvertex.  The R-node, P-node, and common cutvertex
constraints of $\2G$ then help to prove the existence of the desired
face.

We call the union of all R-node, all P-node, all common cutvertex, and
all exclusive cutvertex constraints the \emph{position constraints}. 

\begin{figure}
  \centering
  \includegraphics[page=1]{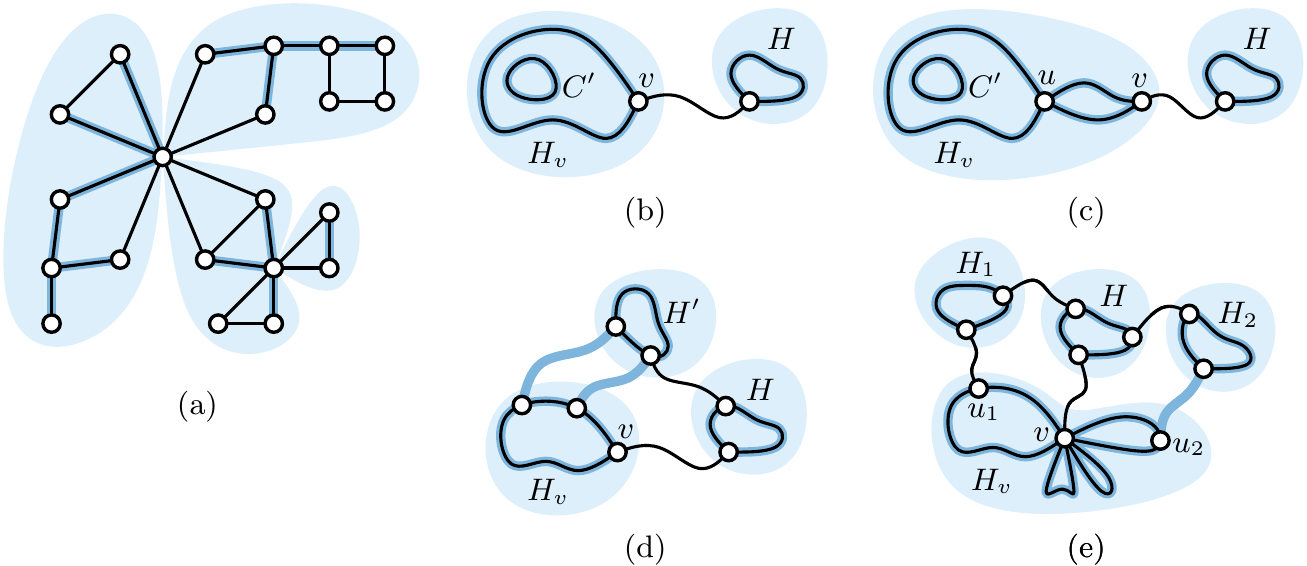}
  \caption{(a)~The extended blocks of the graph $\1G$.  (b)~The
    cutvertex $v$ and the cycle $C'$ are contained in the same block
    of $H_v$.  (c)~$v$ and $C'$ are in different blocks.  (d)~The
    exclusive bridge containing $H'$ has two attachments in $H_v$.
    Note that $H'$ and $H$ belong to the same union bridge.  (e)~The
    union bridge containing $H$ (and $H_1$ and $H_2$) has attachments
    in different blocks of $H_v$, i.e., it is not block-local.}
  \label{fig:rel-pos-excl-cutv}
\end{figure}

\begin{lemma}
  \label{lem:exclusive-cutvertex-constraints-wrap-up}
  Let $\1G$ and $\2G$ have common P-node degree~3 and simultaneous
  cutvertices of common degree at most~3.  Let $\alpha$ be a variable
  assignment for the ordering and position variables satisfying the
  ordering and position constraints with respect to $\1G$ and $\2G$.
  Then $\1G$ admits an embedding that realizes~$\alpha$.
\end{lemma}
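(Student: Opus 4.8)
The plan is to build an embedding of $\1G$ that realizes $\alpha$ by fixing, structure by structure, the local embedding decisions prescribed by $\alpha$, and to check that the already-established sufficiency lemmas for the individual structures combine globally. The constraints pin the embedding down everywhere except at exclusive cutvertices of $\1G$; there the missing information must be recovered from an embedding of $\2G$, and this is where I expect the real work to lie.

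Since $\alpha$ satisfies the ordering constraints, Lemma~\ref{lem:ordering-no-sim-cutv-with-cutv-var} already supplies an embedding of $\1G$ whose induced edge orderings agree with all ordering variables; in particular every cutvertex-ordering variable is order realizable. I would then walk through the structures determining position variables and correct this embedding without disturbing any edge ordering. If $\pos_C(H)$ is determined by an R-node $\1\mu$, the R-node constraints express it directly through $\ord(\1\mu)$, which is already realized, and Lemma~\ref{lem:r-node-constr-suff} certifies consistency. If $\pos_C(H)$ is determined by a P-node $\1\mu$, then by Lemma~\ref{lem:p-node-constr-suff} the restriction of $\alpha$ to the variables of $\1\mu$ is realizable by an embedding of $\skel(\1\mu)$; since two embeddings of a P-node skeleton with the same PR-ordering value differ only in the placement of Type-5 virtual edges — which changes no common edge ordering — we may swap the skeleton embedding for one realizing $\alpha$. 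The only subtle point here, several common cycles through $\1\mu$ inducing several cycles in $\skel(\1\mu)$, is handled by the P-node $4$-constraints, which are turned into equations and inequalities by Lemma~\ref{lem:p-node-4-const-simple}, using that Theorem~\ref{thm:only-special-union-separating-pairs} has removed the offending union separating pairs. If $\pos_C(H)$ is determined by a common cutvertex $v$, Lemma~\ref{lem:common-cutvertex-constr} applies, the cutvertex-ordering variables of $v$ being order realizable, so $\alpha$ restricted to the position and ordering variables at $v$ is realized by an embedding of $\1G$. As skeleton embeddings of a block are chosen independently and common-cutvertex decisions are governed by cutvertex-ordering variables, these local embeddings combine into an embedding of $\1G$ realizing $\alpha$ on every ordering variable and on every position variable not determined by an exclusive cutvertex.

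It remains to realize the position variables $\pos_C(H)$ determined by an exclusive cutvertex $v$ of $\1G$. For such a variable, the $\1G$-split component $\1S$ containing $H$ attaches to the common component $H_v$ only at $v$ and carries no common edge at $v$, so the union bridge $B_i^\cup$ of $H_v$ containing $H$ may be placed into any face of $H_v$ incident to $v$ without affecting any edge ordering or any other relative position, and the exclusive cutvertex constraints only force $\pos_C(B_i^\cup)=\pos_C(H)$. The crucial observation is that, because $\1S$ has no common edge at $v$, the variable $\pos_C(H)$ is in $\2G$ determined by an R-node, a P-node, or a common cutvertex, and $\alpha$ satisfies the corresponding constraints with respect to $\2G$. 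Applying Lemmas~\ref{lem:r-node-constr-suff}, \ref{lem:p-node-constr-suff}, and \ref{lem:common-cutvertex-constr} to $\2G$ yields an embedding of $\2G$ realizing $\alpha$ on all these variables; in it $H$ lies in a single face $f$ of $H_v$, which is incident to $v$ since $B_i^\cup$ attaches to $H_v$ at $v$, and whose position vector over the extended cycle basis equals $\bigl(\alpha(\pos_C(H))\bigr)_{v\in C}$ by Theorem~\ref{thm:cycle-basis} and Lemma~\ref{lem:extended-cycle-basis}. Choosing $f$ as the face receiving $B_i^\cup$ realizes all exclusive cutvertex position variables at $v$ simultaneously. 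Doing this at every exclusive cutvertex and gluing all blocks of $\1G$ according to the orders and nestings already fixed gives an embedding of $\1G$ that realizes $\alpha$.

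I expect the last paragraph to be the main obstacle: one must be certain that the face $f$ extracted from the $\2G$-embedding is genuinely incident to $v$, so that $B_i^\cup$ can be routed into it in $\1G$, and that its relative positions with respect to the whole extended cycle basis — not merely a plain cycle basis — agree with $\alpha$. The necessity of the extended cycle basis, illustrated by Figure~\ref{fig:cycle-basis-not-suff}, is precisely what makes this matching valid; the remaining checks, that the local lemmas apply and that the chosen skeleton embeddings and cutvertex decisions are mutually compatible, are routine given what has already been established.
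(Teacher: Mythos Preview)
Your overall architecture is right, and the handling of R-nodes, P-nodes, and common cutvertices via Lemmas~\ref{lem:r-node-constr-suff}, \ref{lem:p-node-constr-suff}, and~\ref{lem:common-cutvertex-constr} matches the paper. The gap is in the last paragraph, at the sentence
\emph{``because $\1S$ has no common edge at $v$, the variable $\pos_C(H)$ is in $\2G$ determined by an R-node, a P-node, or a common cutvertex.''}
This inference is not valid. That $\1S$ has no common edge at $v$ is a statement about $\1G$; it says nothing about how $H$ hangs off $H_v$ in $\2G$. The $\circled{2}$-bridge containing $H$ may very well have a single attachment $w$ on $C$ (with $w=v$ or $w\neq v$) and no common edge at $w$, so $\pos_C(H)$ can equally be determined by an \emph{exclusive} cutvertex in $\2G$. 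In that situation you cannot appeal to Lemmas~\ref{lem:r-node-constr-suff}--\ref{lem:common-cutvertex-constr} for $\2G$, and there is no embedding of $\2G$ at hand that is guaranteed to realize all the $\pos_C(H)$ you need. A second, related issue: even when you do have a suitable embedding of $\2G$, the face $f$ of $H_v$ containing $H$ is incident to the $\circled{2}$-attachments of the bridge, not automatically to $v$; ``$B_i^\cup$ attaches to $H_v$ at $v$'' is a union-graph statement and does not force incidence of $f$ to $v$ in a $\2G$-embedding.

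The paper closes both gaps differently and this is where Theorem~\ref{thm:no-block-local-exclusive-one-attaced} is actually used. It first organizes $\1G$ (and symmetrically $\2G$) into \emph{extended blocks}, maximal pieces in which every split component at a cutvertex carries a common edge; within an extended block all position variables are determined by R-/P-nodes or common cutvertices, so each extended block admits an embedding realizing $\alpha$. For the face-exists part (the analogue of your appeal to $\2G$) the paper does not argue directly for $H$: by the preprocessing, the union bridge $B^\cup$ containing $H$ is not both block-local and exclusively one-attached (or $H_v$ is a cycle). In the non--exclusively-one-attached case some $\circled{i}$-bridge in $B^\cup$ has two attachments in $H_v$, so a component $H'$ in that bridge lies in the same extended block of $\ii G$ as all of $H_v$; \emph{its} positions are therefore realized and describe a face. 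The exclusive cutvertex constraints then transfer this face to $H$ via $\pos_C(H)=\pos_C(B^\cup)=\pos_C(H')$. The non--block-local case needs a further split-component argument. Incidence of the face to $v$ is proved separately (Claim~1) using a cycle $C'$ of the extended basis \emph{not} containing $v$ together with the constraint $\pos_{C'}(v)=\pos_{C'}(H)$, rather than any attachment property of $B^\cup$. Your sketch is missing precisely these two ingredients.
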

\begin{proof}
  Let the blocks of $\1G$ be partitioned into a maximum number of
  partitions such that two blocks that each have a common edge
  incident to a cutvertex belong to the same partition; see
  Figure~\ref{fig:rel-pos-excl-cutv}a.  Let $\1{B_1}, \dots, \1{B_k}$ be
  the blocks of one such partition and let $\1P = \1{B_1}\cup \dots
  \cup \1{B_k}$.  Note that $\1P$ is a maximal subgraph of $\1G$ such
  that every split component with respect to a cutvertex $v$ includes
  a common edge incident to $v$.  We call $\1P$ an \emph{extended
    block} of $\1G$.  By
  Lemmas~\ref{lem:ordering-no-sim-cutv-with-cutv-var},~\ref{lem:r-node-constr-suff},~\ref{lem:p-node-constr-suff},
  and~\ref{lem:common-cutvertex-constr}, $\1P$ (and every other
  extended block) admits an embedding that realizes $\alpha$.

  Analogously, we define the extended blocks of $\2G$ and choose an
  embedding for each extended block in $\2G$ that realizes $\alpha$.

  To get an embedding of $\1G$ realizing $\alpha$, it remains to
  combine the embeddings of the extended blocks at the cutvertices
  separating them.  Let $\1{P_v}$ be the extended block of $\1G$ that
  includes a common edge incident to the cutvertex $v$ and let $\1{P}$
  be another extended block containing~$v$.  Let $H_v$ be the
  connected component of the common graph containing~$v$.  Note that
  $H_v$ is completely contained in $\1{P_v}$ and thus its embedding is
  already fixed.  Note further that $\1{P}$ is part of a split
  component with respect to $v$ and thus part of a single union bridge
  of $H_v$.  Thus, the exclusive cutvertex constraints make sure that
  the relative positions with respect to common cycles through~$v$ are
  the same for all common connected components in $\1{P}$.  Hence, the
  embeddings of $\1{P_v}$ and $\1{P}$ can be combined such that the
  resulting embedding of $\1{P_v} \cup \1{P}$ realizes $\alpha$ if and
  only if the relative positions of one common connected component $H$
  in the union bridge containing $\1{P}$ with respect to the common
  cycles through $v$ describe a face of $H_v$ that is incident to $v$.
  The latter follows immediately from the following two claims.

  \begin{claim}
    \label{claim:1}
    If the relative positions of $H$ with respect to common cycles
    through $v$ describe a face of $H_v$, then this face is incident
    to $v$.
  \end{claim}

  \begin{claim}
    \label{claim:2}
    The relative positions of $H$ with respect to all common cycles
    through $v$ describe a face of $H_v$.
  \end{claim}

  We start with the proof of Claim~\ref{claim:1}.  Let $f$ be a face
  of $H_v$ such that the relative positions of $H$ with respect to
  cycles through~$v$ (as given by $\alpha$) are the same as the
  relative positions of $f$ with respect to these cycles.  Then $f$ is
  incident to the cutvertex $v$ for the following reason.  If $f$ is
  not incident to $v$, then there exists a common cycle $C'$ (not
  containing $v$) in the connected component $H_v$ of $v$ that
  separates $v$ from a connected component $H$.  By
  Lemma~\ref{lem:extended-cycle-basis} we can assume that $C'$ is part
  of the extended cycle basis.

  There are two possibilities.  If $v$ belongs in $\1G$ to the same
  block as $C'$ (see Figure~\ref{fig:rel-pos-excl-cutv}b), the relative
  position $\pos_{C'}(H)$ is determined by the relative position of
  $v$ with respect to $C'$, as $v$ separates $H$ from $C'$ in $\1G$
  and $C'$ does not contain $v$.  Thus, we have the equation
  $\pos_{C'}(v) = \pos_{C'}(H)$ in this case.  Otherwise, $\1G$ has
  another cutvertex $u$ separating $v$ and $H$ from $C'$; see
  Figure~\ref{fig:rel-pos-excl-cutv}c.  Then $v$ and $H$ are in the same
  split component with respect to this cutvertex.  In this case we
  also have the requirement that $v$ and $H$ are on the same side of
  $C'$.  Hence, the cycle $C'$ cannot separate $v$ from $H$, which
  proves Claim~\ref{claim:1}.

  It remains to prove Claim~\ref{claim:2}.  Let $B^\cup$ be the union
  bridge of $H_v$ that contains $H$.  Recall that the exclusive
  cutvertex constraints require $\pos_C(H') = \pos_C(B^\cup)$ for
  every cycle $C$ of $H_v$ and every common connected component $H'$
  of $B^\cup$.  Thus, showing that the relative positions of $H'$
  describe a face of $H_v$ for one common connected component of
  $B^\cup$ shows this fact for all common connected components of
  $B^\cup$ (and thus in particular for~$H$).

  By Theorem~\ref{thm:no-block-local-exclusive-one-attaced}, we can
  assume one of the following is true.  The common connected component
  $H_v$ is a cycle; the union bridge $B^\cup$ is not block-local; or
  $B^\cup$ is not exclusive one-attached.

  If $H_v$ is a cycle, then there is only a single cycle through $v$
  and both sides of this cycle form a face of $H_v$.  Thus, there is
  nothing to show in this case.

  Assume the union bridge $B^\cup$ is not exclusive one-attached.
  Then there exists without loss of generality a $\circled{2}$-bridge
  $\2B$ that belongs to the union bridge $B^\cup$ such that $\2B$ has
  two attachments in $H_v$; see Figure~\ref{fig:rel-pos-excl-cutv}d
  (the case of a $\circled{1}$-bridge $\1B$ belonging to $B^\cup$ and
  having two attachments in $H_v$ is analogous).  Let further $H'$ be
  a common connected component contained in $\2B$.  Then $H'$ belongs
  in $\2G$ to a block that contains at least one block of $H_v$.
  Thus, the extended block containing $H'$ completely contains $H_v$.
  It follows that the relative positions of $H'$ with respect to
  cycles in $H_v$ describe a face of $H_v$.

  Finally, assume $B^\cup$ is not block-local.  Then there are two
  $\circled{i}$-bridges $\ii{B_1}$ and $\ii{B_2}$ (with $i \in \{1,
  2\}$) belonging to the union bridge $B^\cup$ with attachments in
  different blocks of $H_v$.  If one of these bridges has an
  attachment vertex in a block of $H_v$ not containing the cutvertex
  $v$, then the relative positions of this bridge with respect to any
  common cycle containing $v$ is determined by an R-node, by a P-node,
  or by a common cutvertex.  Thus, the relative positions correspond
  to a face of $H_v$ in this case.  It remains to consider the case
  that the attachment vertices of $\ii{B_1}$ and $\ii{B_2}$ belong to
  blocks of $H_v$ incident to $v$; see
  Figure~\ref{fig:rel-pos-excl-cutv}e.

  Let $S_1, \dots, S_k$ be the split components of the common
  component $H_v$ with respect to the cutvertex $v$.  Assume without
  loss of generality that $\ii{B_1}$ and $\ii{B_2}$ have their
  attachment vertices $u_1$ and $u_2$ in $S_1$ and $S_2$,
  respectively.  Let $H_1$ and $H_2$ be common connected components in
  $\ii{B_1}$ and $\ii{B_2}$, respectively.  The relative position of
  $H_1$ with respect to a cycle through $v$ that is not contained in
  $S_1$ is determined (in $\ii G$) by the common cutvertex $v$.  Thus,
  the relative positions of $H_1$ with respect to $S_2 \cup \dots \cup
  S_k$ describe a face of $S_2 \cup \dots \cup S_k$.  Moreover, this
  face contains the whole split component $S_1$.  Thus, if the
  relative positions of $H_1$ with respect to cycles in $S_1$ describe
  a face of $S_1$, then the relative positions with respect to cycles
  in $H_v = S_1 \cup \dots \cup S_k$ describe a face of $H_v$.
  Clearly, this is true as $H_1$ and $H_2$ have the same relative
  positions (they are in the same union bridge $B^\cup$) and the
  relative positions of $H_2$ with respect to cycles in $S_1$ describe
  a face of $S_1$ (one can use a symmetric argument to the one above).
  This concludes the proof.
\end{proof}

\subsubsection{Computing the Constraints}
\label{sec:cump-constr}

Recall from Section~\ref{sec:cons-edge-order}
(Lemma~\ref{lem:ordering-no-sim-cutv-with-cutv-var}) that we have
potentially $O(n^3)$ cutvertex-ordering variables.  Moreover, there
are $O(n^2)$ cycles in the extended cycle basis $\mathcal C$ and thus
$O(n^3)$ component position variables.  Thus, our aim is to compute
the position constraints described in the previous sections in
$O(n^3)$ time.

Let $C \in \mathcal C$ be a cycle.  For the relative positions with
respect to $C$ that are determined by R-nodes or P-nodes, we need to
know for every R- and P-node $\mu$ of $\1G$ and of $\2G$, whether it
induces a cycle $\kappa$ in $\skel(\mu)$.  If so, we also need to know
the cycle $\kappa$.  This can clearly be done in linear time for each
cycle, yielding a total running time of $O(n^3)$ (note that techniques
from~\cite{br-drpse-15} can be used to compute this information for
multiple cycles in linear time).

Similarly, in $O(n^2)$ time, we can compute for every virtual edge
$\eps$ in $\skel(\mu)$, which common connected components are
contained in the expansion graph of $\eps$.  Assume $\mu$ is an R-node
and let $X$ be the set of ordering variables determined by $\mu$.
With the above information,  one can easily compute the R-node
constraints for $\mu$ in $O(|X| + |\skel(\mu)|)$ time.  As each
relative position is determined by at most one R-node, the sets $X$
are disjoint for different R-nodes.  Thus, we get a total running time
of $O(n^3)$ for computing the R-node constraints.

Computing the P-node constraints of a P-node $\mu$ can be done
analogously (yielding $O(n^3)$ running time in total), except for the
case where we have to handle a P-node 4-constraint.  Recall that we
get P-node 4-constraints if three virtual edges $\eps_1$, $\eps_2$,
and $\eps_3$ of $\skel(\mu)$ include common paths between the poles.
For every other virtual edge $\eps$, we then get a P-node
4-constraint, which makes $O(|\skel(\mu)|)$ P-node 4-constraints for
$\mu$.  For the P-node 4-constraint corresponding to the virtual edge
$\eps$, we have to check whether the union graph $G^\cup$ has a path
$\pi$ from $\expan(\eps)$ to $\expan(\eps_i)$ (for $i \in \{1, 2,
3\}$) that is disjoint from the expansion graphs $\expan(\eps_j)$ for
$j \in \{1, 2, 3\}$ with $i \not= j$.  This can clearly be done in
$O(n)$ time for each edge $\eps$ of $\skel(\mu)$.  It follows that we
can compute the P-node constraints in $O(n^3)$ time.

For a cutvertex $v$ of $\1G$, consider the relative positions $X$
determined by the common cutvertex $v$.  For every split component
$\1S$ and every common connected component $H$ in $\1S$, we have the
constraint $\pos_C(H) = \pos_C(\1 S)$.  These constraints can be
easily computed in $O(n + |X|)$ time.  As the sets $X$ are disjoint
for every cutvertex, this yields a total running time of $O(n^3)$.
Moreover, we have to compute constraints of the type $\ord(e_1, e_2,
\1B) = \pos_C(\1S)$ connecting the relative positions to the cutvertex
ordering variables.  Clearly, for each variables $\pos_C(\1S)$ this
constraint can be added in constant time, which yields a running time
in $O(|X|)$.  Hence, the common cutvertex constraints can be computed
in $O(n^3)$ time.

Finally, consider the relative positions $X$ determined by the
exclusive cutvertex $v$ and let $H_v$ be the common connected
component containing $v$.  For every union bridge $B^\cup$, every
common connected component $H$ in $B^\cup$, and every common cycle
through $v$, we have to add the constraint $\pos_C(B^\cup) =
\pos_C(H)$.  We can first (in $O(n)$ time) partition the common
connected components according to their union bridges.  Then, adding
these constraints for one cycle $C$ can be done in $O(|X_C|)$ time,
where $X_C$ is the set of relative positions with respect to $C$ in
$X$.  Thus, we get the exclusive cutvertex constraints for $v$ in
$O(|X|)$ time, which yields a total running time of $O(n^3)$.

\begin{lemma}
  \label{lem:running-time-position-constraints}
  The position constraints can be computed in $O(n^3)$ time.
\end{lemma}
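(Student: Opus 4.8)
The plan is to prove the bound one constraint family at a time, since the position constraints split into the R-node, P-node, common cutvertex, and exclusive cutvertex constraints, and in each family the variables involved are attached to disjoint pieces of the instance. Concretely, I would argue that for each of the $O(n^2)$ cycles $C$ of the extended cycle basis $\mathcal C$ the relevant structural data can be gathered in $O(n)$ time per cycle, and that, for a fixed cycle, the constraints of each family are read off in time linear in the number of variables they introduce plus the size of the skeletons touched; aggregating over all cycles and all nodes/cutvertices then gives $O(n^3)$.

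First I would set up the shared preprocessing. For a fixed $C$, in $O(n)$ time I determine for every R- and P-node $\mu$ of $\1G$ and of $\2G$ whether $C$ induces a cycle $\kappa$ in $\skel(\mu)$ and, if so, the edge set of $\kappa$; summed over $\mathcal C$ this is $O(n^3)$ (and I would note that the per-cycle work can be batched with the techniques of~\cite{br-drpse-15}, though the naive bound suffices). Independently of $C$, in $O(n^2)$ time I compute for every virtual edge $\eps$ of every skeleton the set of common components contained in $\expan(\eps)$. With this in hand, the R-node constraints of an R-node $\mu$ are produced in $O(|X_\mu| + |\skel(\mu)|)$ time, where $X_\mu$ is the set of (position and ordering) variables determined by $\mu$; since each relative position is determined by at most one R-node, the $X_\mu$ are pairwise disjoint, so $\sum_\mu(|X_\mu| + |\skel(\mu)|) = O(n^3)$. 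The P-node case is the same, except that a P-node $\mu$ with three Type-1 virtual edges $\eps_1,\eps_2,\eps_3$ yields one P-node $4$-constraint per remaining virtual edge $\eps$, and emitting the correct equation or inequality there requires checking whether $G^\cup$ contains a path from $\expan(\eps)$ to some $\expan(\eps_i)$ avoiding the other two expansion graphs; each such check costs $O(n)$ and there are $O(|\skel(\mu)|)$ of them per P-node, so the path checks cost $O\bigl(n\sum_\mu|\skel(\mu)|\bigr) = O(n^2)$ in total, dominated by the $O(n^3)$ coming from the component position variables. For a common cutvertex $v$ the sets of determined positions are again disjoint across cutvertices, and the constraints $\pos_C(H)=\pos_C(\1S)$ together with the linking equations $\ord(e_1,e_2,\1B)=\pos_C(\1S)$ are generated in $O(n + |X_v|)$ time, summing to $O(n^3)$. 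For an exclusive cutvertex $v$ I first partition the common components of $H_v$ by union bridge in $O(n)$ time, then for each cycle $C$ through $v$ emit $\pos_C(B^\cup)=\pos_C(H)$ in $O(|X_C|)$ time; again the total is $O(n^3)$. Adding the four bounds yields the lemma.

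The main obstacle I anticipate is the P-node $4$-constraint case: unlike every other constraint, deciding which relation to record there needs a global reachability query in the union graph $G^\cup$ (the query that, in the proof of Lemma~\ref{lem:p-node-constr-suff}, realizes the assumption that $\{u,v\}$ is not a separating pair), and one must be careful that these searches, over all P-nodes and all relevant virtual edges, stay within the budget; the accounting $\sum_\mu|\skel(\mu)| = O(n)$ is exactly what makes this survive. The remaining work is routine bookkeeping on the SPQR-trees of $\1G$ and $\2G$ and on the extended cycle basis, all of which fits comfortably in $O(n^3)$.
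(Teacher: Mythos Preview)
Your proposal is correct and follows essentially the same approach as the paper: the paper's proof (the discussion immediately preceding the lemma) likewise treats each of the four constraint families separately, uses the same $O(n)$ per-cycle pass over the SPQR-trees (also citing~\cite{br-drpse-15}), the same $O(n^2)$ precomputation of components-in-expansion-graphs, the same $O(|X|+|\skel(\mu)|)$ bound for R- and P-nodes with the disjointness of the $X$'s, the same $O(n)$ reachability check per virtual edge for the P-node 4-constraints (implicitly using $\sum_\mu|\skel(\mu)|=O(n)$), and the same $O(n+|X|)$ and $O(|X_C|)$ bounds for the two cutvertex cases. Your write-up is in fact slightly more explicit about the accounting for the P-node 4-constraints than the paper's.
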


\subsection{Putting Things Together}
\label{sec:putt-things-togeth}

Assume $(\1G, \2G)$ is a \textsc{Sefe} instances such that $\1G$ and
$\2G$ are connected graphs, $\1G$ and $\2G$ have common P-node
degree~3, and every simultaneous cutvertex has common degree~3.

We first used Theorem~\ref{thm:no-union-cutvertices} to get rid of all
union cutvertices.  This helped to ensure consistent edge orderings in
Section~\ref{sec:cons-edge-order}.  Actually, without union
cutvertices, we know for each vertex $v$ that it is either not a
cutvertex in one of the graphs $\1G$ or $\2G$, which makes
representing the possible edge orderings much simpler, or that it has
common degree~3, which also makes the ordering simple.

To ensure consistent relative positions of two common connected
components with respect to each other, we first showed in
Section~\ref{sec:relat-posit-with} that it suffices to ensure
consistent relative positions of each common connected component with
respect to the cycles of a cycle basis in the other component.
Unfortunately, setting relative positions with respect to cycles does
not necessarily lead to an embedding (e.g., if a cycle $C_1$ lies
``inside'' $C_2$, and $C_2$ lies ``inside'' $C_3$, then $C_3$ cannot
lie ``inside'' $C_1$).

This leads to difficulties, when one component $H_1$ can be
potentially embedded into several faces of another component $H_2$,
which is the case when $H_1$ is attached to $H_2$ via only two
vertices that are a separating pair of $H_2$, or when $H_1$ and $H_2$
are separated by a cutvertex.  For the former case, it helped to
assume that split components of union separating pairs have a very
special structure
(Theorem~\ref{thm:only-special-union-separating-pairs}).  For the
latter case, it helped to assume that there are no union bridges that
are block-local and exclusively one-attached
(Theorem~\ref{thm:no-block-local-exclusive-one-attaced}).

Using Theorem~\ref{thm:no-block-local-exclusive-one-attaced} comes at
the cost that we have to satisfy some common-face constraints.
However, in Section~\ref{sec:common-face-constr} we showed that this
can be done easily (Lemma~\ref{lem:common-face-constraints}).

The set of equations and inequalities we obtain has size $O(n^3)$, can
be computed in $O(n^3)$ time, and can be solved in linear time in its
size~\cite{apt-ltatt-79}.  This lets us conclude with the following
theorem.

\begin{theorem}
  \label{thm:sefe-for-comm-deg-3}
  \textsc{Sefe} can be solved in $O(n^3)$ time for two connected graphs
  with common P-node degree~3 and simultaneous cutvertices of common
  degree at most~3.
\end{theorem}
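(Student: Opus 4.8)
The plan is to assemble the machinery developed in the previous sections, in the order: preprocess, set up a Boolean constraint system, argue equivalence, and bound the running time. First I would apply the preprocessing algorithms of Section~\ref{sec:prepr-algor} in sequence: Theorem~\ref{thm:no-union-cutvertices} removes all union cutvertices, Theorem~\ref{thm:only-special-union-separating-pairs} restricts every union separating pair to one of the two harmless shapes listed there, Theorem~\ref{thm:equivalence-disj-cycles-linear} makes every 2-component of $G$ a cycle, and Theorem~\ref{thm:no-block-local-exclusive-one-attaced} eliminates block-local exclusively one-attached union bridges at the cost of introducing block-local common-face constraints. Each step runs in linear time, produces an equivalent set of instances of total linear size, and — importantly — introduces no new cutvertex of $G$ and raises no vertex degree, so the steps compose and the hypotheses (common P-node degree~3 and simultaneous cutvertices of common degree at most~3) are preserved. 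It therefore suffices to solve a single resulting instance with block-local common-face constraints.

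Second, for such an instance I would introduce the ordering variables of Section~\ref{sec:cutv-order-vari} and the position variables of Section~\ref{sec:cons-relat-posit}. Fixing the ordering variables determines all edge orderings of $G$, and by Theorem~\ref{thm:cycle-basis} fixing the position variables (with respect to an extended cycle basis) determines all relative positions of common connected components, so together they pin down the embedding of $G$. I would then take the conjunction of the ordering constraints (Lemma~\ref{lem:ordering-no-sim-cutv-with-cutv-var}), the common-face constraints (Lemma~\ref{lem:common-face-constraints}), and the position constraints, i.e.\ the R-node, P-node, common-cutvertex, and exclusive-cutvertex constraints, each taken with respect to both $\1G$ and $\2G$. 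Every one of these is a collection of equations and inequalities over $\mathbb{F}_2$; the only delicate point is the P-node 4-constraints, which by Lemma~\ref{lem:p-node-4-const-simple}, together with the argument in Section~\ref{sec:cons-relat-posit} (using Theorem~\ref{thm:only-special-union-separating-pairs}) that for each such constraint two of its variables are forced equal or unequal, also reduce to equations and inequalities. Hence the whole system is an instance of \textsc{2-Sat}.

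Third, the correctness argument. Necessity is routine: a SEFE induces embeddings of $\1G$ and $\2G$ with consistent edge orderings and relative positions, and reading off the variable values satisfies every constraint by the necessity halves of the cited lemmas. For sufficiency, given a satisfying assignment $\alpha$, I would build the two embeddings independently: Lemma~\ref{lem:ordering-no-sim-cutv-with-cutv-var} gives a common-graph edge ordering realizable in both graphs, Lemmas~\ref{lem:r-node-constr-suff}, \ref{lem:p-node-constr-suff}, \ref{lem:common-cutvertex-constr}, and especially Lemma~\ref{lem:exclusive-cutvertex-constraints-wrap-up} give an embedding of $\1G$ realizing $\alpha$, and symmetrically one of $\2G$. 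Since $\alpha$ fixes all edge orderings and all relative positions on $G$, the two embeddings coincide on $G$, and since they satisfy the common-face constraints (Lemma~\ref{lem:common-face-constraints}), re-inserting the union bridges removed during preprocessing (as justified in Sections~\ref{sec:conn-comp-that-are-biconn} and~\ref{sec:spec-bridg-comm-face-constr}) yields a SEFE of the original instance. The main obstacle is exactly this converse: the embedding choices for $\1G$ and $\2G$ are made separately, and one must verify they are compatible — in particular that the relative positions prescribed at exclusive cutvertices really describe faces — which is the content of Lemma~\ref{lem:exclusive-cutvertex-constraints-wrap-up} and the reason the strong preprocessing of Theorems~\ref{thm:only-special-union-separating-pairs} and~\ref{thm:no-block-local-exclusive-one-attaced} was needed.

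Finally, the running time. Preprocessing is linear. By Lemma~\ref{lem:ordering-no-sim-cutv-with-cutv-var} the ordering constraints have size $O(n^3)$ and are computed in $O(n^3)$ time, by Lemma~\ref{lem:common-face-constraints} the common-face constraints are linear, and by Lemma~\ref{lem:running-time-position-constraints} the position constraints are computed in $O(n^3)$ time; the resulting \textsc{2-Sat} instance is solved in time linear in its size~\cite{apt-ltatt-79}. Summing over the equivalent instances, whose total size is linear, yields the claimed $O(n^3)$ bound and completes the proof.
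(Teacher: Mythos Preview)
Your proposal is correct and follows essentially the same route as the paper's proof in Section~\ref{sec:putt-things-togeth}: apply the preprocessing of Section~\ref{sec:prepr-algor}, assemble the ordering constraints (Lemma~\ref{lem:ordering-no-sim-cutv-with-cutv-var}), common-face constraints (Lemma~\ref{lem:common-face-constraints}), and position constraints (Lemma~\ref{lem:running-time-position-constraints}) into a system of Boolean equations and inequalities, invoke Lemma~\ref{lem:exclusive-cutvertex-constraints-wrap-up} for sufficiency, and solve in $O(n^3)$ via \textsc{2-Sat}. One minor deviation: you additionally invoke Theorem~\ref{thm:equivalence-disj-cycles-linear} to turn 2-components into cycles, but the paper does not use this step here---the machinery of Section~\ref{sec:graph-with-common} already handles arbitrary 2-components, so the reduction to cycles is unnecessary (though harmless) for this theorem.
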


\section{Conclusion}
\label{sec:conclusion}

In this paper, we have shown how to combine techniques for ensuring
consistent relative positions~\cite{br-drpse-15} with
known~\cite{adfpr-tsegi-12,br-spoacep-13} and newly developed tools
for ensuring consistent edge orderings.  This lead to an efficient
algorithm solving \textsc{Sefe} for two connected graphs with common
P-node degree~3 and simultaneous cutvertices of common degree at
most~3.  Together with the linear-time algorithm for decomposing a
given instance into equivalent instances in which each 2-component is
a cycle, this gives an efficient algorithm if each connected component
of the common graph is biconnected, has maximum degree~3, or is
outerplanar with maximum degree~3 cutvertices.

We note that all techniques developed in
Section~\ref{sec:graph-with-common} extend to the sunflower case,
where we have multiple graphs pairwise intersecting in the same common
graph.  Actually, the two graphs $\1G$ and $\2G$ are only considered
together if $\2G$ restricts the embedding choices of the common graph
in $\1G$ in a way that makes it possible to formulate certain
constraints.  Thus, more graphs intersecting in the same common graph
can only help.  Moreover, the preprocessing algorithms from
Section~\ref{sec:prepr-algor} also directly extend to the sunflower
case when adapting the definition of impossible P-nodes
(Lemma~\ref{lem:no-impossible-P-nodes}) in a straightforward manner.

Besides solving this fairly general set of \textsc{Sefe} instances,
our results and in particular the preprocessing algorithms give some
new structural insights that may help in further research.  For
example, Theorem~\ref{thm:all-link-graphs-are-connected}, stating that
one can assume all union-link graphs to be connected, not only helps
in later chapters but also shows that the decision of ordering virtual
edges in P-nodes of the common graph is fairly easy.

What remains poorly understood are the edge orderings at cutvertices.
We were basically able to handle cutvertices if the choices boil down
to binary decision.  This is for example the case if the cutvertex has
only common degree~3.  Although less obvious, this is also the case if
the instance has common P-node degree~3.  For a cutvertex in $\1G$,
this basically means that the other graph $\2G$ hierarchically groups
the common edges incident to the cutvertex such that there are at most
three groups on each level, yielding binary decisions.

We feel that the understanding of edge orderings at cutvertices is a
major bottleneck for solving more general instances of {\sc Sefe}.  To
get a better understanding, we believe that it would be helpful to
first study cutvertices in the context of constrained planarity
problems, which are often a bit more simple.  A good starting point
could be the constrained planarity problems planarity with partitioned
PQ-constrains or variants like partitioned
full-constraints~\cite{br-npcpcep-14}.

\bibliographystyle{plain}
\bibliography{se}

\end{document}